\definecolor{darkgreen}{rgb}{0.1,0.5,0.1}
\newtheorem{Thm}{Theorem}
\newtheorem{Lem}[Thm]{Lemma}
\newtheorem{Def}[Thm]{Definition}
\newtheorem{Pro}[Thm]{Proposition}
\newtheorem{Cor}[Thm]{Corollary}
\newtheorem{Rem}[Thm]{Remark}
\newcommand\eq[1] {(\ref{#1})}
\newcommand{\bfm}[1]{\mbox{\boldmath ${#1}$}}
\newcommand{\beqa}{\begin{eqnarray}}
\newcommand{\eeqa}[1]{\label{#1}\end{eqnarray}}
\newcommand{\beq}{\begin{equation}}
\newcommand{\eeq}[1]{\label{#1}\end{equation}}
\newcommand{\Grad}{\nabla}
\newcommand{\Real}{\mathop{\rm Re}\nolimits}
\newcommand{\Imag}{\mathop{\rm Im}\nolimits}
\newcommand{\Gve}{\varepsilon}
\newcommand{\Gg}{\gamma}
\newcommand{\Go}{\omega}
\newcommand{\GD}{\Delta}
\newcommand{\BGve}{\bfm\varepsilon}
\newcommand{\BGG}{\bfm\Gamma}
\newcommand{\md}{{\mathrm{ d}}}
\newcommand{\mm}{{\mathrm{ m}}}
\newcommand{\bbR}{{\mathbb{ R}}}
\newcommand{\bbC}{{\mathbb{C}}}
\newcommand{\CM}{{\cal M}}
\newcommand{\CO}{{\cal O}}
\def\ii{{\rm i}}
\def\Ba{{\bf a}}
\def\Be{{\bf e}}
\def\Bn{{\bf n}}
\def\Bu{{\bf u}}
\def\Bv{{\bf v}}
\def\Bx{{\bf x}}
\def\BA{{\bf A}}
\def\BD{{\bf D}}
\def\BE{{\bf E}}
\def\BF{{\bf F}}
\def\BI{{\bf I}}
\def\BJ{{\bf J}}
\def\BL{{\bf L}}
\def\BM{{\bf M}}
\def\BU{{\bf U}}
\def\BW{{\bf W}}
\def \ba {\begin{array}}
\def \ea {\end{array}}
\DeclareMathOperator*{\ho}{\overset{\perp}{\mathcal{\oplus}}}
\newcommand{\bU}{{\bf U}}
\title{Broadband quasistatic passive cloaking: bounds and limitations in the near-field regime}
\author{Maxence Cassier$\footnote{Email: maxence.cassier@fresnel.fr}$, Graeme W.\ Milton$\footnote{Email: graeme.milton@utah.edu}$,  and  Aaron Welters$\footnote{Email: awelters@fit.edu}$ \\
{
 \small $^*$ Aix Marseille Univ, CNRS,  Centrale Med, Institut Fresnel, Marseille, France;} \vspace{0.05cm} \\ 
{\small  \quad \quad    $^\dagger$ Department of Mathematics, University of Utah, Salt Lake City UT 84112, USA;} \vspace{0.05cm} \\ 
{\small  \qquad \quad  \quad  $^\ddag$ Department of Mathematics and Systems Engineering,  Florida Institute of Technology, \\  \hspace{-7.5cm} Melbourne, FL 32901, USA}
}
\date{}
\begin{document}
\maketitle
\vspace{-1cm}
\begin{abstract}
 We consider here several aspects of the following challenging question: is it possible to use a passive cloak to make invisible a dielectric inclusion on a finite frequency interval in the quasistatic regime of Maxwell’s equations for an observer close to the object? In this work, by considering the Dirichlet-to-Neumann (DtN) map, we not only answer negatively this question, but we go further and provide some quantitative bounds on this map that provide fundamental limits to both cloaking as well as approximate cloaking. These bounds involve the following physical parameters: the length and center of the frequency interval, the volume of the cloaking device, the volume of the obstacle, and the relative permittivity of the object. Our approach is based on two key tools: i) variational principles 
 from the abstract theory of composites and
 ii) the analytic approach to deriving bounds from sum rules for passive systems.  
 To use i), we prove a new representation theorem for the DtN map which allows us to interpret this map as an effective operator in the abstract theory of composites. One important consequence of this representation is that it allows one to incorporate the broad and deep results from the theory of composites, such as variational principles, and to apply the bounds derived from them to the DtN map. These results could be useful in other contexts other than cloaking. Next, to use ii),
 we show that the passivity assumption allows us to 
 connect the DtN map (as function of the frequency) with two important classes of analytic functions, namely, Herglotz and Stieltjes functions.
The sum rules for these functions, combined with the variational approach, allows us to derive new inequalities on the DtN map which impose fundamental limitations on passive cloaking, both exact and approximate, over a frequency interval. We consider both cases of lossy and lossless cloaks.
\end{abstract}
\vspace{0.2cm}
{\noindent \bf Keywords:} Invisibility, passive cloaking, dispersive Maxwell’s equations, quasistatics, Dirich\-let-to-Neumann operators, abstract theory of composites, effective operators, variational principles, Herglotz and Stieltjes functions, sum rules.

\newpage

\tableofcontents

\section{Introduction}

\subsection{Motivations and state of the art}

Invisibility and cloaking have captured the imagination of people for countless years. From a scientific viewpoint invisibility means undetectable by appropriate probing fields, rather than just being invisible to our eyes as is much of the
electromagnetic spectrum. One of the earliest results concerning invisibility of bodies is that of Dolin \cite{Dolin:1961:PCT} who in 1961 discovered what is now called transformation optics and used it to construct
inclusions that would be invisible to any applied electromagnetic field oscillating in time at a given frequency. Kerker \cite{Kerker:1975:IB} in 1975 noted that coated ellipsoids would leave virtually undisturbed an electromagnetic wave in the surrounding medium if the coated ellipsoids are small compared to spatial variations in the exterior electromagnetic wave (i.e., in the quasistatic limit with a spatially uniform applied field) and
if the electrical permittivities and permeabilities of the core, shell, and surrounding medium are appropriately chosen. Such inclusions are called neutral coated inclusions, a terminology stemming from that of
Mansfield \cite{Mansfield:1953:NHP} who in 1953 found that certain reinforced holes, which he called neutral holes, could be cut out of a uniformly stressed plate without disturbing that surrounding stress: see \cite[Sec.\ 7.11]{Milton:2022:TOC} for more results concerning neutral coated inclusions. In two dimensions, and in the quasistatic limit, coated cylinders with core, coating, 
and matrix having electrical permittivities in the ratio 1:-1:1, constituting what is now called a poor man's superlens, were found \cite{Nicorovici:1994:ODP} to be invisible to non-uniform fields in the exterior medium. Going beyond the quasistatic limit, Alu and Engheta \cite{Alu:2005:ATP} in 2005 
discovered that coated spheres with appropriate moduli could be
invisible, in the sense that their total scattering cross section could be zero at one frequency. 
The interior of each of these coated inclusions can be considered to be cloaked by the coating in the limited sense that the coating makes it invisible.

In the first example of true cloaking, Greenleaf, Lassas, 
and Uhlmann \cite{Greenleaf:2003:ACC, Greenleaf:2003:NCI} in 2003
combined transformation conductivity with a singular transformation that maps a sphere minus the origin to an annulus to obtain a cloak which would be invisible to any exterior applied electric field and would not lose this invisibility when conducting objects were placed inside the annulus. In 2006, 
Milton and Nicorovici \cite{Milton:2006:CEA} discovered cloaking due to anomalous resonance where clusters of polarizable
dipoles or dipolar energy sources would be essentially invisible if placed in a specific region near a superlens \cite{Pendry:2000:NRM}, the essential mechanism for which was discovered in \cite{Nicorovici:1994:ODP}. Shortly afterwards, 
Pendry, Schurig, and Smith \cite{Pendry:2006:CEM}
used transformation optics with a singular transformation, again mapping a sphere minus the origin to an annulus, to obtain cloaking for any applied electromagnetic field oscillating with fixed frequency. Independently, and at the same time, 
Leonhardt \cite{Leonhardt:2006:OCM} used transformations in the geometric optics limit to obtain cloaking. A barrier to implementing transformation optics applied to cloaking, or even to the earlier work of Dolin, is that away from the geometric optics limit it ideally requires the relative magnetic permeability
to coincide with the relative electric permittivity, be anisotropic, and be tailored to the prescription demanded by a suitable transformation. Furthermore, in general, the prescriptions for transformation based cloaking demand singular anisotropies which makes their realization even more difficult: exceptions are so called carpet cloaking \cite{Li:2008:HUC} and
non-Euclidean cloaking \cite{Leonhardt:2009:BIN}. Fortunately, cloaking still holds to an arbitrarily high degree of approximation if one slightly perturbs the transformation to make it non-singular, thus having less extreme anisotropies \cite{Kohn:2008:CCV, Kohn:2010:CCV, Liu:2013:ENC,Bao:2014:NCF}. 
An approximation to the cloak of Pendry, Schurig, and Smith was experimentally tested \cite{Schurig:2006:MEC}
giving results in good agreement with
numerical simulations,
but still having significant scattering. 
The approximation was subsequently improved \cite{Cai:2007:NMC} by eliminating reflections at the outer surface of the cloak. Another interesting type of cloak, also related to the superlens, was suggested by Pendry and Ramakrishna \cite{Pendry:2003:FLN} 
in 2003 and further explored by Lai et al.\ \cite{Lai:2009:CMI}: basically the scattering of an object outside the superlens is canceled 
by a nearby antiobject
in the superlens, having the permittivity and permeability as the object, but with their signs flipped. Thus, this type of cloak needs to be tailored to object to be
cloaked. An extension is illusion optics \cite{Lai:2009:IOO}, 
where the scattering of an object can be made to mimic that of a different chosen object. 
There followed an explosion of interest including acoustic cloaking \cite{Cummer:2007:PAC, Norris:2008:ACT, Norris:2011:ECT,Craster:2013:AMN}, thermal cloaking \cite{Schittny:2013:ETT}, cloaking for linear elasticity \cite{Craster:2021:ONC}, and seismic cloaking \cite{Brule:2014:ESM}. Finally, in the context of acoustic  waveguides, we want to 
point out the interesting approach \cite{Chesnel:2022}  to make an object invisible   at a fixed frequency  in the far field regime by perturbing the boundary of the waveguides by  thin outer resonators. The main advantage of this method is that invisibility is achieved solely through geometric modifications of the waveguide's boundary, leaving the material properties unchanged.  However, the required geometric boundary perturbation depends strongly on the shape of the obstacle.

For the purpose of many applications, it is highly desirable to design cloaks that achieve perfect cloaking over a frequency interval of interest. 
However, for physical reasons there are fundamental issues which have been put forth to try to explain why perfect broadband passive cloaking should be impossible and also the challenges to achieve approximate cloaking. First, its was pointed out by  Pendry, Schurig, and Smith \cite{Pendry:2006:CEM} and Miller \cite{Miller:2006:PC} that causality and the finite speed of propagation of a wavepacket in passive media (see, for instance, \cite{Landau:1984:ECM, Welters:2014:SLL, Cassier:2017:MMD}) prohibits perfect cloaking of an object in the time domain. These prescriptions for transformation based cloaking only hold at isolated frequencies due to the dispersion of the constituent material properties. Despite these results, it is hard to turn these restrictions in the time domain into quantitative restrictions in the frequency domain.  

One approach to avoiding these restrictions is to use active cloaking. However, unlike passive cloaking, this requires energy provided directly by external sources or indirectly by active media (such gain media). The first type of active cloak was introduced by Miller \cite{Miller:2006:PC} and further developed by several groups, e.g., \cite{Vasquez:2009:AEC, Vasquez:2009:BEC, Norris:2014:AEC,cassier2021active,cassier2022active}, where sources are tailored to the applied field and which do not contribute to the field outside a certain distance while
creating a quiet zone where the field is zero and within
which one can place the objects to be cloaked. As the construction is independent of frequency it holds for multiple frequencies or for a continuum of frequencies. One may also tailor the cloak to the object one wishes to cloak and this is particularly appropriate for the plate equation as then the displacement field remains finite \cite{ONeill:2015:ACI}. A second class of active cloaks, called fast-light cloaks \cite{Tsakmakidis:2019:U3D}, arises with the recent interest in physics on active media, for instance, see \cite{Nistad:2008:CEA, Cham:2015:TTP, ElGanainy:2018:NHP, Bender:2019:PTS}. These fast-light cloaks use active media to overcome the speed-of-light limitations in passive electromagnetic systems, but are still subject to constraints due to causality and stability \cite{Abdelrahman:2021:PLB, Duggan:2022:SBS}, 
a topic which lies outside the scope of our paper. However, due to the energy requirements of these active cloaks, there is strong motivation to use passive media to achieve broadband cloaking.  We focus here on  passive electromagnetism cloaking due to properties specific to the dielectric properties of materials in the high-frequency limit.

Approximate broadband passive cloaking has recently been considered by methods of optimal design \cite{Jelinek:2021:FBP}, \cite{Strekha:2024:LBI}. These groups studied also the problem of bounds and limitations to perfect and approximate passive cloaking using the method of semidefinite relaxation of quadratically constrained quadratic programs (QCQPs) (whose general mathematical framework is reviewed in \cite{Luo:2010:SRQ}) which has had quite a few recent physics applications especially in electromagnetism, based on some general techniques laid out in, e.g., \cite{Kuang:2020:CBL, Gustafsson:2020:UBA, Chao:2022:PLE}, see also \cite{Angeris:2021:HMP, Gertler:2025:MPD} and references therein. However, these precise numerical bounds are not directly expressed in terms of physical parameters of the system.
Within this perspective, it is essential to establish fundamental physical limits that depend on the system’s key parameters such as central frequency of the frequency interval, the frequency bandwidth, the cloak and object volumes, and the relative  permittivity of the object in order to gain a clear understanding of what is theoretically achievable.
Indeed, a key limitation to passive cloaking is a consequence of passivity, it implies causality but also bounds on the speed-of-light of waves propagating in the medium (see, e.g., \cite{Welters:2014:SLL,Cassier:2017:MMD})
and this leads to constraints, due to analyticity, that restrict perfect cloaking to transparent materials at a single frequency. Moreover, dispersion and dissipation along with relativistic causality further constrains even approximate cloaking, especially when the object to cloak or bandwidth is not small.

In this context, Craeye and Bhattacharya \cite{Craeye:2012:RTB} established first an upper limit on the frequency bandwidth within which cloaking could be realized for two-dimensional systems, based on the group delay of electromagnetic wave packets. However, their conclusions rely on strong assumptions, particularly regarding the geometry of the cloak and the specific cloaking method used. Later, Monticone and Alù \cite{Mon:2014:PBE} demonstrated that passive cloaking cannot be extended across the entire frequency spectrum by deriving a general bound on the scattering cross-section. In a follow-up study \cite{Mon:2016:IEP}, they employed analogies with electrical circuits to establish frequency-dependent limits on the scattering cross-section, applying to both planar structures and three-dimensional objects with spherical symmetry. Additionally, Hashemi, Qiu, McCauley, Joannopoulos, and Johnson \cite{Hashemi:2012:DBP} showed that, for transformation-based cloaks, the possibility of broadband passive cloaking is fundamentally constrained by the physical size of the object being cloaked. It is also worth noting that, even when perfect cloaking over a finite bandwidth is unattainable, techniques exist to significantly reduce an object's scattering signature within a targeted frequency range, as explored in \cite{Chen:2007:EBE, Kallos:2011:CRC}.

In contrast, as in \cite{Cassier:2017:BHF}, the bounds derived here offer the significant advantage of being independent of both the object and cloak geometry, as well as the specific dispersive properties of the cloak. Notably, the cloaked object can lie entirely outside the cloak (see Remark \ref{rem.cloakotuside} for more details). These bounds explicitly account for the bandwidth size, and while they are limited to the quasistatic regime, they are applicable to a wide range of passive cloaking methods, including those based on anomalous resonance, transformation optics, and complementary media.

Most of the work cited above concern a far-field regime (including numerical bounds via the QCQP approach), there has been significantly less focus on the limitations to cloaking for the near field cloaking problem, which is very important for applications.
In particular, compared to the previous work \cite{Cassier:2017:BHF} by the first two authors, M. Cassier and G. W. Milton, which addressed the far-field cloaking problem in the quasistatic regime by deriving bounds on the polarizability tensor of the cloaking device, the present work focuses on the near-field cloaking problem in the same regime. More precisely, we derive bounds on the Dirichlet-to-Neumann (DtN) operator associated with the system over a given frequency interval. The DtN operator maps the voltage around the boundary of the body $\Omega$ under consideration (see Fig.\ \ref{fig.med}) to the outward normal component of the electric displacement field (see Sec.\ \ref{subsec:MathFormulationCloakingProbRegime} for the precise formalization). As we now have access to near-field information, the bounds we derive here are more explicit and involve more physical parameters than those obtained in \cite{Cassier:2017:BHF}. Moreover, unlike the polarizability tensor—which is a $3\times 3$ matrix-valued function—the DtN map is an operator-valued function defined on an infinite-dimensional space. This significantly complicates the analysis and necessitates the development of new tools. Our approach exploits deep connections between Herglotz functions and the abstract theory of composites \cite{Milton:2022:TOC, Milton:2016:ETC} to overcome these challenges.

\subsection{Synopsis of the main results  on fundamental limits of  passive cloaking}

For our setting, we consider any open bounded set $\Omega\subseteq \mathbb{R}^3$ that is simply connected with sufficiently smooth boundary $\partial \Omega$ (e.g., Lipschitz continuous). In addition, any admissible potential $u$ is required to be a square-integrable function on $\Omega$ with a (weak) gradient $\nabla u$ that is also; in particular, this implies that it's boundary values $u|_{\Omega}=V_0$ (i.e., an admissible surface potential) is a square-integrable function on $\partial\Omega$ (see Sec.\ \ref{sec:NotationsConventions} for the precise definitions of the associated functional spaces). 
 \begin{figure}[!ht]
\centering
 \includegraphics[width=0.5\textwidth]{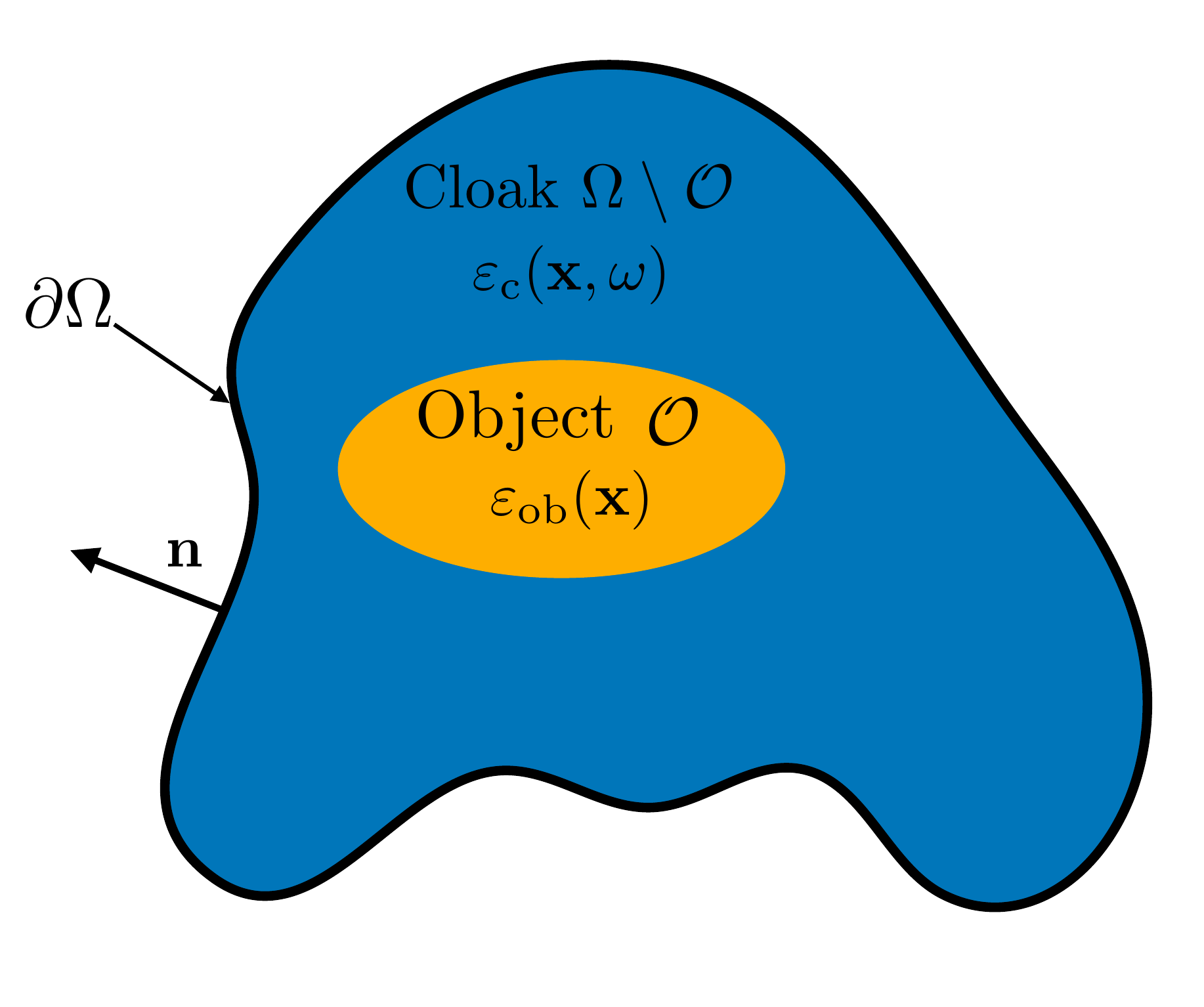}
 \caption{The cloaking device fills an open, bounded, and simply connected Lipschitz domain $\Omega$ of $\mathbb{R}^3$. It is made of both the object (in yellow), a  Lebesgue measurable set  $\mathcal{O}$ having positive volume with permittivity $\BGve(\Bx,\Go)=\BGve_{\mathrm{ob}}(\Bx)$, and the  passive cloak (in blue) contained in the set $\Omega \setminus \mathcal{O}$ having positive volume with permittivity $\BGve(\Bx,\Go)=\BGve_{\mathrm{c}}(\Bx, \Go)$. Here $\partial \Omega$ denotes the boundary of $\Omega$ and $\mathbf{n}$ is the unit outward normal vector to $\partial \Omega$. We point out that, in contrast to the figure, the object $\mathcal{O}$ is allow to have more than one connected component (e.g., $\mathcal{O}$ can be made of several inclusions in $\Omega$).}
 \label{fig.med}
\end{figure}

The object to be cloaked lies in $\mathcal{O}\subseteq\Omega$ and the cloak lies in $\Omega\setminus \mathcal{O}$. Both have positive  volumes $|\mathcal{O}|, |\Omega\setminus \mathcal{O}|$, see Figure \ref{fig.med}.  The obstacle is assumed to have a non-dispersive permittivity $\BGve_{\mathrm{ob}}(\Bx), \Bx\in\mathcal{O}$ on a frequency range $[\Go_-,\Go_+]\subseteq (0,\infty)$ which is a real symmetric positive definite matrix satisfying $\BGve_{\mathrm{ob}}(\Bx)\geq \Gve \BI$ for some constant scalar $\Gve > \Gve_0$,  where  $\Gve_0>0$ denotes the permittivity of the vacuum. 
 The cloak is made  of passive material which is dispersive and possibly anisotropic and its permittivity  tensor  $\BGve_{\mathrm{c}}(\Bx,\Go), \Bx\in \Omega \setminus \mathcal{O}$ is considered for frequencies $\omega\in [\Go_-,\Go_+]$. 
However, as the material of the cloak is passive, $\BGve_c(\Bx,\Go)$ has to satisfy  some analytical properties for complex frequencies $\omega$ in  the open  upper-half plane $\bbC^+$ of the complex plane $\bbC$  (see Secs.\ \ref{subsec:SettingCloakingProb} and \ref{subsec:MainAssumpOnPermittivity} for the explicitly stated permittivity assumptions). Thus, the permittivity  $\BGve(\Bx,\omega), \Bx\in\Omega$ of the whole device is defined for any  $\Go\in  [\Go_-, \Go_+]\cup \bbC^+$ by
$$
  \BGve(\Bx,\omega)= \BGve_{\mathrm{ob}}(\Bx) \mbox{ for  a.e. }  \Bx \in  \mathcal{O} \mbox{ and } \BGve(\Bx,\omega)=\BGve_c(\Bx,\Go) \mbox{ for  a.e. } \Bx \in \Omega \setminus \mathcal{O}.
$$ 

Given a electric (surface) potential $V_0$ on the boundary $\partial\Omega$, the quadratic forms $\langle \Lambda_{\Ba}V_0,\overline{V_0} \rangle$ of the DtN operators $\Lambda_{\Ba},$ where $\Ba\in\{\BGve(\cdot,\Go), \varepsilon_0\BI\}$ (which are defined precisely in Sec. \ref{subsec:MathFormulationCloakingProbRegime}), are related by the Green's formula [see Eq.\ \eqref{eq:GreensFormula} for a precise statement of this formula] to the electric field $\BE_{\Ba}=-\Grad V_{\Ba}$, with electric potential $V_{\Ba}$ in $\Omega$ with boundary-value $V_{\Ba}|_{\partial\Omega}=V_0$, and (divergence-free) electric displacement field $\BD_{\Ba}=\Ba \BE_{\Ba}$ in the body $\Omega$ with permittivity $\Ba$, by
\begin{gather}
    \langle \Lambda_{\Ba}V_0,\overline{V_0} \rangle=\int_{\Omega}\BD_{\Ba}(\Bx)\cdot \overline{\BE_{\Ba}(\Bx)}\mathrm{d} \Bx.\label{quadformdtnmap}
\end{gather}
This quantity is measurable by the observer since $\Lambda_{\Ba}V_0=\BD_{\Ba}\cdot \Bn$ on $\partial\Omega$, which is the normal component of the displacement field on the boundary of the body associated to surface potential $V_0$ imposed by the observer on $\partial \Omega$ (cf.\ Fig.\ \ref{fig.cloackprob}).
Moreover, multiplying \eqref{quadformdtnmap} by the frequency $\omega$ and taking the imaginary part, one gets (up to a $1/2$ factor, see \cite[Sec.\ 6.9]{Jackson:1999:CE} and \cite[Sec.\ 11.1, Eq.\ (11.16)]{Milton:2022:TOC}) that the following  represents the electric losses due to absorption by the materials:
\begin{gather}
    \operatorname{Im}\langle\Lambda_{\omega\Ba}V_0,\overline{V_0} \rangle=\operatorname{Im}(\omega\langle\Lambda_{\Ba}V_0,\overline{V_0} \rangle)=\int_{\Omega}\operatorname{Im}[\omega\Ba(\Bx)\BE_{\Ba}(\Bx)\cdot \overline{\BE_{\Ba}(\Bx)}]\mathrm{d} \Bx.\label{ElecLossDueMaterialAbsorp}
\end{gather}
 Now to compare the DtN operators of the cloaking device $\Lambda_{\BGve(\cdot,\Go)}$ to that of empty space occupying $\Omega$ whose permittivity is that of the vacuum, i.e., $\Lambda_{\varepsilon_0}$ (where by convention $\Lambda_{\varepsilon_0}:=\Lambda_{\varepsilon_0\BI}$), we consider the difference of their quadratic forms and its limit along the positive imaginary axis, i.e.,
\begin{gather}
F_{V_0}(\Go):=\langle [\Lambda_{\BGve(\cdot,\Go)}-\Lambda_{\varepsilon_0}]V_0,\overline{V_0} \rangle,\quad\omega\in [\Go_-,\Go_+] \cup \bbC^+,\label{def.FHerglotz}\\
F_{V_0,\infty}:=\lim_{y\rightarrow +\infty}F_{V_0}(\textrm {i}y)=\lim_{\omega = \textrm{i}y, y\rightarrow +\infty}\frac{\operatorname{Im}[\omega F_{V_0}(\omega)]}{\operatorname{Im}\omega}\geq 0.
\end{gather}
Most importantly, it is the modulus of \eqref{def.FHerglotz} that gives a figure-of-merit on the quality of the cloaking device at a frequency $\omega\in [\omega_-,\omega_+]$. In addition, the existence and non-negativity of that limit $F_{V_0,\infty}$ is a consequence of the physical constraints imposed by the passivity of the cloaking device (for more on this, see Theorem \ref{thm.differenceHerg}). Moreover, as we show (in Cor.\ \ref{Cor.boundleadingcoefficient}) it is strictly positive for nonzero affine boundary conditions, i.e., $V_0=-{\bf e}_0\cdot \Bx|_{\partial\Omega}$, ${\bf e}_0\in \bbC^3\setminus\{0\}$, and satisfies the inequality
\begin{gather}
    F_{V_0, \infty}\geq |\mathcal{O}| \  \Big(1- \frac{\varepsilon_0} {\varepsilon} \Big) \ \    \frac{1}{|\mathcal{O}|\,  \frac{\varepsilon_0}{\varepsilon}+ |\Omega \setminus \mathcal{O}| } \ \varepsilon_0 \ |\Omega| \, \, \|{\bf e}_0\|_{\bbC^3}^2>0,\label{AffineBCPositiveLowerBound}
\end{gather}
where $\|\cdot\|_{\bbC^3}$ denotes the standard norm for $\bbC^3$.

The discussion that follows will give a synopsis of the main results of our paper. The first theorem says that perfect cloaking on a frequency interval is impossible (see Theorem \ref{Pro.perfectcloaking} for the precise statement of this theorem).
\begin{Thm}\label{Thm.main1}
    Under appropriate assumptions on the permittivity $\varepsilon(\Bx,\omega)$, the obstacle
cannot be perfectly cloaked on the frequency interval $[\omega_-,\omega_+]$. 
\end{Thm}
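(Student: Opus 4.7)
The plan is to prove Theorem \ref{Thm.main1} by contradiction, exploiting the tension between two already-established facts: (i) the map $\omega\mapsto F_{V_0}(\omega)$ from \eqref{def.FHerglotz} enjoys the analytic/Herglotz structure inherited from passivity (as indicated in the paper's remark on Theorem \ref{thm.differenceHerg}), and (ii) for affine Dirichlet data the leading asymptotic coefficient $F_{V_0,\infty}$ is \emph{strictly} positive by the inequality \eqref{AffineBCPositiveLowerBound}.

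First I would translate the meaning of ``perfectly cloaked on $[\omega_-,\omega_+]$'' into the operator identity $\Lambda_{\BGve(\cdot,\Go)}=\Lambda_{\Gve_0}$ for every $\Go\in[\Go_-,\Go_+]$, equivalently $F_{V_0}(\Go)=0$ for every admissible boundary potential $V_0$ and every $\Go\in[\Go_-,\Go_+]$. Specializing to the affine choice $V_0=-\Be_0\cdot\Bx|_{\partial\Omega}$ with an arbitrary nonzero $\Be_0\in\bbC^3$ keeps this identity in force while allowing us to invoke \eqref{AffineBCPositiveLowerBound} later.

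Second, I would appeal to the analyticity of $F_{V_0}$ on $\bbC^+$ (from the Herglotz/Stieltjes structure coming from the passivity assumptions on $\BGve_c$ and the representation announced in Theorem \ref{thm.differenceHerg}). The key point is that $F_{V_0}$ extends across the open interval $(\Go_-,\Go_+)$ through the boundary values used to formulate the cloaking problem. If $F_{V_0}$ vanishes identically on a non-degenerate subinterval, then the Schwarz reflection principle extends it analytically to a neighborhood of $(\Go_-,\Go_+)$ in $\bbC$, and the identity principle for analytic functions forces $F_{V_0}\equiv 0$ on all of $\bbC^+\cup[\Go_-,\Go_+]$. In particular $F_{V_0}(\ii y)=0$ for all $y>0$, so passing to the limit $y\to+\infty$ yields $F_{V_0,\infty}=0$.

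Finally, I would close the argument by confronting $F_{V_0,\infty}=0$ with the strict inequality \eqref{AffineBCPositiveLowerBound}, which gives $F_{V_0,\infty}>0$ whenever $\Be_0\neq 0$, $|\mathcal{O}|>0$, and $\Gve>\Gve_0$. This contradiction rules out perfect broadband cloaking. The main obstacle in the execution is step two: showing that the vanishing of $F_{V_0}$ on the real interval actually propagates into $\bbC^+$. This requires enough regularity of the boundary values of $F_{V_0}$ across $[\Go_-,\Go_+]$ to legitimately invoke Schwarz reflection; in the passive/lossless setting one expects continuity up to the real axis from $\bbC^+$ and real-valuedness on the interval (a feature of Herglotz/Stieltjes boundary behavior when the imaginary part vanishes), while in the lossy case one needs a boundary uniqueness theorem for Herglotz functions (a Herglotz function whose boundary values vanish on a set of positive measure on $\mathbb{R}$ is identically zero) to avoid the regularity hypothesis altogether. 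Either way, the analytic-continuation/uniqueness property is the technical lever, and the positivity estimate \eqref{AffineBCPositiveLowerBound} supplies the contradiction.
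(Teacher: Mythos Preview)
Your proposal is correct and matches the paper's approach almost exactly: the paper's sketch immediately following the theorem is precisely your argument (analytic continuation of $F_{V_0}$ from $[\omega_-,\omega_+]$ into $\bbC^+$, then contradict the strict positivity of $F_{V_0,\infty}$ from \eqref{AffineBCPositiveLowerBound}), while the detailed proof in Theorem~\ref{Pro.perfectcloaking} routes through the auxiliary function $H_{V_0}(z)=zF_{V_0}(\sqrt{z})$, which is already analytic on $\bbC\setminus\bbR^+$, but the mechanism is identical. One minor remark: your lossy/lossless bifurcation in step two is unnecessary, since under the perfect-cloaking hypothesis $F_{V_0}\equiv 0$ on $[\omega_-,\omega_+]$ is trivially real-valued, so Schwarz reflection (given the continuity up to the interval established in Theorem~\ref{thm.differenceHerg}) applies uniformly regardless of losses.
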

The idea of the proof is that as $F_{V_0}(\Go)$ is analytic on $[\Go_-,\Go_+] \cup \bbC^+$ then perfect cloaking on $[\omega_-,\omega_+]$ implies by analytic continuation that $F_{V_0}(\Go)=0$ on $[\Go_-,\Go_+] \cup \bbC^+$ implying $F_{V_0, \infty}=0$ for nonzero affine boundary condition $V_0$, a contradiction of \eqref{AffineBCPositiveLowerBound}. 

The next result also proves the above theorem and includes quantitative bounds that provide fundamental limitations on perfect cloaking as well as approximate cloaking on a frequency interval (see Theorem \ref{Th.boundtransp} and its Cor.\ \ref{Cor.lossless} and Theorem \ref{Th.lossy} and its Cor.\ \ref{Cor.boundlossy} for the precise statements that make up this theorem).

\begin{Thm}\label{Thm.main2}
Under appropriate assumptions on the permittivity $\varepsilon(\Bx,\omega)$ and when $F_{V_0,\infty}>0$ (e.g., nonzero affine boundary condition $V_0$), the following inequalities hold:
\begin{equation*}
0<\frac{1}{4}(\Go_+^2-\Go_-^2) F_{V_0,\infty} \leq \max_{\Go\in[\Go_-,\Go_+]}|\omega^2 F_{V_0}(\Go)|.
\end{equation*}
Suppose, in addition, the cloak is lossless on $[\Go_-,\Go_+]$. If $F_{V_0}(\Go_0)=0$ for some $\omega_0\in [\Go_-,\Go_+]$ then
\begin{gather*}
F_{V_0}(\Go)\leq -\left(\frac{\omega_0^2-\omega^2}{\omega^2}\right)F_{V_0,\infty}<0,\;\text{ if } \omega_-\leq \omega< \omega_0,\\
0<\left(\frac{\omega^2-\omega_0^2}{\omega^2}\right)F_{V_0,\infty}\leq F_{V_0}(\Go),\;\text{ if } \omega_0< \omega\leq \omega_+.
\end{gather*}
More generally, if the lossless cloak achieves approximate  cloaking at some $\Go_0\in [\Go_-, \Go_+]$, i.e., there exists $\eta>0$ such that for all admissible surface potentials $\tilde{V}_0$,
\begin{equation*}
|F_{\tilde{V}_0}(\Go_0)|\leq \eta \, G^{\operatorname{vac}}_{\tilde{V}_0} \ \mbox{ where } \ G^{\operatorname{vac}}_{\tilde{V}_0}=\langle \Lambda_{\varepsilon_0} \tilde{V}_0,\overline{\tilde{V}_0} \rangle,
\end{equation*}
then
\begin{eqnarray*}
 \displaystyle F_{V_0}(\Go) & \leq &  \big(-F_{V_0,\infty}+\eta \, G^{\operatorname{vac}}_{V_0} \big) \ \frac{\Go^2_0-\Go^2}{\Go^2} + \eta \,  G^{\operatorname{vac}}_{V_0}  \ \mbox{if}  \ \Go_-\leq  \Go\leq \Go_0, \\[10pt]
 \displaystyle  F_{V_0}(\Go) & \geq & \big(F_{V_0,\infty}+\eta  \, G^{\operatorname{vac}}_{V_0} \big) \ \frac{\Go^2-\Go^2_0}{\Go^2}- \eta \,  G^{\operatorname{vac}}_{V_0} \ \  \mbox{if} \   \Go_0 \leq \Go \leq \Go_+ .
\end{eqnarray*}
\end{Thm}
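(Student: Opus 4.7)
The plan is to base the entire argument on the Stieltjes-type representation
\[
F_{V_0}(\omega) \;=\; F_{V_0,\infty} \;+\; \int_{\mathbb{R}} \frac{d\mu(\xi)}{\xi^2-\omega^2},\qquad \omega\in\mathbb{C}^+,
\]
which follows from Theorem \ref{thm.differenceHerg} (the fact that $\omega\mapsto\omega F_{V_0}(\omega)$ is Herglotz) together with the physical symmetry $F_{V_0}(-\bar\omega)=\overline{F_{V_0}(\omega)}$; the latter forces the non-negative representing measure $\mu$ to be even. Under the lossless hypothesis, the Sokhotski--Plemelj formula implies in addition that $\operatorname{supp}(\mu)\cap([-\omega_+,-\omega_-]\cup[\omega_-,\omega_+])=\emptyset$.

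For the lossless-with-zero statement I would substitute $\omega=\omega_0$ into the representation to extract $F_{V_0,\infty}=-\int d\mu/(\xi^2-\omega_0^2)$, then apply the resolvent identity
\[
\frac{1}{\xi^2-\omega^2}-\frac{1}{\xi^2-\omega_0^2} \;=\; \frac{\omega^2-\omega_0^2}{(\xi^2-\omega^2)(\xi^2-\omega_0^2)}
\]
and combine the two expressions. After a short algebraic manipulation the sum $F_{V_0}(\omega)+\frac{\omega_0^2-\omega^2}{\omega^2}F_{V_0,\infty}$ equals $-\frac{\omega_0^2-\omega^2}{\omega^2}\int \xi^2\,d\mu(\xi)/[(\xi^2-\omega^2)(\xi^2-\omega_0^2)]$. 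The crucial sign observation is that for $\omega,\omega_0\in[\omega_-,\omega_+]$ the factors $(\xi^2-\omega^2)$ and $(\xi^2-\omega_0^2)$ share the same sign on every point of $\operatorname{supp}(\mu)$ (both positive when $|\xi|>\omega_+$, both negative when $|\xi|<\omega_-$), making the integrand non-negative. The sign of the prefactor $(\omega_0^2-\omega^2)/\omega^2$ then delivers the two claimed inequalities on either side of $\omega_0$. The approximate-cloaking statement is obtained by carrying out the identical calculation with $\delta:=F_{V_0}(\omega_0)\ne 0$, which produces an extra additive term $\delta\,\omega_0^2/\omega^2$ on the right; the hypothesis applied with $\tilde V_0=V_0$ controls $|\delta|\le \eta\,G^{\operatorname{vac}}_{V_0}$, and tracking signs on either side of $\omega_0$ produces both the upper and the lower bounds stated in the theorem.

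For the first (general) inequality the resolvent identity applied between two arbitrary frequencies gives
\[
\omega_1^2 F_{V_0}(\omega_1)-\omega_2^2 F_{V_0}(\omega_2) \;=\; (\omega_1^2-\omega_2^2)\!\left[F_{V_0,\infty}+\int \frac{\xi^2\,d\mu(\xi)}{(\xi^2-\omega_1^2)(\xi^2-\omega_2^2)}\right].
\]
Choosing the squared midpoint $\omega_0^2:=(\omega_-^2+\omega_+^2)/2$ and applying the identity to the pair $(\omega_+,\omega_0)$, the same sign analysis (in the lossless case) yields $|\omega_+^2 F_{V_0}(\omega_+)-\omega_0^2 F_{V_0}(\omega_0)|\ge\tfrac12(\omega_+^2-\omega_-^2)F_{V_0,\infty}$, and the elementary inequality $\max(|a|,|b|)\ge|a-b|/2$ gives the factor-$1/4$ bound. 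The main technical obstacle I anticipate is the lossy case: once $\mu$ may charge a subset of $[\omega_-,\omega_+]$, the integral must be interpreted as an $\omega+i0$ boundary value and its real part is no longer manifestly positive. The key observation for completing this case is that the Sokhotski--Plemelj delta contributions to the imaginary part inherit the non-negative sign of $\mu$ (passivity), so that the modulus $|a-b|$ is still controlled from below. Working through the complex boundary values to confirm that the $\tfrac14(\omega_+^2-\omega_-^2)F_{V_0,\infty}$ bound survives in the lossy regime is, in my view, the single nontrivial part of the proof; once that is settled the remaining statements are algebraic consequences of the resolvent identity and the sign analysis already described.
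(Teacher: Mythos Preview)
Your treatment of the lossless claims is correct and is essentially the Kramers--Kronig route the paper mentions as an alternative: you use the Stieltjes representation $F_{V_0}(\omega)=F_{V_0,\infty}+\int d\mu(\xi)/(\xi^2-\omega^2)$ together with $\operatorname{supp}\mu\cap[\omega_-,\omega_+]=\emptyset$, apply the resolvent identity, and read off the sign of $\int \xi^2\,d\mu/[(\xi^2-\omega^2)(\xi^2-\omega_0^2)]$. The approximate-cloaking bounds then follow as you say by carrying the nonzero $\delta=F_{V_0}(\omega_0)$ through and bounding $|\delta|\le\eta\,G_{V_0}^{\mathrm{vac}}$. This matches the paper's Theorem~\ref{Th.boundtransp} and Corollary~\ref{Cor.lossless}; the paper's primary argument cites \cite{Cassier:2017:BHF}, which derives the same inequality via sum rules with Dirac measures, but explicitly notes your direct argument as a second proof.

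The lossy bound, however, has a real gap, and your proposed fix does not work. Your two-point identity gives $a-b=(\omega_+^2-\omega_0^2)\big[F_{V_0,\infty}+I\big]$, and you need $|F_{V_0,\infty}+I|\ge F_{V_0,\infty}$. Once $\mu$ charges $(\omega_0,\omega_+)$ the principal-value part of $I$ picks up a negative contribution there (the factors $\xi^2-\omega_+^2$ and $\xi^2-\omega_0^2$ have \emph{opposite} signs), so $\operatorname{Re}I$ can be negative. Your claim that the Sokhotski--Plemelj delta contributions are non-negative is also incorrect: the $\delta$-term at $\xi=\omega_0$ carries the prefactor $1/(\omega_0^2-\omega_+^2)<0$, so $\operatorname{Im}I$ need not be sign-definite either. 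There is no evident way to control $|F_{V_0,\infty}+I|$ from below by $F_{V_0,\infty}$ in this setting.

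The paper proves the lossy inequality by a genuinely different mechanism (Theorem~\ref{Th.lossy}): it composes the Herglotz function $H_{V_0}(x)=xF_{V_0}(\sqrt{x})$ with $h_\mu(z)=\frac{1}{2\Delta}\log\frac{z-\Delta}{z+\Delta}$ (the Herglotz function associated with the uniform probability on $[-\Delta,\Delta]$), applies the Bernland--Gustafsson--Nordebo sum rule to $h_\mu\circ H_{V_0}$ to obtain $\int_{x_-}^{x_+}\operatorname{Im}(h_\mu\circ H_{V_0})(x+i0)\,dx\le\pi/F_{V_0,\infty}$, bounds the integrand below by $\frac{\pi}{4\Delta}\,\mathbf 1_{\{|H_{V_0}|\le\Delta\}}$, and finally sets $\Delta=\max_{[x_-,x_+]}|H_{V_0}|$. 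This yields the factor $1/4$ without any pointwise sign analysis of the representing measure, which is precisely what your approach cannot avoid.
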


Proving this theorem is not straightforward and requires significantly more effort then that of the previous theorem. Thus, most of the paper is dedicated to the proof which is based on combining the following two techniques for producing bounds:

\begin{itemize}
    \item[i)] Variational bounds in abstract theory of composites (see, e.g., \cite{Milton:2016:ETC, Milton:2022:TOC,Beard:2023:EOVP, Welters:EOM:2026}). More precisely:
    \begin{itemize}
    \item[$\circ$] A functional framework is first developed based on a Hodge decomposition (see Theorem \ref{th.tpro}) which yields a new representation of the DtN map  represented in terms of an effective operator (see Theorem \ref{thm.effopreprDtNmap}). 
                \item[$\circ$]This new representation of the DtN operator allows one to apply variational bounds on effective operators from the abstract  theory of  composites. This yields, in particular, the bounds \eqref{AffineBCPositiveLowerBound}  for $F_{V_0,\infty}$ in terms of key physical parameters of the cloaking device: the volumes of the obstacle and the cloak (i.e., $|\mathcal{O}|, |\Omega\setminus\mathcal{O}|$) and  a lower bound of the relative permittivity   of the obstacle $\BGve_{\mathrm{ob}}(\Bx)/\varepsilon _0$ given by the ratio $\Gve/\Gve_0$. 
                
            \end{itemize}
    \item[ii)] Analytic approach to get frequency-dependent bounds via sum rules for passive systems (see, e.g., \cite{Bernland:2011:SRC, Cassier:2017:BHF}).
    \begin{itemize}
    \item[$\circ$] The functions $\omega\mapsto \omega \Lambda_{\BGve(\cdot,\Go)}=\Lambda_{\omega\BGve(\cdot,\Go)}$ and $\omega\mapsto \omega F_{V_0}(\omega)$ [in \eqref{def.FHerglotz}] are first shown to be Herglotz functions (see Propositions \ref{Pro:DtNMapQuadFormIsHerglotz} and \ref{pro.DtNMapIsHerglotzFunc}, and Theorem \ref{thm.differenceHerg}) and then the function $z\mapsto F_{V_0}(\sqrt{-z})$ is proven to be a Stieltjes function (see Lemma \ref{Const.Stielt}). 
    \item[$\circ$] This yields bounds, using a general approach via sum rules (see Sec.\ \ref{sec:SumRulesApprGenBounds}), on the function $\omega\mapsto F_{V_0}(\omega)$ in term of  $F_{V_0,\infty}$, the bandwidth $\omega_+-\omega_-,$ and the center frequency $(\omega_++\omega_-)/2$ (see Theorems \ref{Th.boundtransp} and \ref{Th.lossy}).
    \end{itemize}
\end{itemize}

Finally, we are also able to use the main results discussed above to provide additional insights on cloaking by either: a) assuming more regularity of the function $[\omega_-,\omega_+]\ni\omega\mapsto\BGve(\cdot,\omega)$ (see Cor.\ \ref{Cor.boundlossy} and Remark \ref{rem.EpsilMoreRegularity}) or b) allow dispersive obstacles, i.e., a frequency-dependent $\BGve_{\mathrm{ob}}(\Bx, \Go)$ (see Cor.\ \ref{cor:ApproxCloakingWDisperObj}).

\subsection{Outline of the paper}

The rest of the paper will proceed as follows. In Sec.\ \ref{sec:Preliminaries} we introduce our preliminaries. We begin with Subsec.\ \ref{sec:NotationsConventions} by recalling the necessary functional framework on Sobolev spaces for elliptic PDEs on bounded Lipschitz domains and then the required matrix/operator notations. Next, in Subsec.\ \ref{subsec:ReviewHerglotzStieltjesFuncs}, we review the properties of two main classes of analytic functions arising in this paper, \emph{Herglotz and Stieltjes functions}. 

In Sec.\ \ref{sec:FormulationCloakingProb}, we begin with Subsec.\ \ref{subsec:SettingCloakingProb}, where we describe the cloaking problem in the near-field regime for quasistatic electromagnetism over a finite-frequency interval on a simply-connected bounded Lipschitz domain. Next, in Subsec.\ \ref{subsec:MainAssumpOnPermittivity} we describe precisely the physical and mathematical assumptions (cf.\ our hypotheses H1--H6) associated with the constituent materials (i.e., their permittivity) related to the \emph{passivity of the cloaking device}. Then, in Subsec.\ \ref{subsec:MathFormulationCloakingProbRegime}, we define precisely the cloaking problem by introducing the Dirichlet problem for the elliptic PDE corresponding to the quasistatic regime and the associated DtN map, which represents the data accessible to the observer to probe the medium  from the domain boundary. This allows us to rigorously define cloaking as well as approximate cloaking (see Def.\ \ref{Def-perfectcloaking}) in terms of this bounded linear operator. 

In Sec.\ \ref{sec:EffOpReprDtNMap}, we show how to represent the DtN map as an effective operator in the abstract theory of composites. In order to do this, we begin in Subsec.\ \ref{subsec:HodgeDecompDtNMap}, with a discussion of the Hodge decomposition associated with the Dirichlet problem for Laplace's equation (see Theorem \ref{th.tpro}). Next, in Subsec.\ \ref{subsec:AbstrThyComposites} we introduce the basic formalism in the abstract theory of composites starting with the Hilbert space framework for defining an effective operator, based on a generalized notion of a Hodge decomposition and constitutive material relation that is encapsulated in a concept known as the $Z$-problem. Next, in Subsec.\ \ref{Dirichlet-Z-problem}, we define the ``Dirichlet $Z$-problem" (see Def.\ \ref{DefKZProb}) using the Hodge decomposition from Subsec.\ \ref{subsec:HodgeDecompDtNMap}. After this, in Subsec.\ \ref{sec.linkDtnsigmastar}, we prove the main result of this section (i.e., Theorem \ref{thm.effopreprDtNmap}), a new representation of the DtN map in terms of an effective operator for the Dirichlet $Z$-problem. 

In Sec.\ \ref{Sec:VarPrincBndsEffOps}, we begin by describing, in Subsec.\ \ref{subsec:VarPrinAbstrResults}, how the abstract framework allows one to develop systematically bounds on the effective operator for $Z$-problems in a similar manner as in the theory of composites, for instance, the two variational principles known as the Dirichlet and Thomson variational principles. Then, in Subsec.\ \ref{subsec:VarPrincAppDtNMap}, we apply this framework to develop the associated minimization variational principles for the effective operator of the Dirichlet $Z$-problem to produce our first set of elementary bounds on the DtN map. 

In Sec.\ \ref{sec:DiriPrbAffineBCsDtNMap}, we narrow the set of possible boundary conditions for our elliptic PDE, which defines the DtN map, to affine boundary conditions. First, in Subsec.\ \ref{subsec:DtNMapAffBndCondsEffOpRepr} we prove two theorems (Theorems \ref{thm.EffOpForDtNMapWithAffBCs} and \ref{thm.aDisaneffoperator}) that relates the DtN map with affine boundary conditions to a new $Z$-problem and associated effective operator. Then in Subsec.\ \ref{subsec:ElemBndsAvgLocalTensors} we show how this leads to a second set of elementary bounds on the DtN map for these affine boundary conditions. 

In Sec.\ \ref{sec:AnalyticPropertiesDtNMap}, we consider the analytic properties of the DtN map for passive systems.  First, in Subsec.\ \ref{sec:DiffHerglotzFuncsDtNMapsIsHerglotz}, we relate the quadratic form of the DtN map to a Herglotz function of frequency (see Proposition \ref{Pro:DtNMapQuadFormIsHerglotz}). Then we compare the quadratic forms of the cloak to the uncloaked device and using the representation formula for the DtN map as an effective operator, to prove an unintuitive result (see Theorem \ref{thm.differenceHerg}) that their difference is also a Herglotz function. Next, in Subsec.\ \ref{sec-Herg-Stielt} we recall the connection between the two main classes of analytic functions we consider, namely, Herglotz and Stieltjes functions. Then we exploit this connection in Subsec.\ \ref{sec-Herg-phys} to show how our physical assumptions for the cloaking problem lead to the difference of the DtN maps being, after a change of variables, a Stieltjes function with some very useful properties (see Theorem \ref{Thm.Hergphys}). The properties of this function will be the key to deriving sum rules and bounds associated to the cloaking problem in the next section.

Sec.\ \ref{sec:QuantiativeBndsPassiveCloakingFreqInterval} is dedicated to deriving the quantitative bounds and limitations to the cloaking problem. First, in Subsec.\ \ref{sec:PerfectCloakingCantOccur} we prove that perfect cloaking cannot occur on the frequency interval $[\omega_-,\omega_+]$ (see Theorem \ref{Pro.perfectcloaking}). Next, in Subsec.\ \ref{sec:SumRulesApprGenBounds} we recall from \cite{Bernland:2011:SRC, Cassier:2017:BHF} the sum rules approach to bounds on Herglotz and Stieltjes function. Then in Subsections \ref{sec-losslesscase}, \ref{sec-lossycase}, and \ref{sec:GenOurCloakingBnds} we apply this to our problem to derive fundamental bounds and limitations to cloaking under different assumptions on the cloaking device. In Subsec.\ \ref{sec-losslesscase}, under the lossless assumption, we show that if one can perfectly cloak or even approximately cloak at one frequency then it constraints cloaking on the considered frequency interval (see Theorem \ref{Th.boundtransp} and Cor.\ \ref{Cor.lossless}). On the other hand, in Subsec.\ \ref{sec-lossycase} we consider the more general case of a lossy cloaking device and derive inequalities that constraint cloaking in this setting (see Theorem \ref{Th.lossy} and Cor.\ \ref{Cor.boundlossy}). Finally, in Subsec.\ \ref{sec:GenOurCloakingBnds}, we generalize the results obtained in Subsec.\  \ref{sec-losslesscase} for a lossless cloak to the case of a dispersive obstacle (see Cor.\ \ref{cor:ApproxCloakingWDisperObj}).

We conclude our paper with Appendix \ref{sec:Appendix}. First, Subsec.\ \ref{sec-annexe} gives the proof of Prop.\ \ref{pro.DtNDirBVPResults}. Next, in Subsec.\ \ref{sec:StrConvInvOps} we prove a technical result on strong convergence of a sequence of invertible operators. Finally, in Subsec.\ \ref{sec:HergFuncBanachSpaces} we describe the functional framework for bounded linear operators acting between a Banach space and its dual space that justifies calling the DtN map an operator-valued Herglotz function. The merit of presenting this structure for this important operator is to allow for further investigations.

\section{Preliminaries}\label{sec:Preliminaries}
\subsection{Notations and  conventions}\label{sec:NotationsConventions}

Let $\Omega$ be a nonempty bounded connected open set of $\mathbb{R}^d$ ($d\geq 2$) with Lipschitz boundary (i.e., a bounded Lipschitz domain). We consider the complex Hilbert space
$$
\mathcal{H}=\BL^2(\Omega)=[L^2(\Omega)]^d
$$
with standard inner product $(\cdot,\cdot)_{\mathcal{H}}$ defined by
$$
(\Bu,\Bv)_{\mathcal{H}}=\int_{\Omega} \Bu(\Bx) \cdot \overline{\Bv(\Bx)} \,\mathrm{d} \Bx,  \ \forall \Bu, \, \Bv \in \mathcal{H}
$$
and  its associated Hilbert norm which we denote by $\|\cdot\|_{\mathcal{H}}$.
Next, we introduce the following classical Sobolev spaces associated with the gradient $\nabla$ and divergence $\nabla\cdot$ operators along with the Green's formula:
\begin{itemize}
\item $H^1(\Omega)=\left\{ u\in L^2(\Omega) \,\mid \,\nabla u \in \BL^2(\Omega)\right\}$
endowed with its Hilbertian norm:
$$
\left\|u\right\|_{H^{1}(\Omega)}^2= \left\|u\right\|_{L^2(\Omega)}^2+\left\|\nabla  u\right\|_{\BL^2(\Omega)}^2.
$$

\item $H^1_0(\Omega)=\{u \in H^{1}(\Omega) \mid u=0 \mbox{ on } \partial \Omega \}$.  Recall, $H^1_0(\Omega)$ is a closed subspace in $H^1(\Omega)$ (for the $\|\cdot\|_{H^{1}(\Omega)}$ norm). We denote by $H^{-1}(\Omega)$  its topological dual, i.e., $H^{-1}(\Omega)=(H^{1}_0(\Omega))^*$, 
and  $\langle  \cdot ,\cdot  \rangle_{H^{-1}(\Omega),H^1_0(\Omega)}$  the duality product between $H^{-1}(\Omega)$ and $H^1_0(\Omega)$.

\item $H_{div}(\Omega)=\left\{ \Bu\in \BL^2(\Omega) \mid \,\Grad \cdot \Bu \in \BL^2(\Omega)\right\}$
endowed with its Hilbertian norm:
$$
\left\|\Bu\right\|_{H_{div}(\Omega)}^2= \left\|\Bu\right\|_{\BL^2(\Omega)}^2+\left\|\nabla \cdot \Bu\right\|_{\BL^2(\Omega)}^2.
$$
 \item $H_{div,0}(\Omega)=\left\{ \Bu\in H_{div}(\Omega)  \mid \Bu\cdot \Bn=0 \mbox{ on } \partial \Omega \right\}.$ Recall,
$H_{div,0}(\Omega)$ is a closed subspace in $H_{div}(\Omega)$ (for the $\|\cdot \|_{H_{div}(\Omega)}$ norm).
 
\item $H_{harm}(\Omega)=\{u\in H^1(\Omega) \mid \Delta u=0 \}$.

\item The trace space:  $H^{\frac{1}{2}}(\partial \Omega)=\left\{ u|_{\partial \Omega} \mid u \in H^{1}(\Omega) \right\}$ endowed with Hilbertian norm (see, e.g., \cite[p.\ 8]{Girault:1986:FNS}):
\begin{gather*}
    \left\|V_0\right\|_{H^{\frac{1}{2}}(\partial\Omega)}=\inf_{u\in H^1(\Omega), u|_{\partial\Omega}=V_0}\left\|u\right\|_{H^{1}(\Omega)}.
\end{gather*}
Recall, the duality product $\langle \cdot,\cdot \rangle_{H^{-\frac{1}{2}}(\partial \Omega), H^{\frac{1}{2}}(\partial \Omega)}$ between $H^{\frac{1}{2}}(\partial \Omega)$ and it's topological dual $H^{-\frac{1}{2}}(\partial \Omega)=\big(H^{\frac{1}{2}}(\partial \Omega)\big)^*$ is an extension of the $L^2$-inner product on the boundary in the sense that 
 $$
\langle u,\overline{v} \rangle_{H^{-\frac{1}{2}}(\partial \Omega), H^{\frac{1}{2}}(\partial \Omega)}=\left( u,v \right)_{L^2(\partial \Omega)},\; \forall u\in L^2(\partial \Omega) \mbox{ and } \forall  v \in  H^{\frac{1}{2}}(\partial \Omega).
 $$
 Moreover, $H^{-\frac{1}{2}}(\partial \Omega)$ is endowed with  its natural dual norm: 
 $$
 \|u\|_{H^{-\frac{1}{2}}(\partial \Omega)}=\sup_{v\in H^{\frac{1}{2}}(\partial \Omega) \setminus \{ 0\} } \frac{ \big|\langle u,\overline{v} \rangle_{H^{-\frac{1}{2}}(\partial \Omega), H^{\frac{1}{2}}(\partial \Omega)}\big|}{\|v\|_{H^{\frac{1}{2}}(\partial \Omega)}}.
 $$
 \item The Green's formula holds (see, e.g., \cite[p.\ 28, Eq.\ (2.17)]{Girault:1986:FNS}): 
 For all $
        \Bv\in H_{div}(\Omega)$ and all $ u\in H^{1}(\Omega)$,
    \begin{gather}
         (\Bv,\Grad u)_{\mathcal{H}}+(\Grad \cdot \Bv, u)_{L^2(\Omega)}=\langle \gamma_{\Bn}\Bv,\overline{u|_{\partial\Omega} }\rangle_{H^{-\frac{1}{2}}(\partial \Omega), H^{\frac{1}{2}}(\partial \Omega)}, \label{eq:GreensFormula}
    \end{gather}
    where $\gamma_{\Bn}\Bv=\Bv\cdot\Bn$ on $\partial\Omega$ and $\Bn$ is the unit normal to $\partial \Omega$, oriented to the exterior of $\Omega$. The operator $\gamma_n$ is referred to as the normal trace operator. It is a bounded linear operator from $H_{div}(\Omega)$ to $H^{-\frac{1}{2}}(\partial \Omega)$. In addition, for a function $u\in H^1(\Omega)$ such that $\Delta u\in L^2(\Omega)$ [or, equivalently, such that $\nabla u\in H_{div}(\Omega)$], we introduce the standard notation $\frac{\partial u}{\partial \Bn}\in H^{-\frac{1}{2}}(\partial \Omega)$ for the normal trace of the gradient, i.e., $\frac{\partial u}{\partial \Bn}:=\Gg_{\Bn}(\nabla u)=\nabla u\cdot \Bn$ on $\partial \Omega$.
\end{itemize} 

\begin{Rem} 
For the cloaking problem we assume that $d = 3$, but we retain the notation $d$ since some important results in the paper (which also apply outside the cloaking setting) hold for dimensions $d \geq 2$.
\end{Rem}

The Banach algebra of all bounded linear operators from a Banach space $E$ into a Banach space $F$ is denoted by $\mathcal{L}(E,F)$ and, for simplicity, $\mathcal{L}(E)=\mathcal{L}(E,E)$. For any complex Hilbert space $\mathcal{H}$ with inner product $(\cdot,\cdot)$ (with the convention that it is linear in the first component and antilinear in the second component), if $\mathbb{A}\in \mathcal{L}(\mathcal{H})$ then we denote its (Hilbert space) adjoint by $\mathbb{A}^\dagger$. The real $\mathfrak{R}(\mathbb{A})$ and imaginary $\mathfrak{I}(\mathbb{A})$ parts of $\mathbb{A}\in \mathcal{L}(\mathcal{H})$ are defined by\footnote{For a complex number $z=a+\mathrm{i}b\ (a,b\in \mathbb{R})$, we use the notation $\overline{z}=a-\mathrm{i}b$ for complex conjugation, $\operatorname{Re}z:=\frac{1}{2}(z+\overline{z})=a, \operatorname{Im}z:=\frac{1}{2\mathrm{i}}(z-\overline{z})=b$, for the real and imaginary part, resp., of $z$. In contrast, we use \eq{def:OperatorRealImagnaryParts} for the operator real and imaginary parts of $\mathbb{A}$ to avoid any potential confusion that could arise, especially in the context of matrices. Indeed, the real part  and the imaginary part of a matrix $\BA\in M_{d}(\bbC)$ are often defined  by  $\operatorname{Re}(\BA)=(\BA+\overline{\BA})/2$ and  $\operatorname{Im}(\BA)=(\BA-\overline{\BA})/2 \mathrm{i}$ (namely by taking the real and imaginary parts of the entries of $\BA$, e.g., see \cite[p.\ 7]{Horn:2013:MAN}), which in general is different from the definition \eqref{def:OperatorRealImagnaryParts}.}
    \begin{gather}
        \mathfrak{R}(\mathbb{A}):=\frac{1}{2}(\mathbb{A}+\mathbb{A}^\dagger), \ \mathfrak{I}(\mathbb{A}):=\frac{1}{2\mathrm{i}}(\mathbb{A}-\mathbb{A}^\dagger),\label{def:OperatorRealImagnaryParts}
    \end{gather}
and, in particular, their associated quadratic forms are related by
\begin{gather*}
    \operatorname{Re}(\mathbb{A}v,v)=(\mathfrak{R}(\mathbb{A})v,v),\;\operatorname{Im}(\mathbb{A}v,v)=(\mathfrak{I}(\mathbb{A})v,v),\;\forall v\in \mathcal{H}.
\end{gather*}
We say $\mathbb{A}\in \mathcal{L}(\mathcal{H})$ is positive semidefinite, denoted by $\mathbb{A}\geq 0$, if $(\mathbb{A} v,v)\geq 0, \forall v\in \mathcal{H}$ and $\mathbb{A}>0$, if $\mathbb{A}$ is also invertible. 
We write $\mathbb{A}\leq \mathbb{B}$ (i.e., $\mathbb{A}\geq \mathbb{B}$) if $\mathbb{A}, \mathbb{B}\in \mathcal{L}(\mathcal{H})$ and $\mathbb{B}-\mathbb{A}\geq 0$. An operator $\mathbb{A}\in \mathcal{L}(\mathcal{H})$ is called coercive\footnote{See, e.g., \cite[p.\ 156, Def.\ 4.2.6 \& Rem.\ 4.2.7]{Assou:2018:MFE} for several other equivalent definitions to \eq{def:CoerciveOperator} for coercivity.} if
\begin{gather}
    \exists c>0,\gamma\in[0,2\pi)\;\text{such that}\;\operatorname{Im}\left[e^{\mathrm{i}\gamma}(\mathbb{A}v,v)\right]\geq c\,(v,v),\;\forall v\in \mathcal{H},\label{def:CoerciveOperator}
\end{gather}
that is, $0<cI_{\mathcal{H}}\leq  \mathfrak{I}(e^{\mathrm{i}\gamma}\mathbb{A})$, where $I_{\mathcal{H}}$ denotes the identity operator on $\mathcal{H}$.

For the Hilbert space $\mathbb{C}^d$, we use the standard inner product and denote its norm by $||\cdot||_{\mathbb{C}^d}$. We denote by  $M_d(\mathbb{C})$ the finite-dimensional complex Banach space of $d \times d$ complex matrices endowed with the spectral norm  $\|\cdot\|$ induced by the norm $||\cdot||_{\mathbb{C}^d}$ on $\bbC^d$. 
In this setting, for a matrix ${\bf A} \in M_d(\mathbb{C})$, one denotes by $\BA^{\top}$ its transpose, $\BA^\dagger=\overline{\BA}^{\top}$ its adjoint  (i.e., conjugate-transpose) and define similarly the notation $\mathfrak{R}(\BA), \mathfrak{I}(\BA)$, and the notion of positive semi-definite as well as coercive matrices as above for operators. In addition, $\BI$ will denote the identity matrix.

For an open set $\mathcal{O}$ of $\mathbb{R}^d$ ($d\geq 2$), we denote by $ M_d( L^{\infty}(\mathcal{O}))$  the space of  essentially bounded measurable functions from $\mathcal{O}$ to $M_d(\bbC)$ endowed with  the norm $\|\cdot\|_{\infty}$ defined  by
$$
\|\Ba\|_{\infty}=\mathrm{ess} \, \underset{\Bx\in \mathcal{O}}{\sup} \|\Ba(\Bx)\|, \quad  \, \forall \Ba\in M_d( L^{\infty}(\mathcal{O})).$$
An $\Ba\in M_d( L^{\infty}(\mathcal{O}))$ is called uniformly coercive if 
\begin{equation}\label{def.coercivitytensor}
 \hspace{-0.75em}\exists c>0,\gamma\in[0,2\pi)\;\text{such that}\;\operatorname{Im}\left[e^{\mathrm{i \gamma}}\Ba(\Bx) \BW\cdot \overline{\BW}\right] \geq c   ||\BW||_{\mathbb{C}^d}^2, \,  \ \forall \BW \in \bbC^d, \, \mbox{ for a.e. } \Bx \in \mathcal{O}.
\end{equation}

Finally, we use the notation for some subsets of the complex plane $\mathbb{C}$ and the real line $\mathbb{R}$:
$$\mathbb{C}^{+}=\{z\in\mathbb{C}\mid\operatorname{Im}z>0\},  \ \mathbb{R}^{\pm}=\{x\in\mathbb{R}\mid \pm x\geq 0\}, \ \mathbb{R}^{\pm,*}=\{ x\in \mathbb{R}\mid \pm x>0\}.$$

\subsection{A review of some properties of Herglotz and Stieltjes functions}\label{subsec:ReviewHerglotzStieltjesFuncs}
We recall now the definition of two classes of analytic functions: Herglotz and Stieltjes function which model passive materials and which will be used throughout this paper.
For more details about the properties of these functions, we refer to \cite[Appendix]{Krein:1977:MMP} and \cite{Krein:1974:RFU,Gesztesy:2000:MVH, Berg:2008:SPBS}. 

We first introduce the class of scalar Herglotz functions.
\begin{Def}
An analytic function $h:\bbC^{+}\to \bbC$ is a Herglotz function (also called a Nevanlinna or Pick function) if
$$
\Imag h(z)\geq 0, \ \forall z \in \bbC^{+}. 
$$
\end{Def}
A useful property of Herglotz functions is the following representation theorem due to Nevanlinna \cite{Nevanlinna:1922:AEF}.
\begin{Thm}\label{thm.Herg}
 $h$ is Herglotz function if and only if it admits the following representation
\begin{equation}\label{eq.defhergl}
h(z)=\alpha \, z+\beta + \displaystyle \int_{\bbR} \left(  \frac{1}{\xi-z}- \frac{\xi}{1+\xi^2}\right)\md \mm( \xi), \ \mbox{ for } \Imag(z)>0,
\end{equation}
where $\alpha\in\bbR^{+}$, $\beta\in \bbR$, and $\mm$ is a positive regular Borel measure for which   $\int_{\bbR} \
 \md \mm(\xi)/(1+\xi^2)$ is finite.  
\end{Thm}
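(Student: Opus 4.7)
The plan is to prove the two implications separately, with the substantive content concentrated on the necessity direction.

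\textbf{Sufficiency.} Assuming $h$ admits the representation \eqref{eq.defhergl}, I would verify positivity of $\Imag h$ on $\bbC^+$ by the direct computation
\[
\Imag\!\left(\frac{1}{\xi - z} - \frac{\xi}{1+\xi^2}\right) = \frac{\Imag z}{|\xi - z|^2},
\]
valid for $z\in\bbC^+, \xi\in\bbR$, since $\xi/(1+\xi^2)\in\bbR$; hence $\Imag h(z) = \alpha \Imag z + \int_{\bbR} \Imag z/|\xi-z|^2 \,\md \mm(\xi) \geq 0$. Convergence of the integral and analyticity of $h$ follow from the expansion
\[
\frac{1}{\xi - z} - \frac{\xi}{1+\xi^2} = \frac{z}{\xi^2} + O\!\left(\frac{1}{\xi^3}\right)\;\text{as}\;|\xi|\to\infty,
\]
which, locally uniformly in $z$, makes the integrand dominated by a constant multiple of $(1+\xi^2)^{-1}$; the hypothesis $\int_{\bbR} (1+\xi^2)^{-1}\md \mm(\xi) < \infty$ then yields absolute convergence, and Morera's theorem combined with dominated convergence delivers analyticity on $\bbC^+$.

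\textbf{Necessity.} Given a Herglotz $h$, the imaginary part $u(x,y) := \Imag h(x + \ii y)$ is a non-negative harmonic function on the open upper half-plane $\bbR \times \bbR^{+,*}$. I would invoke the half-plane Herglotz--Riesz representation of such functions to produce $\alpha \geq 0$ and a positive regular Borel measure $\mm$ on $\bbR$ with $\int_{\bbR} (1+\xi^2)^{-1}\md \mm(\xi) < \infty$ such that
\[
u(x,y) = \alpha y + \int_{\bbR} \frac{y}{(x - \xi)^2 + y^2}\md \mm(\xi).
\]
Having produced $\alpha$ and $\mm$, I would then reconstruct $h$ by defining
\[
\tilde h(z) := \alpha z + \int_{\bbR} \left(\frac{1}{\xi-z} - \frac{\xi}{1+\xi^2}\right)\md \mm(\xi),
\]
which by the sufficiency part is analytic on $\bbC^+$ and satisfies $\Imag \tilde h = u$ by the same imaginary-part calculation used above. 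Therefore $h - \tilde h$ is analytic on the connected open set $\bbC^+$ with identically zero imaginary part, hence equals a real constant $\beta \in \bbR$ by the open mapping theorem; substituting gives exactly \eqref{eq.defhergl}.

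\textbf{The main obstacle.} The substantive ingredient is the half-plane representation of $u$, which I would derive by conformal transfer to the unit disk. Using the biholomorphism $\phi : \{w \in \bbC : |w|<1\} \to \bbC^+$, $\phi(w) = \ii(1+w)/(1-w)$, the pullback $u \circ \phi$ is a non-negative harmonic function on the unit disk, to which the classical Herglotz representation on the disk applies (via Poisson integrals over shrinking concentric circles and Helly's selection theorem), producing a finite positive Borel measure $\nu$ on the unit circle such that $(u\circ\phi)(w)$ is the Poisson integral of $\nu$ against $w$. The delicate point is the transfer back to $\bbC^+$: the preimage $w = 1$ of $\infty$ on the boundary may carry a point mass of $\nu$, and that mass is exactly what yields the linear term $\alpha y$. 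For the remaining part of the boundary, the change of variables $\xi = \phi(e^{\ii\theta})$ carries the Poisson kernel of the disk to that of the half-plane up to a Jacobian proportional to $(1+\xi^2)^{-1}$; absorbing this into the pushed-forward measure yields the measure $\mm$ satisfying the integrability condition (inherited from finiteness of $\nu$), but $\mm$ itself need not be finite on $\bbR$, which is precisely the reason the correction term $-\xi/(1+\xi^2)$ must appear in \eqref{eq.defhergl} to guarantee convergence of the integral.
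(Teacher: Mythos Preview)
Your proof sketch is correct and follows the standard route to Nevanlinna's representation theorem: verify sufficiency by direct estimation of the kernel (the identity $\frac{1}{\xi-z}-\frac{\xi}{1+\xi^2}=\frac{1+\xi z}{(\xi-z)(1+\xi^2)}$ makes the $(1+\xi^2)^{-1}$ domination immediate), and obtain necessity by pulling back to the disk via the Cayley transform, invoking the Herglotz--Riesz representation for nonnegative harmonic functions there, and pushing forward while isolating the atom at $w=1$ as the linear term $\alpha y$.

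However, there is nothing to compare against: the paper does not prove Theorem~\ref{thm.Herg}. It is stated as a classical result attributed to Nevanlinna (1922) and used as a black box throughout (e.g., in Corollary~\ref{cor.Herg}, Lemma~\ref{Lem.comptnesor-vacuum}, and the construction in Section~\ref{sec-Herg-phys}). Your write-up therefore supplies a proof where the paper simply cites one; if you intend to include it, the argument is sound as outlined, though you may want to give a reference (e.g., the appendix of Akhiezer--Krein or Gesztesy--Tsekanovskii) for the disk representation step rather than reproving Helly selection from scratch.
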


\begin{Rem}
If  for the measure $\mm$ introduced in the representation \eqref{eq.defhergl}, $\int_{\bbR}| \xi| \, \md \mm(\xi)/(1+\xi^2)$ is  finite, then we can rewrite  the relation (\ref{eq.defhergl}) as:
\begin{equation*}\label{eq.herg2}
h(z)=\alpha \, z+\gamma  + \displaystyle \int_{\bbR} \frac{\md  \mm( \xi) }{\xi-z}  \ \quad \mbox{ with } \gamma=\beta- \int_{\bbR}  \frac{\xi \,\md \mm( \xi)}{1+\xi^2}\in \bbR.
\end{equation*}
\end{Rem}
For a given Herglotz function $h$, the triple $(\alpha, \beta, \mm)$  introduced in the representation \eqref{eq.defhergl} of $h$  is uniquely defined by the following corollary.
\begin{Cor}\label{cor.Herg}
Let $h$ be a Herglotz function defined by its representation (\ref{eq.defhergl}), then we have:
\begin{eqnarray}
&& \alpha= \lim_{y\to +\infty}\displaystyle \frac{h(\ii y)}{\ii y}, \ \beta=\Real h(\ii), \quad \forall \; a \in \bbR,  \quad \operatorname{m}(\{a\})=\lim_{y\to 0^{+}}y \; \operatorname{Im}h(a+  \mathrm{i} y),\label{eq.coeffdomherg} \\[10pt] 
 && \, \forall  (a,b) \subset \bbR \mbox{ with $a< b$},  \, \frac{\mm([a,b])+\mm((a,b))}{2} =\lim_{y\to 0^{+}} \frac{1}{\pi} \int_{a}^{b}  \Imag h(x+\ii y) \, \md x \label{eq.mesHerg}.
\end{eqnarray}
\end{Cor}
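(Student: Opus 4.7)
The plan is to prove the formulas for $\alpha$ and $\beta$ by direct substitution into the representation \eqref{eq.defhergl}, and to establish the two formulas involving $\mm$ by first extracting a Poisson-type representation for $\Imag h(x+\ii y)$ and then applying Lebesgue's dominated convergence theorem with dominating functions built from the integrability hypothesis $\int_{\bbR}\md\mm(\xi)/(1+\xi^2)<\infty$ of Theorem \ref{thm.Herg}.

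For $\beta=\Real h(\ii)$, I first rewrite the kernel in \eqref{eq.defhergl} as
\begin{equation*}
\frac{1}{\xi-z}-\frac{\xi}{1+\xi^2}=\frac{1+\xi z}{(\xi-z)(1+\xi^2)};
\end{equation*}
substituting $z=\ii$ reduces it to $\ii/(1+\xi^2)$, which is purely imaginary, so taking real parts on both sides of \eqref{eq.defhergl} at $z=\ii$ leaves only $\beta$. For $\alpha$, I divide \eqref{eq.defhergl} with $z=\ii y$ by $\ii y$ and let $y\to+\infty$: the leading term produces $\alpha$, the term $\beta/(\ii y)$ vanishes, and a direct estimate from the rewritten kernel provides, for $y\geq 1$, a uniform bound $C/(1+\xi^2)$ on the integrand; since this integrand tends pointwise to $0$ as $y\to+\infty$, dominated convergence kills the integral contribution.

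Taking imaginary parts in \eqref{eq.defhergl} at $z=x+\ii y$, and using that $\beta$ and $\xi/(1+\xi^2)$ are real, produces the Poisson-type formula
\begin{equation*}
\Imag h(x+\ii y)=\alpha y+\int_{\bbR}\frac{y}{(\xi-x)^2+y^2}\,\md\mm(\xi).
\end{equation*}
For $\mm(\{a\})=\lim_{y\to 0^+}y\Imag h(a+\ii y)$, multiplying by $y$ transforms the kernel into $y^2/((\xi-a)^2+y^2)$, which is bounded by $1$ and converges pointwise to $\mathbf{1}_{\{a\}}(\xi)$. Since the hypothesis on $\mm$ forces $\mm$ to be finite on every bounded set, I split the integral into a bounded neighborhood of $a$ (where $1$ is $\mm$-integrable) and its complement (where, for fixed $\delta>0$, the integrand is majorized by $C_{\delta}/(1+\xi^2)$); dominated convergence on each piece then yields $\mm(\{a\})$.

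For \eqref{eq.mesHerg}, I apply Fubini's theorem to exchange the two integrations and compute
\begin{equation*}
\int_{a}^{b}\frac{y\,\md x}{(\xi-x)^2+y^2}=\arctan\!\left(\frac{\xi-a}{y}\right)-\arctan\!\left(\frac{\xi-b}{y}\right),
\end{equation*}
which, divided by $\pi$, tends pointwise as $y\to 0^+$ to $1$ on $(a,b)$, $1/2$ at the endpoints $\{a,b\}$, and $0$ elsewhere, so its integral against $\md\mm$ converges to $\mm((a,b))+(\mm(\{a\})+\mm(\{b\}))/2=(\mm([a,b])+\mm((a,b)))/2$. The main technical obstacle is constructing an $\mm$-integrable majorant uniform in $y$: using the addition identity $\arctan u-\arctan v=\arctan\frac{u-v}{1+uv}$, valid whenever $1+uv>0$ (which holds for $|\xi|$ sufficiently large), together with $|\arctan t|\leq|t|$, I obtain a bound of order $(b-a)y/\xi^2$ for large $|\xi|$; combined with the trivial bound $\pi$ on any bounded set, this produces the required $\mm$-integrable dominant, and the boundary contribution $\alpha y(b-a)/\pi$ clearly vanishes as $y\to 0^+$.
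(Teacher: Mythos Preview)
The paper does not supply its own proof of this corollary; it is stated as a classical consequence of the Nevanlinna representation, with references to \cite[Appendix]{Krein:1977:MMP} and \cite{Krein:1974:RFU,Gesztesy:2000:MVH,Berg:2008:SPBS}. Your argument is correct and follows the standard route: direct evaluation of the kernel for $\alpha$ and $\beta$, the Poisson-type identity for $\Imag h$, and dominated convergence (with Tonelli/Fubini for the Stieltjes inversion formula) using majorants built from the integrability of $(1+\xi^2)^{-1}$ against $\mm$.
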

Indeed, the representation theorem \ref{thm.Herg} and the Lebesgues dominated convergence theorem imply (see, e.g., \cite{Bernland:2011:SRC}) that a Herglotz function satisfies the following asymptotics on $\mathrm{i}\bbR^{+,*}$:
\begin{equation}\label{eq.asymherglotzfunction}
h(z)=-\mm(\{0\}) z^{-1}+o(z^{-1} ) \ \mbox{ as } |z| \to  0 \ \mbox{ and } \ h(z)=\alpha \, z+o(z) \mbox{ as } |z| \to + \infty.
\end{equation}
In other words, a Herglotz function grows at most as rapidly $z$ when $ |z|$  tends to $+\infty$ and cannot be more singular than $z^{-1}$ when $ |z|$ tends to $0$ along the positive imaginary axis. 
More generally, one defines matrix-valued Herglotz functions  \cite{Gesztesy:2000:MVH} as follows. 
\begin{Def}\label{def.matHerg}
An analytic function $h:\bbC^{+}\to M_n(\bbC)$ is a matrix-valued Herglotz function  if
$$
\mathfrak{I} [h(z)]\geq 0, \ \forall z \in \bbC^{+}. 
$$
\end{Def}

The following  straightforward lemma allows one to prove that a matrix-valued function is a Herglotz function by just considering its quadratic form.
\begin{Lem}\label{lem:matHergViaQuadForms}
A matrix-valued function $h:\bbC^{+}\to M_n(\bbC)$ is a Herglotz function if and only if the scalar-valued function $z\mapsto(h(z)u)\cdot \overline{u}$ is a Herglotz function for every $u\in \mathbb{C}^n.$ 
\end{Lem}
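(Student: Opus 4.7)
The plan is to prove both implications by separately handling the two ingredients in the definition of a matrix-valued Herglotz function---\emph{analyticity} of $h:\bbC^+\to M_n(\bbC)$ and \emph{nonnegativity of the operator imaginary part} in the sense of \eqref{def:OperatorRealImagnaryParts}---using the elementary fact that, in finite dimensions, the quadratic form $u\mapsto (h(z)u)\cdot \overline{u}$ determines both the entries of $h(z)$ and its Hermitian part.

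For the imaginary-part condition, I would first record the sesquilinear identity
\[
\operatorname{Im}\bigl[(h(z)u)\cdot \overline{u}\bigr]=\bigl(\mathfrak{I}[h(z)]u\bigr)\cdot \overline{u},\qquad u\in\bbC^n,
\]
which is the matrix instance of the operator identity displayed just above \eqref{def:CoerciveOperator} applied on the finite-dimensional Hilbert space $\bbC^n$ with its standard inner product. Combined with the well-known fact that a Hermitian matrix is positive semidefinite if and only if its quadratic form is nonnegative on $\bbC^n$, this handles the imaginary-part requirement in \emph{both} directions simultaneously: $\mathfrak{I}[h(z)]\geq 0$ on $\bbC^+$ if and only if $\operatorname{Im}[(h(z)u)\cdot \overline{u}]\geq 0$ on $\bbC^+$ for every $u\in\bbC^n$.

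For the analyticity condition, the forward implication is immediate: if each entry $h_{ij}$ of $h$ is analytic on $\bbC^+$, then for every fixed $u\in\bbC^n$ the finite sum $(h(z)u)\cdot\overline{u}=\sum_{i,j}h_{ij}(z)\,u_j\,\overline{u_i}$ is an analytic function of $z$. For the reverse implication I would invoke a standard polarization argument: evaluating the scalar Herglotz function at $u=e_i$, $u=e_i+e_j$, and $u=e_i+\mathrm{i}\,e_j$, where $\{e_k\}_{k=1}^{n}$ denotes the canonical basis of $\bbC^n$, expresses each matrix entry $h_{ij}(z)$ as a fixed finite $\bbC$-linear combination of these analytic scalar functions; hence every entry of $h$---and therefore $h$ itself---is analytic on $\bbC^+$.

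No genuine obstacle is hidden here: the content of the lemma is essentially the observation that in finite dimensions the entries of a matrix and the definiteness of its Hermitian part are both recovered from the associated quadratic form via standard polarization. The only point requiring some care is consistency with the paper's convention that inner products are linear in the first slot and antilinear in the second, so that $(\mathbb{A}u,u)=(\mathbb{A}u)\cdot\overline{u}$ and the operator imaginary part $\mathfrak{I}[\mathbb{A}]$ from \eqref{def:OperatorRealImagnaryParts} coincides with the standard Hermitian part $(\mathbb{A}-\mathbb{A}^\dagger)/(2\mathrm{i})$ of the matrix $\mathbb{A}$.
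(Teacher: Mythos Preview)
Your proposal is correct; the paper itself does not prove this lemma (it is stated as ``straightforward'' without proof), but the identical approach---polarization to recover analyticity plus the identity $\operatorname{Im}[(h(z)u)\cdot\overline{u}]=(\mathfrak{I}[h(z)]u)\cdot\overline{u}$ for the imaginary-part condition---is exactly what the paper uses in the appendix when proving the Banach-space generalization, Lemma~\ref{lem:WeakOperHerglotzBanachSpaceImpliesStrongOpHerglotz}.
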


We now introduce a second class of analytic function, namely, the Stieltjes functions which are closely related to Herglotz functions:
\begin{Def}\label{Def.Stielt}
A Stieltjes function is an analytic function $s:\bbC\setminus{\bbR}^{-} \to \bbC$ which satisfies:
$$
\Imag s(z)\leq 0 \ \,  \forall z \in \bbC^{+} \ \mbox{ and } \ s(x)\geq 0 \  \mbox{ for } x > 0.
$$
\end{Def}
\noindent From this definition follows by the analytic continuation and the Schwarz reflection principle that a Stieltjes function $s$ has to satisfy 
\begin{equation}\label{eq.StieltSchwprinciple}
s(z) =\overline{ s(\overline{z})}, \quad \forall z\in \bbC\setminus \bbR^{-}.
\end{equation}
As Herglotz functions, Stieltjes functions are characterized by a representation theorem.

\begin{Thm}\label{thm.Stielt}
A necessary and sufficient condition for $s$ to be a Stieltjes function is given by the following representation:
\begin{equation}\label{eq.representationStieltjes}
s(z)=\alpha+ \displaystyle \int_{\bbR^{+}}  \frac{\md \mm( \xi)}{\xi+z} \quad \forall z\in \bbC\setminus \bbR^{-},
\end{equation}
where $\alpha\in \mathbb{R}^+$  and $\mm$ is a positive regular Borel measure for which  $\int_{\bbR^{+}} \
 \md \mm(\xi)/(1+\xi)$ is finite. Moreover,  $\alpha= \lim \limits_{x\to +\infty} s(x)$ (for $x>0$)  and  the measure  $\mm$ in \eqref{eq.representationStieltjes} are uniquely defined by $s$.  
\end{Thm}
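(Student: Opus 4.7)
The overall strategy is to reduce the Stieltjes representation to the Herglotz representation (Theorem \ref{thm.Herg}) via the change of variable $z\mapsto -z$, and then to use the positivity of $s$ on the positive real axis to promote the Herglotz conclusions to the Stieltjes ones. The sufficient (``$\Leftarrow$'') direction is a direct check: given $s$ of the form \eqref{eq.representationStieltjes} with $\alpha\geq 0$ and a positive Borel measure $\mathrm{m}$ on $\mathbb{R}^+$ satisfying $\int d\mathrm{m}/(1+\xi)<\infty$, dominated convergence (with a constant multiple of $1/(1+\xi)$ as dominating function on compact subsets of $\mathbb{C}\setminus\mathbb{R}^-$) together with Morera's theorem yields analyticity of $s$ on $\mathbb{C}\setminus\mathbb{R}^-$; a direct computation gives $\operatorname{Im}s(z)=-\operatorname{Im}(z)\int_{\mathbb{R}^+}d\mathrm{m}(\xi)/|\xi+z|^2\leq 0$ on $\mathbb{C}^+$, and $s(x)\geq\alpha\geq 0$ for $x>0$, so $s$ is Stieltjes.

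For the necessary (``$\Rightarrow$'') direction, given a Stieltjes $s$, I define $\tilde{s}(z):=s(-z)$. Using \eqref{eq.StieltSchwprinciple} to extend through the lower half-plane, $\tilde{s}$ is analytic on $\mathbb{C}\setminus\mathbb{R}^+$; for $z\in\mathbb{C}^+$ one has $\tilde{s}(z)=\overline{s(-\bar{z})}$ with $-\bar{z}\in\mathbb{C}^+$ and $\operatorname{Im}s(-\bar{z})\leq 0$, hence $\operatorname{Im}\tilde{s}(z)\geq 0$, so $\tilde{s}$ is Herglotz. Theorem \ref{thm.Herg} then produces $\alpha'\geq 0$, $\beta\in\mathbb{R}$, and a positive Borel measure $\mathrm{m}$ with $\int d\mathrm{m}/(1+\xi^2)<\infty$ for which the representation \eqref{eq.defhergl} holds. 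Because $\tilde{s}$ is real-valued and analytic across the open set $\mathbb{R}^{-,*}$, the inversion formula \eqref{eq.mesHerg} forces $\mathrm{m}([a,b])=0$ for every $[a,b]\subset(-\infty,0)$, so $\operatorname{supp}\mathrm{m}\subseteq\mathbb{R}^+$. Substituting $w=-z$ then gives, for $w\in\mathbb{C}\setminus\mathbb{R}^-$,
\begin{equation*}
s(w)=-\alpha' w+\beta+\int_{\mathbb{R}^+}\Big(\frac{1}{\xi+w}-\frac{\xi}{1+\xi^2}\Big)d\mathrm{m}(\xi).
\end{equation*}

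The main technical step, which I expect to be the hardest part, is to exploit the positivity $s(x)\geq 0$ on $(0,\infty)$ to force $\alpha'=0$ and to upgrade the Herglotz integrability $\int d\mathrm{m}/(1+\xi^2)<\infty$ to the stronger Stieltjes one $\int d\mathrm{m}/(1+\xi)<\infty$; without these, the representation above is not yet of the form \eqref{eq.representationStieltjes}. Subtracting the representation at two points $0<x'<x$ produces the key identity
\begin{equation*}
s(x')-s(x)=(x-x')\Big[\alpha'+\int_{\mathbb{R}^+}\frac{d\mathrm{m}(\xi)}{(\xi+x)(\xi+x')}\Big]\geq 0,
\end{equation*}
so $s$ is non-increasing on $(0,\infty)$ and, being nonnegative, admits a finite limit $\alpha_\infty:=\lim_{x\to+\infty}s(x)$. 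Fixing $x'=1$ and using $s(x)\geq 0$ yields $\alpha'\leq s(1)/(x-1)$ for every $x>1$, so $\alpha'\leq 0$ in the limit and hence $\alpha'=0$. With $\alpha'=0$, the elementary bound $1/((\xi+1)(\xi+x))\geq 1/(2x(\xi+1))$ on $\xi\in[0,x]$ combined with the same identity gives
\begin{equation*}
\int_{[0,x]}\frac{d\mathrm{m}(\xi)}{\xi+1}\leq\frac{2x}{x-1}\bigl(s(1)-s(x)\bigr)\leq\frac{2x\,s(1)}{x-1},
\end{equation*}
and letting $x\to+\infty$ produces $\int_{\mathbb{R}^+}d\mathrm{m}/(1+\xi)<\infty$.

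This Stieltjes integrability implies that $\int\xi\,d\mathrm{m}/(1+\xi^2)$ is finite as well, so the two integrals in the representation of $s(w)$ above can be separated to yield \eqref{eq.representationStieltjes} with $\alpha:=\beta-\int_{\mathbb{R}^+}\xi\,d\mathrm{m}(\xi)/(1+\xi^2)$. Applying dominated convergence (with $1/(\xi+x)\leq 1/(\xi+1)$ for $x\geq 1$) to $\int d\mathrm{m}(\xi)/(\xi+x)$ then gives $\alpha=\lim_{x\to+\infty}s(x)=\alpha_\infty\geq 0$. Finally, uniqueness of $(\alpha,\mathrm{m})$ is inherited from the uniqueness of the Herglotz representation applied to $\tilde{s}$ (Corollary \ref{cor.Herg}).
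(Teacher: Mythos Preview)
The paper does not prove Theorem \ref{thm.Stielt}; it is recalled as a classical result in the preliminaries (Section \ref{subsec:ReviewHerglotzStieltjesFuncs}) with references to the literature. Your proof is correct and self-contained, reducing the Stieltjes case to the Herglotz representation (Theorem \ref{thm.Herg}) via $\tilde s(z)=s(-z)$, which is exactly the connection the paper highlights in the Remark following Theorem \ref{thm.Stielt}.

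One small point of exposition: you write ``Using \eqref{eq.StieltSchwprinciple} to extend through the lower half-plane,'' but no extension is needed. By Definition \ref{Def.Stielt}, $s$ is already analytic on all of $\mathbb{C}\setminus\mathbb{R}^-$, so $\tilde s(z)=s(-z)$ is immediately analytic on $\mathbb{C}\setminus\mathbb{R}^+$. The Schwarz reflection identity \eqref{eq.StieltSchwprinciple} is what you actually use later, to compute $\operatorname{Im}\tilde s(z)$ on $\mathbb{C}^+$ via $\tilde s(z)=\overline{s(-\bar z)}$. Otherwise the argument --- in particular the monotonicity trick to force $\alpha'=0$ and to upgrade integrability from $\int d\mathrm{m}/(1+\xi^2)<\infty$ to $\int d\mathrm{m}/(1+\xi)<\infty$ --- is clean and complete.
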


\begin{Rem}
Using representation Theorems \ref{thm.Herg} and \ref{thm.Stielt}, we point out that it is straightforward to notice that if $s$ is a Stieltjes function, then  $h$ defined by $h(z)=s(-z)$ on $\bbC^+$ is an Herglotz function whose measure $\mm$ has a support included in $\bbR^{+}$ in the relation (\ref{eq.defhergl}). Another connection between Herglotz and Stieltjes functions is given in Section \ref{sec-Herg-phys}  by Corollary \ref{cor.HergStielt}.
\end{Rem}

\section{Formulation of the cloaking problem}\label{sec:FormulationCloakingProb}

\subsection{Setting of the problem}\label{subsec:SettingCloakingProb}
Here we consider the following challenging question: is it possible to use a passive cloak to make invisible a dielectric inclusion on a finite frequency interval in the quasistatic regime of Maxwell's equations for an observer close to the object? We prove, due to the passivity of the cloaking device, that the answer to this question is negative. Furthermore, based on the properties of two classes of analytic functions (namely, Herglotz and Stieltjes functions) that model passive devices, we prove some bounds which impose fundamental limits on passive cloaking device over a finite frequency interval. This article extends the results of \cite{Cassier:2017:BHF} to the case of the near-field problem. In \cite{Cassier:2017:BHF}, the authors were dealing with the far-field cloaking problem where the observer is far from the cloaking device. In this context, the main scattering contribution to detect the device is its associated polarizability tensor (i.e., a $3\times 3$ complex-valued matrix function of the frequency). In our present setting, the observer is close to the device so that no mathematical far-field approximation is made for this problem. This observer can probe the system by imposing a complex voltage at the boundary of the cloaking device (or at some outer boundary within which lies the cloaking device) and measures on this boundary the resulting normal displacement of the electric field. In other words, one can probe the medium by knowing the 
Dirichlet-to-Neumann (DtN) operator associated to the system over the considered frequency interval. As the DtN operator is, for each considered frequency, defined on an infinite-dimensional functional space, it seems that compared to \cite{Cassier:2017:BHF} where the authors were dealing with a $3 \times 3$ polarizability tensor, one has more information to limit the cloaking effect over a frequency interval. However, we show, as in \cite{Cassier:2017:BHF}, that one only needs to chose one measurement to impose fundamental limits on this effect. Moreover, one can even take this measurement to simply be affine.

We assume (see Figure \ref{fig.med}) that the passive cloak together with the object fills a bounded, open, and non-empty simply connected Lipschitz domain $\Omega$ of $\mathbb{R}^d$ (see Remark \ref{rem.cloakotuside} for more details). The object to be cloaked is a passive dielectric, isotropic or anisotropic, inclusion that occupies a non-empty open (or more generally, any Lebesgue--non-negligible measurable) subset $\mathcal{O} \subseteq \Omega$.
 It should be noted here that $\mathcal{O}$ need not be simply connected (physically, this means we can have several inclusions to cloak or inclusions with a non-trivial topology). We assume also that its associated permittivity is a real-valued  tensor $\BGve_{\mathrm{ob}}(\Bx,\omega)$ on the frequency interval $[\Go_-,\Go_+]\subset \mathbb{R}^{+,*}$ (
with $\omega_-<\omega_+$). To make this object invisible, one uses a  cloak occupying $\Omega\setminus \mathcal{O}$ (with positive volume given by its Lebesgue measure $|\Omega\setminus \mathcal{O}|$), that is made of passive (and possibly dissipative) material whose permittivity tensor is given by $\Gve_{c}(\Bx,\omega)$ for a.e.\ $\Bx \in \Omega \setminus \CO$.
 Thus, one defines the permittivity function of the cloak plus the cloaked object for frequency $\Go$ in $[\Go_-,\Go_+]$ by:
 \begin{equation}\label{eq.passive-cloak}
 \BGve(\Bx,\Go)=\BGve_{\mathrm{ob}}(\Bx,\omega) \mbox{ for a.e.\ \Bx }\in \CO \mbox{ and }
 \BGve(\Bx,\Go)=\BGve_{c}(\Bx,\omega) \mbox{ for a.e.\ } \Bx \in \Omega \setminus \CO   .
\end{equation}
Moreover, the object is made of a standard non-dispersive, lossless and reciprocal material whose permittivity is strictly larger than the permittivity of the vacuum  $\Gve_0>0$  on $[\Go_-,\Go_+]
$. Furthermore, mathematically, we extend this latter relation analytically to the upper half-plane $\mathbb{C}^+$, so that   $\mbox{for a.e.\ \Bx }\in \CO \mbox{ and } \forall \Go \in [\Go_-,\Go_+] \cup \bbC^+$:
\begin{equation}\label{eq.permitivtyob}
    \mathfrak{I}[\BGve_{\mathrm{ob}}(\Bx,\Go)]=0 \  \mbox{ with } \ \BGve_{\mathrm{ob}}(\Bx,\Go)=\BGve_{\mathrm{ob}}(\Bx) =\BGve_{\mathrm{ob}}(\Bx)^{\top} \mbox{ and } \BGve_{\mathrm{ob}}(\Bx)\geq \Gve \BI \mbox{ with }   \Gve >\Gve_0.
\end{equation}

\begin{Rem}\label{rem.cloakotuside}
More generally, $\Omega$ can be considered as a bounded open simply connected domain with Lipschitz boundary which contains the object and the cloak.  Thus, the object and the cloak doesn't have to fill $\Omega$. For passive cloaking,  we only require  that $\Omega\setminus \mathcal{O}$ is occupied by passive media. Therefore,  our results hold also for passive cloaking methods where the cloak does not surround the object as for cloaking using anomalous resonances \cite{Milton:2006:CEA, Nicorovici:2007:QCT, Ammari:2013:STN, Nguyen:2017:CAO,McPhedran:2020:RAR} or complementary media \cite{Lai:2009:CMI, Nguyen:2016:CCM}, see also references within.
\end{Rem}

\subsection{The main assumptions on the permittivity of the passive cloaking device} \label{subsec:MainAssumpOnPermittivity}
The permittivity $\BGve(\Bx,\Go)$ of the cloaking device  is defined by \eqref{eq.passive-cloak} in the considered frequency interval $[\Go_-,\Go_+]$.
However, as the material is passive, $\BGve(\Bx,\Go)$ has to satisfy  some analytical properties for frequency $\omega$ in the upper-half plane $\bbC^+$. These properties introduce correlation between the response of the system at different frequencies and constrain the electromagnetic properties of the system response on a frequency interval  \cite{Milton:1997:FFR,Zemanian:2005:RTC,Bernland:2011:SRC,Welters:2014:SLL, Cassier:2017:BHF,Cassier:2017:MMD,Cassier-Joy:2025operator1}.  Therefore, to detail these properties in the following paragraph, one extends the definition of $\BGve(\Bx,\Go)$  in \eqref{eq.passive-cloak} to frequencies in the upper-half plane $\bbC^+$.
 
\begin{itemize}
\item[H1] {\bf Passivity:} As the cloak  is made a passive material, 
for a.e.\ $\Bx \in \Omega\setminus \mathcal{O}$,
$\omega \mapsto \omega \BGve(\Bx,\omega)$ is a matrix-valued Herglotz function in the sense of the Definition \ref{def.matHerg}. 

\item[H2] {\bf Reality principle:}  For a.e.\ $\Bx \in \Omega\setminus \mathcal{O}, \, \forall \Go \in \bbC^{+},\ \BGve(\Bx,-\overline{\Go})=\overline{\BGve(\Bx,\Go)}$. We point out that this symmetry  reflects the fact that the  electric susceptibility tensor of the cloak is a real-valued tensor in the time-domain [and the time Fourier-Laplace transform of the susceptibility  is precisely the tensor $\BGve(\Bx,\Go)/\varepsilon_0- \BI$].

\item[H3]{\bf High-frequency behavior:} We assume the following limit behavior of the permittivity of the cloak   for complex frequencies which lies \mbox{on the positive imaginary axis} $\mathrm{i}\mathbb{R}^{+,*}$: 
\begin{equation}\label{eq.hfbv}
\mbox{for a.e.\ $\Bx \in \Omega\setminus \CO$,}  \ \BGve(\Bx,\Go)\to \Gve_0 \BI \,   \mbox{ as } |\Go| \to \infty \mbox{ on }  \mathrm{i}\mathbb{R}^{+,*}.  
\end{equation}
The equation \eqref{eq.hfbv} is a mathematical minimal assumption which is indeed related to high-frequency behavior on the material.
Except for some mathematical ``pathological cases" (such as a meromorphic permittivity  function $\BGve$ with a  discrete number of poles on the real axis, e.g., see equations (80) and (81) of \cite{Cassier:2017:MMD}), in physical models (such as the generalized Drude-Lorentz models, e.g., see \cite{Cassier:2017:MMD,Cassier:2023:LBS,Cassier:2025:LBS})  the limit \eqref{eq.hfbv} 
 coincides with the high-frequency limit on the real axis. Indeed, due the electrons inertia, in the high frequency regime, the cloak behaves as a non-dispersive material  regime whose permittivity high frequency limit is the one of the vacuum  $\Gve_0$  (cf.\ \cite[Sec.\ 78]{Landau:1984:ECM}).

 \item [H4] \textbf{Frequency continuity on  $[\Go_-,\Go_+]$ for the cloak:} We assume that the permittivity of the cloaking device admits a continuous extension from the upper-half plane $\bbC^+$  to  the considered frequency interval $[\Go_-,\Go_+]$, i.e., the function $\Go\mapsto \omega \BGve(\Bx,\cdot)$ is continuous on $[\Go_-,\Go_+] \cup \bbC^+$ for a.e. $\Bx\in \Omega\setminus \mathcal{O}$.
\end{itemize}

\begin{Rem}
    We point out that using \eqref{eq.permitivtyob}, it is clear that H1, H2 and H4 holds for the permittivity of the whole cloaking device, i.e., those statements are then also true if we replace $\Omega\setminus \mathcal{O}$ by $\Omega$.
\end{Rem}
 
As in \cite{Cassier:2017:BHF}, the two following assumptions are made on the permittivity tensor to ensure the well-posedness of the considered partial differential equations (PDEs) in this paper. Thus, they ensure the existence of the DtN operator used to describe the cloaking problem.
 \begin{itemize}
 \item[H5] \hspace{-0.49417pt}{\bf Locally uniformly bounded in frequency}: $\forall \Go_0 \in[\Go_-,\Go_+]\cup \bbC^+$, $\BGve(\cdot,\Go_0)\in M_d(L^{\infty}(\Omega))$  and there exists  $c_1(\omega_0),\delta>0$  such that $$\forall \omega\in B(\Go_0,\delta) \cap ([\Go_-,\Go_+]\cup \bbC^+), \,  \displaystyle  \|\BGve(\cdot,\Go)\|_{\infty} \leq c_1(\omega_0).$$
Moreover, we suppose that this last property holds also in a neighborhood of $\Go_0=\infty$ by replacing, in the previous relation, $B(\Go_0,\delta)$ with $\{\Go=\mathrm{i}y \mid y \in \bbR^{+,*}, \ y>1/\delta\}$.

\item[H6]{\bf Coercivity assumption:}\footnote{We point out that a  typo has to be corrected in the coercivity assumption ${\tilde{\textnormal{H}}}$7 in \cite{Cassier:2017:BHF} page 071504-16 which is similar to the assumption $\textnormal{H6}$ made in this paper. The terms $|\Imag(e^{i\, \gamma(\Go)}\BGve(\Bx,\Go)\BE\cdot\overline{\BE})|$ and $|\Imag(e^{i\, \gamma(\Go_0)}\BGve(\Bx,\Go)\BE\cdot\overline{\BE})|$ should be replaced respectively by $\Imag(e^{i\, \gamma(\Go)}\BGve(\Bx,\Go)\BE\cdot\overline{\BE})$ and $\Imag(e^{i\, \gamma(\Go_0)}\BGve(\Bx,\Go)\BE\cdot\overline{\BE})$. We point out that the same correction has to be made in the assumptions $\widetilde{\textnormal{H}}$7a and $\widetilde{\textnormal{H}}$7b  page 20  of \cite{Ou:2022:AHN}, where the authors in a survey on Herglotz functions and their applications, dedicates several pages to sum up the results of \cite{Cassier:2017:BHF}.}
\begin{itemize}

\item 
$\forall \Go_0 \in [\Go_-,\Go_+]$, $\exists \delta, c_2(\Go_0)>0 \mid $  $\forall \omega \in B(\Go_0,\delta) \cap ([\Go_-,\Go_+]\cup \mathbb{C}^+)$ there exists  $\gamma(\Go)\in [0,2 \pi)$ such that:
$$
 \  \Imag\left[e^{\mathrm{i}\, \gamma(\Go)} \omega \BGve(\Bx,\Go)\BE\cdot\overline{\BE}\right] \geq c_2(\Go_0)   ||\BE||_{\mathbb{C}^d}^2, \,  \ \forall \BE \in \bbC^d, \, \mbox{ for a.e. } \Bx \in \Omega.
$$
\end{itemize}

\begin{Rem}\label{rem:DtNMapTCoercivityRelaxedHypWellPosedness}
    We point out that H5 and the above coercive assumption H6 are standard sufficient conditions for the mathematical well-posedness of the PDE related to our cloaking problem. However, it is a bit restrictive in the sense it does not apply for some specific   cloaking devices. For instance, if one considers the case of a cloak made of a lossless homogeneous isotropic negative index metamaterial whose permittivity $\BGve_c(\cdot, \Go)$ is a negative constant function on $[\omega_-, \omega_+]$ with a dielectric obstacle with a positive permittivity tensor $\BGve_{ob}(\Bx)$ satisfying \eqref{eq.permitivtyob}, such a coercive assumption does not hold. However, it can be relaxed and the well-posedness of the PDE can be obtained using mathematical methods developed for second-order divergence form elliptic PDE with sign changing coefficients such as the $T-$coercivity \cite{BonnetBenDhia:2012:TIP, Assou:2018:MFE} approach or the method developed  in \cite{Nguyen:2016:LAP} for anisotropic media.
    However, for simplicity, we consider here the standard coercivity assumption.
    Furthermore, we stress that the  bounds  we derive  in this paper does not depend on the positive coercive constants $c_2(\omega_0)$ which could be therefore  arbitrarily small.
    \end{Rem}

    \begin{Rem}\label{rem:CoercivityAssumpForVarEpsilon}
        It is straightforward to check that if $\omega\BGve(\cdot,\omega)$ satisfies the assumption  $\textnormal{H5}$ and the coercivity assumption H6 then so does $\BGve(\cdot,\omega)$.
    \end{Rem}

\end{itemize}

The next lemma tells us that because of \eqref{eq.permitivtyob} and under assumptions H1 and H3, we automatically have coercivity of $\BGve(\cdot,\omega)$ in the upper half-plane $\mathbb{C}^+$.  
\begin{Lem}\label{Lem.comptnesor-vacuum}
If assumptions H1 and H3 hold then 
\begin{align}\label{eq.permittivitybound}
        \mathfrak{I}[\omega \BGve(\cdot,\Go)]\geq \operatorname{Im}(\omega) \varepsilon_0 \mathbf{I}>0,\;\; \forall \omega \in \bbC^+.
    \end{align}
\end{Lem}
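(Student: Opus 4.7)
The plan is to reduce the matrix inequality to its scalar quadratic forms using Lemma \ref{lem:matHergViaQuadForms}, then exploit the Nevanlinna representation (Theorem \ref{thm.Herg}) together with the coefficient extraction formula in Corollary \ref{cor.Herg} to identify the linear growth coefficient with $\varepsilon_0$.

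More concretely, fix an arbitrary $u\in\bbC^d$ and consider the scalar function
\begin{equation*}
f_u(\omega):=\omega\,(\BGve(\Bx,\omega)u)\cdot\overline{u},\qquad \omega\in\bbC^+.
\end{equation*}
By H1 together with Lemma \ref{lem:matHergViaQuadForms}, $f_u$ is a scalar Herglotz function, so by Theorem \ref{thm.Herg} it admits a Nevanlinna representation
\begin{equation*}
f_u(\omega)=\alpha_u\,\omega+\beta_u+\int_{\bbR}\Bigl(\frac{1}{\xi-\omega}-\frac{\xi}{1+\xi^2}\Bigr)\,\md \mm_u(\xi),
\end{equation*}
with $\alpha_u\geq 0$, $\beta_u\in\bbR$ and $\mm_u$ a positive Borel measure with $\int_\bbR \md\mm_u(\xi)/(1+\xi^2)<\infty$.

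The key step is to evaluate $\alpha_u$ using the first identity in \eqref{eq.coeffdomherg}: $\alpha_u=\lim_{y\to+\infty}f_u(\ii y)/(\ii y)=\lim_{y\to+\infty}(\BGve(\Bx,\ii y)u)\cdot\overline{u}$. Invoking hypothesis H3 at a.e.\ $\Bx\in\Omega\setminus\mathcal{O}$ (and, by \eqref{eq.permitivtyob}, also at a.e.\ $\Bx\in\mathcal{O}$, where $\BGve$ is independent of $\omega$) yields
\begin{equation*}
\alpha_u=\varepsilon_0\,(u,u)=\varepsilon_0\,\|u\|_{\bbC^d}^2.
\end{equation*}

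Subtracting the identified linear part, the function $g_u(\omega):=f_u(\omega)-\varepsilon_0\,\omega\,\|u\|_{\bbC^d}^2$ has the same Nevanlinna representation but with linear coefficient $0$, and therefore it is still Herglotz; equivalently, its imaginary part
\begin{equation*}
\operatorname{Im}\bigl[g_u(\omega)\bigr]=\operatorname{Im}(\omega)\int_{\bbR}\frac{\md\mm_u(\xi)}{|\xi-\omega|^2}\geq 0,\qquad \omega\in\bbC^+.
\end{equation*}
Rearranging gives $\operatorname{Im}\bigl[\omega\,(\BGve(\Bx,\omega)u)\cdot\overline{u}\bigr]\geq \operatorname{Im}(\omega)\,\varepsilon_0\,\|u\|_{\bbC^d}^2$, that is,
\begin{equation*}
\bigl(\mathfrak{I}[\omega\BGve(\Bx,\omega)]u\bigr)\cdot\overline{u}\geq \operatorname{Im}(\omega)\,\varepsilon_0\,(u,u),\qquad \forall u\in\bbC^d.
\end{equation*}
Since $u$ is arbitrary and $\operatorname{Im}(\omega)>0$ on $\bbC^+$, this is precisely \eqref{eq.permittivitybound}, and the strict positivity $\operatorname{Im}(\omega)\varepsilon_0\BI>0$ is immediate.

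There is no real obstacle here beyond being careful about the regime where H3 applies: the a.e.\ $\Bx\in\mathcal{O}$ case is handled separately by the lossless, $\omega$-independent structure imposed in \eqref{eq.permitivtyob}, which obviously satisfies the desired bound with $\varepsilon_0\BI$ replaced by $\varepsilon\BI\geq\varepsilon_0\BI$; thus the conclusion holds pointwise for a.e.\ $\Bx\in\Omega$, which is what \eqref{eq.permittivitybound} asserts.
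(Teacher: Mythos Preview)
Your proof is correct and follows essentially the same route as the paper: reduce to scalar Herglotz quadratic forms, invoke the Nevanlinna representation, identify the linear coefficient via the high-frequency limit H3 on the cloak region, and treat the obstacle region separately through \eqref{eq.permitivtyob}. One small presentation issue: your parenthetical ``(and, by \eqref{eq.permitivtyob}, also at a.e.\ $\Bx\in\mathcal{O}$, \ldots)'' reads as if $\alpha_u=\varepsilon_0\|u\|^2$ holds on $\mathcal{O}$ too, which is false there (the limit is $(\BGve_{\mathrm{ob}}(\Bx)u)\cdot\overline{u}\geq\varepsilon\|u\|^2$); your final paragraph handles this correctly, so just drop or rephrase the parenthetical to avoid the apparent contradiction.
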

\begin{proof}
First, by assumptions H1 and H3, $\omega\mapsto \omega \BGve(\cdot,\Go)$ is a matrix-valued  Herglotz function which satisfies the limit \eqref{eq.hfbv}. Hence, for any fixed $\BU \in \bbC^d$ and a.e.\ $\Bx\in \Omega\setminus \mathcal{O}$, 
the function  $ \omega \mapsto  \omega \BGve(\Bx,\Go)\bU \cdot \overline{\BU}$ defined on $\bbC^+$ is a scalar-valued Herglotz function so that by Theorem \ref{thm.Herg} it admits the following representation, for any $\omega\in \bbC^+$:
\begin{equation}\label{eq.Herglotzeps}
  \omega \BGve(\Bx,\Go)\bU \cdot \overline{\BU} =  \varepsilon_0 \, |\bU|^2\, \omega  + b_{\Bx,\bU}+  \int_{\xi \in \bbR} \left(\frac{1}{\xi-\omega}- \frac{\xi}{1+\xi^2}\right)\md \mm_{\Bx,\BU}( \xi),  
\end{equation}
where $b_{\Bx,\bU}\in \bbR$ (we point out that if H2 holds then $b_{\Bx,\bU}=0$) and $ \mm_{\Bx,\BU}$ is a positive regular Borel measure for which $\int_{\bbR} \
  \md \mm_{\Bx,\BU}( \xi) /(1+\xi^2)$ is finite.  
Thus, as $b_{\Bx,\bU}\in \bbR$, $\operatorname{Im}[(\xi-\omega)^{-1}]$ is positive for $\omega\in \bbC^+$, and  $ \mm_{\Bx,\BU}$ is a positive measure,  \eqref{eq.Herglotzeps} implies that 
\begin{equation}\label{eq.coercivecloak}
\mbox{ for a.e.}\ \Bx\in \Omega\setminus \mathcal{O} \mbox{ and } \forall \omega \in \bbC^+, \quad \operatorname{Im}\left[\omega  \BGve(\Bx,\Go) \bU\cdot \overline{\BU}\right] \geq \varepsilon_0\, \operatorname{Im}(\omega) ||\bU||_{\mathbb{C}^d}^2, \quad  \forall \bU \in \bbC^d.
\end{equation}
Furthermore, using \eqref{eq.permitivtyob}, one has  for a.e. $\Bx\in \mathcal{O}$ and  $\forall \omega \in \bbC^+$:
\begin{equation}\label{eq.coerciveobject}
\operatorname{Im}\left[\omega  \BGve(\Bx,\Go) \bU\cdot \overline{\BU}\right] =\operatorname{Im}(\omega) \left[\BGve_{ob}(\Bx) \bU\cdot \overline{\BU}\right]  \geq \varepsilon\, \operatorname{Im}(\omega) ||\bU||_{\mathbb{C}^d}^2\geq\varepsilon_0\, \operatorname{Im}(\omega) ||\bU||_{\mathbb{C}^d}^2, \  \forall \bU \in \bbC^d.
\end{equation}
Combining \eqref{eq.coercivecloak} and \eqref{eq.coerciveobject} leads to the inequality \eqref{eq.permittivitybound} at the tensor level.
\end{proof}

\subsection{Mathematical formulation  of the cloaking problem in quasistatic regime}\label{subsec:MathFormulationCloakingProbRegime}
In our cloaking problem, we consider the quasistatic regime of Maxwell's equations.
This regime is justified as an approximation under certain assumptions, see \cite[Chap.\ VII]{Landau:1984:ECM}, \cite{Cheney:1999,Thaler:2014}, \cite[Chap.\ 11]{Milton:2022:TOC}, and \cite{Ando:2016:PSF} (which has been extended to include eddy currents \cite{Buffa:2000:JEC}), that allows bodies that are not necessarily small and/or frequencies not necessarily near zero. It should also be noted that the quasi-static regime can also be applicable when electrical fields are large, e.g., the quasistatic approximation is valid in a region of anomalous localized resonance around the interface due to the fact that the field gradients are extremely large, see \cite[Sec.\ 3]{McPhedran:2020:RAR}. Thus, our considerations in this paper are relevant for a large class of frequency and spatially dependent permittivities. For instance, in our context, the quasistatic regime regime holds if the cloaking device is small compared to wavelengths associated to the frequency interval $[\omega_-,\omega_+]$ (and, more precisely, to the wavelength corresponding to the maximal frequency greater than $\omega_+$).

In this setting, the propagative part of the time-harmonic Maxwell equations can be neglected, leading to a decoupling of the equations. Thus, it follows that the complex-valued electric field satisfies  $\nabla \times \BE(\cdot, \omega)=0$ holds on $\Omega$. As $\Omega$ is simply connected, this is equivalent to the electric field being derived from a complex potential, i.e., $\BE(\cdot,\omega)=-\Grad V(\cdot,\omega)$. Furthermore, as $\nabla \cdot \BD(\cdot, \Go)=\nabla \cdot  \BGve(\cdot,\omega) \BE(\cdot, \Go)=0$, it means that the potential field $V\in H^1(\Omega)$ satisfies the following elliptic equations:
\begin{equation*}\label{TM}
(\mathcal{P})\left\{ \begin{array}{ll}
\nabla \cdot \BGve(\cdot, \omega) \Grad V= 0  & \mbox{ in }  \Omega, \\[10pt]
 V=V_0 &  \mbox{ on }  \partial \Omega,
\end{array} \right.
\end{equation*}
where the surface potential $V_0$ 
is imposed on the boundary $\partial \Omega$ by the observer (see Figure \ref{fig.cloackprob}). Then, the observer can measure the resulting normal component of the displacement field, namely: 
\begin{align*}
    \BD(\cdot, \Go)\cdot \Bn=-\BGve(\cdot, \Go) \Grad V \cdot \Bn \mbox{ on } \partial \Omega, 
\end{align*}
where $\Bn$ denotes the unit outward normal  to $\Omega$.
Thus, the observer has access to the Dirichlet-to-Neumann operator of the cloaking device which maps the data $V_0$ on $\partial \Omega$ to the measurement $\BGve(\cdot, \Go) \Grad V \cdot \Bn$ on $\partial \Omega$.

\begin{figure}[!ht]
\centering
 \includegraphics[width=0.95\textwidth]{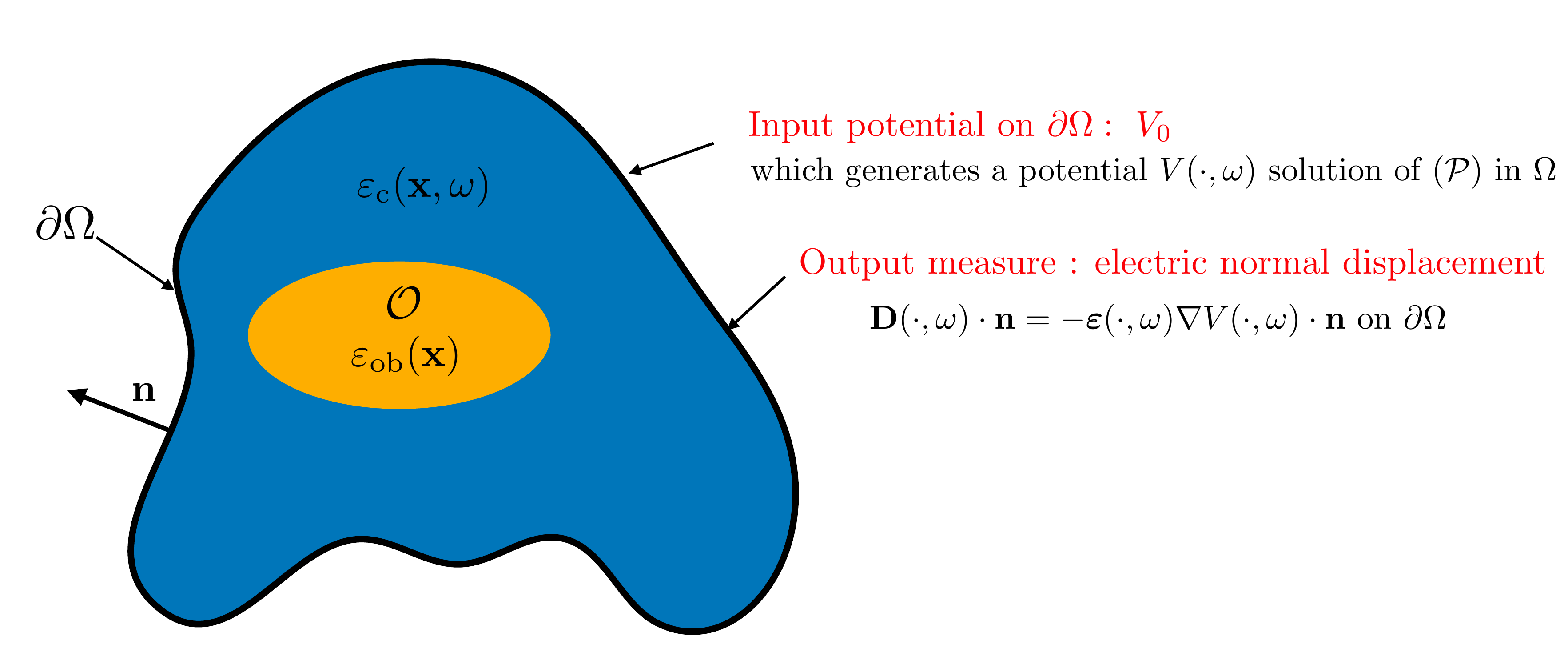}
 \caption{Description of the cloaking problem: the observer imposes a surface potential  
$V_0 \in H^{1/2}(\partial \Omega)$ at the boundary of the cloaking device  $\partial \Omega$
and measures, at each frequency $\omega$ of $[\omega_-, \omega_+]$, 
the resulting normal component of the electric displacement: $ \mathbf{D}(\cdot,\omega)\cdot \mathbf{n}
  = -\varepsilon(\cdot,\omega)\,\nabla V \cdot \mathbf{n}$ on $\partial\Omega$, where the potential 
$V(\cdot,\omega)$ is the solution of the elliptic problem $(\mathcal{P})$.
 }
 \label{fig.cloackprob}
\end{figure}

\begin{Rem}
We emphasize that the elliptic equation $(\mathcal{P})$ is  physically relevant only in the frequency interval of interest 
$[\omega_{-},\omega_{+}]$ where the quasistatic approximation holds.
Nevertheless, since the dielectric tensor of the cloak
$\boldsymbol{\varepsilon}(\mathbf{x},\omega)$ is  defined also (via the passive constitutive laws) 
for all frequencies $\omega$ in $\bbC^+$, one can  still consider these equations 
mathematically for $\omega \in [\Go_-,\Go_+]\cup \mathbb{C}^{+}$ to derive fundamental limits on cloaking  on the considered frequency range $[\Go_-,\Go_+]$.
\end{Rem}
We recall now the mathematical definition of this operator and its associated functional framework in the following proposition (the proof of this proposition is standard, but for the benefit of the reader we prove it in the appendix and, specifically, Subsec.\ \ref{sec-annexe}). This proposition is stated for general $\Ba$, which later in the paper will be taken to have various different properties, but includes $\Ba=\BGve(\cdot, \Go)$.

\begin{Pro}\label{pro.DtNDirBVPResults}
Let $\Omega$ be a bounded Lipschitz domain of $\bbR^d$ (with $d\geq 2$) and assume $\Ba\in M_d( L^{\infty}(\Omega))$ is uniformly coercive [in the sense  \eq{def.coercivitytensor}]. Then, for any data $V_0\in H^{\frac{1}{2}}(\partial \Omega)$, the Dirichlet boundary-value problem:
\begin{align}
\nabla \cdot \Ba \Grad u= 0 \text{ in }\Omega \quad \mbox{ with } u=V_0 \ \mbox{ on }  \partial \Omega \label{def.DirBVPForCoercivityTensors}
\end{align}
has a unique solution $u\in H^1(\Omega)$ that depends continuously on the boundary data $V_0$. Thus, one can define the Dirichlet-to-Neumann operator $\Lambda_{\Ba}: H^{\frac{1}{2}}(\partial \Omega)  \to H^{-\frac{1}{2}}(\partial \Omega)$ by
\begin{align}
\Lambda_{\Ba} V_0= \gamma_n( \Ba\, \nabla u)=\Ba \nabla  u\cdot  \Bn \ \mbox{ on } \partial \Omega\label{def.DtNOpDirBVPForCoercivityTensors}
\end{align}
and $\Lambda_{\Ba}\in \mathcal{L}(H^{\frac{1}{2}}(\partial \Omega),H^{-\frac{1}{2}}(\partial \Omega) )$. 
\end{Pro}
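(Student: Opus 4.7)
The plan is to use the standard variational approach (Lax--Milgram) after lifting the boundary data into $H^1(\Omega)$, and then to recover the normal trace of $\Ba\nabla u$ by appealing to the Green's formula \eqref{eq:GreensFormula}. I break the argument into four stages.

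First, I would reduce to a homogeneous Dirichlet problem. Using that the trace $\gamma_0:H^1(\Omega)\to H^{\frac{1}{2}}(\partial\Omega)$ is a continuous surjection, there exists a bounded right inverse, so we may pick $u_0\in H^1(\Omega)$ with $u_0|_{\partial\Omega}=V_0$ and $\|u_0\|_{H^1(\Omega)}\leq C\,\|V_0\|_{H^{\frac{1}{2}}(\partial\Omega)}$. Writing $u=u_0+v$, the problem \eqref{def.DirBVPForCoercivityTensors} is equivalent to finding $v\in H^1_0(\Omega)$ such that
\begin{equation*}
\int_\Omega \Ba\nabla v\cdot\overline{\nabla w}\,\mathrm{d}\Bx=-\int_\Omega \Ba\nabla u_0\cdot\overline{\nabla w}\,\mathrm{d}\Bx,\quad\forall w\in H^1_0(\Omega).
\end{equation*}

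Second, I would apply the Lax--Milgram theorem to the sesquilinear form $b(v,w)=\int_\Omega \Ba\nabla v\cdot\overline{\nabla w}\,\mathrm{d}\Bx$ on $H^1_0(\Omega)$. Boundedness of $b$ follows from $\Ba\in M_d(L^\infty(\Omega))$. For coercivity, I use hypothesis \eqref{def.coercivitytensor}: there exist $c>0$ and $\gamma\in[0,2\pi)$ with $\operatorname{Im}[e^{\mathrm{i}\gamma}\Ba(\Bx)\BW\cdot\overline{\BW}]\geq c\|\BW\|^2_{\bbC^d}$ a.e., so that
\begin{equation*}
\bigl|b(v,v)\bigr|\geq\operatorname{Im}\bigl[e^{\mathrm{i}\gamma}b(v,v)\bigr]\geq c\,\|\nabla v\|^2_{\mathcal{H}}\geq c'\,\|v\|^2_{H^1(\Omega)},
\end{equation*}
the last step by the Poincar\'e inequality on $H^1_0(\Omega)$. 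Lax--Milgram then yields a unique $v\in H^1_0(\Omega)$ with $\|v\|_{H^1}\lesssim \|\Ba\nabla u_0\|_{\mathcal{H}}\lesssim\|u_0\|_{H^1}\lesssim\|V_0\|_{H^{\frac{1}{2}}(\partial\Omega)}$, and hence a unique $u=u_0+v\in H^1(\Omega)$ solving \eqref{def.DirBVPForCoercivityTensors} with $\|u\|_{H^1(\Omega)}\lesssim\|V_0\|_{H^{\frac{1}{2}}(\partial\Omega)}$. (The estimate is independent of the particular lift $u_0$ after optimizing.)

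Third, I would define the conormal derivative. Because $u$ solves the PDE, $\Ba\nabla u\in\BL^2(\Omega)$ has $\nabla\!\cdot(\Ba\nabla u)=0\in L^2(\Omega)$, so $\Ba\nabla u\in H_{div}(\Omega)$. The normal trace operator $\gamma_{\Bn}:H_{div}(\Omega)\to H^{-\frac{1}{2}}(\partial\Omega)$ is bounded, so setting $\Lambda_\Ba V_0:=\gamma_{\Bn}(\Ba\nabla u)$ yields an element of $H^{-\frac{1}{2}}(\partial\Omega)$. Equivalently, by the Green's formula \eqref{eq:GreensFormula}, for every $\varphi\in H^{\frac{1}{2}}(\partial\Omega)$ with arbitrary lift $\tilde\varphi\in H^1(\Omega)$, $\tilde\varphi|_{\partial\Omega}=\varphi$,
\begin{equation*}
\langle\Lambda_\Ba V_0,\overline{\varphi}\rangle_{H^{-\frac{1}{2}},H^{\frac{1}{2}}}=\int_\Omega \Ba\nabla u\cdot\overline{\nabla\tilde\varphi}\,\mathrm{d}\Bx,
\end{equation*}
and the right-hand side is independent of the choice of lift because $u$ satisfies the PDE weakly against $H^1_0$ test functions. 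Finally, the continuity estimate
\begin{equation*}
\|\Lambda_\Ba V_0\|_{H^{-\frac{1}{2}}(\partial\Omega)}\leq\|\gamma_{\Bn}\|\,\|\Ba\nabla u\|_{H_{div}(\Omega)}\leq C\,\|\Ba\|_\infty\,\|u\|_{H^1(\Omega)}\leq C'\,\|V_0\|_{H^{\frac{1}{2}}(\partial\Omega)}
\end{equation*}
shows $\Lambda_\Ba\in\mathcal{L}(H^{\frac{1}{2}}(\partial\Omega),H^{-\frac{1}{2}}(\partial\Omega))$.

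The main subtlety, and the only step that requires a little care, is the coercivity step: because $\Ba$ can be complex-valued and sign-indefinite, one cannot use the real part of $b(v,v)$ directly, and must exploit the rotation $e^{\mathrm{i}\gamma}$ afforded by \eqref{def.coercivitytensor}. Everything else (lifting, Lax--Milgram, the normal trace on $H_{div}(\Omega)$, Green's formula) is textbook.
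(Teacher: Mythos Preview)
Your proof is correct and follows essentially the same approach as the paper's: reduce to a homogeneous problem via a lift of the boundary data, establish well-posedness on $H^1_0(\Omega)$ by the Lax--Milgram lemma using the rotated coercivity $\operatorname{Im}[e^{\mathrm{i}\gamma}b(v,v)]\geq c\|\nabla v\|_{\mathcal{H}}^2$ together with Poincar\'e, and then define $\Lambda_\Ba$ via the bounded normal trace $\gamma_{\Bn}$ on $H_{div}(\Omega)$. The only cosmetic differences are that the paper uses the harmonic extension $P_\Omega(V_0)$ as its specific lift and spells out the invertibility of the associated operator $\mathbb{A}:H^1_0(\Omega)\to H^{-1}(\Omega)$ (injectivity of $\mathbb{A}$ and $\mathbb{A}^\dagger$) rather than invoking Lax--Milgram by name.
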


Using the assumptions H1, H3, H5, and H6 (see also Remark \ref{rem:CoercivityAssumpForVarEpsilon} and Lemma \ref{Lem.comptnesor-vacuum}), Proposition \ref{pro.DtNDirBVPResults} justifies the introduction of the Dirichlet-to-Neumann (DtN) operator of the problem $(\mathcal{P})$ which is the function 
\begin{gather}
\Lambda_{\BGve(\cdot, \omega)}\in \mathcal{L}(H^{\frac{1}{2}}(\partial \Omega), H^{-\frac{1}{2}}(\partial \Omega))    
\end{gather}
defined, for each $V_0 \in H^{\frac{1}{2}}(\partial \Omega)$ and each  frequency $\omega \in [\Go_-,\Go_+]\cup \bbC^+$, by
\begin{gather}\label{eq.Defdtnfreqdomain}
    \Lambda_{\BGve(\cdot, \omega)} V_0= \BGve(\cdot, \omega) \nabla  V\cdot  \Bn\mbox{ on } \partial \Omega,
\end{gather}
where $V$ is a the unique solution  of the elliptic equations $(\mathcal{P})$. 

In this setting, the object is cloaked  on the frequency interval $[\omega_-, \omega_+]$ if the observer is not able to distinguished from their measurements between the DtN operator associated to the cloaking device and the DtN operator corresponding to the case of empty space occupying all of $\Omega$ whose permittivity is that of the the vacuum.

Thus, it leads one to introduce  
for any fixed potential   $V_0\in H^{1/2}(\partial \Omega)$ imposed by the observer, the function $F_{V_0}:[\Go_-,\Go_+]\cup \bbC^+ \to \mathbb{C}$ defined by 
\begin{gather}\label{eq.defFV}
   F_{V_0}:\omega \mapsto  \langle [\Lambda_{\BGve(\cdot, \omega)} -\Lambda_{\Gve_0}]V_0,\overline{V_0} \rangle_{H^{-\frac{1}{2}}(\partial \Omega), H^{\frac{1}{2}}(\partial \Omega)}. 
\end{gather}
The modulus of $F_{V_0}$ measures the quality of the cloaking. 
It leads to the following definition of perfect cloaking on the frequency interval $[\omega_-,\omega_+]$ as well as approximate cloaking at a single frequency $\omega_0\in [\omega_-,\omega_+]$.
\begin{Def}\label{Def-perfectcloaking}
The obstacle is perfectly cloaked on the frequency interval $[\omega_-,\omega_+]$ if 
$$
\forall \omega \in [\omega_-,\omega_+] \ \mbox{ and }\ \forall V_0\in H^{\frac{1}{2}}(\partial \Omega),  \mbox{ one has }  F_{V_0}(\omega)=0.
$$
Moreover, we will also use this terminology for a subinterval (including a singleton set consisting of a single frequency).
Furthermore, we say the obstacle is approximately cloaked (with a tolerance $\eta>0$) at a frequency $\omega_0\in [\omega_-,\omega_+]$ if \begin{equation}\label{eq.defaprroximatecloaking}
|F_{V_0}(\Go_0)|\leq\eta \,  G^{\operatorname{vac}}_{V_0} \quad \mbox{ with } \quad G^{\operatorname{vac}}_{V_0}=\langle \Lambda_{\varepsilon_0} V_0,\overline{V_0} \rangle_{H^{-\frac{1}{2}}(\partial \Omega), H^{\frac{1}{2}}(\partial \Omega)} ,  \quad \, \forall  V_0\in H^{1/2}(\partial \Omega).
\end{equation}
\end{Def} 
\noindent We point out that in the above definition   $G^{\operatorname{vac}}_{V_0}$ is non-negative (and even positive if $V_0$ is not a constant surface potential) since  if one  denotes by $u$ the unique solution in $H^1(\Omega)$ of \eqref{def.DirBVPForCoercivityTensors} for $\Ba=\varepsilon_0 \,\mathbf{I}$, one has by the Green's formula \eqref{eq:GreensFormula} applied to  $u\in H^1(\Omega)$ and $\Bv=\Ba\nabla u\in H_{div}(\Omega)$:
\begin{equation}\label{eq.defGvac}
G^{\operatorname{vac}}_{V_0}=\langle \Lambda_{\varepsilon_0}V_0,\overline{V_0} \rangle_{H^{-\frac{1}{2}}(\partial \Omega), H^{\frac{1}{2}}(\partial \Omega)}=\varepsilon_0\|\nabla u\|^2_{\mathcal{H}}.
\end{equation}

For any $\Ba \in M_d(L^{\infty}(\Omega))$ which is uniformly coercive [in the sense of \eqref{def.coercivitytensor}],  the operator norm of the DtN map $\Lambda_{\Ba}$ (defined in Proposition \ref{pro.DtNDirBVPResults}) is given by
\begin{equation}\label{eq.opnormdtN}
\|\Lambda_{\Ba}\|=\sup_{U, V\in H^{1/2}(\partial \Omega)\setminus \{0\}} \frac{|\langle \Lambda_{\Ba} U,\overline{V} \rangle_{H^{-\frac{1}{2}}(\partial \Omega), H^{\frac{1}{2}}(\partial \Omega)}|}{\|U\|_{ H^{\frac{1}{2}}(\partial \Omega)}\, \|V\|_{H^{\frac{1}{2}}(\partial \Omega)}}.
\end{equation}
Moreover, by virtue of the polarization identity \cite[p.\ 25]{Teschl:2014:MMQ}, one can express the sesquilinear form $\langle \Lambda_{\Ba} U,\overline{V} \rangle_{H^{-\frac{1}{2}}(\partial \Omega), H^{\frac{1}{2}}(\partial \Omega)}$  as follows
\begin{eqnarray}\label{eq.pola}
\langle \Lambda_{\Ba} U,\overline{V} \rangle_{H^{-\frac{1}{2}}(\partial \Omega), H^{\frac{1}{2}}(\partial \Omega)}
&=&\frac{1}{4} \sum_{k=0}^{3}
\big[i^{k} \langle \Lambda_{\Ba}(U + i^{k} V,\, \overline{ U + i^{k} V}\rangle_{H^{-\frac{1}{2}}(\partial \Omega), H^{\frac{1}{2}}(\partial \Omega)}\big].
\end{eqnarray} 
Thus, using the above polarization identity, one deduces that perfect cloaking on the frequency interval $[\omega_-,\omega_+]$  is equivalent to the following equality in $ \mathcal{L}( H^{\frac{1}{2}}(\partial \Omega), H^{-\frac{1}{2}}(\partial \Omega))$:
$$
\Lambda_{\BGve(\cdot, \omega)}=\Lambda_{\varepsilon_0}, \quad \forall \Go \in  [\omega_-,\omega_+].
$$
In other words, the DtN operator of the cloaked object coincides with the one of the vacuum in the frequency interval $[\omega_-,\omega_+]$.

If the modulus of the function $F_{V_0}(\cdot)$ is not zero, but remains small on $[\Go_-,\Go_+]$  for all imposed potentials $V_0\in H^{\frac{1}{2}}(\partial \Omega)$ on $\partial \Omega$ compared to the non-negative quadratic form  $G_{V_0}^{\operatorname{vac}}$ associated  to the DtN map $\Lambda_{\varepsilon_0}$ when the medium  $\Omega$ is filled with vacuum, then one says that the object is \textit{approximately cloaked} in the sense that \eqref{eq.defaprroximatecloaking} holds.
Indeed, applying the polarization identity \eqref{eq.pola}  to the operator 
$\Lambda_{\BGve(\cdot, \omega_0)} - \Lambda_{\varepsilon_0}$ for any normalized $U$ and $V$ in $H^{\frac{1}{2}}(\partial \Omega)$, and using \eqref{eq.opnormdtN} yield, 
after a  simple computation (left to the reader), that  the definition \eqref{eq.defaprroximatecloaking} of approximate cloaking  at the frequency $\omega_0$ implies that 
\begin{equation}\label{inequalityaproxcloack}
\|\Lambda_{\BGve(\cdot, \omega_0)}-\Lambda_{\varepsilon_0}\|\leq 2\, \eta \|\Lambda_{\varepsilon_0}\|.
\end{equation}

In this paper, we show that perfect cloaking cannot occur even if the observer doesn't have access to the full infinite dimensional space of potentials $H^{\frac{1}{2}}(\partial \Omega)$ to probe the medium. But our main result for this cloaking problem are bounds related  to the function $F_{V_0}(\cdot)$ on $[\omega_-,\omega_+]$ for $d$ measurements
associated to $d$ input independent affine potentials $V_0$. Furthermore, one only needs one of these $d$ measurements to prevent with these bounds approximate cloaking. Such bounds show that even approximate cloaking  on a frequency interval is subject to fundamental limits imposed by the passivity of the device.  These limits are related to physical parameters appearing in the bounds: the bandwidth $\omega_+-\omega_-$, the central frequency $(\omega_-+ \omega_+)/2$, the volume of the obstacle $|\mathcal{O}|$, the volume of the cloak  $|\Omega\setminus \mathcal{O}|$, and a lower bound $\varepsilon/\varepsilon_0$ on the relative  permittivity  of the obstacle.

\section{Effective operator representations for the DtN map}\label{sec:EffOpReprDtNMap}
In the abstract theory of composites, a Hilbert space framework exists for defining an effective operator which is based on a generalized notion of a Hodge decomposition and constitutive material relation that is encapsulated in a concept known as the $Z$-problem (see \cite{Milton:1990:CSP, Milton:2022:TOC, Grabovsky:2016:CMM, Milton:2016:ETC, Welters:EOM:2026} as well as \cite{Grabovsky:2018:BRE} for an excellent review of this problem, \cite[Sec.\ 1.2]{Beard:2022:RVP} for a brief history on the $Z$-problem, \cite{Beard:2023:EOVP}, and references within). This theory will be recalled briefly in Subsec.\ \ref{subsec:AbstrThyComposites} after we introduce, in Subsec.\ \ref{subsec:HodgeDecompDtNMap}, the Hodge decomposition that we use in our paper associated with the Dirichlet problem for Laplace's equation (see Theorem \ref{th.tpro}). A motivation for using this framework is that it allows one to develop systematically bounds on the effective operator in a similar manner as in the theory of composites, for instance, the two variational principles known as the Dirichlet and Thomson variational principles, see \cite{Milton:1990:CSP, Milton:2022:TOC, Milton:2016:ETC, Beard:2023:EOVP, Welters:EOM:2026}, from which one can derive, e.g., the Wiener (1912) arithmetic and harmonic
mean bounds on the effective conductivity \cite{Wiener:1912:TMF}. On this, we postpone our discussion until Sec.\ \ref{Sec:VarPrincBndsEffOps}. With this in mind, the main result of this section (i.e., Theorem \ref{thm.effopreprDtNmap} in Subsec.\ \ref{sec.linkDtnsigmastar}) gives a new representation of the DtN map in terms of an effective operator for a certain $Z$-problem, that we will call the ``Dirichlet $Z$-problem" (see Def.\ \ref{DefKZProb} in Subsec.\ \ref{Dirichlet-Z-problem}), based on the Hodge decomposition from Subsec.\ \ref{subsec:HodgeDecompDtNMap}. In later sections, we will show how to use this representation to derive bounds on the DtN map using those two variational principles and then apply it to our cloaking problem.

\subsection{Hodge decomposition for the DtN map representation}\label{subsec:HodgeDecompDtNMap}
Hodge (also known as Helmholtz or Helmholtz-Hodge) decompositions \cite{Dautray:1980:MAN, Monk:2003:FEM, Schwarz:2006:HDM, Bhatia:2013:HHD} and associated orthogonal projection methods \cite{Weyl:1940:MOP} for Hilbert spaces of functions are key tools in the theory of composites \cite{DellAntonio:1986:ATO, DellAntonio:1988:GRE, Milton:2022:TOC, Milton:2016:ETC, Grabovsky:2016:CMM, Grabovsky:2018:BRE, Beard:2023:EOVP, Welters:EOM:2026} and, in particular, see \cite[Sec.\ 12.7]{Milton:2022:TOC}. For our problem, in representing the DtN map in terms of an effective operator, we need the following standard Hodge decomposition for the Dirichlet problem [i.e., $(\mathcal{P})$ with $\BGve(\cdot,\omega)= \BI$] whose statement and proof can be found, for instance, in \cite[Proposition 1, pp.\ 215-217]{Dautray:1980:MAN}. 

\begin{Thm}[Hodge decomposition for the Dirichlet problem]\label{th.tpro}
Let $\Omega$ be a bounded  simply connected Lipschitz domain of $\bbR^d$.  Then the three spaces $\mathcal{U},\;\mathcal{E},\;\mathcal{J}$  defined by 
\begin{align}
\mathcal{U} & = \nabla H_{harm}(\Omega)=\{\BE_0=-\nabla u \mid u\in H^1(\Omega) \mbox{ and } \Delta u=0 \}, \label{USpHodgeDecomp}\\
\mathcal{E} & = \{\BE=-\nabla V \mid V \in H^1_0(\Omega)\}, \label{ESpHodgeDecomp}\\
\mathcal{J} & = \{ \BJ \in H_{div,0}(\Omega) \mid  \nabla\cdot\BJ=0\},\label{JSpHodgeDecomp}
\end{align}
are mutually orthogonal subspaces in the Hilbert space  $\mathcal{H}=\BL^2(\Omega)$.
Moreover, \begin{gather} \mathcal{H}=\BL^2(\Omega)=\mathcal{U} \overset{\perp}{\oplus} \mathcal{E} \overset{\perp}{\oplus} \mathcal{J}, \label{orthotripleDtNZproblem}\\
	\operatorname{range}(\nabla|_{H^1(\Omega)})= \mathcal{U} \overset{\perp}{\oplus} \mathcal{E},\
	\operatorname{kernel}(\nabla \cdot|_{H_{div}(\Omega)}) = \mathcal{U} \overset{\perp}{\oplus} \mathcal{J}, \label{KerDivIsUPlusJ}
	\end{gather}
where $\nabla|_{H^1(\Omega)}: H^1(\Omega) \rightarrow \BL^2(\Omega)$ is the gradient operator $\nabla$ on $H^1(\Omega)$, and $\nabla \cdot|_{H_{div}(\Omega)}: H_{div}(\Omega) \rightarrow L^2(\Omega)$ is the divergence operator $\nabla \cdot$ on $H_{div}(\Omega)$.
\end{Thm}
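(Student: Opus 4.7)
The plan is to reduce Theorem \ref{th.tpro} to repeated applications of Green's formula \eqref{eq:GreensFormula} together with the classical solvability of the Dirichlet problem for Laplace's equation on $\Omega$, which provides harmonic lifts of any $H^{\frac{1}{2}}$ boundary trace. I would first establish mutual orthogonality of $\mathcal{U}$, $\mathcal{E}$, $\mathcal{J}$; then show that the direct sum exhausts $\mathcal{H}$ by proving its orthogonal complement is trivial; and finally read off the identities \eqref{KerDivIsUPlusJ} as corollaries.

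For pairwise orthogonality, I fix typical representatives $-\nabla u \in \mathcal{U}$ (with $u\in H_{harm}(\Omega)$, so $\nabla u \in H_{div}(\Omega)$ since $\Delta u = 0$), $-\nabla V \in \mathcal{E}$ with $V \in H^1_0(\Omega)$, and $\BJ \in \mathcal{J}$, and I apply \eqref{eq:GreensFormula} three times. The pair $(\nabla u, V)$ gives
\begin{equation*}
(\nabla u, \nabla V)_{\mathcal{H}} = \langle \partial u/\partial \Bn, \overline{V|_{\partial\Omega}}\rangle_{H^{-\frac{1}{2}}(\partial\Omega),H^{\frac{1}{2}}(\partial\Omega)} - (\Delta u, V)_{L^2(\Omega)} = 0,
\end{equation*}
since both $V|_{\partial\Omega} = 0$ and $\Delta u = 0$. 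The pair $(\BJ, u)$ yields $(\BJ, \nabla u)_{\mathcal{H}} = \langle \BJ\cdot\Bn, \overline{u|_{\partial\Omega}}\rangle - (\nabla\cdot\BJ, u)_{L^2(\Omega)} = 0$ since $\nabla\cdot\BJ=0$ and $\BJ\cdot\Bn=0$, and $\mathcal{E}\perp\mathcal{J}$ follows in the same way. Each subspace is closed in $\mathcal{H}$: $\mathcal{E}$ by the Poincar\'e inequality, $\mathcal{J}$ because it is the intersection of kernels of bounded maps (weak divergence and normal trace), and $\mathcal{U}$ because the solvability of the Dirichlet problem makes the gradient map a topological embedding of $H_{harm}(\Omega)/\bbC$.

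To show $\mathcal{U} \overset{\perp}{\oplus} \mathcal{E} \overset{\perp}{\oplus} \mathcal{J} = \mathcal{H}$ I take $\BF \in \mathcal{H}$ orthogonal to all three summands and derive $\BF = 0$. Orthogonality to $\mathcal{E}$ reads $(\BF, \nabla V)_{\mathcal{H}} = 0$ for every $V \in H^1_0(\Omega)$, which is the weak statement $\nabla\cdot\BF = 0$, placing $\BF$ in $H_{div}(\Omega)$. Orthogonality to $\mathcal{U}$, combined with \eqref{eq:GreensFormula} and $\nabla\cdot\BF=0$, reduces to $\langle \BF\cdot\Bn, \overline{u|_{\partial\Omega}}\rangle = 0$ for every $u \in H_{harm}(\Omega)$. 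Since the Dirichlet problem for Laplace's equation provides a harmonic lift of every $V_0 \in H^{\frac{1}{2}}(\partial\Omega)$, the set $\{u|_{\partial\Omega} : u \in H_{harm}(\Omega)\}$ exhausts $H^{\frac{1}{2}}(\partial\Omega)$, forcing $\BF\cdot\Bn = 0$ in $H^{-\frac{1}{2}}(\partial\Omega)$ and hence $\BF \in \mathcal{J}$. Finally, orthogonality to $\mathcal{J}$ gives $(\BF,\BF)_{\mathcal{H}} = 0$.

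The identities in \eqref{KerDivIsUPlusJ} then follow quickly. Given $u \in H^1(\Omega)$, let $u_h \in H_{harm}(\Omega)$ be the harmonic lift of $u|_{\partial\Omega}$; then $u - u_h \in H^1_0(\Omega)$ and $\nabla u = \nabla u_h + \nabla(u - u_h) \in \mathcal{U} \overset{\perp}{\oplus} \mathcal{E}$, the reverse inclusion being immediate. Similarly, for $\Bv \in H_{div}(\Omega)$ with $\nabla\cdot\Bv = 0$, the already-established decomposition $\Bv = \Bu + \Be + \Bj$ with $\Be = -\nabla V$, $V \in H^1_0(\Omega)$, yields $\Delta V = 0$ on taking divergence (using $\nabla\cdot\Bu = \nabla\cdot\Bj = 0$), and uniqueness of the Dirichlet problem gives $V = 0$, so $\Bv \in \mathcal{U} \overset{\perp}{\oplus} \mathcal{J}$. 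The single non-elementary ingredient in the whole plan, and the main technical obstacle, is the surjectivity of the harmonic-trace map from $H_{harm}(\Omega)$ onto $H^{\frac{1}{2}}(\partial\Omega)$ on a bounded Lipschitz domain; this is what forces $\BF\cdot\Bn = 0$ in the exhaustion step and simultaneously underpins the lift used for $\operatorname{range}(\nabla|_{H^1(\Omega)})$.
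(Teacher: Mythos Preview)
Your proof is correct and complete, but it takes a genuinely different route from the paper's. The paper (in Remark~\ref{rem:SketchPrfDirHodgeDecomp}) proceeds via abstract operator theory: it identifies the adjoints $(\nabla|_{H^1(\Omega)})^\dagger=-\nabla\cdot|_{H_{div,0}(\Omega)}$ and $(\nabla|_{H^1_0(\Omega)})^\dagger=-\nabla\cdot|_{H_{div}(\Omega)}$, invokes the closed-range theorem (Kato) to obtain the two two-fold decompositions $\mathcal{H}=\operatorname{range}(\nabla|_{H^1(\Omega)})\overset{\perp}{\oplus}\mathcal{J}=\mathcal{E}\overset{\perp}{\oplus}\operatorname{kernel}(\nabla\cdot|_{H_{div}(\Omega)})$, and then intersects them to extract the triple. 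You instead argue directly: orthogonality via Green's formula, then exhaustion by showing $(\mathcal{U}\oplus\mathcal{E}\oplus\mathcal{J})^\perp=\{0\}$ through the chain $\BF\perp\mathcal{E}\Rightarrow\nabla\cdot\BF=0$, $\BF\perp\mathcal{U}\Rightarrow\BF\cdot\Bn=0$, hence $\BF\in\mathcal{J}$. Your approach is more elementary and self-contained, avoiding the closed-range machinery; the paper's approach is more structural and makes the role of the two dual pairs of operators explicit. Both hinge on essentially the same hard analytic fact, which you correctly flag as the solvability of the Dirichlet problem on Lipschitz domains (the paper phrases it as closedness of the range of $\nabla$), so neither buys a genuine shortcut over the other.
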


\begin{Def}\label{def:HodgeProjectionsForDirichProb}
    We define $\BGG_0, \BGG_1, \BGG_2$ to be the orthogonal projections of the Hilbert space $\mathcal{H}=\BL^2(\Omega)$ onto $\mathcal{U}, \mathcal{E}, \mathcal{J}$ defined by \eqref{USpHodgeDecomp}, \eqref{ESpHodgeDecomp}, \eqref{JSpHodgeDecomp}, respectively.  
\end{Def}

\begin{Rem}\label{rem:SketchPrfDirHodgeDecomp}
    As this theorem is crucial to our results, we want to give here an insight into its proof, in particular, that it is an immediate consequence of the following facts. First, $\nabla|_{H^1(\Omega)}$ and $\nabla \cdot|_{H_{div}(\Omega)}$ are densely defined closed operators with closed ranges (proving one of these operators has closed range is essentially the difficult part of the proof, cf.\ \cite{Amrouche:2015:JLL}) whose adjoints are
   \begin{gather*}
        (\nabla|_{H^1(\Omega)})^\dagger=-\nabla \cdot|_{H_{div,0}(\Omega)},\;(\nabla |_{H^1_0(\Omega)})^\dagger=-\nabla \cdot|_{H_{div}(\Omega)},
    \end{gather*}
    which the latter is a consequence of the Green's formula \eqref{eq:GreensFormula}. Second, by general results in the theory of adjoints \cite[Theorem 5.13, p.\ 234]{Kato:1995:PTO}, it follows that
    \begin{gather}
        \mathcal{H}=\operatorname{range}(\nabla|_{H^1(\Omega)})\overset{\perp}{\oplus}\operatorname{kernel} (\nabla \cdot|_{H_{div,0}(\Omega)})=\operatorname{range}(\nabla |_{H^1_0(\Omega)})\overset{\perp}{\oplus}\operatorname{kernel} (\nabla \cdot|_{H_{div}(\Omega)}).\label{OpVerDtNMapHodgeDecomp}
    \end{gather}
    Finally, the Hodge decomposition theorem is an immediate consequence of this and the fact that
    \begin{eqnarray}
       \mathcal{U}&=&\operatorname{range}(\nabla|_{H^1(\Omega)})\cap\operatorname{kernel} (\nabla \cdot|_{H_{div}(\Omega)}), \label{OpDefOfUEJInDtNHodgeDecomp1} \\[4pt] \mathcal{E}&=&\operatorname{range}(\nabla |_{H^1_0(\Omega)}) \ \mbox{ and } \ \mathcal{J}=\operatorname{kernel}(\nabla \cdot|_{H_{div,0}(\Omega)}).\label{OpDefOfUEJInDtNHodgeDecomp2}
    \end{eqnarray}
    Let us elaborate on this. First, its clear from the orthogonal decompositions \eqref{OpVerDtNMapHodgeDecomp} and the equalities \eqref{OpDefOfUEJInDtNHodgeDecomp1} and \eqref{OpDefOfUEJInDtNHodgeDecomp2}, that $\mathcal{U}, \mathcal{E},$ and $\mathcal{J}$ are mutually orthogonal (closed) subspaces of $\mathcal{H}$. Hence, the theorem follows once we prove the claim: if $\BF\in\mathcal{H}$ then $\BF=\BE_0+\BE+\BJ$ for some $\BE_0\in \mathcal{U}, \BE\in \mathcal{E}, \BJ\in \mathcal{J}$. Let us prove this. Suppose $\BF\in \mathcal{H}$. Then [by the first equality of \eqref{OpVerDtNMapHodgeDecomp}] $\BF=\nabla u+\BJ$ with $\BJ\in \operatorname{kernel} (\nabla \cdot|_{H_{div,0}(\Omega)})=\mathcal{J}$ and $u\in H^1(\Omega)$. As $\nabla u\in \mathcal{H}$ then [by the second equality of \eqref{OpVerDtNMapHodgeDecomp})] $\nabla u=\nabla v+\BE_0$ with $v\in H^1_0(\Omega)$ (thus $\BE=\nabla v\in \mathcal{E}$) and $\BE_0\in \operatorname{kernel} (\nabla \cdot|_{H_{div}(\Omega)})$ implying $\BE_0=\nabla(u-v)\in \operatorname{range}(\nabla|_{H^1(\Omega)})\cap \operatorname{kernel} (\nabla \cdot|_{H_{div}(\Omega)})$. Therefore, one has $\BE_0\in \mathcal{U}$ (since the harmonic function $u-v\in H^1(\Omega)$). Hence we have proven that $\BF=\BE_0+\BE+\BJ$, where $\BE_0\in \mathcal{U}$, $\BE\in \mathcal{E}$ and $\mathcal{\BJ}\in \mathcal{J}$. This proves the claim.
\end{Rem}

\subsection{Abstract theory of composites: \texorpdfstring{$Z$}{Z}-problem and effective operator}\label{subsec:AbstrThyComposites}

Here we give a brief review of the essential aspects of the abstract theory of composites regarding the $Z$-problem and effective operator. The starting point of the definition of a $Z$-problem is a Hilbert space $\mathcal{H}$ which is assumed to have an orthogonal triple decomposition into subspaces $\mathcal{H=U}\overset{\bot}{\mathcal{\oplus}}\mathcal{E}\overset{\bot
			}{\mathcal{\oplus}}\mathcal{J}$ which can be considered as an abstraction of a Hodge decomposition in the case that $\mathcal{H}$ is a space of square integrable functions. Next, a (bounded) linear operator $\mathbf{L}:\mathcal{H}\rightarrow \mathcal{H}$ is given which we think of as a material tensor (more precisely, a left-multiplication operator by this tensor). The $Z$-problem then involves solving a constrained linear equation on $\mathcal{H}$  which is the  analog of a linear constitutive relation with material tensor $\mathbf{L}$ such that the input and output fields belong respectively to $\mathcal{U}\overset{\bot}{\mathcal{\oplus}}\mathcal{E}$ and $\mathcal{U}\overset{\bot}{\mathcal{\oplus}}\mathcal{J}$.  Thus,  the  constrained linear equation   satisfies  some PDE constraints implicitly encoded in the Hodge decomposition of $\mathcal{H}$. From this $Z$-problem, an effective operator $\mathbf{L}_*:\mathcal{U}\rightarrow \mathcal{U}$ is defined by the linear constitutive relations which focuses only on a restricted part of the whole system, namely, that part from subspace $\mathcal{U}$. For instance, in the setting in which $\mathcal{H}$ is a space of periodic functions and $\mathcal{U}$ is the space of their periodic averages, then $\mathbf{L}_*$ would map the average of the input to the average of the output for the material constitutive relations with material tensor $\mathbf{L}$. This is often the starting point for the theory of composites which is closely related to homogenization theory for PDEs; for more on this, see, e.g., \cite{Milton:2022:TOC, Grabovsky:2016:CMM, Milton:2016:ETC, Beard:2023:EOVP, Welters:EOM:2026}. In this setting, the effective tensor $\BL_*$ coincides with the one  obtained by the homogenization of the periodic composite.

Below we give a precise definition of the $Z$-problem and the associated effective operator (Def.\ \ref{DefZProbMain}). After this we will discuss the existence and uniqueness of solutions to the $Z$-problem for a coercive operator $\mathbf{L}$, well-definedness of the effective operator $\mathbf{L}_*$, and a formula for the effective operator as a Schur complement of $\mathbf{L}$. The main results in this regard are Theorem \ref{ThmMainClassicalZProbEffOp} and Lemma \ref{lem.uniformcoerciveimpliessolvabilityofZproblem}. In the next section we will consider the $Z$-problem associated with the Hodge decomposition introduced in Subsec.\ \ref{subsec:HodgeDecompDtNMap} and the relationship between the effective operator for this $Z$-problem and DtN map.  

\begin{Def}[$Z$-problem and effective operator]\label{DefZProbMain}
	The $Z$-problem
	\begin{equation}
		(\mathcal{H},\mathcal{U},\mathcal{E},\mathcal{J},\BL), \label{DefZProb}
	\end{equation}
	is the following problem associated with a Hilbert space $\mathcal{H}$ endowed with its inner product $(\cdot,\cdot) $, an orthogonal triple decomposition of $\mathcal{H}$ as 
	\begin{equation}
		\mathcal{H=U}\overset{\bot}{\mathcal{\oplus}}\mathcal{E}\overset{\bot
			}{\mathcal{\oplus}}\mathcal{J},\label{DefZProbHOrthTri}
	\end{equation}
	and a bounded linear operator $\BL\in\mathcal{L}(\mathcal{H})$:
	given $\BE_{0}\in\mathcal{U}$, find triples $\left(\BJ_{0},\BE,\BJ\right)\in\mathcal{U}\times\mathcal{E}\times\mathcal{J}$ satisfying 
	\begin{equation}
		\BJ_{0}+\BJ=\BL \left(  \BE_{0}+\BE\right).
		\label{DefZProbEq} 
	\end{equation}
If such a triple $\left(  \BJ_{0},\BE,\BJ\right)$ exists then it is called a solution of the $Z$-problem at $\BE_{0}$.
In addition, if there exists a bounded linear operator $\BL_*\in\mathcal{L}(\mathcal{U}$) such that 
	\begin{equation}
		\BJ_{0}=\BL_{\ast}\BE_{0}, \label{DefZProbEffOp}
	\end{equation}
	whenever $\BE_0\in\mathcal{U}$ and $\left(  \BJ_{0},\BE,\BJ\right)$ is a solution of the $Z$-problem at $\BE_0$, then $\BL_*$ is called an effective operator of the $Z$-problem.
\end{Def}

Consider a $Z$-problem $	(\mathcal{H},\mathcal{U},\mathcal{E},\mathcal{J},\BL)$.
Then we can write the operator
\begin{align}
 \BL=[\BL_{ij}]_{i,j=0,1,2}\in\mathcal{L}(\mathcal{H}) \label{3b3BlockOpReprOfSigma}  
\end{align}
as a $3\times 3$ block operator matrix with respect to the orthogonal triple decomposition (\ref{DefZProbHOrthTri}) of the Hilbert space $\mathcal{H}=\mathcal{U}\ho\mathcal{E}\ho\mathcal{J}$. More precisely, we introduce the orthogonal projections $\BGG_0,\BGG_1,\BGG_2$ of $\mathcal{H}$ onto $H_0=\mathcal{U}, H_1=\mathcal{E}, H_2=\mathcal{J},$ respectively, and define
\begin{align}
    \BL_{ij}\in \mathcal{L}(H_j,H_i),\;\BL_{ij}=\BGG_i\BL\BGG_j:H_j\rightarrow H_i,\label{DefOfSigmaSubblocks}
\end{align}
for $i,j=0,1,2$. In particular, $\BL_{11}$ is the compression of $\BL$ to $\mathcal{E}$, that is,
\begin{align}
   \BL_{11}= \BGG_1\BL\BGG_1|_{\mathcal{E}},\label{DefAltSigma11Subblock}
\end{align}
i.e., the restriction of the operator $\BGG_1\BL\BGG_1$ on $\mathcal{H}$ to the closed subspace $\mathcal{E}$. Then equation (\ref{DefZProbEq}) is equivalent to the system
\begin{gather}
    \BL_{00}\BE_0+\BL_{01}\BE=\BJ_0,\label{ZProbEquivFormPart1}\\
        \BL_{10}\BE_0+\BL_{11}\BE=0,\label{ZProbEquivFormPart2}\\
        \BL_{20}\BE_0+\BL_{21}\BE=\BJ.\label{ZProbEquivFormPart3}
\end{gather}
Finally, from this and assuming $\BL_{11}$ is invertible, we get the classical formulas for the solution of equation (\ref{DefZProbEq}) and a representation formula for the effective operator as a Schur complement (the formula is also in \cite[Eq.\ (12.57)]{Milton:2022:TOC}):
\begin{gather}
    \BJ_0=\BL_*\BE_0,\;\BE=-\BL_{11}^{-1}\BL_{10}\BE_0,\; \BJ=\BL_{20}\BE_0+\BL_{21}\BE,\label{ClassicSolnZProb}\\
    \BL_*=\begin{bmatrix}
        \BL_{00}&\BL_{01}\\
        \BL_{10}&\BL_{11}
    \end{bmatrix}/\BL_{11}=\BL_{00}-\BL_{01}\BL_{11}^{-1}\BL_{10}.\label{ClassicEffOperFormula}
\end{gather}
Thus, we have proven the following theorem. 
\begin{Thm}\label{ThmMainClassicalZProbEffOp}
If $(\mathcal{H},\mathcal{U},\mathcal{E},\mathcal{J},\BL)$ is a $Z$-problem (as in Def.\ \ref{DefZProbMain}) and $\BL_{11}$ [as defined by (\ref{DefAltSigma11Subblock})] is invertible then equation (\ref{DefZProbEq}) (i.e., the $Z$-problem at $\BE_0$) has a unique solution for each $\BE_0\in \mathcal{U}$ and it is given by the formulas (\ref{ClassicSolnZProb}), (\ref{ClassicEffOperFormula}). Moreover, the effective operator $\BL_*$ of the $Z$-problem exists, is unique, and is given by the Schur complement formula (\ref{ClassicEffOperFormula}).
\end{Thm}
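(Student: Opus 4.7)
The plan is to project the $Z$-problem equation \eqref{DefZProbEq} onto each of the three mutually orthogonal subspaces using the decomposition $\mathcal{H} = \mathcal{U} \overset{\perp}{\oplus} \mathcal{E} \overset{\perp}{\oplus} \mathcal{J}$ and the associated projections $\BGG_0, \BGG_1, \BGG_2$. Because $\BJ_0 \in \mathcal{U}$, $\BE \in \mathcal{E}$, and $\BJ \in \mathcal{J}$, applying these projections to both sides of $\BJ_0 + \BJ = \BL(\BE_0 + \BE)$ and using the block representation $\BL_{ij} = \BGG_i \BL \BGG_j$ reduces the single abstract equation \eqref{DefZProbEq} to the equivalent $3 \times 3$ block system \eqref{ZProbEquivFormPart1}--\eqref{ZProbEquivFormPart3}.

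Next I would solve this block system under the hypothesis that $\BL_{11}$, the compression of $\BL$ to $\mathcal{E}$, is invertible. The middle equation \eqref{ZProbEquivFormPart2} determines $\BE \in \mathcal{E}$ uniquely in terms of $\BE_0$ via $\BE = -\BL_{11}^{-1} \BL_{10} \BE_0$. Substituting this back into \eqref{ZProbEquivFormPart1} and \eqref{ZProbEquivFormPart3} yields the stated formulas \eqref{ClassicSolnZProb} for $\BJ_0$ and $\BJ$, which simultaneously gives existence and uniqueness of the solution triple $(\BJ_0, \BE, \BJ)$ to the $Z$-problem at each $\BE_0 \in \mathcal{U}$.

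In particular, the first block equation becomes $\BJ_0 = (\BL_{00} - \BL_{01} \BL_{11}^{-1} \BL_{10}) \BE_0$, identifying the candidate effective operator as the Schur complement $\BL_* = \BL_{00} - \BL_{01} \BL_{11}^{-1} \BL_{10}$. To verify this satisfies Def.\ \ref{DefZProbMain}, I would observe that each factor $\BL_{ij}$ is bounded as a composition of projections with $\BL \in \mathcal{L}(\mathcal{H})$, and that $\BL_{11}^{-1} \in \mathcal{L}(\mathcal{E})$ by the bounded inverse theorem; hence $\BL_* \in \mathcal{L}(\mathcal{U})$ and the identity $\BJ_0 = \BL_* \BE_0$ holds for every solution triple. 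Uniqueness of $\BL_*$ then follows because the $Z$-problem produces a single-valued linear assignment $\BE_0 \mapsto \BJ_0$ on all of $\mathcal{U}$, forcing any operator satisfying \eqref{DefZProbEffOp} to coincide with this Schur complement.

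There is essentially no hard step here, as the argument is Hilbert-space block elimination once the orthogonal triple decomposition has been set up. The only point that deserves explicit mention is that mere algebraic invertibility of $\BL_{11}$ as a linear map on $\mathcal{E}$ automatically upgrades to boundedness of $\BL_{11}^{-1}$ by the bounded inverse theorem, which is what guarantees $\BL_*$ is an element of $\mathcal{L}(\mathcal{U})$ rather than just a purely algebraic object.
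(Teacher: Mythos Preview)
Your proposal is correct and follows essentially the same approach as the paper: the paper writes out the equivalent block system \eqref{ZProbEquivFormPart1}--\eqref{ZProbEquivFormPart3} obtained by projecting onto the three orthogonal summands, then solves \eqref{ZProbEquivFormPart2} for $\BE$ using the invertibility of $\BL_{11}$ and substitutes back to get \eqref{ClassicSolnZProb} and \eqref{ClassicEffOperFormula}. Your additional remark about the bounded inverse theorem guaranteeing $\BL_*\in\mathcal{L}(\mathcal{U})$ is a nice clarification that the paper leaves implicit.
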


The following lemma gives some important properties of effective operators, including an alternative representation (see also \cite[Prop.\ 17]{Beard:2023:EOVP}) for $\BL_*$, which will be useful later.
\begin{Lem}\label{lem:ZProbEffOpAltFormulaViaInv}
    Let $(\mathcal{H},\mathcal{U},\mathcal{E},\mathcal{J},\BL)$ be a $Z$-problem such that $\BL_{11}\in \mathcal{L}(\mathcal{U})$ is invertible. Then:
    \begin{itemize}
        \item[(i)] If $\lambda$ is a nonzero scalar then $(\lambda \BL)_*=\lambda \BL_*$.
        \item[(ii)] If $\BL(\mathcal{U})\subseteq \mathcal{U}$ then $\BL_*=\BL$, i.e., $\BL_*\BE_0=\BL\BE_0, \ \forall \BE_0\in \mathcal{U}$.
        \item[(iii)] If $[\BL_{ij}]_{i,j=0,1}\in \mathcal{L}(\mathcal{U}\overset{\bot}{\mathcal{\oplus}}\mathcal{E})$ is invertible then $([\BL_{ij}]_{i,j=0,1}^{-1})_{00}\in \mathcal{L}(\mathcal{U})$ is invertible and
    \begin{gather}
        \BL_*=([\BL_{ij}]_{i,j=0,1}^{-1})_{00}^{-1}.\label{ZProbEffOpAltFormulaViaInvL}
    \end{gather}
        \item[(iv)] The operator $(\BL^\dagger)_{11}=(\BL_{11})^\dagger\in \mathcal{L}(\mathcal{U})$ is invertible and $(\BL_*)^\dagger=(\BL^\dagger)_*$. 
    \end{itemize}
\end{Lem}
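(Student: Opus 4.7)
The plan is to derive all four parts directly from the Schur complement representation \eqref{ClassicEffOperFormula} of $\BL_*$, together with elementary manipulations of the $3\times 3$ block structure \eqref{3b3BlockOpReprOfSigma}. Parts (i), (ii), and (iv) will be essentially mechanical; only part (iii) requires a short argument at the level of $Z$-problem solutions.

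For (i), the block decomposition is linear in $\BL$, so $(\lambda\BL)_{ij}=\lambda\BL_{ij}$ and in particular $(\lambda\BL)_{11}$ is invertible with inverse $\lambda^{-1}\BL_{11}^{-1}$. Substituting into \eqref{ClassicEffOperFormula} and cancelling one factor of $\lambda$ gives $(\lambda\BL)_*=\lambda\BL_*$. For (ii), the assumption $\BL(\mathcal{U})\subseteq\mathcal{U}$ means $\BL\BGG_0$ takes values in $\mathcal{U}$, which is orthogonal to both $\mathcal{E}$ and $\mathcal{J}$; hence $\BL_{10}=\BGG_1\BL\BGG_0=0$ and $\BL_{00}\BE_0=\BGG_0\BL\BE_0=\BL\BE_0$ for any $\BE_0\in\mathcal{U}$. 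The Schur formula then collapses to $\BL_*=\BL_{00}=\BL|_\mathcal{U}$.

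For (iii), write $M=[\BL_{ij}]_{i,j=0,1}\in\mathcal{L}(\mathcal{U}\ho\mathcal{E})$. I would solve the $Z$-problem in both directions. On one hand, for an arbitrary $\BJ_0\in\mathcal{U}$, set $\binom{\BE_0}{\BE}=M^{-1}\binom{\BJ_0}{0}$, so that $\BE_0=(M^{-1})_{00}\BJ_0$ and the first two block equations \eqref{ZProbEquivFormPart1}--\eqref{ZProbEquivFormPart2} hold; defining $\BJ=\BL_{20}\BE_0+\BL_{21}\BE$ produces a full $Z$-problem solution at $\BE_0$, and the definition of the effective operator forces $\BJ_0=\BL_*\BE_0=\BL_*(M^{-1})_{00}\BJ_0$. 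On the other hand, starting from an arbitrary $\BE_0\in\mathcal{U}$ and using invertibility of $\BL_{11}$ to form the solution $\BE=-\BL_{11}^{-1}\BL_{10}\BE_0$ (with $\BJ_0=\BL_*\BE_0$) as in \eqref{ClassicSolnZProb}, one checks $M\binom{\BE_0}{\BE}=\binom{\BL_*\BE_0}{0}$ and therefore $(M^{-1})_{00}\BL_*\BE_0=\BE_0$. Combining the two identities yields that $(M^{-1})_{00}$ is a two-sided inverse of $\BL_*$, so $\BL_*=((M^{-1})_{00})^{-1}$. The subtle point here, and the one I expect to be the main obstacle, is needing \emph{both} directions: one-sided invertibility in the block setting does not suffice, so the constructive argument producing $\BJ_0$ from $\BE_0$ and $\BE_0$ from $\BJ_0$ is essential.

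For (iv), I would use that each orthogonal projection $\BGG_i$ is self-adjoint to get $(\BL^\dagger)_{ij}=\BGG_i\BL^\dagger\BGG_j=(\BGG_j\BL\BGG_i)^\dagger=(\BL_{ji})^\dagger$. In particular $(\BL^\dagger)_{11}=(\BL_{11})^\dagger$ is invertible since $\BL_{11}$ is, and plugging the transposed block entries into the Schur formula \eqref{ClassicEffOperFormula} applied to $\BL^\dagger$ gives, after the usual order reversal under the adjoint,
\[
(\BL^\dagger)_*=(\BL_{00})^\dagger-(\BL_{10})^\dagger\bigl((\BL_{11})^\dagger\bigr)^{-1}(\BL_{01})^\dagger=\bigl(\BL_{00}-\BL_{01}\BL_{11}^{-1}\BL_{10}\bigr)^\dagger=(\BL_*)^\dagger,
\]
which is the desired identity.
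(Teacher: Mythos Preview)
Your proofs of (i), (ii), and (iv) are correct and follow the same approach as the paper, which simply notes that these are immediate consequences of the Schur complement formula \eqref{ClassicEffOperFormula} together with the self-adjointness of the projections $\BGG_j$.

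Your proof of (iii) is also correct but takes a different route from the paper. The paper argues purely algebraically: from the Aitken block-diagonalization
\[
\begin{bmatrix}\BL_{00}&\BL_{01}\\\BL_{10}&\BL_{11}\end{bmatrix}
=\begin{bmatrix}\BI&\BL_{01}\BL_{11}^{-1}\\0&\BI\end{bmatrix}
\begin{bmatrix}\BL_*&0\\0&\BL_{11}\end{bmatrix}
\begin{bmatrix}\BI&0\\\BL_{11}^{-1}\BL_{10}&\BI\end{bmatrix},
\]
invertibility of $M=[\BL_{ij}]_{i,j=0,1}$ forces $\BL_*$ to be invertible, and the Banachiewicz inversion formula then gives $M^{-1}$ explicitly with $(M^{-1})_{00}=(\BL_*)^{-1}$. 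Your argument instead works at the level of $Z$-problem solutions, constructing explicit inputs to show that $(M^{-1})_{00}$ is both a left and a right inverse of $\BL_*$. Your approach is more self-contained in that it does not appeal to named block-matrix identities, and your observation that in infinite dimensions one genuinely needs both one-sided identities is apt. The paper's approach, by contrast, yields the full block structure of $M^{-1}$ (not just its $(0,0)$ entry) and makes the invertibility of $\BL_*$ transparent as a byproduct of the factorization.
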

\begin{proof}
    The proof of statements (i), (ii), and (iv) are immediate consequences of the Schur complement formula (\ref{ClassicEffOperFormula}) and also for (iv) that $\BGG_{j}^\dagger=\BGG_j, j=0,1,2$ since they are orthogonal projections hence self-adjoint. Let us now prove (iii). As $\BL_{11}$ is invertible, then Theorem \ref{ThmMainClassicalZProbEffOp} is true so that the effective operator $\BL_*$ exists and is given by \eqref{ClassicEffOperFormula}. By the assumption that $[\BL_{ij}]_{i,j=0,1}$ is invertible and from the Aitken block-diagonalization formula (see, for instance, \cite[Lemma 11]{Beard:2023:EOVP} or \cite{Zhang:2005:SCA}), 
    \begin{gather*}
        \begin{bmatrix}
            \BL_{00} & \BL_{01}\\
            \BL_{10} & \BL_{11}
        \end{bmatrix}=\begin{bmatrix}
            \BI & \BL_{01}\BL_{11}^{-1}\\
            0 & \BI
        \end{bmatrix}\begin{bmatrix}
            \BL_* & 0\\
            0 & \BL_{11}
        \end{bmatrix}\begin{bmatrix}
            \BI & 0\\
            \BL_{11}^{-1} \BL_{10} & \BI
        \end{bmatrix},
    \end{gather*}
    it follows that $\BL_*$ is invertible and the Banachiewicz inversion formula (see, for instance, \cite[Lemma 14]{Beard:2023:EOVP} or \cite{Zhang:2005:SCA}) holds:
    \begin{gather*}
    [\BL_{ij}]_{i,j=0,1}^{-1} =\begin{bmatrix}
        (\BL_*)^{-1} &-(\BL_*)^{-1}\BL_{01}\BL_{11}^{-1}\\
        -\BL_{11}^{-1}\BL_{10}(\BL_*)^{-1} & \quad\BL_{11}^{-1}+\BL_{11}^{-1}\BL_{10}(\BL_*)^{-1}\BL_{01}\BL_{11}^{-1}
    \end{bmatrix}.
    \end{gather*}
    Thus, it follows that
    \begin{gather*}
        ([\BL_{ij}]_{i,j=0,1}^{-1})_{00}=(\BL_*)^{-1},
    \end{gather*}
    which implies the desired formula \eqref{ZProbEffOpAltFormulaViaInvL}.
\end{proof}

The following lemma is useful as it gives a sufficient condition for both $\BL_{11}$ and $[\BL_{ij}]_{i,j=0,1}$ to be invertible, we omit the proof as it is straightforward.
\begin{Lem}\label{lem.uniformcoerciveimpliessolvabilityofZproblem}
If $\BL$ is coercive [in the sense \eq{def:CoerciveOperator}] then $\BL_{11}$ and $[\BL_{ij}]_{i,j=0,1}$ are coercive and the operators $\BL$, $\BL_{11}$, and $[\BL_{ij}]_{i,j=0,1}$ are all invertible. In particular, Theorem \ref{ThmMainClassicalZProbEffOp} is true for any $Z$-problem $(\mathcal{H},\mathcal{U},\mathcal{E},\mathcal{J},\BL)$ such that $\BL$ is coercive.
\end{Lem}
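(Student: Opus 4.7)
The plan rests on two elementary facts: coercivity is inherited by compressions onto closed subspaces, and a coercive operator on a Hilbert space is automatically invertible (a variant of the Lax--Milgram theorem). Once these are established the lemma follows immediately, and the ``in particular'' clause is then just the conclusion of Theorem \ref{ThmMainClassicalZProbEffOp} applied with $\BL_{11}$ invertible.

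For the first fact, I would start from the hypothesis that there exist $c>0$ and $\gamma\in[0,2\pi)$ with $\operatorname{Im}[e^{\mathrm{i}\gamma}(\BL v,v)]\geq c(v,v)$ for all $v\in\mathcal{H}$. Since $\BGG_1$ is the orthogonal projection onto $\mathcal{E}$, it is self-adjoint and fixes vectors in $\mathcal{E}$; hence for any $v\in\mathcal{E}$ one has $(\BL_{11}v,v)=(\BGG_1\BL\BGG_1 v,v)=(\BL\BGG_1 v,\BGG_1 v)=(\BL v,v)$, which transfers the coercivity bound from $\BL$ to $\BL_{11}$ with the same constants $c,\gamma$. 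Exactly the same computation, using that $\BGG_0+\BGG_1$ is the orthogonal projection onto $\mathcal{U}\ho\mathcal{E}$ and fixes vectors there, shows that $[\BL_{ij}]_{i,j=0,1}$ is coercive on $\mathcal{U}\ho\mathcal{E}$ with the same constants.

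For the second fact, suppose $\BA\in\mathcal{L}(\mathcal{K})$ is coercive with constants $c,\gamma$ on a Hilbert space $\mathcal{K}$. By Cauchy--Schwarz, $\|\BA v\|\,\|v\|\geq|(\BA v,v)|\geq|\operatorname{Im}[e^{\mathrm{i}\gamma}(\BA v,v)]|\geq c\|v\|^2$, so $\|\BA v\|\geq c\|v\|$, which means $\BA$ is injective with closed range. Taking complex conjugates in the coercivity inequality and noting that $\overline{e^{\mathrm{i}\gamma}(\BA v,v)}=e^{-\mathrm{i}\gamma}(\BA^{\dagger}v,v)$, one obtains $\operatorname{Im}[e^{\mathrm{i}(\pi-\gamma)}(\BA^{\dagger}v,v)]\geq c(v,v)$, so $\BA^{\dagger}$ is coercive with angle $\pi-\gamma$ and hence injective. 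Injectivity of $\BA^{\dagger}$ forces $\operatorname{range}(\BA)^{\perp}=\{0\}$, which combined with closedness gives $\operatorname{range}(\BA)=\mathcal{K}$; the open mapping theorem then furnishes a bounded inverse. Applied to $\BL$, $\BL_{11}$, and $[\BL_{ij}]_{i,j=0,1}$, this yields their invertibility.

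I do not expect any real obstacle: the argument is entirely routine (consistent with the paper's comment that the proof is omitted as straightforward). The only minor subtlety is the rotation $\gamma\mapsto\pi-\gamma$ in passing to the adjoint, which is a one-line calculation but must be done carefully since complex conjugation introduces the sign change that is compensated by adding $\pi$ to $\gamma$.
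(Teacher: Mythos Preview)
Your proposal is correct and is precisely the routine argument the authors had in mind; indeed, the paper omits the proof entirely with the remark that it is straightforward, so there is no paper proof to compare against. Your two ingredients---inheritance of coercivity by compressions via the self-adjointness and idempotence of the orthogonal projections, and the Lax--Milgram-style invertibility argument (including the $\gamma\mapsto\pi-\gamma$ rotation for the adjoint)---are exactly what is needed, and your treatment of the adjoint step is handled cleanly.
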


\subsection{Dirichlet \texorpdfstring{$Z$}{Z}-problem and associated effective operator}\label{Dirichlet-Z-problem}

We now define a $Z$-problem and associated effective operator which arises from the Hodge decomposition (\ref{orthotripleDtNZproblem}) in Theorem \ref{th.tpro}, the Dirichlet boundary-value problem (\ref{def.DirBVPForCoercivityTensors}), and the DtN operator (\ref{def.DtNOpDirBVPForCoercivityTensors}).

\begin{Def}[Dirichlet $Z$-problem]\label{DefKZProb}
	Let $\Ba:\Omega \rightarrow M_d(\mathbb{C})$ be a matrix-valued function such that $\Ba\in M_d(L^{\infty}(\Omega))$ is uniformly coercive [in the sense of  \eqref{def.coercivitytensor}]. 
    The $Z$-problem $(\mathcal{H},\mathcal{U},\mathcal{E},\mathcal{J},\BL_\Ba)$ associated with the Hilbert space $\mathcal{H}=\BL^2(\Omega),$ orthogonal triple decomposition of $\mathcal{H}$ in \eqref{orthotripleDtNZproblem}, and the bounded linear operator $\BL_\Ba\in \mathcal{L}(\mathcal{H})$ of left multiplication by $\Ba$, is called the Dirichlet $Z$-problem. For simplicity, $\BL_{\Ba}$ will be denoted by $\Ba$ and with this abuse of notation we refer to $(\mathcal{H},\mathcal{U},\mathcal{E},\mathcal{J},\Ba)$ as the Dirichlet $Z$-problem associated with $\Ba$ and effective operator $(\BL_{\Ba})_*$ by $\Ba_*$.
\end{Def}

\begin{Pro}\label{pro.ExistenceUniquenessDirichletZProbEffOp}
For any Dirichlet $Z$-problem $(\mathcal{H},\mathcal{U},\mathcal{E},\mathcal{J},\Ba)$, each of the operators $\Ba\in \mathcal{L}(\mathcal{H})$, $[\Ba_{ij}]_{i,j=0,1}=(\BGG_0+\BGG_1)\Ba(\BGG_0+\BGG_1)|_{\mathcal{U}\overset{\bot}{\mathcal{\oplus}}\mathcal{E}}\in \mathcal{L}(\mathcal{U}\overset{\bot}{\mathcal{\oplus}}\mathcal{E}),$ and $\Ba_{11}=\Gamma_1\Ba\Gamma_1|_{\mathcal{E}}\in \mathcal{L}(\mathcal{E})$ are coercive [in the sense of \eq{def:CoerciveOperator}] and invertible. Thus, Theorem \ref{ThmMainClassicalZProbEffOp} is true for any Dirichlet $Z$-problem and, in particular, the Dirichlet $Z$-problem at $\BE_0$ admits a unique solution for each $\BE_0\in \mathcal{U}$.
\end{Pro}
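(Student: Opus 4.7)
The plan is to reduce the proposition to a direct application of Lemma \ref{lem.uniformcoerciveimpliessolvabilityofZproblem}, whose hypothesis is operator coercivity of $\BL_\Ba$ on $\mathcal{H}=\BL^2(\Omega)$. Thus, the only real content to establish is that the pointwise (uniform) coercivity \eqref{def.coercivitytensor} of the matrix-valued function $\Ba\in M_d(L^\infty(\Omega))$ implies operator coercivity of the associated multiplication operator $\BL_\Ba$, which under the abuse of notation of Def.\ \ref{DefKZProb} is simply denoted $\Ba$.

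First, I would unpack the definitions: $\Ba\in\mathcal{L}(\mathcal{H})$ acts by $(\Ba\Bv)(\Bx)=\Ba(\Bx)\Bv(\Bx)$ for a.e.\ $\Bx\in\Omega$, and this is bounded with $\|\Ba\|_{\mathcal{L}(\mathcal{H})}\le\|\Ba\|_\infty$. By the uniform coercivity hypothesis there exist $c>0$ and $\gamma\in[0,2\pi)$ such that $\operatorname{Im}[e^{\mathrm{i}\gamma}\Ba(\Bx)\BW\cdot\overline{\BW}]\ge c\|\BW\|_{\bbC^d}^2$ for every $\BW\in\bbC^d$ and a.e.\ $\Bx\in\Omega$. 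Taking $\BW=\Bv(\Bx)$ and integrating over $\Omega$ gives
\begin{equation*}
\operatorname{Im}[e^{\mathrm{i}\gamma}(\Ba\Bv,\Bv)_{\mathcal{H}}]=\int_\Omega\operatorname{Im}[e^{\mathrm{i}\gamma}\Ba(\Bx)\Bv(\Bx)\cdot\overline{\Bv(\Bx)}]\,\mathrm{d}\Bx\ge c\|\Bv\|_{\mathcal{H}}^2,
\end{equation*}
for every $\Bv\in\mathcal{H}$, which is precisely operator coercivity of $\Ba\in\mathcal{L}(\mathcal{H})$ in the sense of \eqref{def:CoerciveOperator} with the same constants $c,\gamma$.

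Next, with the Hodge decomposition $\mathcal{H}=\mathcal{U}\overset{\perp}{\oplus}\mathcal{E}\overset{\perp}{\oplus}\mathcal{J}$ from Theorem \ref{th.tpro} providing the orthogonal triple underlying the Dirichlet $Z$-problem, I would apply Lemma \ref{lem.uniformcoerciveimpliessolvabilityofZproblem} directly. That lemma states that coercivity of $\Ba$ transfers to the compressions $\Ba_{11}=\Gamma_1\Ba\Gamma_1|_{\mathcal{E}}$ and $[\Ba_{ij}]_{i,j=0,1}=(\Gamma_0+\Gamma_1)\Ba(\Gamma_0+\Gamma_1)|_{\mathcal{U}\overset{\perp}{\oplus}\mathcal{E}}$, and that each of $\Ba$, $\Ba_{11}$, and $[\Ba_{ij}]_{i,j=0,1}$ is invertible. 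Finally, since $\Ba_{11}$ is invertible, Theorem \ref{ThmMainClassicalZProbEffOp} applies, yielding the existence and uniqueness of the solution of the Dirichlet $Z$-problem at each $\BE_0\in\mathcal{U}$ as well as the existence and uniqueness of the effective operator $\Ba_*\in\mathcal{L}(\mathcal{U})$ given by the Schur complement formula \eqref{ClassicEffOperFormula}.

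There is no genuine obstacle: the entire proof is a bookkeeping exercise consisting of one integration step (transferring pointwise coercivity of the tensor to operator coercivity of the multiplication operator on $\BL^2(\Omega)$) followed by direct invocation of Lemma \ref{lem.uniformcoerciveimpliessolvabilityofZproblem} and Theorem \ref{ThmMainClassicalZProbEffOp}. The only subtle point worth flagging in the write-up is the abuse of notation identifying the tensor $\Ba$ with its associated multiplication operator $\BL_\Ba$, so that the coercivity constants for the tensor (via \eqref{def.coercivitytensor}) and for the operator (via \eqref{def:CoerciveOperator}) can legitimately be taken to coincide.
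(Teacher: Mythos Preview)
Your proposal is correct and follows essentially the same approach as the paper: one integrates the pointwise coercivity inequality \eqref{def.coercivitytensor} over $\Omega$ to obtain operator coercivity of the multiplication operator $\Ba\in\mathcal{L}(\mathcal{H})$ in the sense of \eqref{def:CoerciveOperator}, and then invokes Lemma \ref{lem.uniformcoerciveimpliessolvabilityofZproblem} (and hence Theorem \ref{ThmMainClassicalZProbEffOp}) to conclude.
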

\begin{proof}
Consider a Dirichlet $Z$-problem $(\mathcal{H},\mathcal{U},\mathcal{E},\mathcal{J},\Ba)$. Then $\Ba: \Omega \to M_d(\bbC) $ is a matrix-valued function such that $\Ba\in M_d( L^{\infty}(\Omega))$ and there exists $c>0$ and $\gamma\in[0,2\pi)$ with
\begin{equation*}
\operatorname{Im}(e^{\mathrm{i}\gamma}\Ba(\Bx)\BW\cdot \overline{\BW})\geq c||\BW||_{\mathbb{C}^d}^2,\ \forall \BW \in \bbC^d, \, \mbox{ for a.e. } \Bx \in \Omega.
\end{equation*}
This implies that $\Ba$ is coercive since for all $\BW\in \mathcal{H}$:
\begin{equation*} \operatorname{Im}\left[e^{\mathrm{i}\gamma}(\Ba\BW,\BW)_{\mathcal{H}}\right]    =\int_{\Omega}\operatorname{Im}\left[e^{\mathrm{i}\gamma}\Ba(\Bx)\BW(\Bx)\cdot \overline{\BW(\Bx)}\right] \md \Bx
    \geq\int_{\Omega}c \, ||\BW(\Bx)||_{\mathbb{C}^d}^2 \md \Bx= c \, \|\BW\|^2_{\mathcal{H}}.
\end{equation*}
The proof of this proposition now follows from this and Lemma \ref{lem.uniformcoerciveimpliessolvabilityofZproblem}.
\end{proof}

The next result follows directly from Lemma \ref{lem:ZProbEffOpAltFormulaViaInv} and Proposition \ref{pro.ExistenceUniquenessDirichletZProbEffOp}, so we omit the proof.
\begin{Cor}\label{cor:BasicPropertiesEffOpDiricZProb}
    Let $(\mathcal{H},\mathcal{U},\mathcal{E},\mathcal{J},\Ba)$ be a Dirichlet $Z$-problem. Then:
    \begin{itemize}
        \item[(i)] If $\lambda\in \mathbb{C}\setminus\{0\}$ then $(\lambda \Ba)_*=\lambda \Ba_*$.
        \item[(ii)] If $\Ba$ is a constant matrix then $\Ba_*=\Ba$, i.e., $\Ba_*\BE_0=\Ba\BE_0, \ \forall \BE_0\in \mathcal{U}$.
        \item[(iii)] The operator $[\Ba_{ij}]_{i,j=0,1}\in \mathcal{L}(\mathcal{U}\overset{\bot}{\mathcal{\oplus}}\mathcal{E})$ is invertible, $([\Ba_{ij}]_{i,j=0,1}^{-1})_{00}=\BGG_0[\Ba_{ij}]_{i,j=0,1}^{-1}\BGG_0|_{\mathcal{U}}\in \mathcal{L}(\mathcal{U})$ is invertible, and
    \begin{gather}
        \Ba_*=([\Ba_{ij}]_{i,j=0,1}^{-1})_{00}^{-1}.\label{ZProbEffOpAltFormulaViaInv}
    \end{gather}
    \item[(iv)] The matrix-valued function $\Ba^{\dagger}:\Omega\rightarrow M_d(\mathbb{C})$ defined by $\Ba^\dagger(\Bx)=\Ba(\Bx)^\dagger$ for all $\Bx\in \Omega$ has the properties: $\Ba^\dagger\in M_d(L^{\infty}(\Omega))$, $\Ba^\dagger$ is uniformly coercive [in the sense of  \eqref{def.coercivitytensor}], and in terms of the effective operator $(\Ba^\dagger)_*\in \mathcal{L}(\mathcal{U})$ of the Dirichlet $Z$-problem $(\mathcal{H},\mathcal{U},\mathcal{E},\mathcal{J},\Ba^\dagger)$ we have
    \begin{gather*}
        (\Ba_*)^\dagger=(\Ba^\dagger)_*,
    \end{gather*}
    where $(\Ba_*)^\dagger$ is the (Hilbert space) adjoint of the operator $\Ba_*\in \mathcal{L}(\mathcal{U})$. 
    \end{itemize}    
\end{Cor}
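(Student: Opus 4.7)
The plan is to reduce each of the four claims to the corresponding item of Lemma \ref{lem:ZProbEffOpAltFormulaViaInv} applied to the Dirichlet $Z$-problem $(\mathcal{H},\mathcal{U},\mathcal{E},\mathcal{J},\Ba)$, using Proposition \ref{pro.ExistenceUniquenessDirichletZProbEffOp} to supply the invertibility hypotheses automatically: since $\Ba$ is uniformly coercive, that proposition gives coercivity (hence invertibility) of both $\Ba_{11}$ and $[\Ba_{ij}]_{i,j=0,1}$, so every precondition of Lemma \ref{lem:ZProbEffOpAltFormulaViaInv} is satisfied for the Dirichlet $Z$-problem.

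With this reduction in place, items (i)--(iii) are short. For (i), one observes that the multiplication operator associated with $\lambda\Ba$ equals $\lambda$ times the one associated with $\Ba$, so Lemma \ref{lem:ZProbEffOpAltFormulaViaInv}(i) applies directly. For (ii), the key verification is that left-multiplication by a constant matrix $\Ba$ maps $\mathcal{U}$ into itself; granted this inclusion, Lemma \ref{lem:ZProbEffOpAltFormulaViaInv}(ii) yields $\Ba_* = \Ba$ on $\mathcal{U}$. For (iii), Proposition \ref{pro.ExistenceUniquenessDirichletZProbEffOp} directly provides the invertibility of $[\Ba_{ij}]_{i,j=0,1}$ needed by Lemma \ref{lem:ZProbEffOpAltFormulaViaInv}(iii); the projection identity $([\Ba_{ij}]_{i,j=0,1}^{-1})_{00} = \BGG_0[\Ba_{ij}]_{i,j=0,1}^{-1}\BGG_0|_{\mathcal{U}}$ is just the definition of the $(0,0)$-block with respect to the decomposition $\mathcal{U}\overset{\perp}{\oplus}\mathcal{E}$.

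The only item carrying any genuine computational content is (iv), and this is the step I expect to be the main (but mild) obstacle. That $\Ba^\dagger(\Bx) = \Ba(\Bx)^\dagger$ is measurable and essentially bounded follows at once from the same properties of $\Ba$. For the coercivity of $\Ba^\dagger$, I would start from the pointwise identity $\overline{\Ba(\Bx)\BW\cdot\overline{\BW}} = \Ba^\dagger(\Bx)\BW\cdot\overline{\BW}$, valid for all $\BW\in\bbC^d$; applied to $e^{\mathrm{i}\gamma}\Ba$, this gives $\operatorname{Im}[e^{\mathrm{i}\gamma}\Ba(\Bx)\BW\cdot\overline{\BW}] = -\operatorname{Im}[e^{-\mathrm{i}\gamma}\Ba^\dagger(\Bx)\BW\cdot\overline{\BW}]$. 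The coercivity hypothesis on $\Ba$ therefore rewrites as $\operatorname{Im}[e^{\mathrm{i}(\pi-\gamma)}\Ba^\dagger(\Bx)\BW\cdot\overline{\BW}]\geq c\|\BW\|_{\bbC^d}^2$, which is uniform coercivity of $\Ba^\dagger$ with the same constant $c$ and angle $\pi-\gamma$ (taken mod $2\pi$). Hence $\Ba^\dagger$ defines a Dirichlet $Z$-problem in its own right, and Lemma \ref{lem:ZProbEffOpAltFormulaViaInv}(iv) then delivers $(\Ba_*)^\dagger = (\Ba^\dagger)_*$. Beyond this short linear-algebra check and the inclusion needed in (ii), every step is a direct appeal to Lemma \ref{lem:ZProbEffOpAltFormulaViaInv} after invoking Proposition \ref{pro.ExistenceUniquenessDirichletZProbEffOp} for invertibility.
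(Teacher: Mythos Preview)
Your approach is exactly the paper's: the proof there consists of a single sentence citing Lemma~\ref{lem:ZProbEffOpAltFormulaViaInv} and Proposition~\ref{pro.ExistenceUniquenessDirichletZProbEffOp}, and you have correctly filled in the only nontrivial detail (the coercivity of $\Ba^\dagger$ in (iv), with angle $\pi-\gamma$). One small caution on (ii): the inclusion $\Ba\,\mathcal{U}\subseteq\mathcal{U}$ that you flag is immediate for scalar multiples of the identity---which is the only case actually invoked later in the paper---but deserves care for a general constant matrix.
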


\subsection{Link between the DtN operator and the effective operator}\label{sec.linkDtnsigmastar}

Here we formalize and prove our main result (Theorem \ref{thm.effopreprDtNmap}) of Sec.\ \ref{sec:EffOpReprDtNMap} which links the DtN map $\Lambda_{\Ba}$ (defined in Prop.\ \ref{pro.DtNDirBVPResults}) to the effective operator $\Ba_*$ of the Dirichlet $Z$-problem from Subsec.\ \ref{Dirichlet-Z-problem}. We can characterize this relationship between $\Lambda_{\Ba}$ and $\Ba_*$ in terms of the lift operator $\Pi$ (Def.\ \ref{def:LiftOperator}) and its adjoint $\Pi^\dagger$ by the formula: $\Lambda_{\Ba}=\Pi^\dagger\Ba_*\Pi$. One might think that $\Lambda_{\Ba}$ and $\Ba_*$ are congruent operators, but this isn't true since $\Pi$ is not invertible only surjective (by Prop.\ \ref{prop:LiftOpIsBoundedLinearOnto}).

\begin{Def}[Lift operator]\label{def:LiftOperator}
The lift operator $\Pi: H^{\frac{1}{2}}(\partial \Omega)\to \mathcal{U}$ is the function defined by
\begin{align}
    \Pi(V_0)=\nabla u,\;\;V_0\in H^{\frac{1}{2}}(\partial \Omega),
\end{align}
where $u$ is the harmonic function [i.e., $u\in H_{harm}(\Omega)$] uniquely defined by the following Dirichlet boundary-value problem:
\begin{equation}
\Delta u=0  \mbox{ in }  \Omega  \ \mbox{ with }  \ u=V_0 \mbox{ on } \partial \Omega.\label{def.DirichletBVP}
\end{equation}    
\end{Def}

\begin{Pro}\label{prop:LiftOpIsBoundedLinearOnto}
The lift operator $\Pi: H^{\frac{1}{2}}(\partial \Omega)\to \mathcal{U}$ is a bounded linear operator, i.e., $\mathcal{L}(H^{\frac{1}{2}}(\partial \Omega), \mathcal{U})$ which is surjective, but not injective. It has one-dimensional kernel which contains the constant fields on $\partial \Omega$. Moreover, its adjoint $\Pi^{\dagger}\in \mathcal{L}(\mathcal{U},H^{-\frac{1}{2}}(\partial \Omega))$ is the ``Neumann" trace operator $\Pi^{\dagger}:\mathcal{U} \to H^{-\frac{1}{2}}(\partial \Omega)$ defined by:
\begin{align}
   \Pi^\dagger(\nabla v)=\frac{\partial v}{\partial \Bn}, \ \forall \, v\in H_{harm}(\Omega).
\end{align}
\end{Pro}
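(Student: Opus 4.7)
The plan is to verify each claim in turn: boundedness and linearity of $\Pi$, surjectivity and the description of its kernel, and then the identification of the adjoint.

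First, linearity of $\Pi$ is immediate from the linearity of the Dirichlet problem \eqref{def.DirichletBVP}. For boundedness, I would invoke Prop.\ \ref{pro.DtNDirBVPResults} with $\Ba=\mathbf{I}$ (which is trivially uniformly coercive): the unique solution $u\in H^1(\Omega)$ of \eqref{def.DirichletBVP} depends continuously on the data, so there is $C>0$ with $\|u\|_{H^1(\Omega)}\leq C\|V_0\|_{H^{1/2}(\partial\Omega)}$, from which $\|\Pi V_0\|_{\mathcal{H}}=\|\nabla u\|_{\mathcal{H}}\leq C\|V_0\|_{H^{1/2}(\partial\Omega)}$. Since $\Delta u=0$, $\nabla u$ indeed lies in $\mathcal{U}=\nabla H_{harm}(\Omega)$ by definition \eqref{USpHodgeDecomp}, so $\Pi\in \mathcal{L}(H^{1/2}(\partial\Omega),\mathcal{U})$.

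For the range and kernel: any element of $\mathcal{U}$ has the form $\nabla v$ for some $v\in H_{harm}(\Omega)$; setting $V_0=v|_{\partial\Omega}\in H^{1/2}(\partial\Omega)$, the function $v$ is the unique solution of \eqref{def.DirichletBVP} with this boundary data, so $\Pi V_0=\nabla v$, proving surjectivity. For the kernel, $\Pi V_0=0$ means $\nabla u=0$ in $\Omega$, equivalently $u$ is a (complex) constant, so by the boundary condition $V_0$ is the same constant on $\partial\Omega$. Since $\Omega$ is connected, $\ker\Pi$ is exactly the one-dimensional space of constant functions on $\partial\Omega$; in particular, $\Pi$ is not injective.

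For the adjoint, fix $v\in H_{harm}(\Omega)$ and $V_0\in H^{1/2}(\partial\Omega)$, and let $u$ be the harmonic extension of $V_0$. By definition of the adjoint and of $\Pi$,
\begin{equation*}
\langle \Pi^\dagger(\nabla v),\overline{V_0}\rangle_{H^{-\frac12}(\partial\Omega),H^{\frac12}(\partial\Omega)}
= (\nabla v,\Pi V_0)_{\mathcal{H}} = (\nabla v,\nabla u)_{\mathcal{H}}.
\end{equation*}
Now $\nabla v\in H_{div}(\Omega)$ because $\nabla\cdot(\nabla v)=\Delta v=0\in L^2(\Omega)$, and $u\in H^1(\Omega)$ with $u|_{\partial\Omega}=V_0$. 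Applying the Green's formula \eqref{eq:GreensFormula} to the pair $(\nabla v,u)$ and using $\nabla\cdot\nabla v=0$ yields
\begin{equation*}
(\nabla v,\nabla u)_{\mathcal{H}} = \bigl\langle \gamma_{\mathbf{n}}(\nabla v),\overline{u|_{\partial\Omega}}\bigr\rangle_{H^{-\frac12}(\partial\Omega),H^{\frac12}(\partial\Omega)}
= \Bigl\langle \tfrac{\partial v}{\partial \mathbf{n}},\overline{V_0}\Bigr\rangle_{H^{-\frac12}(\partial\Omega),H^{\frac12}(\partial\Omega)}.
\end{equation*}
Since this holds for every $V_0\in H^{1/2}(\partial\Omega)$, we conclude $\Pi^\dagger(\nabla v)=\partial v/\partial\mathbf{n}$, as claimed.

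The main subtlety is a bookkeeping one rather than a conceptual obstacle: one must make sure the representative $v\in H_{harm}(\Omega)$ chosen for an element of $\mathcal{U}$ has a well-defined normal derivative, which is guaranteed precisely because $\nabla v\in H_{div}(\Omega)$ when $v$ is harmonic, so the normal trace $\gamma_{\mathbf{n}}$ from Subsec.\ \ref{sec:NotationsConventions} applies. One small point also worth noting for completeness is that although $v\in H_{harm}(\Omega)$ is only determined up to an additive constant by $\nabla v$, the formula $\partial v/\partial\mathbf{n}$ is unambiguous on that equivalence class.
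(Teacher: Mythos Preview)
Your proof is correct and follows essentially the same route as the paper's: boundedness via continuous dependence of the harmonic extension on its boundary data, surjectivity by tracing back an arbitrary $\nabla v\in\mathcal{U}$ to $V_0=v|_{\partial\Omega}$, the kernel via connectedness of $\Omega$, and the adjoint via Green's formula \eqref{eq:GreensFormula}. Your closing remarks about the normal trace being well defined on $\nabla v\in H_{div}(\Omega)$ and about independence of the additive constant in $v$ are nice clarifications that the paper leaves implicit.
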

\begin{proof} 
For each $V_0\in H^{\frac{1}{2}}(\partial \Omega)$, there exists a unique $u \in H^1(\Omega)$ (depending  linearly and continuously on $V_0$) that solves the equation \eqref{def.DirichletBVP}. This defines a bounded linear operator
$P_{\Omega} \in \mathcal{L}(H^{\frac{1}{2}}(\partial \Omega), H^1(\Omega))$ by $P_{\Omega}(V_0)=u$ (see, e.g., \cite[Theorem 3.14, p.\ 46]{Monk:2003:FEM}). Then, as $\nabla \in \mathcal{L}(H^1(\Omega), \BL^2(\Omega))$, it follows immediately that $\Pi=\nabla P_{\Omega}\in \mathcal{L}(H^{\frac{1}{2}}(\partial \Omega), \mathcal{U})$. Moreover, $\Pi$ is not injective. Indeed, 
$\Pi (V_0)=0=\nabla u$ with $u$ defined by \eqref{def.DirichletBVP}, if and only if (as $\Omega$ connected) $u$ is constant in $\Omega$.  Thus, it is equivalent that its trace $V_0$ is constant on $\partial \Omega$. Furthermore, $\Pi$ is surjective since by definition  any element of $\mathcal{U}$ can be written $\nabla u$  with  $u\in H_{harm}(\Omega)$, i.e., $u$ is harmonic (i.e., $\Delta u=0$) and  $u\in H^1(\Omega)$. Hence, as $u\in H^1(\Omega)$, its trace on $\partial \Omega$ exists, which we will denote by $V_0$, and $V_0\in H^{1/2}(\partial \Omega)$. Thus, by construction, $u$ solves \eqref{def.DirichletBVP} and therefore  $\Pi(V_0)=\nabla u$.

For the adjoint formula, one uses that for any $\nabla v\in \mathcal{U}$ [for some $v\in H_{harm}(\Omega)$] and any $V_0\in H^{\frac{1}{2}}(\partial \Omega)$:
\begin{eqnarray*}
( \nabla v, \Pi V_0)_{\mathcal{H}}&=& ( \nabla v, \nabla u)_{\mathcal{H}} \\[4pt]
&=&   (  -\nabla^2 v,  u)_{L^2(\Omega)}+ \left\langle \frac{\partial v}{\partial \Bn} ,\overline{u}\right\rangle_{H^{-\frac{1}{2}}(\partial \Omega), H^{\frac{1}{2}}(\partial \Omega)}     \mbox{ [by the Green's formula \eqref{eq:GreensFormula}] }\\
&=&  \left\langle \frac{\partial v}{\partial \Bn} ,\overline{V_0}\right\rangle_{H^{-\frac{1}{2}}(\partial \Omega), H^{\frac{1}{2}}(\partial \Omega)} \quad \mbox{(since $v$ is harmonic and $u=V_0$ on $\partial \Omega$)}
\end{eqnarray*}
which implies that $\Pi^{\dagger}(\nabla v)= \partial v /\partial \Bn$.
\end{proof}
The following Proposition (preceded by a technical lemma) establishes, for a given surface potential $V_0 \in H^{1/2}(\partial \Omega)$, the relationship between the solution $u$  to the Dirichlet boundary-value problem (\ref{def.DirBVPForCoercivityTensors}) associated with $V_0$, and the solution $(\BJ_{0}, \BE, \BJ) \in \mathcal{U} \times \mathcal{E} \times \mathcal{J}$ of the Dirichlet $Z$-problem $(\mathcal{H}, \mathcal{U}, \mathcal{E}, \mathcal{J}, \Ba)$ corresponding to $\BE_{0} = \Pi (V_{0})  \in \mathcal{U}$. This relationship will be the key ingredient to established the representation of the DtN operator in terms of an effective operator in Theorem \ref{thm.effopreprDtNmap}. 

\begin{Lem}\label{Lem-equiv-set}
Let $V_0\in H^{\frac{1}{2}}(\partial \Omega)$. Then
\begin{equation}\label{eq.setequality}
\{ \nabla \widetilde{v} \mid  \widetilde{v} \in H^1(\Omega) \text{ with } \widetilde{v}|_{\partial \Omega}=V_0\}= \{ \BE_0+\widetilde{\BE} \mid \widetilde{\BE} \in \mathcal{E} \text{ and }\BE_0\in \mathcal{U} \text{ with }\BE_0=\Pi (V_0) \}.
\end{equation}
\end{Lem}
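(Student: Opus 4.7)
The plan is to prove the two set inclusions by direct construction, translating between the two descriptions via the harmonic lift $u = P_\Omega V_0$, where $P_\Omega$ is the bounded Dirichlet-solution operator from the proof of Proposition \ref{prop:LiftOpIsBoundedLinearOnto}. By Definition \ref{def:LiftOperator}, $\BE_0 = \Pi(V_0) = \nabla u$ with $u \in H_{harm}(\Omega)$ and $u|_{\partial\Omega} = V_0$, and by \eqref{ESpHodgeDecomp} an element $\widetilde\BE \in \mathcal{E}$ can be written as $\widetilde\BE = -\nabla V$ with $V \in H^1_0(\Omega)$. So the right-hand side of \eqref{eq.setequality} becomes $\{\nabla(u-V) \mid V \in H^1_0(\Omega)\}$, and the reformulated identity to prove is
\begin{equation*}
\{\nabla \widetilde v \mid \widetilde v \in H^1(\Omega),\ \widetilde v|_{\partial\Omega}=V_0\} = \{\nabla(u-V) \mid V \in H^1_0(\Omega)\}.
\end{equation*}

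For the inclusion $\supseteq$, I would take any $V \in H^1_0(\Omega)$ and set $\widetilde v := u - V \in H^1(\Omega)$; since the trace is linear, $\widetilde v|_{\partial\Omega} = u|_{\partial\Omega} - V|_{\partial\Omega} = V_0 - 0 = V_0$, so $\nabla(u-V) = \nabla\widetilde v$ belongs to the left-hand side. For $\subseteq$, I would take any $\widetilde v \in H^1(\Omega)$ with $\widetilde v|_{\partial\Omega} = V_0$ and set $V := u - \widetilde v \in H^1(\Omega)$; then $V|_{\partial\Omega} = 0$, which by the definition of $H^1_0(\Omega)$ given in Subsec.\ \ref{sec:NotationsConventions} (i.e., $H^1_0(\Omega) = \{u \in H^1(\Omega) \mid u = 0 \text{ on } \partial\Omega\}$) places $V \in H^1_0(\Omega)$, and $\nabla\widetilde v = \nabla(u-V)$ lies in the right-hand side.

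No genuine obstacle arises: the argument is a one-line change of variable $\widetilde v \leftrightarrow V = u - \widetilde v$ combined with linearity of the trace. The only point requiring any care is invoking the characterization of $H^1_0(\Omega)$ as the subspace of $H^1(\Omega)$ with vanishing trace, which is precisely the definition adopted in the paper and which holds on bounded Lipschitz domains. The identity \eqref{eq.setequality} then follows.
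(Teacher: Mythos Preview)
Your proposal is correct and takes a more elementary route than the paper. The paper proves the inclusion LHS $\subseteq$ RHS by first invoking the Hodge decomposition (Theorem \ref{th.tpro}) to write $\nabla v = \widetilde{\BE}_0 + \widetilde{\BE}$ with $\widetilde{\BE}_0 \in \mathcal{U}$, $\widetilde{\BE} \in \mathcal{E}$, and then arguing via connectedness of $\Omega$ that the $\mathcal{U}$-component $\widetilde{\BE}_0$ must coincide with $\Pi(V_0)$; the other inclusion is handled essentially as you do. Your argument instead fixes the harmonic lift $u$ from the outset and reduces the whole lemma to the change of variables $\widetilde v \leftrightarrow V = u - \widetilde v$, using only the trace characterization of $H^1_0(\Omega)$ and linearity of the trace. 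This bypasses the Hodge decomposition entirely and is shorter; the paper's route, though heavier, makes the role of the orthogonal splitting $\mathcal{U}\overset{\perp}{\oplus}\mathcal{E}$ explicit, which is consonant with how the lemma is subsequently used in the $Z$-problem framework.
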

\begin{proof}
Let $V_0 \in  H^{\frac{1}{2}}(\partial \Omega)$ be fixed. \\[4pt]
{\bf Step 1: Proof of the left inclusion of \eqref{eq.setequality}} \\[4pt]
Assume that $v\in H^1(\Omega)$ with $v|_{\partial \Omega}=V_0$. Then, by Theorem \ref{th.tpro}, we have $$\nabla v\in\operatorname{range}(\nabla|_{H^1(\Omega)})= \mathcal{U} \overset{\perp}{\oplus} \mathcal{E}.$$
Thus, one can decompose uniquely $\nabla v$ as 
\begin{equation}\label{eq.relationdecompelecfield}
\nabla v=\widetilde{\BE}_0+\widetilde{\BE}  \ \mbox{ with } \ \widetilde{\BE}_0=-\nabla V_{\widetilde{\BE}_0} \in  \mathcal{U} \ \mbox{ and } \ \widetilde{\BE}=- \nabla V_{\widetilde{\BE}}\in  \mathcal{E},
\end{equation}
for some  $V_{\widetilde{\BE}_0}\in H^1(\Omega)$ with $\Delta V_{\widetilde{\BE}_0}=0$ and some $V_{\widetilde{\BE}}\in H^1_0(\Omega)$.
Define $\BE_0:=\Pi (V_0)$ and note that $\BE_0\in \mathcal{U}$ by definition of the lift operator $\Pi$. Thus to prove the considered set inclusion, one only needs to show that $\widetilde{\BE}_0=\BE_0$. To prove this, first note by \eqref{eq.relationdecompelecfield} it follows that
$$
\nabla (v+ V_{\widetilde{\BE}_0}+V_{\widetilde{\BE}})=0.
$$ 
Hence as $\Omega$ is connected, then there exists a constant $C\in \bbC$ such that  $v+ V_{\widetilde{\BE}_0}+V_{\widetilde{\BE}}=C$ on $\overline{\Omega}=\partial \Omega\cup \Omega$ and it follows that $V_0+V_{\widetilde{\BE}_0}=C$ on $\partial \Omega$ since $V_{\widetilde{\BE}}\in H^1_0(\Omega)$.
Thus, $V_0=-V_{\widetilde{\BE}_0}+C$ on $\partial \Omega$ and one gets that the function $\psi \in H^1(\Omega)$ defined by $\psi=-V_{\widetilde{\BE}_0}+C$ on $\Omega$ satisfies:
$$
 \Delta \psi=0   \ \mbox { in }\ \Omega  \quad \mbox{and} \quad \psi=V_0 \ \mbox { on } \ \partial \Omega.
$$
Hence it follows by definition of $\Pi$ that $\BE_0=\Pi(V_0)=\nabla \psi$ and, as $\nabla \psi=-\nabla V_{\widetilde{\BE}_0}=\widetilde{\BE}_0$, this proves  that $\widetilde{\BE}_0=\BE_0$, as desired.\\[4pt]
{\bf Step 2: Proof of the right inclusion of \eqref{eq.setequality}} \\[4pt]
Assume $\widetilde{\BE}\in\mathcal{E}$ and $\BE_0\in \mathcal{U}$ with $\BE_0=\Pi(V_0)$ and define $\BE_1:=\BE_0+\widetilde{\BE}$. Then there exists $V_{\widetilde{\BE}}\in H^1_{0}(\Omega)$ such that $\widetilde{\BE}=-\nabla V_{\widetilde{\BE}}$ and $V_{{\BE}_0}\in H^1(\Omega)$ such that $\BE_0=-\nabla V_{{\BE}_0}$ where $-V_{{\BE}_0}\in H^1(\Omega)$ is the solution of the Dirichlet boundary-value problem \eqref{def.DirichletBVP} satisfying $-V_{{\BE}_0}|_{\partial \Omega}=V_0$. Thus, it is clear that $\BE_1=\nabla v$ where  $v=-(V_{{\BE}_0}+ V_{{\BE}_1})\in H^1(\Omega)$ and has a trace equal to $V_0$ on $\partial \Omega$, i.e., $v|_{\partial\Omega}=V_0$. This proves the  the right inclusion of \eqref{eq.setequality}.
\end{proof}

\begin{Pro}\label{Pro-link-DtNpDE-Z-problem} Assume $\Ba\in M_d( L^{\infty}(\Omega))$ is uniformly coercive [in the sense  \eq{def.coercivitytensor}]
and let $V_0\in H^{1/2}(\partial \Omega)$. Then  the unique solution  $u$ (defined  by Proposition \ref{pro.DtNDirBVPResults}) to the Dirichlet boundary-problem (\ref{def.DirBVPForCoercivityTensors}) associated to  $V_0$  and the  unique solution $\left(\BJ_{0},\BE,\BJ\right)\in\mathcal{U}\times\mathcal{E}\times\mathcal{J}$ of the  Dirichlet $Z$-problem (introduced in Definition \ref{DefKZProb})  $(\mathcal{H},\mathcal{U},\mathcal{E},\mathcal{J},\Ba)$   associated to  $\BE_0=\Pi (V_0)=\Pi( u|_{\partial \Omega})\in \mathcal{U}$ are related by
\begin{gather}
    \nabla u=\BE_0+\BE \ \mbox{ and }  \ \BJ_0+\BJ=\Ba \nabla u.
\end{gather}\end{Pro}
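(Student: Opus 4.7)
The plan is to show that the claimed triple $(\BJ_0, \BE, \BJ)$ can be manufactured from $u$ via the Hodge decomposition, and then invoke uniqueness of the solution to the Dirichlet $Z$-problem (Proposition \ref{pro.ExistenceUniquenessDirichletZProbEffOp}) to identify it with the given solution.

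First, since $u\in H^1(\Omega)$ satisfies $u|_{\partial\Omega}=V_0$, Lemma \ref{Lem-equiv-set} applied to $u$ yields a decomposition
\begin{equation*}
\nabla u = \BE_0 + \widetilde{\BE}, \qquad \BE_0 = \Pi(V_0)\in \mathcal{U},\ \ \widetilde{\BE}\in \mathcal{E}.
\end{equation*}
This already exhibits $\nabla u$ as having the structural form $\BE_0+\BE$ required on the left of the claim. It remains only to show that $\widetilde{\BE}$ coincides with the $\BE$--component of the unique solution of the $Z$-problem at $\BE_0$, and to construct the corresponding $\BJ_0,\BJ$.

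Next, I would exploit the PDE $\nabla\cdot(\Ba\nabla u)=0$. Testing against $\varphi\in H^1_0(\Omega)$ gives $(\Ba\nabla u,\nabla\varphi)_{\mathcal{H}}=0$, i.e.\ $\Ba\nabla u$ is orthogonal to $\mathcal{E}$ (since $\mathcal{E}$ is exactly the set of gradients of $H^1_0$-functions, up to sign). The orthogonal triple decomposition \eqref{orthotripleDtNZproblem} of Theorem \ref{th.tpro} therefore yields
\begin{equation*}
\Ba\nabla u = \widetilde{\BJ}_0 + \widetilde{\BJ},\qquad \widetilde{\BJ}_0\in \mathcal{U},\ \ \widetilde{\BJ}\in \mathcal{J}.
\end{equation*}
Equivalently, $\Ba\nabla u\in \operatorname{kernel}(\nabla\cdot|_{H_{div}(\Omega)}) = \mathcal{U}\overset{\perp}{\oplus}\mathcal{J}$ by \eqref{KerDivIsUPlusJ}. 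Combining the two displayed decompositions gives
\begin{equation*}
\widetilde{\BJ}_0+\widetilde{\BJ} = \Ba(\BE_0+\widetilde{\BE}),
\end{equation*}
which is precisely the $Z$-problem equation \eqref{DefZProbEq} for $(\mathcal{H},\mathcal{U},\mathcal{E},\mathcal{J},\Ba)$ at $\BE_0$.

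Finally, I would invoke uniqueness: by Proposition \ref{pro.ExistenceUniquenessDirichletZProbEffOp}, the Dirichlet $Z$-problem at $\BE_0=\Pi(V_0)$ admits exactly one solution in $\mathcal{U}\times\mathcal{E}\times\mathcal{J}$, so $(\widetilde{\BJ}_0,\widetilde{\BE},\widetilde{\BJ}) = (\BJ_0,\BE,\BJ)$. This delivers both identities $\nabla u = \BE_0+\BE$ and $\BJ_0+\BJ = \Ba\nabla u$. I do not anticipate a serious obstacle: the only subtle point is passing from the classical statement $\nabla\cdot(\Ba\nabla u)=0$ to the orthogonality of $\Ba\nabla u$ with $\mathcal{E}$, but this is immediate from the weak formulation satisfied by $u$ (see Proposition \ref{pro.DtNDirBVPResults} and its proof in Subsec.\ \ref{sec-annexe}) and the definition \eqref{ESpHodgeDecomp} of $\mathcal{E}$. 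Everything else is bookkeeping within the Hodge decomposition and a direct appeal to uniqueness.
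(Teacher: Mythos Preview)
Your proposal is correct and follows essentially the same route as the paper: decompose $\nabla u$ via Lemma \ref{Lem-equiv-set}, observe that $\Ba\nabla u\in\ker(\nabla\cdot|_{H_{div}(\Omega)})=\mathcal{U}\overset{\perp}{\oplus}\mathcal{J}$ by Theorem \ref{th.tpro}, and invoke uniqueness from Proposition \ref{pro.ExistenceUniquenessDirichletZProbEffOp}. The only cosmetic difference is that the paper cites \eqref{KerDivIsUPlusJ} directly rather than passing through the weak-formulation orthogonality argument you sketch, but these are equivalent.
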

\begin{proof}
Let $V_0 \in  H^{\frac{1}{2}}(\partial \Omega)$ be fixed and $u$ be the corresponding  unique solution in $H^{1}(\Omega)$ of the Dirichlet boundary-value problem (\ref{def.DirBVPForCoercivityTensors}). In particular, the trace of $u$ on $\partial \Omega$ is equal to $V_0$, i.e., $u|_{\partial \Omega}=V_0$, and by virtue of Lemma \ref{Lem-equiv-set}, one gets that
$$
\nabla u=\BE_0+\widetilde{\BE}  \ \mbox{ with } \ \BE_0=\Pi (V_0)=\Pi( u|_{\partial \Omega})\in \mathcal{U} \ \mbox{ and } \ \widetilde{\BE}=- \nabla V_{\widetilde{\BE}}\in  \mathcal{E}, \text{ for some } V_{\widetilde{\BE}}\in H_0^1(\Omega). 
$$
Then, as $\Ba\nabla u\in\operatorname{kernel}(\nabla \cdot|_{H_{div}(\Omega)})$, one deduces from Theorem \ref{th.tpro} that $$\Ba(\BE_0+\widetilde{\BE})=\Ba\nabla u\in\operatorname{kernel}(\nabla \cdot|_{H_{div}(\Omega)})=\mathcal{U}\overset{\perp}{\oplus}  \mathcal{J}.$$ 
By the uniqueness of the solution of the Dirichlet $Z$-problem $(\mathcal{H},\mathcal{U},\mathcal{E},\mathcal{J},\Ba)$ at $\BE_0$ (see  Proposition \ref{pro.ExistenceUniquenessDirichletZProbEffOp}), it follows that $\widetilde{\BE}=\BE$ and $\BJ_0+\BJ=\Ba \nabla u$.
\end{proof}
The following theorem is our main result in Sec.\ \ref{sec:EffOpReprDtNMap}. It was Milton in \cite[Chap.\ 2]{Milton:2016:ETC} who recognized that the DtN operator can be represented in terms of an effective operator. Its mathematical reformulation and proof as given below, is the fruit of a collaboration of the three authors of the present article. The proof was completed in 2019 and based on this, the third author and his collaborators developed a discrete analog of this result for electrical networks, see \cite[Theorem 63]{Beard:2023:EOVP}.
\begin{Thm}[Effective operator representation of the DtN map]\label{thm.effopreprDtNmap}
The DtN operator $\Lambda_{\Ba}: H^{\frac{1}{2}}(\partial \Omega) \to H^{-\frac{1}{2}}(\partial \Omega)$ (as defined in Prop.\ \ref{pro.DtNDirBVPResults}) and the effective operator $\Ba_*$ of the Dirichlet $Z$-problem $(\mathcal{H},\mathcal{U},\mathcal{E},\mathcal{J},\Ba)$ (as defined in Def.\ \ref{DefKZProb}) satisfy
\begin{align}\label{eq.factDtNmap}
\Lambda_{\Ba}=\Pi^{\dagger}\Ba_*\Pi,    
\end{align}
where $\Pi: H^{\frac{1}{2}}(\partial \Omega)\to \mathcal{U}$ is the lift operator (as defined in Def.\ \ref{def:LiftOperator}), $\Pi^{\dagger}:\mathcal{U} \to H^{-\frac{1}{2}}(\partial \Omega)$ is its adjoint (see Prop.\ \ref{prop:LiftOpIsBoundedLinearOnto}), and $\Ba_*$ is given by the Schur complement formula
\begin{gather}
    \Ba_*=\Ba_{00}-\Ba_{01}\Ba_{11}^{-1}\Ba_{10}, \label{eq.Schurcomplement1}\\
    \Ba_{ij}=\BGG_i\Ba \BGG_j:\operatorname{range}(\mathbf{\Gamma}_j)\rightarrow \operatorname{range}(\mathbf{\Gamma}_i),\;i,j=0,1,\label{eq.Schurcomplement2}
\end{gather}
where $\BGG_0$ and $\BGG_1$ are the orthogonal projections of the Hilbert space $\mathcal{H}=\BL^2(\Omega)$ onto the spaces $\mathcal{U}$ and $ \mathcal{E}$, respectively (as defined in Def.\ \ref{def:HodgeProjectionsForDirichProb}).
\end{Thm}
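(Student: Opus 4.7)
The plan is to verify the operator identity \eqref{eq.factDtNmap} by computing both sides against an arbitrary duality pair $(V_0,W_0)\in H^{1/2}(\partial\Omega)\times H^{1/2}(\partial\Omega)$ and using the link already established in Proposition \ref{Pro-link-DtNpDE-Z-problem} between the Dirichlet boundary-value problem (\ref{def.DirBVPForCoercivityTensors}) and the Dirichlet $Z$-problem. The Schur complement identity \eqref{eq.Schurcomplement1}--\eqref{eq.Schurcomplement2} is then immediate: since $\Ba$ is uniformly coercive, Proposition \ref{pro.ExistenceUniquenessDirichletZProbEffOp} ensures that $\Ba_{11}$ is invertible, so Theorem \ref{ThmMainClassicalZProbEffOp} (formula \eqref{ClassicEffOperFormula}) directly yields the Schur complement expression. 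Hence the real content is the factorization $\Lambda_{\Ba}=\Pi^{\dagger}\Ba_*\Pi$.

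For the factorization, I would fix $V_0\in H^{1/2}(\partial\Omega)$ and denote by $u\in H^1(\Omega)$ the unique solution to (\ref{def.DirBVPForCoercivityTensors}) given by Proposition \ref{pro.DtNDirBVPResults}. By Proposition \ref{Pro-link-DtNpDE-Z-problem}, the triple $(\BJ_0,\BE,\BJ)\in\mathcal{U}\times\mathcal{E}\times\mathcal{J}$ solving the Dirichlet $Z$-problem at $\BE_0=\Pi V_0$ satisfies $\nabla u=\BE_0+\BE$ and $\Ba\nabla u=\BJ_0+\BJ$, and by definition of the effective operator $\BJ_0=\Ba_*\BE_0=\Ba_*\Pi V_0$. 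Now take an arbitrary test potential $W_0\in H^{1/2}(\partial\Omega)$, and let $w\in H_{harm}(\Omega)$ be its harmonic extension, so that $w|_{\partial\Omega}=W_0$ and $\nabla w=\Pi W_0$ by Definition \ref{def:LiftOperator}. Since $\Ba\nabla u\in H_{div}(\Omega)$ (because $\nabla\cdot(\Ba\nabla u)=0\in L^2(\Omega)$), the Green's formula \eqref{eq:GreensFormula} applied to $\Ba\nabla u$ and $w$ yields
\begin{equation*}
\langle \Lambda_{\Ba}V_0,\overline{W_0}\rangle_{H^{-1/2}(\partial\Omega),H^{1/2}(\partial\Omega)}
=\bigl(\Ba\nabla u,\nabla w\bigr)_{\mathcal{H}}
=\bigl(\BJ_0+\BJ,\Pi W_0\bigr)_{\mathcal{H}}.
\end{equation*}
The key observation is then that $\Pi W_0\in\mathcal{U}$, while $\BJ\in\mathcal{J}$, so the orthogonality $\mathcal{U}\perp\mathcal{J}$ from the Hodge decomposition in Theorem \ref{th.tpro} eliminates the $\BJ$ contribution. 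What remains is $(\BJ_0,\Pi W_0)_{\mathcal{H}}=(\Ba_*\Pi V_0,\Pi W_0)_{\mathcal{H}}$, which by Proposition \ref{prop:LiftOpIsBoundedLinearOnto} equals $\langle\Pi^{\dagger}\Ba_*\Pi V_0,\overline{W_0}\rangle_{H^{-1/2}(\partial\Omega),H^{1/2}(\partial\Omega)}$. Since $V_0$ and $W_0$ are arbitrary in $H^{1/2}(\partial\Omega)$, this establishes \eqref{eq.factDtNmap}.

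I do not anticipate a serious obstacle: the solvability machinery (existence/uniqueness for both the Dirichlet BVP and the $Z$-problem) is already isolated in Proposition \ref{pro.DtNDirBVPResults} and Proposition \ref{pro.ExistenceUniquenessDirichletZProbEffOp}, the Hodge decomposition is Theorem \ref{th.tpro}, and the lift/normal-trace duality is Proposition \ref{prop:LiftOpIsBoundedLinearOnto}. The only place care is needed is in justifying the application of Green's formula with the right functional regularity on $\Ba\nabla u$ and in correctly identifying the harmonic extension so that its gradient coincides with $\Pi W_0$; once those two points are made precise, the orthogonality argument above closes the proof cleanly.
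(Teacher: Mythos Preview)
Your proposal is correct and follows essentially the same route as the paper: both use Proposition \ref{Pro-link-DtNpDE-Z-problem} to identify $\nabla u=\BE_0+\BE$ and $\Ba\nabla u=\BJ_0+\BJ$, then apply Green's formula together with the orthogonality of $\mathcal{U},\mathcal{E},\mathcal{J}$ to reduce the boundary pairing to $(\Ba_*\BE_0,\cdot)_{\mathcal{H}}$. The only cosmetic difference is that the paper works with the quadratic form (taking $W_0=V_0$) and then invokes the polarization identity at the end, whereas you test directly against an arbitrary $W_0$ via its harmonic extension; your version is slightly more direct but the substance is identical.
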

\begin{proof}
Let $V_0 \in  H^{\frac{1}{2}}(\partial \Omega)$ be fixed.  If one denotes by $u$ the corresponding  unique solution in $H^{1}(\Omega)$ of the Dirichlet boundary-value problem (\ref{def.DirBVPForCoercivityTensors}) and  by $\left(\BJ_{0},\BE,\BJ\right)\in\mathcal{U}\times\mathcal{E}\times\mathcal{J}$  the  unique solution  of the  Dirichlet $Z$-problem $(\mathcal{H},\mathcal{U},\mathcal{E},\mathcal{J},\Ba)$  associated to $\BE_0=\Pi V_0\in \mathcal{U}$, then by  virtue of the Proposition \ref{Pro-link-DtNpDE-Z-problem},  one has the following relation
$$
 \nabla u=\BE_0+\BE\in \mathcal{U}\overset{\perp}{\oplus} \mathcal{E}\ \ \mbox{ and }  \ \BJ_0+\BJ=\Ba \nabla u \in \mathcal{U}\overset{\perp}{\oplus}  \mathcal{J}.
$$
Moreover, by Proposition \ref{pro.ExistenceUniquenessDirichletZProbEffOp}, we know that the effective operator $\Ba_{*}$ of the Dirichlet $Z$-problem $(\mathcal{H},\mathcal{U},\mathcal{E},\mathcal{J},\Ba)$ exists and from the definition of $\Ba_*$, one has
$$
\BJ_0=\Ba_* \BE_0.
$$
Then, one gets
\begin{eqnarray*}
\langle \Pi^{*}\Ba_*\Pi V_0, \overline{V_0}\rangle_{H^{-\frac{1}{2}}(\partial \Omega),H^{\frac{1}{2}}(\partial \Omega)}
&=&(\Ba_* \BE_0, \BE_0 )_{\mathcal{H}} \;\; (\mbox{since } \BE_0=\Pi \,V_0) \\
&=&(\BJ_0,\BE_0)_{\mathcal{H}}\\
&=&(\BJ_0+\BJ, \BE_0+\BE)_{\mathcal{H}} \, \;\; (\mbox{as $\mathcal{U}$, $\mathcal{E}$ and $\mathcal{J}$ are mutually orthogonal})\\
&=&(\Ba\nabla u, \nabla u)_{\mathcal{H}}\\
&=&(\Ba\nabla u, \nabla u)_{\mathcal{H}}+(\nabla\cdot\Ba\nabla u, u)_{L^2(\Omega)} \;\; (\mbox{since $\nabla\cdot\Ba\nabla u=0$ on $\Omega$})\\
&=& \langle \Lambda_{\Ba}V_0, \overline{V_0 }\rangle_{H^{-\frac{1}{2}}(\partial \Omega), H^{\frac{1}{2}}(\partial \Omega)},
\end{eqnarray*}
where for the last equality we have used the Green's formula \eqref{eq:GreensFormula} and that $\Lambda_{\Ba}V_0=\Ba\nabla u\cdot\Bn$ on $\partial\Omega$.
Therefore, from this and the polarization identity for sesquilinear forms \cite[p.\ 25]{Teschl:2014:MMQ}, it follows that $ \Pi^{*}\Ba_* \, \Pi =\Lambda_{\Ba}$.
\end{proof}

\section{Variational principles and bounds on effective operators}\label{Sec:VarPrincBndsEffOps}

In this section, we introduce (in Subsec.\ \ref{subsec:VarPrinAbstrResults}) two standard variational principles (i.e., Theorems \ref{ThmClassicalDiriMinPrin} and \ref{ThmClassicalThomMinPrin} below) from the abstract theory of composites that will be needed in this paper, which require certain operator positivity assumptions (in particular, it doesn't apply to every coercive operator). These variational principles are related to the classical Dirichlet and Thomson variational principles \cite{Courant:1953:MMP} 
for the DtN map \cite{Berryman:1990:VCE, Borcea:2002:EIT, Borcea:2003:VCN, Milton:2011:UBE}, 
and we elaborate on this relationship in Subsec.\ \ref{subsec:VarPrincAppDtNMap}. We also develop as a corollary (Corollary \ref{cor:EffOpIneqsForCoerciveOps}) of the first variational principle (Theorem \ref{ThmClassicalDiriMinPrin}), an operator inequality for the effective operator $\BL_*$ of a $Z$-problem that applies to any coercive operator $\BL$. Later, in Sec.\ \ref{sec:DiriPrbAffineBCsDtNMap}, we will use Theorems \ref{ThmClassicalDiriMinPrin} and \ref{ThmClassicalThomMinPrin} to derive bounds on the DtN map with affine boundary conditions. Then, in Sec.\ \ref{sec:DiffHerglotzFuncsDtNMapsIsHerglotz}, we will use Corollary \ref{cor:EffOpIneqsForCoerciveOps} to prove a key result of our paper: the difference of two Herglotz functions, that represent two certain different DtN maps, is a Herglotz function (see Theorem \ref{thm.differenceHerg}).

\subsection{Variational principles in the abstract theory of composites}\label{subsec:VarPrinAbstrResults}

The next two theorems are well-known (see, e.g., \cite{Milton:1988:VBE, Milton:1990:CSP, Milton:2022:TOC,Milton:2016:ETC, Beard:2023:EOVP, Welters:EOM:2026}) and their proofs can be found in, e.g., \cite{Beard:2023:EOVP}.
\begin{Thm}[Dirichlet minimization principle]\label{ThmClassicalDiriMinPrin}
If $(\mathcal{H},\mathcal{U},\mathcal{E},\mathcal{J},\BL)$ is a $Z$-problem such that $\BL^\dagger=\BL$, $\BL_{11}\geq 0$, and $\BL_{11}$ is invertible then the effective operator $\BL_*$ is the unique self-adjoint operator satisfying the minimization principle:
\begin{equation*}
		( \BL_*\BE_0,\BE_0 )=\min_{\BE\in\mathcal{E}}( \BL(\BE_0+\BE),\BE_0+\BE ),\;\forall \BE_0\in\mathcal{U},
	\end{equation*}
	and, for each $\BE_0\in\mathcal{U}$, the minimizer is unique and given by
	\begin{equation*}
		\BE=-\BL_{11}^{-1}\BL_{10}\BE_0.
	\end{equation*}
	Moreover, we have the following upper bound on the effective operator:
	\begin{equation*}
		\BL_*\leq \BL_{00}=\BGG_0\BL\BGG_0|_{\mathcal{U}}.
	\end{equation*}
\end{Thm}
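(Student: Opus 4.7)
The plan is to reduce the theorem to a completion-of-squares argument built on top of the Schur complement formula from Theorem \ref{ThmMainClassicalZProbEffOp}. First I would verify that the formula applies: since $\BL^\dagger = \BL$, the compression $\BL_{11} = \BGG_1\BL\BGG_1|_{\mathcal{E}}$ is self-adjoint, and being both invertible and $\geq 0$, the spectral theorem for bounded self-adjoint operators forces $\sigma(\BL_{11}) \subseteq (0,\infty)$, so $\BL_{11}$ is bounded below by a positive multiple of the identity on $\mathcal{E}$, in particular coercive. Theorem \ref{ThmMainClassicalZProbEffOp} then produces $\BL_* = \BL_{00} - \BL_{01}\BL_{11}^{-1}\BL_{10}$, and self-adjointness of $\BL_*$ is immediate from $\BL_{00}^\dagger = \BL_{00}$, $\BL_{11}^\dagger = \BL_{11}$, and $\BL_{01}^\dagger = \BL_{10}$, all inherited from $\BL^\dagger = \BL$.

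For the variational identity, I would expand $(\BL(\BE_0+\BE),\BE_0+\BE)$ via the $3\times 3$ block decomposition of $\BL$ with respect to $\mathcal{H} = \mathcal{U} \overset{\perp}{\oplus} \mathcal{E} \overset{\perp}{\oplus} \mathcal{J}$. The $\mathcal{J}$-valued contributions vanish by orthogonality with $\BE_0+\BE$, and self-adjointness collapses the two cross terms into $2\operatorname{Re}(\BL_{10}\BE_0,\BE)$. With the candidate minimizer $\BE^* := -\BL_{11}^{-1}\BL_{10}\BE_0 \in \mathcal{E}$, completing the square using $\BL_{11}^\dagger = \BL_{11}$ and $\BL_{01} = \BL_{10}^\dagger$ should produce the clean identity
\begin{equation*}
(\BL(\BE_0 + \BE),\BE_0+\BE) = (\BL_*\BE_0,\BE_0) + (\BL_{11}(\BE - \BE^*),\BE - \BE^*),
\end{equation*}
so that the announced Schur complement appears precisely as the constant term. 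Strict positivity $\BL_{11} \geq c \, I_{\mathcal{E}}$ with $c>0$ renders the last summand nonnegative, vanishing exactly at $\BE = \BE^*$, which simultaneously yields the minimum value, the formula for the minimizer, and its uniqueness.

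It then remains to settle the uniqueness of a self-adjoint $\BL_*$ realizing the minimum and the claimed upper bound. The former follows from the polarization identity, since the real quadratic form $\BE_0 \mapsto (\BL_*\BE_0,\BE_0)$ determines the Hermitian sesquilinear form on $\mathcal{U}\times\mathcal{U}$ and hence $\BL_*$ itself. The bound $\BL_* \leq \BL_{00}$ is obtained by inserting the admissible choice $\BE = 0 \in \mathcal{E}$ into the minimization identity.

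The main obstacle, if one can call it that, is purely bookkeeping in the completion-of-squares step: one must carefully use $\BL_{01} = \BL_{10}^\dagger$ together with $\BL_{11}^\dagger = \BL_{11}$ to recognize the constant term as $(\BL_{11}\BE^*,\BE^*) = (\BL_{01}\BL_{11}^{-1}\BL_{10}\BE_0,\BE_0)$ and confirm that it reconstructs $\BL_{00} - \BL_{01}\BL_{11}^{-1}\BL_{10} = \BL_*$. Apart from this, the argument is the classical finite-dimensional proof transposed verbatim to the Hilbert space setting.
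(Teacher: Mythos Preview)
Your proposal is correct. The paper does not actually supply its own proof of this theorem; it states the result and refers the reader to the literature (in particular \cite{Beard:2023:EOVP}) for the argument. Your completion-of-squares derivation built on the Schur complement formula of Theorem~\ref{ThmMainClassicalZProbEffOp} is precisely the standard proof one finds in those references, so there is nothing to compare against here.
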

An immediately corollary of Theorem \ref{ThmClassicalDiriMinPrin} is the following monotonicity result (also known in the theory of composites, e.g., see \cite[Sec.\ 13.2]{Milton:2022:TOC}, \cite[Corollary 4]{Welters:EOM:2026}).
\begin{Cor}\label{cor:MonotonocityOfEffOps}
    If $(\mathcal{H},\mathcal{U},\mathcal{E},\mathcal{J},\BL)$ and $(\mathcal{H},\mathcal{U},\mathcal{E},\mathcal{J},\BM)$ are $Z$-problems such that $\BL^\dagger=\BL, \BM^\dagger=\BM$, both $\BL_{11}$ and $\BM_{11}$ are positive semidefinite and invertible, and $\BL\leq \BM$
    then
    \begin{gather*}
        \BL_*\leq \BM_*.
    \end{gather*}
\end{Cor}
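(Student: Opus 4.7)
The plan is to deduce the corollary directly from the Dirichlet minimization principle stated in Theorem \ref{ThmClassicalDiriMinPrin}. The hypotheses on $\BL$ and $\BM$ are exactly what is needed to apply that theorem to both $Z$-problems: they are self-adjoint, their $(1,1)$-compressions to $\mathcal{E}$ are positive semidefinite and invertible. Consequently, the effective operators $\BL_*$ and $\BM_*$ exist, are self-adjoint in $\mathcal{L}(\mathcal{U})$, and admit the variational characterizations
\begin{equation*}
(\BL_*\BE_0,\BE_0) = \min_{\BE\in\mathcal{E}} (\BL(\BE_0+\BE),\BE_0+\BE), \qquad (\BM_*\BE_0,\BE_0) = \min_{\BE\in\mathcal{E}} (\BM(\BE_0+\BE),\BE_0+\BE),
\end{equation*}
for every $\BE_0\in\mathcal{U}$.

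Fix $\BE_0\in\mathcal{U}$. The first step is to drop the minimum for $\BL_*$ and obtain the pointwise inequality $(\BL_*\BE_0,\BE_0)\leq (\BL(\BE_0+\BE),\BE_0+\BE)$ for every $\BE\in\mathcal{E}$. The second step is to apply the assumption $\BL\leq\BM$ to the element $\BE_0+\BE\in\mathcal{H}$, yielding $(\BL(\BE_0+\BE),\BE_0+\BE)\leq (\BM(\BE_0+\BE),\BE_0+\BE)$. Chaining these two inequalities, taking the minimum over $\BE\in\mathcal{E}$ on the right-hand side, and using the variational formula for $\BM_*$ gives $(\BL_*\BE_0,\BE_0)\leq(\BM_*\BE_0,\BE_0)$.

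Since $\BE_0\in\mathcal{U}$ was arbitrary and both $\BL_*$ and $\BM_*$ are self-adjoint operators on $\mathcal{U}$, this quadratic-form inequality on $\mathcal{U}$ is equivalent to the operator inequality $\BL_*\leq \BM_*$, which is the desired conclusion.

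I expect no serious obstacle here: the whole proof is a standard three-line comparison-by-test-function argument once the minimization principle is in hand. The only point requiring a moment of care is verifying that the hypotheses on $\BL$ and $\BM$ genuinely fit the framework of Theorem \ref{ThmClassicalDiriMinPrin} (self-adjointness of the whole operator, plus positivity and invertibility of the compression to $\mathcal{E}$), so that the variational representations are actually available; this is immediate from the corollary's assumptions.
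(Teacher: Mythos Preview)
Your proof is correct and is exactly the approach the paper intends: the paper states this result as an ``immediate corollary'' of Theorem \ref{ThmClassicalDiriMinPrin} without writing out a proof, and your three-line comparison-by-test-function argument is precisely what makes it immediate.
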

An improvement on Theorem \ref{ThmClassicalDiriMinPrin} is the following theorem which also gives a lower bound on the effective operator.
\begin{Thm}[Thomson minimization principle]\label{ThmClassicalThomMinPrin}
If $(\mathcal{H},\mathcal{U},\mathcal{E},\mathcal{J},\BL)$ is a $Z$-problem such that $\BL^\dagger=\BL\geq 0,$ and $\BL$ is invertible then $(\BL_*)^{-1}$ is the unique self-adjoint operator satisfying the minimization principle:
	\begin{gather*}
		((\BL_*)^{-1}\BJ_0,\BJ_0 )=\min_{\BJ\in\mathcal{J}}( \BL^{-1}(\BJ_0+\BJ),\BJ_0+\BJ),\ \forall \BJ_0\in\mathcal{U},
	\end{gather*}
	and, for each $\BJ_0\in\mathcal{U}$, the minimizer is unique and given by
	\begin{equation*}
		\BJ=-\BL_{22}^{-1}\BL_{20}\BJ_0.
	\end{equation*}
	Moreover, we have the upper and lower bounds on the effective operator:
	\begin{equation}
		0\leq [(\BL^{-1})_{00}]^{-1}]\leq\BL_*\leq \BL_{00},\label{ClassicalUpperLowerBddsEffOp}
	\end{equation}
	where 
	\begin{align*}
	    (\BL^{-1})_{00}=\BGG_0\BL^{-1}\BGG_0|_{\mathcal{U}}.
	\end{align*}
\end{Thm}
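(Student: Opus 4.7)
The plan is to derive the Thomson principle from the Dirichlet principle already proved in Theorem \ref{ThmClassicalDiriMinPrin}, by exchanging the roles of the fields $\BE$ and $\BJ$ and replacing $\BL$ by $\BL^{-1}$. Concretely, I will consider the ``dual'' $Z$-problem $(\mathcal{H},\mathcal{U},\mathcal{J},\mathcal{E},\BL^{-1})$, obtained by swapping the second and third subspaces of the Hodge triple and replacing the material operator by its inverse. Rewriting the constitutive relation (\ref{DefZProbEq}) as $\BE_0+\BE=\BL^{-1}(\BJ_0+\BJ)$ shows that solutions $(\BJ_0,\BE,\BJ)\in\mathcal{U}\times\mathcal{E}\times\mathcal{J}$ of the original $Z$-problem at $\BE_0\in\mathcal{U}$ are in one-to-one correspondence with solutions $(\BE_0,\BJ,\BE)\in\mathcal{U}\times\mathcal{J}\times\mathcal{E}$ of the dual $Z$-problem at $\BJ_0\in\mathcal{U}$.

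To apply Theorem \ref{ThmClassicalDiriMinPrin} to the dual problem I will verify its hypotheses. Self-adjointness $(\BL^{-1})^\dagger=\BL^{-1}$ is immediate from $\BL^\dagger=\BL$. Since $\BL$ is self-adjoint, positive semidefinite, and invertible, it is positive definite, hence coercive in the sense of (\ref{def:CoerciveOperator}) with $\gamma=\pi/2$; the same is true of $\BL^{-1}$. Lemma \ref{lem.uniformcoerciveimpliessolvabilityofZproblem} applied to $\BL^{-1}$ therefore guarantees that the ``$11$-block'' of the dual $Z$-problem, namely $(\BL^{-1})_{22}=\BGG_2\BL^{-1}\BGG_2|_{\mathcal{J}}$, is positive semidefinite and invertible. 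In addition, Lemma \ref{lem.uniformcoerciveimpliessolvabilityofZproblem} applied to $\BL$ together with Lemma \ref{lem:ZProbEffOpAltFormulaViaInv}(iii) ensures that $\BL_*$ is itself invertible; combined with the bijection of solution sets and the defining identity $\BJ_0=\BL_*\BE_0$, this identifies the effective operator of the dual problem as exactly $(\BL_*)^{-1}$.

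Theorem \ref{ThmClassicalDiriMinPrin} applied to the dual $Z$-problem then delivers, essentially for free, the Thomson minimization principle for $(\BL_*)^{-1}$, its uniqueness among self-adjoint operators, the explicit minimizer (which reads off directly from the dual Schur formula and yields the stated expression for $\BJ$ after translating blocks of $\BL^{-1}$ back to blocks of $\BL$), and the upper bound $(\BL_*)^{-1}\leq (\BL^{-1})_{00}$. Inverting this operator inequality on strictly positive self-adjoint operators — using the standard fact that $0<A\leq B$ implies $0<B^{-1}\leq A^{-1}$ — yields the lower bound $[(\BL^{-1})_{00}]^{-1}\leq \BL_*$ of (\ref{ClassicalUpperLowerBddsEffOp}), while non-negativity of $[(\BL^{-1})_{00}]^{-1}$ follows from $\BL^{-1}\geq 0$. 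The upper bound $\BL_*\leq \BL_{00}$ is simply Theorem \ref{ThmClassicalDiriMinPrin} applied to the original problem.

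The main obstacle I foresee is the bookkeeping in the middle step: carefully matching the Hodge-block indices when renaming the roles of $\mathcal{E}$ and $\mathcal{J}$, and cleanly arguing that the natural bijection between the two solution sets intertwines the maps $\BE_0\mapsto\BJ_0$ and $\BJ_0\mapsto\BE_0$ through $\BL_*$ and $(\BL_*)^{-1}$. Once that identification is established, every remaining conclusion of the theorem follows almost mechanically from Theorem \ref{ThmClassicalDiriMinPrin}, so the real work lies entirely in setting up the duality correctly.
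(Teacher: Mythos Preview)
Your duality approach is correct and is in fact the standard route: the paper does not prove this theorem itself but defers to external references (e.g., \cite{Beard:2023:EOVP}), and those references prove Thomson precisely by applying Dirichlet to the dual $Z$-problem $(\mathcal{H},\mathcal{U},\mathcal{J},\mathcal{E},\BL^{-1})$, just as you propose. Your verification of the hypotheses via coercivity of $\BL^{-1}$ and Lemma~\ref{lem.uniformcoerciveimpliessolvabilityofZproblem}, the identification of the dual effective operator with $(\BL_*)^{-1}$ through the bijection of solution sets, and the derivation of the bounds in (\ref{ClassicalUpperLowerBddsEffOp}) by inverting the operator inequality are all sound.

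One point deserves a warning. You write that the dual Dirichlet minimizer ``yields the stated expression for $\BJ$ after translating blocks of $\BL^{-1}$ back to blocks of $\BL$.'' The Dirichlet minimizer for the dual problem is
\[
\BJ=-\big[(\BL^{-1})_{22}\big]^{-1}(\BL^{-1})_{20}\,\BJ_0,
\]
and there is no general identity reducing this to $-\BL_{22}^{-1}\BL_{20}\BJ_0$; already in the $2\times 2$ case with $\mathcal{E}=\{0\}$ the two formulas differ. The formula printed in the theorem statement is thus a typographical slip in the paper, and you should not try to ``translate'' your way to it. Keep the minimizer written in terms of the blocks of $\BL^{-1}$, which is what your argument actually produces and what is correct.
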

This following corollary [which appears to be new, at least within in the (abstract) theory of composites]
is a key result we need in our paper (for instance, see the proof of Theorem in Sec.\ \ref{sec:DiffHerglotzFuncsDtNMapsIsHerglotz}).
\begin{Cor}\label{cor:EffOpIneqsForCoerciveOps}
If $(\mathcal{H},\mathcal{U},\mathcal{E},\mathcal{J},\BL)$ is a $Z$-problem and $\BL$ is coercive, i.e.,
\begin{gather*}
        \exists c>0, \gamma\in [0,2\pi)\;|\;\operatorname{Im}[e^{\mathrm{i}\gamma}(\BL v,v)]\geq c(v,v),\;\forall v\in \mathcal{H},
    \end{gather*}
then the effective operator $\BL_*$ is coercive and
\begin{gather*}
    0<cI_{\mathcal{U}}\leq [\mathfrak{I}(e^{\mathrm{i}\gamma}\BL)]_*\leq \mathfrak{I}(e^{\mathrm{i}\gamma}\BL_*).
\end{gather*}
\end{Cor}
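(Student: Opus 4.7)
The strategy is to combine two ingredients: (a) the monotonicity consequence of the Dirichlet minimization principle (Corollary \ref{cor:MonotonocityOfEffOps}), which will give the left inequality $cI_{\mathcal{U}}\leq[\mathfrak{I}(e^{\mathrm{i}\gamma}\BL)]_*$, and (b) a direct computation with the Schur complement formula combined with the minimization principle applied to the self-adjoint operator $\mathfrak{I}(e^{\mathrm{i}\gamma}\BL)$, which will yield the right inequality $[\mathfrak{I}(e^{\mathrm{i}\gamma}\BL)]_*\leq\mathfrak{I}(e^{\mathrm{i}\gamma}\BL_*)$. Coercivity of $\BL_*$ is then immediate from the resulting lower bound on $\mathfrak{I}(e^{\mathrm{i}\gamma}\BL_*)$.

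First I would verify that all the effective operators in sight are well defined. By Lemma \ref{lem.uniformcoerciveimpliessolvabilityofZproblem}, the coercivity of $\BL$ makes $\BL_{11}$ invertible, so $\BL_*$ exists by Theorem \ref{ThmMainClassicalZProbEffOp}. The hypothesis $\operatorname{Im}[e^{\mathrm{i}\gamma}(\BL v,v)]\geq c(v,v)$ for all $v\in\mathcal{H}$ rewrites as $\mathfrak{I}(e^{\mathrm{i}\gamma}\BL)\geq cI_{\mathcal{H}}$, and this self-adjoint operator has an invertible, positive definite $(1,1)$ block, so its effective operator $[\mathfrak{I}(e^{\mathrm{i}\gamma}\BL)]_*$ is well defined and self-adjoint. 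Similarly, $cI_{\mathcal{H}}$ has $(1,1)$ block $cI_{\mathcal{E}}$, and by Lemma \ref{lem:ZProbEffOpAltFormulaViaInv}(ii), $(cI_{\mathcal{H}})_*=cI_{\mathcal{U}}$. Applying Corollary \ref{cor:MonotonocityOfEffOps} to $cI_{\mathcal{H}}\leq\mathfrak{I}(e^{\mathrm{i}\gamma}\BL)$ then yields the first inequality $cI_{\mathcal{U}}\leq[\mathfrak{I}(e^{\mathrm{i}\gamma}\BL)]_*$.

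The main obstacle is the second inequality $[\mathfrak{I}(e^{\mathrm{i}\gamma}\BL)]_*\leq\mathfrak{I}(e^{\mathrm{i}\gamma}\BL_*)$, because $\BL$ is not self-adjoint so the Dirichlet minimization principle does not apply to $\BL$ directly. My idea is to exploit the known explicit minimizer. Fix $\BE_0\in\mathcal{U}$ and let $\BE=-\BL_{11}^{-1}\BL_{10}\BE_0\in\mathcal{E}$, i.e., the $\mathcal{E}$-component of the solution of the $Z$-problem at $\BE_0$ (see \eqref{ClassicSolnZProb}). Using the block structure \eqref{ZProbEquivFormPart1}--\eqref{ZProbEquivFormPart3} together with the orthogonality of $\mathcal{U},\mathcal{E},\mathcal{J}$, a short computation shows the identity
\begin{equation*}
(\BL_*\BE_0,\BE_0)=(\BL(\BE_0+\BE),\BE_0+\BE).
\end{equation*}
Multiplying by $e^{\mathrm{i}\gamma}$ and taking imaginary parts gives
\begin{equation*}
(\mathfrak{I}(e^{\mathrm{i}\gamma}\BL_*)\BE_0,\BE_0)=(\mathfrak{I}(e^{\mathrm{i}\gamma}\BL)(\BE_0+\BE),\BE_0+\BE).
\end{equation*}
Now $\mathfrak{I}(e^{\mathrm{i}\gamma}\BL)$ is self-adjoint with invertible, positive definite $(1,1)$ block, so Theorem \ref{ThmClassicalDiriMinPrin} applies and gives
\begin{equation*}
([\mathfrak{I}(e^{\mathrm{i}\gamma}\BL)]_*\BE_0,\BE_0)=\min_{\BE'\in\mathcal{E}}(\mathfrak{I}(e^{\mathrm{i}\gamma}\BL)(\BE_0+\BE'),\BE_0+\BE')\leq(\mathfrak{I}(e^{\mathrm{i}\gamma}\BL)(\BE_0+\BE),\BE_0+\BE).
\end{equation*}
Chaining the last two relations yields the desired inequality of quadratic forms, and since both operators on the two sides are self-adjoint (for $[\mathfrak{I}(e^{\mathrm{i}\gamma}\BL)]_*$, use Lemma \ref{lem:ZProbEffOpAltFormulaViaInv}(iv)), this upgrades to the operator inequality.

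Finally, combining the two inequalities gives $0<cI_{\mathcal{U}}\leq[\mathfrak{I}(e^{\mathrm{i}\gamma}\BL)]_*\leq\mathfrak{I}(e^{\mathrm{i}\gamma}\BL_*)$. In particular, $\operatorname{Im}[e^{\mathrm{i}\gamma}(\BL_*\BE_0,\BE_0)]\geq c(\BE_0,\BE_0)$ for every $\BE_0\in\mathcal{U}$, so $\BL_*$ is coercive in the sense of \eqref{def:CoerciveOperator}, completing the proof.
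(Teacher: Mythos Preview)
Your proof is correct and follows essentially the same approach as the paper: both arguments derive the key identity $(\BL_*\BE_0,\BE_0)=(\BL(\BE_0+\BE),\BE_0+\BE)$ for the $Z$-problem solution $\BE$ (the paper routes this through $(e^{\mathrm{i}\gamma}\BL)_*=e^{\mathrm{i}\gamma}\BL_*$ via Lemma \ref{lem:ZProbEffOpAltFormulaViaInv}(i), you do it directly via orthogonality), then compare its imaginary part with the minimum from Theorem \ref{ThmClassicalDiriMinPrin} applied to $\mathfrak{I}(e^{\mathrm{i}\gamma}\BL)$, and handle the lower bound $cI_{\mathcal{U}}\leq[\mathfrak{I}(e^{\mathrm{i}\gamma}\BL)]_*$ by monotonicity.
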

\begin{proof}
    Assume the hypotheses. First, consider the $Z$-problem $(\mathcal{H},\mathcal{U},\mathcal{E},\mathcal{J},\mathfrak{I}(e^{\mathrm{i}\gamma}\BL))$. Then applying Theorem \ref{ThmClassicalDiriMinPrin}, we have
    \begin{eqnarray*}
        ( [\mathfrak{I}(e^{\mathrm{i}\gamma}\BL)]_*\BE_0,\BE_0 )&=&\min_{\BE\in\mathcal{E}}( \mathfrak{I}(e^{\mathrm{i}\gamma}\BL)(\BE_0+\BE),\BE_0+\BE ) \\
       & =&\min_{\BE\in\mathcal{E}}\operatorname{Im}[e^{\mathrm{i}\gamma}( \BL(\BE_0+\BE),\BE_0+\BE)],\ \forall\BE_0\in\mathcal{U}.
    \end{eqnarray*}
    Second, consider the $Z$-problem $(\mathcal{H},\mathcal{U},\mathcal{E},\mathcal{J},e^{\mathrm{i}\gamma}\BL)$. Then by Lemma \ref{lem:ZProbEffOpAltFormulaViaInv}.(i) and Lemma \ref{lem.uniformcoerciveimpliessolvabilityofZproblem} it follows that $(e^{\mathrm{i}\gamma}\BL)_*=e^{\mathrm{i}\gamma}\BL_*$ and for any $\BE_0\in \mathcal{U}$ there exists a solution $(\BJ_0,\BE,\BJ)\in \mathcal{U}\times\mathcal{E}\times\mathcal{J}$ to this $Z$-problem at $\BE_0$ which implies
    \begin{gather*}
        e^{\mathrm{i}\gamma}(\BL_*\BE_0, \BE_0)=((e^{\mathrm{i}\gamma}\BL)_*\BE_0, \BE_0)= (e^{\mathrm{i}\gamma}\BL(\BE_0+\BE),\BE_0+\BE)
    \end{gather*}
    so that
     \begin{eqnarray*}
       (\mathfrak{I}(e^{\mathrm{i}\gamma}\BL_*)\BE_0, \BE_0) &=&\operatorname{Im}[e^{\mathrm{i}\gamma}(\BL_*\BE_0, \BE_0)]=\operatorname{Im}[e^{\mathrm{i}\gamma}(\BL(\BE_0+\BE),\BE_0+\BE)] \\[4pt]
       & \geq &\min_{\BE\in\mathcal{E}}\operatorname{Im}[e^{\mathrm{i}\gamma}( \BL(\BE_0+\BE),\BE_0+\BE)]=( [\mathfrak{I}(e^{\mathrm{i}\gamma}\BL)]_*\BE_0,\BE_0 ),
    \end{eqnarray*}
    which proves $[\mathfrak{I}(e^{\mathrm{i}\gamma}\BL)]_*\leq \mathfrak{I}(e^{\mathrm{i}\gamma}\BL_*)$. Finally, since $0<cI_{\mathcal{H}}\leq \mathfrak{I}(e^{\mathrm{i}\gamma}\BL)$ then it follows by Lemma \ref{lem:ZProbEffOpAltFormulaViaInv}.(ii) and Theorem \ref{ThmClassicalDiriMinPrin} that $0<cI_{\mathcal{U}}=(cI_{\mathcal{H}})_*\leq [\mathfrak{I}(e^{\mathrm{i}\gamma}\BL)]_*$. This completes the proof.
\end{proof}

\subsection{Application to the DtN map}\label{subsec:VarPrincAppDtNMap}

Consider now the DtN map $\Lambda_{\Ba}$ in the special case that $\Ba=\Ba(\Bx)\in M_d(L^{\infty}(\Omega))$ is uniformly positive definite, i.e.,
\begin{gather}
  \exists c>0\ \mid \ \Ba(\Bx) \BW\cdot \overline{\BW} \geq c  ||\BW||_{\mathbb{C}^d}^2, \,  \ \forall \BW \in \bbC^d, \, \mbox{ for a.e. } \Bx \in \Omega.\label{def:UniformlyPosDefinite}
\end{gather}
In particular, $\Ba$ is uniformly positive definite in the sense of (\ref{def.coercivitytensor}) (with the same $c$ and $\gamma=\pi/2$) and $\BL_{\Ba}\in \mathcal{L}(\BL^2(\Omega))$ (i.e., the operator of left multiplication by $\Ba$, see Def.\ \ref{DefKZProb}) is positive definite ($\BL_{\Ba}^\dagger=\BL_{\Ba}\geq 0$) and invertible [in particular, it is coercive in the sense of \eq{def:CoerciveOperator} with the same $c$ and $\gamma=\pi/2$].

Next, because of these hypotheses on $\Ba$, the classical Dirichlet variational principle for the Dirichlet boundary-value problem (\ref{def.DirBVPForCoercivityTensors}) can be described as follows \cite[Eq.\ (2.9)]{Borcea:2003:VCN}, \cite[Eq.\ (3.1)]{Milton:2011:UBE}, \cite[p.\ 240]{Courant:1953:MMP}: For each $V_0\in H^{\frac{1}{2}}(\partial \Omega)$,
\begin{gather}\label{eq.Dirichletpriciple}
    \min_{v\in H^1(\Omega),v|_{\partial \Omega}=V_0}(\Ba\nabla v, \nabla v)_{\mathcal{H}}=\langle \Lambda_{\Ba}V_0,\overline{V_0} \rangle_{H^{-\frac{1}{2}}(\partial \Omega), H^{\frac{1}{2}}(\partial \Omega)} 
\end{gather}
and the LHS has a unique minimizer $u$, namely, the unique solution $u\in H^1(\Omega)$ to the Dirichlet boundary-value problem (\ref{def.DirBVPForCoercivityTensors}). In addition, this also uniquely defines the DtN map $\Lambda_{\Ba}$. 
Let us first show that by virtue of the effective operator representation of the DtN map given in Theorem \ref{thm.effopreprDtNmap}, this classical variational principle is just Theorem \ref{ThmClassicalDiriMinPrin} in a different guise. This approach follows an interesting and more general viewpoint from the abstract theory of composites, which is not limited to the DtN map.

Let $V_0\in H^{1/2}(\partial \Omega)$ be fixed and defined $\BE_0\in \mathcal{U}$ by $\BE_0=\Pi(V_0)$. Then it follows  that:
 \begin{equation}\label{Dirichlet-minim-compo} 
	\hspace{-0.36443pt}\begin{array}{llll}
  \langle \Lambda_{\Ba}V_0,\overline{V_0} \rangle_{H^{-\frac{1}{2}}(\partial \Omega), H^{\frac{1}{2}}(\partial \Omega)}&=&( \Ba_*\BE_0,\BE_0 )_{\mathcal{H}} &  \mbox{(since $\Lambda_{\Ba}=\Pi^\dagger\Ba_*\Pi$ by Theorem \ref{thm.effopreprDtNmap}), }  \\[6pt]
    &=& \displaystyle \min_{\widetilde{\BE}\in\mathcal{E}}(\Ba( \BE_0\!+\!\widetilde{\BE}),\BE_0\!+\!\widetilde{\BE})_{\mathcal{H}} &\mbox{(using Theorem \ref{ThmClassicalDiriMinPrin}),}  \\[6pt]
    &=&  \displaystyle \min_{v\in H^1(\Omega),v|_{\partial \Omega}=V_0}( \Ba\nabla v,\nabla v )_{\mathcal{H}},& \mbox{(using Lemma \ref{Lem-equiv-set}).} 
    	\end{array}
\end{equation}
By Theorem  \ref{ThmClassicalDiriMinPrin},   the  minimizer $\widetilde{\BE}$ of  $\displaystyle \min_{\widetilde{\BE}\in\mathcal{E}}( \Ba(\BE_0\!+\!\widetilde{\BE}),\BE_0\!+\!\widetilde{\BE})_{\mathcal{H}}$
is unique and  given by $$\BE=-\Ba_{11}^{-1}\Ba_{10}\BE_0  \quad \mbox{ with }  \BE_0=\Pi(V_0).$$
Thus, using \eqref{ClassicSolnZProb}, one deduces that $\BE$ is precisely the field in the unique solution  $(\BJ_{0}, \BE, \BJ) \in \mathcal{U} \times \mathcal{E} \times \mathcal{J}$ of the Dirichlet $Z$-problem $(\mathcal{H}, \mathcal{U}, \mathcal{E}, \mathcal{J}, \Ba)$ corresponding to $\BE_{0} = \Pi (V_{0})  \in \mathcal{U}$.
Hence by Proposition \ref{Pro-link-DtNpDE-Z-problem}, one has 
\begin{gather*}
    \nabla u=\BE_0+\BE \ \mbox{ and }  \ \BJ_0+\BJ=\Ba \nabla u,
\end{gather*}
where $u$ is the unique solution to the Dirichlet boundary-problem (\ref{def.DirBVPForCoercivityTensors}). Thus, as
$$
( \Ba(\BE_0\!+\!\BE),\BE_0\!+\!\BE)_{\mathcal{H}} =( \Ba\nabla u,\nabla u )_{\mathcal{H}},
$$
one concludes that $u$ is a minimizer of the left-hand side of \eqref{eq.Dirichletpriciple}. 
Moreover,  this minimizer is unique since if $\tilde{u}\in H^1(\Omega)$ is a minimizer   of this problem, then by Lemma \ref{Lem-equiv-set}  and  the uniqueness  of the minimizer of the problem associated to the second line in equation \eqref{Dirichlet-minim-compo}, one has necessarily $\nabla{\tilde{u}}=\nabla u=\BE+\BE_0$. Thus, as $\Omega$ is connected, one has $\tilde{u}-u$ is constant on $\Omega$ and has zero trace on $\partial \Omega$, thus $u=\tilde{u}$ on $\Omega$ and therefore, one recovers that the minimizer of the left-hand side of \eqref{eq.Dirichletpriciple} is unique. This concludes the proof of the classical Dirichlet variational principle via the abstract theory of composite approach.

Finally, one additional application of the effective operator representation of the DtN map (Theorem \ref{thm.effopreprDtNmap}) is that, by apply
the Thomson minimization principle (see Theorem \ref{ThmClassicalThomMinPrin}), we have the following proposition on upper and lower bounds for this map.
\begin{Pro}\label{pro.ThomsVarPrincDtnMap}
Let $\Ba:\Omega \rightarrow M_d(\mathbb{C})$ be a matrix-valued function such that $\Ba\in M_d(L^{\infty}(\Omega))$ is uniformly positive definite [in the sense of  \eqref{def:UniformlyPosDefinite}]. Then the DtN operator $\Lambda_{\Ba}$ (as defined in Prop.\ \ref{pro.DtNDirBVPResults}) satisfies for any $V_0\in H^{\frac{1}{2}}(\partial\Omega)$ and $\BE_0=\Pi (V_0) \in \mathcal{U}$:
\begin{equation}\label{eq.ThomsonDtN}
    0\leq (\BE_0,[(\Ba^{-1})_{00}]^{-1}\BE_0 )_{\mathcal{H}}\leq \langle \Lambda_{\Ba}V_0,\overline{V_0} \rangle_{H^{-\frac{1}{2}}(\partial \Omega), H^{\frac{1}{2}}(\partial \Omega)}\leq ( \BE_0,\Ba_{00}\BE_0 )_{\mathcal{H}}.
\end{equation}
\end{Pro}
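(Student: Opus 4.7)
The plan is to show that this proposition is essentially a direct consequence of combining the effective operator representation of the DtN map (Theorem \ref{thm.effopreprDtNmap}) with the abstract Thomson minimization principle (Theorem \ref{ThmClassicalThomMinPrin}).

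First, I would verify that the hypotheses of Theorem \ref{ThmClassicalThomMinPrin} apply to the Dirichlet $Z$-problem $(\mathcal{H},\mathcal{U},\mathcal{E},\mathcal{J},\Ba)$. Since $\Ba\in M_d(L^{\infty}(\Omega))$ is uniformly positive definite [in the sense of \eqref{def:UniformlyPosDefinite}], the multiplication operator $\BL_\Ba\in \mathcal{L}(\mathcal{H})$ (still denoted by $\Ba$) is self-adjoint, positive semidefinite, and coercive on $\mathcal{H}=\BL^2(\Omega)$. In particular, $\Ba$ is invertible on $\mathcal{H}$ (with inverse given by pointwise multiplication by $\Ba(\Bx)^{-1}$), so the hypotheses $\Ba^\dagger=\Ba\geq 0$ and $\Ba$ invertible needed in Theorem \ref{ThmClassicalThomMinPrin} are satisfied.

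Second, applying Theorem \ref{ThmClassicalThomMinPrin} directly yields the operator inequalities on $\mathcal{U}$:
\begin{equation*}
0\leq [(\Ba^{-1})_{00}]^{-1}\leq \Ba_*\leq \Ba_{00},
\end{equation*}
where $\Ba_*\in \mathcal{L}(\mathcal{U})$ is the effective operator of the Dirichlet $Z$-problem, $\Ba_{00}=\BGG_0\Ba\BGG_0|_{\mathcal{U}}$, and $(\Ba^{-1})_{00}=\BGG_0\Ba^{-1}\BGG_0|_{\mathcal{U}}$.

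Finally, I would pair these operator inequalities with the effective operator representation $\Lambda_{\Ba}=\Pi^\dagger \Ba_*\Pi$ established in Theorem \ref{thm.effopreprDtNmap}. For any $V_0\in H^{\frac{1}{2}}(\partial\Omega)$ and $\BE_0=\Pi(V_0)\in\mathcal{U}$, the adjoint relation gives
\begin{equation*}
\langle \Lambda_{\Ba}V_0,\overline{V_0}\rangle_{H^{-\frac{1}{2}}(\partial\Omega),H^{\frac{1}{2}}(\partial\Omega)}=(\Pi^\dagger\Ba_*\Pi V_0,V_0)_{L^2(\partial\Omega)}=(\Ba_*\BE_0,\BE_0)_{\mathcal{H}},
\end{equation*}
and then evaluating the operator inequalities on the vector $\BE_0\in\mathcal{U}$ produces precisely the chain \eqref{eq.ThomsonDtN}. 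There is essentially no obstacle here beyond this bookkeeping; the substance of the result is already contained in Theorems \ref{thm.effopreprDtNmap} and \ref{ThmClassicalThomMinPrin}, and the classical Dirichlet and Thomson variational principles for the DtN map (cf.\ \cite{Berryman:1990:VCE, Borcea:2003:VCN, Milton:2011:UBE}) are recovered as instances of this abstract framework.
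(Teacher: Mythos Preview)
Your proposal is correct and follows essentially the same approach as the paper: use the representation $\Lambda_{\Ba}=\Pi^\dagger\Ba_*\Pi$ from Theorem \ref{thm.effopreprDtNmap} to rewrite the quadratic form as $(\Ba_*\BE_0,\BE_0)_{\mathcal{H}}$, then apply the operator bounds \eqref{ClassicalUpperLowerBddsEffOp} from Theorem \ref{ThmClassicalThomMinPrin}. The only minor slip is the intermediate expression $(\Pi^\dagger\Ba_*\Pi V_0,V_0)_{L^2(\partial\Omega)}$, which should be the duality pairing $\langle \Pi^\dagger\Ba_*\Pi V_0,\overline{V_0}\rangle_{H^{-1/2},H^{1/2}}$ since $\Pi^\dagger$ takes values in $H^{-1/2}(\partial\Omega)$; your final identity is nonetheless correct.
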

\begin{proof}
Let $V_0\in H^{\frac{1}{2}}(\partial \Omega)$. Then $\Lambda_{\Ba}=\Pi^{\dagger}\Ba_*\Pi$ (by Theorem \ref{thm.effopreprDtNmap}) which implies that
$$
\langle \Lambda_{\Ba}V_0,\overline{V_0} \rangle_{H^{-\frac{1}{2}}(\partial \Omega), H^{\frac{1}{2}}(\partial \Omega)}=( \Ba_*\BE_0,\BE_0 )_{\mathcal{H}}  \quad \mbox{ with }\BE_0=\Pi (V_0) \in \mathcal{U}.
$$
Hence,  the inequality \eqref{eq.ThomsonDtN} follows immediately from  the upper and lower bounds \eqref{ClassicalUpperLowerBddsEffOp} on  effective operators given in Theorem  \ref{ThmClassicalThomMinPrin}.
\end{proof}

\section{DtN map with affine boundary conditions}\label{sec:DiriPrbAffineBCsDtNMap}
Consider the Dirichlet boundary-value problem (\ref{def.DirBVPForCoercivityTensors}) with affine boundary conditions, i.e., the boundary-value $V_0\in H^{1/2}(\partial \Omega)$ is given by\footnote{The term ``affine" here might be confusing as the boundary-value $V_0$ is linear in $\Bx$, but adding any constant to \eqref{def.affineboundaryconditions} doesn't change the output of the DtN map \eqref{eq.Defdtnfreqdomain}. Thus we drop the constant part of the affine boundary condition.}
\begin{align}
    V_0=-{\bf e}_0\cdot \Bx |_{\partial\Omega},\label{def.affineboundaryconditions}
\end{align}
for fixed $\Be_0\in \mathbb{C}^d$. Our goal is to characterize the DtN operator $\Lambda_{\Ba}$ on these affine boundary conditions. This is carried out in the next section with a new effective operator representation for it. After that we are able to derive bounds on it in terms of spatial averaging when $\Ba$ is also uniformly positive definite [in the sense of \eq{def:UniformlyPosDefinite}]. The bounds that we derive were first formulated by Nemat-Nasser and Hori \cite[Sec.\ 2.7.3]{Nemat-Nasser:1993:MOP} following a comment in private communication by J. Willis (1989), and then further developed by G.\ W.\ Milton \cite{Milton:2011:UBE} to achieve tighter bounds incorporating more information about the inhomogeneous body $\Omega$, which we do not utilize here. 
The approach that we take here is different and is based on the abstract theory of composites as first formulated in \cite{Milton:2016:ETC}.

\subsection{DtN map for affine boundary conditions, averaging, and a new effective operator representation}\label{subsec:DtNMapAffBndCondsEffOpRepr}

Consider now the function $u(\Bx)=-{\bf e}_0\cdot \Bx, \forall \Bx\in \Omega$. It is the harmonic extension to $\Omega$ of $V_0$ in (\ref{def.affineboundaryconditions}) [since $u\in H^1(\Omega)$ with $\Delta u=0$ in $\Omega$ and $u=V_0$ on $\partial \Omega$]. Hence, in terms of the lift operator $\Pi$, we have
\begin{align}
    \Pi(V_0)=\Pi(-{\bf e}_0\cdot \Bx |_{\partial\Omega})=\nabla (-{\bf e}_0\cdot \Bx)=-{\bf e}_0=\BE_0\in\mathcal{U}.\label{LiftOpOnAffBCs}
\end{align}
In order to characterize the DtN map in terms of these affine boundary conditions, we need the following definition and lemma (whose proof is obvious and so is omitted).
\begin{Def}
Denote the average by
\begin{gather*}
    \langle \Bu \rangle = \frac{1}{|\Omega|}\int_{\Omega}\Bu(\Bx) \, \md \Bx,\ \forall \Bu\in \mathcal{H}  \quad \mbox{ where }  \ |\Omega|=\int_{\Omega}1\,  \md \Bx,
\end{gather*}
the space of constant (i.e., uniform) fields by $\langle\mathcal{U}\rangle$, i.e.,
\begin{gather*}
    \langle\mathcal{U}\rangle=\{\Bu\in \mathcal{H}\mid\Bu=\langle \mathbf{\Bu}\rangle\},
\end{gather*}
and define the average operator $\Gamma_{avg}:\mathcal{H}\rightarrow \langle\mathcal{U}\rangle$ by
\begin{align*}
    \Gamma_{avg}\mathbf{u}=\langle\mathbf{u} \rangle\in \langle\mathcal{U}\rangle,\;\forall \mathbf{u}\in \mathcal{H}.
\end{align*}
\end{Def}
\begin{Lem}\label{lem.avgoperatorisorthoproj}
The average operator $\Gamma_{avg}$ is the orthogonal projection of $\mathcal{H}$ onto the $d$-dimensional subspace $\langle\mathcal{U}\rangle\subseteq \mathcal{U}$.
\end{Lem}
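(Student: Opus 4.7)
The plan is to verify directly from the definitions that $\Gamma_{avg}$ is an orthogonal projection onto a $d$-dimensional subspace contained in $\mathcal{U}$. Since the statement has three independent parts — the inclusion $\langle \mathcal{U}\rangle \subseteq \mathcal{U}$, the dimension count, and the identification as an orthogonal projection — I would handle them separately.

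First I would show $\langle \mathcal{U}\rangle \subseteq \mathcal{U}$. Every element of $\langle\mathcal{U}\rangle$ is a constant vector field $\Bc \in \mathbb{C}^d$ on $\Omega$ (which lies in $\mathcal{H} = \BL^2(\Omega)$ since $|\Omega|<\infty$ as $\Omega$ is bounded). Such a $\Bc$ equals $\nabla u$ for the function $u(\Bx) = \Bc\cdot \Bx$, which belongs to $H^1(\Omega)$ (again by boundedness) and is harmonic. Hence $\Bc \in \nabla H_{harm}(\Omega) = \mathcal{U}$ by the definition \eqref{USpHodgeDecomp}. The map $\mathbb{C}^d \to \langle \mathcal{U}\rangle,\ \Bc \mapsto \Bc$, is a linear bijection, yielding $\dim \langle\mathcal{U}\rangle = d$.

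Next I would verify that $\Gamma_{avg}$ is an orthogonal projection of $\mathcal{H}$ onto $\langle \mathcal{U}\rangle$. Linearity and boundedness follow immediately from linearity of integration and the Cauchy--Schwarz inequality. Idempotence is immediate: the average of a constant field is the same constant, so $\Gamma_{avg}^2 = \Gamma_{avg}$. For self-adjointness, for any $\Bu, \Bv \in \mathcal{H}$ a direct computation gives
\begin{equation*}
(\Gamma_{avg}\Bu, \Bv)_{\mathcal{H}} = \int_{\Omega} \langle \Bu\rangle \cdot \overline{\Bv(\Bx)}\,\md \Bx = |\Omega|\,\langle \Bu\rangle \cdot \overline{\langle \Bv \rangle} = \int_{\Omega} \Bu(\Bx) \cdot \overline{\langle \Bv \rangle}\,\md \Bx = (\Bu, \Gamma_{avg}\Bv)_{\mathcal{H}}.
\end{equation*}
A bounded idempotent self-adjoint operator is an orthogonal projection, and its range is exactly $\langle \mathcal{U}\rangle$ since elements of $\langle \mathcal{U}\rangle$ are fixed by $\Gamma_{avg}$.

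There is no real obstacle here; the only item requiring any care is ensuring the constant fields genuinely lie in the relevant spaces (which uses only $|\Omega| < \infty$), and that one is working consistently with the $\mathcal{H}$-inner product so that the adjoint computation is valid. The lemma is essentially a bookkeeping statement that justifies later use of $\Gamma_{avg}$ as a compression/projection onto the constant-field sector in subsequent effective-operator arguments.
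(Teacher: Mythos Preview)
Your proof is correct. The paper omits the proof entirely, stating that it is obvious, so there is nothing to compare against; your verification of the inclusion $\langle\mathcal{U}\rangle\subseteq\mathcal{U}$, the dimension count, and the idempotence and self-adjointness of $\Gamma_{avg}$ is exactly the kind of routine check the authors had in mind.
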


The following two theorems are the main results of this section. The first result tells us that the operator $\Ba^D$, defined by \eqref{def.EffOpForDtNMapWithAffBCs}, is the compression of the operator $\Ba_*$ on $\mathcal{H}$ to the subspace $\langle\mathcal{U}\rangle$ (equivalently, $\Ba_*$ is a dilation of $\Ba^D$). The second result tells us that the $\Ba^D$ is an effective operator of a $Z$-problem.
\begin{Thm}\label{thm.EffOpForDtNMapWithAffBCs}
The DtN operator $\Lambda_{\Ba}: H^{\frac{1}{2}}(\partial \Omega) \to H^{-\frac{1}{2}}(\partial \Omega)$ and the effective operator $\Ba_*$ of the Dirichlet $Z$-problem $(\mathcal{H},\mathcal{U},\mathcal{E},\mathcal{J},\BL_\Ba)$ satisfy
\begin{gather*}
     \langle \Lambda_{\Ba}(-{\bf e}_0\cdot \Bx |_{\partial\Omega}),\overline{(-{\bf e}_0\cdot \Bx |_{\partial\Omega})} \rangle_{H^{-\frac{1}{2}}(\partial \Omega), H^{\frac{1}{2}}(\partial \Omega)}=(\Ba_*{\bf e}_0,{\bf e}_0 )_{\mathcal{H}}=( \Ba^D{\bf e}_0,{\bf e}_0 )_{\mathcal{H}},\ \forall \Be_0\in \mathbb{C}^d,
\end{gather*}
where $\Ba^D:\langle\mathcal{U}\rangle\rightarrow\langle\mathcal{U}\rangle$ is the bounded linear operator
\begin{align}
    \Ba^D=\Gamma_{avg}\Ba_*\Gamma_{avg}|_{\langle\mathcal{U}\rangle}\in \mathcal{L}(\langle\mathcal{U}\rangle).\label{def.EffOpForDtNMapWithAffBCs}
\end{align}
\end{Thm}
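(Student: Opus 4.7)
The plan is to chain together three ingredients that have already been established earlier in the paper: the effective-operator representation of the DtN map from Theorem \ref{thm.effopreprDtNmap}, the explicit computation of the lift operator on affine data in \eqref{LiftOpOnAffBCs}, and the fact (from Lemma \ref{lem.avgoperatorisorthoproj}) that $\Gamma_{avg}$ is the orthogonal projection of $\mathcal{H}$ onto the finite-dimensional subspace $\langle\mathcal{U}\rangle\subseteq\mathcal{U}$. The only conceptual move is to identify the vector $\Be_{0}\in\mathbb{C}^{d}$ on the right-hand sides with the corresponding constant field in $\langle\mathcal{U}\rangle\subseteq\mathcal{H}$; with this identification, the argument is a short bookkeeping calculation and no serious obstacle arises.

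For the first equality, I would fix $\Be_{0}\in\mathbb{C}^{d}$, set $V_{0}:=-\Be_{0}\cdot\Bx|_{\partial\Omega}\in H^{\frac{1}{2}}(\partial\Omega)$, and invoke Theorem \ref{thm.effopreprDtNmap} together with the definition of the Hilbert-space adjoint of the lift operator (Prop.\ \ref{prop:LiftOpIsBoundedLinearOnto}) to write
\begin{equation*}
\langle\Lambda_{\Ba}V_{0},\overline{V_{0}}\rangle_{H^{-\frac{1}{2}}(\partial\Omega),H^{\frac{1}{2}}(\partial\Omega)}=\langle\Pi^{\dagger}(\Ba_{*}\Pi V_{0}),\overline{V_{0}}\rangle_{H^{-\frac{1}{2}}(\partial\Omega),H^{\frac{1}{2}}(\partial\Omega)}=(\Ba_{*}\Pi V_{0},\Pi V_{0})_{\mathcal{H}}.
\end{equation*}
Then \eqref{LiftOpOnAffBCs} gives $\Pi V_{0}=-\Be_{0}\in\langle\mathcal{U}\rangle$, and by sesquilinearity of the inner product, $(\Ba_{*}(-\Be_{0}),-\Be_{0})_{\mathcal{H}}=(\Ba_{*}\Be_{0},\Be_{0})_{\mathcal{H}}$, which is the first identity claimed in the theorem.

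For the second equality, the constant field $\Be_{0}$ lies in $\langle\mathcal{U}\rangle$ by construction, hence by Lemma \ref{lem.avgoperatorisorthoproj} one has $\Gamma_{avg}\Be_{0}=\Be_{0}$. Because $\Gamma_{avg}$ is an orthogonal projection it is self-adjoint on $\mathcal{H}$, and therefore
\begin{equation*}
(\Ba^{D}\Be_{0},\Be_{0})_{\mathcal{H}}=(\Gamma_{avg}\Ba_{*}\Gamma_{avg}\Be_{0},\Be_{0})_{\mathcal{H}}=(\Ba_{*}\Be_{0},\Gamma_{avg}\Be_{0})_{\mathcal{H}}=(\Ba_{*}\Be_{0},\Be_{0})_{\mathcal{H}},
\end{equation*}
which establishes the second identity. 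The well-definedness and boundedness of $\Ba^{D}\in\mathcal{L}(\langle\mathcal{U}\rangle)$ follows immediately from the composition of bounded operators together with the inclusion $\langle\mathcal{U}\rangle\subseteq\mathcal{U}$, so the operator $\Ba^{D}$ defined in \eqref{def.EffOpForDtNMapWithAffBCs} makes sense and the theorem is proved. In essence, the result is a corollary of Theorem \ref{thm.effopreprDtNmap} specialized to affine Dirichlet data, where the image of $\Pi$ is precisely the $d$-dimensional space $\langle\mathcal{U}\rangle$ of uniform fields.
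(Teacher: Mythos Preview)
Your proof is correct and follows essentially the same approach as the paper: both use Theorem \ref{thm.effopreprDtNmap} to write $\langle\Lambda_{\Ba}V_{0},\overline{V_{0}}\rangle=(\Ba_{*}\Pi V_{0},\Pi V_{0})_{\mathcal{H}}$, invoke \eqref{LiftOpOnAffBCs} to identify $\Pi V_{0}$ with the constant field, and use Lemma \ref{lem.avgoperatorisorthoproj} (that $\Gamma_{avg}$ is a self-adjoint projection fixing $\langle\mathcal{U}\rangle$) for the second equality. Your presentation is slightly more explicit about the sesquilinearity step $(\Ba_{*}(-\Be_{0}),-\Be_{0})_{\mathcal{H}}=(\Ba_{*}\Be_{0},\Be_{0})_{\mathcal{H}}$, which the paper absorbs silently.
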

\begin{proof}
First, by Proposition \ref{pro.ExistenceUniquenessDirichletZProbEffOp} and Lemma \ref{lem.avgoperatorisorthoproj}, it follows that $\Ba^D=\Gamma_{avg}\Ba_*\Gamma_{avg}|_{\langle\mathcal{U}\rangle}\in \mathcal{L}(\langle\mathcal{U}\rangle)$ with $\Gamma_{avg}^\dagger=\Gamma_{avg}=\Gamma_{avg}^2$. Now, as usual, we identify the elements of $\mathbb{C}^d$ with their corresponding constant functions in $\langle\mathcal{U}\rangle$. Hence, if $\Be_0\in \mathbb{C}^d$ then by this identification we have $\Gamma_{avg}\Be_0=\Be_0$ and so by the equality (\ref{LiftOpOnAffBCs}) together with the identity $\Lambda_{\Ba} =\Pi^\dagger\Ba_*\Pi$ (by Theorem \ref{thm.effopreprDtNmap}), it follows that
\begin{eqnarray*}
    \langle \Lambda_{\Ba}(-{\bf e}_0\cdot \Bx |_{\partial\Omega}),\overline{(-{\bf e}_0\cdot \Bx |_{\partial\Omega})} \rangle_{H^{-\frac{1}{2}}(\partial \Omega), H^{\frac{1}{2}}(\partial \Omega)}&=&(\Ba_*\Pi(-{\bf e}_0\cdot \Bx |_{\partial\Omega}),\Pi(-{\bf e}_0\cdot \Bx |_{\partial\Omega}) )_{\mathcal{H}}\\&=&(\Ba_*{\bf e}_0,{\bf e}_0 )_{\mathcal{H}}=( \Ba^D{\bf e}_0,{\bf e}_0 )_{\mathcal{H}}.
\end{eqnarray*}
This completes the proof.
\end{proof}

\begin{Thm}\label{thm.aDisaneffoperator}
If $\Ba^D$ is the operator as defined by (\ref{def.EffOpForDtNMapWithAffBCs}) and $\Ba_{*^D}$ is the effective operator of the $Z$-problem $(\mathcal{H}^D,\mathcal{U}^D, \mathcal{E}^D,\mathcal{J}^D,\Ba)$, where
\begin{gather*} 
\mathcal{H}^D =\mathcal{H},\;\mathcal{U}^D=\langle\mathcal{U}\rangle,\;\mathcal{E}^D=\mathcal{E},\\
\mathcal{J}^D=\big(\mathcal{U}\overset{\perp}{\ominus}\langle\mathcal{U}\rangle\big)\overset{\perp}{\oplus}\mathcal{J}=\{ \BJ\in H_{div }(\Omega)  \mid  \nabla \cdot \BJ=0 \mbox{ and }\langle \BJ\rangle=0 \}
\end{gather*}
(here $\mathcal{U}\overset{\perp}{\ominus}\langle\mathcal{U}\rangle$ denotes the orthogonal complement of $\langle\mathcal{U}\rangle$ in $\mathcal{U})$, then
\begin{align*}
    \Ba^D=\Ba_{*^D}.
\end{align*}
\end{Thm}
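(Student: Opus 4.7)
The plan is to compare the Dirichlet $Z$-problem $(\mathcal{H},\mathcal{U},\mathcal{E},\mathcal{J},\Ba)$ with the new $Z$-problem $(\mathcal{H}^D,\mathcal{U}^D,\mathcal{E}^D,\mathcal{J}^D,\Ba)$ by observing that the new decomposition is obtained from the old one by moving the ``zero-average'' part of $\mathcal{U}$ into the $\mathcal{J}$ slot. Concretely, with $\mathcal{U}^D=\langle\mathcal{U}\rangle$, $\mathcal{E}^D=\mathcal{E}$, and $\mathcal{J}^D=(\mathcal{U}\overset{\perp}{\ominus}\langle\mathcal{U}\rangle)\overset{\perp}{\oplus}\mathcal{J}$, the triple $(\mathcal{U}^D,\mathcal{E}^D,\mathcal{J}^D)$ is an orthogonal triple decomposition of $\mathcal{H}$ by Theorem~\ref{th.tpro} and the fact that $\langle\mathcal{U}\rangle\subseteq\mathcal{U}$.

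First I would verify the alternative characterization of $\mathcal{J}^D$ stated in the theorem. An element of $(\mathcal{U}\overset{\perp}{\ominus}\langle\mathcal{U}\rangle)\overset{\perp}{\oplus}\mathcal{J}$ has the form $\BE_0'+\BJ$ with $\BE_0'\in\mathcal{U}$, $\langle\BE_0'\rangle=0$, and $\BJ\in\mathcal{J}\subseteq H_{div,0}(\Omega)$; clearly $\nabla\cdot(\BE_0'+\BJ)=0$ in $\Omega$ by \eqref{KerDivIsUPlusJ}, and the Green's formula \eqref{eq:GreensFormula} applied with test function $\Be\cdot\Bx$ ($\Be\in\mathbb{C}^d$) together with $\nabla\cdot\BJ=0$ and $\BJ\cdot\Bn=0$ on $\partial\Omega$ forces $\langle\BJ\rangle=0$, hence $\langle\BE_0'+\BJ\rangle=0$. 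Conversely, any $\BF\in H_{div}(\Omega)$ with $\nabla\cdot\BF=0$ lies in $\mathcal{U}\overset{\perp}{\oplus}\mathcal{J}$ by \eqref{KerDivIsUPlusJ}, and the condition $\langle\BF\rangle=0$ forces the $\mathcal{U}$-component to have zero average. Next, I would note that $\Ba_{11}^D=\Gamma_1\Ba\Gamma_1|_{\mathcal{E}}=\Ba_{11}$ (because $\mathcal{E}^D=\mathcal{E}$), which is invertible by Proposition~\ref{pro.ExistenceUniquenessDirichletZProbEffOp}, so by Theorem~\ref{ThmMainClassicalZProbEffOp} the effective operator $\Ba_{*^D}$ exists and the new $Z$-problem at every $\Be_0\in\mathcal{U}^D$ admits a unique solution.

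The core step is the following transport of solutions. Fix $\Be_0\in\langle\mathcal{U}\rangle=\mathcal{U}^D\subseteq\mathcal{U}$ and let $(\BJ_0,\BE,\BJ)\in\mathcal{U}\times\mathcal{E}\times\mathcal{J}$ be the unique solution of the original Dirichlet $Z$-problem at $\Be_0$, so $\Ba_*\Be_0=\BJ_0$ and $\BJ_0+\BJ=\Ba(\Be_0+\BE)$. Define
\begin{gather*}
\BJ_0^D:=\langle\BJ_0\rangle=\Gamma_{avg}\BJ_0\in\langle\mathcal{U}\rangle=\mathcal{U}^D,\quad \BE^D:=\BE\in\mathcal{E}^D,\\
\BJ^D:=(\BJ_0-\langle\BJ_0\rangle)+\BJ\in(\mathcal{U}\overset{\perp}{\ominus}\langle\mathcal{U}\rangle)\overset{\perp}{\oplus}\mathcal{J}=\mathcal{J}^D.
\end{gather*}
Then $\BJ_0^D+\BJ^D=\BJ_0+\BJ=\Ba(\Be_0+\BE)=\Ba(\Be_0+\BE^D)$, so $(\BJ_0^D,\BE^D,\BJ^D)$ is a (the) solution of the new $Z$-problem at $\Be_0$. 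By the definition of $\Ba_{*^D}$ and since $\Gamma_{avg}\Be_0=\Be_0$ for $\Be_0\in\langle\mathcal{U}\rangle$,
\begin{equation*}
\Ba_{*^D}\Be_0=\BJ_0^D=\Gamma_{avg}\BJ_0=\Gamma_{avg}\Ba_*\Be_0=\Gamma_{avg}\Ba_*\Gamma_{avg}\Be_0=\Ba^D\Be_0,
\end{equation*}
which is the desired equality.

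I do not anticipate any serious obstacle: once the characterization of $\mathcal{J}^D$ is pinned down and the invertibility of $\Ba_{11}^D$ is noted (both essentially automatic from Theorem~\ref{th.tpro} and the material already developed), the argument is just a matter of transporting one solution to another and invoking uniqueness. The only mild subtlety is remembering that elements of $\mathcal{J}$ automatically have zero mean, which is what makes the inclusion $(\mathcal{U}\ominus\langle\mathcal{U}\rangle)\oplus\mathcal{J}=\{\BJ\in H_{div}(\Omega):\nabla\cdot\BJ=0,\langle\BJ\rangle=0\}$ work cleanly.
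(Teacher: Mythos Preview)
Your proposal is correct and follows essentially the same approach as the paper's proof: establish that $(\mathcal{H}^D,\mathcal{U}^D,\mathcal{E}^D,\mathcal{J}^D,\Ba)$ is a well-posed $Z$-problem (via $\mathcal{E}^D=\mathcal{E}$ and invertibility of $\Ba_{11}$), then transport solutions between the two $Z$-problems and read off the identity $\Ba_{*^D}=\Gamma_{avg}\Ba_*\Gamma_{avg}|_{\langle\mathcal{U}\rangle}$. The only cosmetic difference is that you start from a solution of the original Dirichlet $Z$-problem and split off $\langle\BJ_0\rangle$ to obtain a solution of the new one, whereas the paper starts from a solution of the new $Z$-problem and reassembles it (via $(\BI-\Gamma_2)\BJ$) into a solution of the original; both directions are equivalent and rely on the same uniqueness argument.
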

\begin{proof}
First, by Theorem \ref{th.tpro} and Lemma \ref{lem.avgoperatorisorthoproj} it follows that
\begin{align*}
   \mathcal{H}=\langle\mathcal{U}\rangle \overset{\perp}{\oplus} \mathcal{E} \overset{\perp}{\oplus} \Big[\big(\mathcal{U}\overset{\perp}{\ominus}\langle\mathcal{U}\rangle\big)\overset{\perp}{\oplus}\mathcal{J}\Big].
\end{align*}
This proves that $(\mathcal{H}^D,\mathcal{U}^D, \mathcal{E}^D,\mathcal{J}^D, \Ba)$ is a $Z$-problem, where $\mathcal{J}^D=\big(\mathcal{U}\overset{\perp}{\ominus}\langle\mathcal{U}\rangle\big)\overset{\perp}{\oplus}\mathcal{J}$  can be characterized in the following way  $$\mathcal{J}^D=\operatorname{kernel}(\nabla \cdot|_{H_{div}(\Omega)})\cap \operatorname{range}(\Gamma_{avg})=\{ \BJ\in H_{div}(\Omega) \mid  \nabla \cdot \BJ=0  \mbox{ and } \langle \BJ\rangle=0 \},$$ since $\mathcal{U} \oplus \mathcal{J} $ is the kernel of divergence operator [see relation  \eqref{KerDivIsUPlusJ} of Theorem \ref{th.tpro}].
Next, as $\mathcal{E}^D=\mathcal{E}$, it follows from Proposition \ref{pro.ExistenceUniquenessDirichletZProbEffOp} that Theorem \ref{ThmMainClassicalZProbEffOp} holds for both the Dirichlet $Z$-problem $(\mathcal{H},\mathcal{U}, \mathcal{E},\mathcal{J}, \Ba)$ as well as the $Z$-problem $(\mathcal{H}^D,\mathcal{U}^D, \mathcal{E}^D,\mathcal{J}^D, \Ba)$. We claim that $\Ba^D=\Ba_{*^D}.$ Let $\BE_0\in \mathcal{U}^D$. Then, by Theorem \ref{ThmMainClassicalZProbEffOp}, there exists a $(\BJ_0,\BE, \BJ)\in \mathcal{U}^D\times \mathcal{E}^D\times \mathcal{J}^D$ such that
\begin{align*}
    \BJ_0+\BJ=\Ba(\BE_0+\BE),\;\BJ_0= \Ba_{*^D}\BE_0.
\end{align*}
On the other hand, $\BE\in \mathcal{E}^D=\mathcal{E}$, $\BE_0,\BJ_0\in \mathcal{U}$ (as $\mathcal{U}^D=\langle\mathcal{U}\rangle\subseteq \mathcal{U}$) and $(\BI-\Gamma_2)\BJ\in \mathcal{U}\overset{\perp}{\ominus}\langle\mathcal{U}\rangle$, where $\Gamma_2$ is the orthogonal projection of $\mathcal{H}$ onto $\mathcal{J}$, so that
\begin{align*}
    [\BJ_0+(\BI-\Gamma_2)\BJ]+\Gamma_2\BJ=\Ba(\BE_0+\BE),
\end{align*}
implying by Theorem \ref{ThmMainClassicalZProbEffOp} that
\begin{align*}
    \BJ_0+(\BI-\Gamma_2)\BJ=\Ba_*(\BE_0).
\end{align*}
It follows from this and
\begin{align*}
    \Gamma_{avg}\BE_0=\BE_0,\;\Gamma_{avg}[\BJ_0+(\BI-\Gamma_2)\BJ]=\BJ_0,
\end{align*}
that we have
\begin{align*}
\Ba^D(\BE_0)=\Gamma_{avg}\Ba_*\Gamma_{avg}|_{\langle\mathcal{U}\rangle}(\BE_0)=\BJ_0=\Ba_{*^D}(\BE_0).
\end{align*}
As this is true for every $\BE_0\in \mathcal{U}^D=\langle\mathcal{U}\rangle,$ we have proven the claim. This proves the theorem.
\end{proof}
It should remarked that this theorem and its proof has an analogy in the abstract theory of composites on certain operations that can be done on effective operators and their associated $Z$-problems, for instance, see Eq.\ (29.1) in \cite[Sec.\ 29.1]{Milton:2022:TOC} and \cite[Chap.\ 7, Sec.\ 9]{Milton:2016:ETC}.

\subsection{Elementary bounds on \texorpdfstring{$\Ba^D$}{\textbf{a}hatD} in terms of the averaged tensors \texorpdfstring{$\langle \Ba \rangle$}{<\textbf{a}>} and \texorpdfstring{$\langle \Ba^{-1} \rangle^{-1}$}{<\textbf{a}hat-1>hat-1}}\label{subsec:ElemBndsAvgLocalTensors}
The next theorem is the main result of this section. These inequalities are known (see, for instance, \cite{Nemat-Nasser:1995:UBO}, \cite[Eqs.\ (3.4) and (3.6)]{Milton:2011:UBE}), but we will give here a new and simple proof using the abstract theory of composites that we believe is an important result in its own right. 
\begin{Thm}\label{Thm:SigmaDBndsOfMilton}
If $\Ba$ is uniformly positive definite [in the sense of \eq{def:UniformlyPosDefinite}] then the following operator inequalities hold:
\begin{align}\label{eq.operatorineqaffine}
    0\leq\langle \Ba^{-1}\rangle^{-1}\leq \Ba^D\leq  \langle \Ba\rangle,
\end{align}
where
\begin{align}\label{eq.average op}
    \langle \Ba\rangle=\Gamma_{avg}\Ba\Gamma_{avg}|_{\langle\mathcal{U}\rangle},\;\langle \Ba^{-1}\rangle=\Gamma_{avg}\Ba^{-1}\Gamma_{avg}|_{\langle\mathcal{U}\rangle}.
\end{align}
Moreover, for each $\Be_0\in \mathbb{C}^d$,
\begin{gather}\label{eq.boundefftnesor}
    0\leq ( \langle \Ba^{-1}\rangle^{-1}{\bf e}_0,{\bf e}_0 )_{\mathcal{H}}\leq  \langle \Lambda_{\Ba}(-{\bf e}_0\cdot \Bx |_{\partial\Omega}),\overline{(-{\bf e}_0\cdot \Bx|_{\partial\Omega})} \rangle_{H^{-\frac{1}{2}}(\partial \Omega), H^{\frac{1}{2}}(\partial \Omega)}\leq  (\langle \Ba\rangle{\bf e}_0,{\bf e}_0 )_{\mathcal{H}}.
\end{gather}
\end{Thm}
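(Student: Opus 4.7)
The plan is to recognize that Theorem \ref{Thm:SigmaDBndsOfMilton} is an immediate application of the Thomson minimization principle (Theorem \ref{ThmClassicalThomMinPrin}) to the $Z$-problem $(\mathcal{H}^D, \mathcal{U}^D, \mathcal{E}^D, \mathcal{J}^D, \Ba)$ introduced in Theorem \ref{thm.aDisaneffoperator}, combined with the identification $\Ba^D=\Ba_{*^D}$ proved there. Since $\Ba$ is uniformly positive definite in the sense of \eqref{def:UniformlyPosDefinite}, the associated multiplication operator $\BL_{\Ba}$ is self-adjoint, positive semidefinite, and invertible on $\mathcal{H}=\BL^2(\Omega)$ with inverse $\BL_{\Ba^{-1}}$, i.e., multiplication by $\Ba^{-1}$. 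Hence the hypotheses of Theorem \ref{ThmClassicalThomMinPrin} are satisfied for this $Z$-problem, yielding
\begin{equation*}
 0\leq [(\Ba^{-1})_{00^D}]^{-1}\leq \Ba_{*^D}\leq \Ba_{00^D},
\end{equation*}
where $\Ba_{00^D}=\BGG_0^D\,\Ba\,\BGG_0^D|_{\mathcal{U}^D}$ and $(\Ba^{-1})_{00^D}=\BGG_0^D\,\Ba^{-1}\,\BGG_0^D|_{\mathcal{U}^D}$, with $\BGG_0^D$ the orthogonal projection of $\mathcal{H}$ onto $\mathcal{U}^D=\langle\mathcal{U}\rangle$.

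The first key step is to identify $\BGG_0^D$ with the averaging operator $\Gamma_{avg}$. By Lemma \ref{lem.avgoperatorisorthoproj}, $\Gamma_{avg}$ is precisely the orthogonal projection of $\mathcal{H}$ onto $\langle\mathcal{U}\rangle$, so $\BGG_0^D=\Gamma_{avg}$. Thus, using the definition \eqref{eq.average op},
\begin{equation*}
 \Ba_{00^D}=\Gamma_{avg}\,\Ba\,\Gamma_{avg}|_{\langle\mathcal{U}\rangle}=\langle \Ba\rangle,\qquad (\Ba^{-1})_{00^D}=\Gamma_{avg}\,\Ba^{-1}\,\Gamma_{avg}|_{\langle\mathcal{U}\rangle}=\langle \Ba^{-1}\rangle.
\end{equation*}
The second key step is to invoke Theorem \ref{thm.aDisaneffoperator} to replace $\Ba_{*^D}$ by $\Ba^D$ in the Thomson bounds. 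Combining these two observations immediately yields the operator inequalities \eqref{eq.operatorineqaffine}.

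To obtain \eqref{eq.boundefftnesor}, I will apply the scalar quadratic forms of \eqref{eq.operatorineqaffine} to any constant field $\Be_0\in\mathbb{C}^d\cong\langle\mathcal{U}\rangle$, and then invoke Theorem \ref{thm.EffOpForDtNMapWithAffBCs} which identifies $(\Ba^D\Be_0,\Be_0)_{\mathcal{H}}$ with the quadratic form of the DtN map $\Lambda_{\Ba}$ evaluated on the affine boundary data $-\Be_0\cdot\Bx|_{\partial\Omega}$.

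I do not anticipate a serious obstacle: the only slightly delicate point is the verification that $\BL_{\Ba}^{-1}$ is precisely the operator of left multiplication by $\Ba^{-1}$ (so that the Thomson subblock $(\BL_{\Ba}^{-1})_{00^D}$ really equals $\langle \Ba^{-1}\rangle$), which follows pointwise a.e.\ from the uniform positive definiteness of $\Ba$. Once this identification is in place, the proof is a short, essentially bookkeeping argument that repackages Theorem \ref{ThmClassicalThomMinPrin} into the classical Wiener-type arithmetic/harmonic mean bounds in the present DtN setting.
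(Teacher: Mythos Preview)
Your proposal is correct and follows essentially the same approach as the paper: apply the Thomson minimization principle (Theorem \ref{ThmClassicalThomMinPrin}) to the $Z$-problem $(\mathcal{H}^D,\mathcal{U}^D,\mathcal{E}^D,\mathcal{J}^D,\Ba)$, use Theorem \ref{thm.aDisaneffoperator} to identify $\Ba_{*^D}=\Ba^D$, recognize the $00^D$-blocks as $\langle\Ba\rangle$ and $\langle\Ba^{-1}\rangle$ via $\BGG_0^D=\Gamma_{avg}$, and then deduce \eqref{eq.boundefftnesor} from Theorem \ref{thm.EffOpForDtNMapWithAffBCs}. The paper's proof is exactly this, stated more tersely; your additional remark that $\BL_{\Ba}^{-1}=\BL_{\Ba^{-1}}$ is a helpful clarification the paper leaves implicit.
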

\begin{proof}
First, it is clear that to prove this theorem we need only prove the operator inequalities \eqref{eq.operatorineqaffine} as the remaining inequalities then follow immediately from those by Theorem \ref{thm.EffOpForDtNMapWithAffBCs}. Now by Theorem \ref{thm.aDisaneffoperator} we know that $\Ba^D=\Ba_{*^D}$ is the effective operator of the $Z$-problem  $(\mathcal{H}^D,\mathcal{U}^D,\mathcal{E}^D,\mathcal{J}^D,\Ba)$ and so by the Thomson variational principle (Theorem \ref{ThmClassicalThomMinPrin}) applied to this $Z$-problem we have that
\begin{align*}
    0 \leq (\Gamma_{avg}\Ba^{-1}\Gamma_{avg}|_{\langle\mathcal{U}\rangle})^{-1}\leq \Ba_{*^D}\leq  \Gamma_{avg}\Ba\Gamma_{avg}|_{\langle\mathcal{U}\rangle}.
\end{align*}
This proves the theorem.
\end{proof}
Notice that the upper and lower bounds \eqref{eq.boundefftnesor} are analogues of the bounds \eqref{eq.ThomsonDtN} in Prop.\ \ref{pro.ThomsVarPrincDtnMap} and obtain by also using the Thomson variational principle although now applied to the new $Z$-problem  $(\mathcal{H}^D,\mathcal{U}^D,\mathcal{E}^D,\mathcal{J}^D,\Ba)$.
Compared to the bounds \eqref{eq.ThomsonDtN}, they have the advantage to provide a more explicit dependence on the material tensor $\Ba$ via averaging, but they are less general since they are restricted to affine boundary conditions.

\section{Analyticity properties  of the DtN map}\label{sec:AnalyticPropertiesDtNMap}

In this section we discuss the analytic properties of the DtN map as a function of frequency $\omega$. First, in Sec.\ \ref{sec:DiffHerglotzFuncsDtNMapsIsHerglotz}, we study its quadratic form  and prove, for a fixed surface potential $V_0$, that it is a Herglotz function in $\omega$. Second, in Sec.\ \ref{sec-Herg-Stielt} we recall from \cite{Cassier:2017:BHF} how to construct a Herglotz function from a Stieltjes function and use this in Sec.\ \ref{sec-Herg-phys} to connect this quadratic form to a Stieltjes function.

\subsection{A Herglotz function associated with the difference of two DtN maps}\label{sec:DiffHerglotzFuncsDtNMapsIsHerglotz}
 This section makes the connection between the quadratic form of the DtN map of our  passive device (as  a function of the frequency) and a well-known class of analytic functions: the Herglotz functions which are intensively used in the scientific literature to model passive systems \cite{Zemanian:2005:RTC,Bernland:2011:SRC, Milton:2016:ETC, Cassier:2017:MMD}. Furthermore, the mathematical properties of this class of functions
(integral representations \cite{Akhiezer:1993:TLO,Berg:2008:SPBS,Gesztesy:2000:MVH}, sum rules \cite{Bernland:2011:SRC,Gustafsson:2010:SRP,Welters:2014:SLL,Cassier:2017:BHF}, continued fraction expansions \cite{Milton:1987:MCEa, Milton:1987:MCEb,Pivovarchik:2013:DNI}) associated with complex analysis appear as key tool to derive quantitative bounds on the physical properties  of a passive systems on
a finite bandwidth (e.g., lensing \cite{Lind-Johansen:2009:PLF}, scattering \cite{Sohl:2007:PLB},  cloaking \cite{Cassier:2017:BHF}, propagation speed of electromagnetic waves \cite{Welters:2014:SLL}). Concerning electromagnetic systems, a first connection was made between the DtN map and Herglotz function based on a variational approach \cite{Cassier:2016:ETC}. Here, we adopt a another  point of view based on the abstract theory of composites to establish this bridge. This allows us  to use the DtN map representation as an effective operator (see Theorem \ref{thm.effopreprDtNmap}) and variational bounds based on Schur complements techniques  to prove stronger result such as Theorem \ref{thm.differenceHerg}. This theorem,  which compares the quadratic forms of the DtN maps of the  cloak to the uncloaked device, enables us to prove an unintuitive result that their difference remains a Herglotz function.

\begin{Pro}\label{Pro:DtNMapQuadFormIsHerglotz}
If the cloaking device satisfies assumptions \textnormal{H1} and \textnormal{H3--H6} then the function $\omega\mapsto \Lambda_{\omega \BGve(\cdot,\Go)}$ from $\mathbb{C}^+$ into $\mathcal{L}(H^{1/2}(\partial\Omega),H^{-1/2}(\partial\Omega))$ is an analytic operator-valued function. Moreover, for each $V_0\in H^{1/2}(\partial \Omega)$, the scalar function
\begin{align}\label{eq.DefHerglotzcloakingdevice}
 h_{V_0} :  \omega\mapsto\langle \Lambda_{\omega \BGve(\cdot,\omega)}V_0,\overline{V_0} \rangle_{H^{-\frac{1}{2}}(\partial \Omega), H^{\frac{1}{2}}(\partial \Omega)},
\end{align}
defined on $\mathbb{C}^+\cup [\Go_-, \Go_+]$, is a Herglotz function which is continuous on $[\omega_-, \omega_+]$.

\end{Pro}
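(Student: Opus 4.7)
The plan is to establish the two claims (operator-valued analyticity and the Herglotz property of the scalar quadratic form) by combining three ingredients: (a) the pointwise Herglotz nature of $\omega\mapsto\omega\BGve(\Bx,\omega)$ coming from H1, together with the coercivity given by Lemma \ref{Lem.comptnesor-vacuum} and the local bound H5; (b) the effective operator representation $\Lambda_\Ba=\Pi^\dagger \Ba_*\Pi$ from Theorem \ref{thm.effopreprDtNmap}; and (c) the operator variational bound in Cor.\ \ref{cor:EffOpIneqsForCoerciveOps}.

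First I would verify that, for every $\omega\in\mathbb{C}^+$, the tensor $\omega\BGve(\cdot,\omega)$ meets the hypotheses of Prop.\ \ref{pro.DtNDirBVPResults}: membership in $M_d(L^\infty(\Omega))$ is given by H5, and uniform coercivity in the sense \eqref{def.coercivitytensor} (with $\gamma=0$ and constant $c=\operatorname{Im}(\omega)\varepsilon_0$) comes directly from Lemma \ref{Lem.comptnesor-vacuum}. Hence $\Lambda_{\omega\BGve(\cdot,\omega)}\in\mathcal{L}(H^{1/2}(\partial\Omega),H^{-1/2}(\partial\Omega))$ is well defined on $\mathbb{C}^+$, and on $[\omega_-,\omega_+]$ the same conclusion holds via H5 and H6. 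To obtain operator-valued analyticity on $\mathbb{C}^+$, I would pass through the weak (sesquilinear) formulation: for $U\in H^{\frac12}(\partial\Omega)$ and $V\in H^{\frac12}(\partial\Omega)$, lift $V$ to some fixed $w\in H^1(\Omega)$ with $w|_{\partial\Omega}=V$ and let $u(\omega)$ solve $(\omega\BGve(\cdot,\omega)\nabla u,\nabla\varphi)_{\mathcal{H}}=0$ for all $\varphi\in H_0^1(\Omega)$, $u|_{\partial\Omega}=U$. The form $(\omega\BGve(\cdot,\omega)\cdot,\cdot)_{\mathcal{H}}$ is locally bounded (H5) and locally uniformly coercive (Lemma \ref{Lem.comptnesor-vacuum}), so Lax--Milgram combined with analyticity of the form in $\omega$ (which follows from the pointwise Herglotz representation in H1 together with dominated convergence, or from the analyticity of each matrix entry inherited from the Nevanlinna representation) yields $\omega\mapsto u(\omega)\in H^1(\Omega)$ analytic, and therefore $\omega\mapsto \langle \Lambda_{\omega\BGve(\cdot,\omega)}U,\overline{V}\rangle$ is scalar analytic. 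By the standard equivalence of weak and norm analyticity for families valued in $\mathcal{L}(H^{\frac12}(\partial\Omega),H^{-\frac12}(\partial\Omega))$, the operator-valued map is analytic on $\mathbb{C}^+$.

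Next I would turn to the Herglotz property of $h_{V_0}$. Using Theorem \ref{thm.effopreprDtNmap} applied with $\Ba=\omega\BGve(\cdot,\omega)$, and setting $\BE_0=\Pi V_0\in\mathcal{U}$,
\begin{equation*}
h_{V_0}(\omega)=\langle \Pi^\dagger(\omega\BGve(\cdot,\omega))_*\Pi V_0,\overline{V_0}\rangle=\bigl((\omega\BGve(\cdot,\omega))_*\BE_0,\BE_0\bigr)_{\mathcal{H}}.
\end{equation*}
Analyticity of $h_{V_0}$ on $\mathbb{C}^+$ is immediate from the first part. For the imaginary-part sign, I would invoke Cor.\ \ref{cor:EffOpIneqsForCoerciveOps} applied to the bounded operator $\BL=\omega\BGve(\cdot,\omega)$ (coercive with $\gamma=0$ and constant $c=\operatorname{Im}(\omega)\varepsilon_0>0$), which gives
\begin{equation*}
0<\operatorname{Im}(\omega)\varepsilon_0\,I_{\mathcal{U}}\le [\mathfrak{I}(\omega\BGve(\cdot,\omega))]_*\le \mathfrak{I}\bigl((\omega\BGve(\cdot,\omega))_*\bigr),
\end{equation*}
and hence $\operatorname{Im} h_{V_0}(\omega)=\bigl(\mathfrak{I}((\omega\BGve(\cdot,\omega))_*)\BE_0,\BE_0\bigr)_{\mathcal{H}}\ge 0$. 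Combined with analyticity this proves that $h_{V_0}$ is Herglotz on $\mathbb{C}^+$.

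Finally, continuity on $[\omega_-,\omega_+]$ follows from H4: the map $\omega\mapsto\omega\BGve(\cdot,\omega)\in M_d(L^\infty(\Omega))$ is continuous there, and the assignment $\Ba\mapsto\Lambda_\Ba$ is continuous in operator norm on the set of tensors satisfying uniform coercivity bounds of type \eqref{def.coercivitytensor}, which one sees by subtracting the two corresponding weak formulations and applying coercivity (H6 near the real interval) together with boundedness (H5). The main obstacle in the whole argument is the rigorous transfer of pointwise Herglotz analyticity of $\omega\BGve(\Bx,\omega)$ into operator-valued analyticity of $\Lambda_{\omega\BGve(\cdot,\omega)}$; this is where the effective-operator representation and the weak-analyticity formulation are essential, since they reduce the question to scalar analyticity of sesquilinear forms and Lax--Milgram-type stability.
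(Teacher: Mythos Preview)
Your argument for analyticity and the Herglotz property is essentially correct and close in spirit to the paper's, though you route the imaginary-part sign through Corollary~\ref{cor:EffOpIneqsForCoerciveOps} whereas the paper just writes $h_{V_0}(\omega)=(\omega\BGve(\cdot,\omega)(\BE_0+\BE),\BE_0+\BE)_{\mathcal H}$ via the $Z$-problem solution and takes the imaginary part directly. Your analyticity argument via the weak formulation and Lax--Milgram stability is a legitimate alternative to the paper's approach through the Schur complement formula~\eqref{eq.Schurcomplement1}--\eqref{eq.Schurcomplement2} and analyticity of each block.

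There is, however, a genuine gap in your continuity step. You assert that ``the map $\omega\mapsto\omega\BGve(\cdot,\omega)\in M_d(L^\infty(\Omega))$ is continuous'' on $[\omega_-,\omega_+]$, but H4 only gives \emph{pointwise} continuity in $\omega$ for a.e.\ $\Bx$, not continuity in the $L^\infty$ norm. Dominated convergence (with H5 as dominating function) yields convergence of $\omega_n\BGve(\cdot,\omega_n)\BF\to\omega\BGve(\cdot,\omega)\BF$ in $\mathcal H$ for each fixed $\BF\in\mathcal H$, i.e.\ strong operator convergence in $\mathcal L(\mathcal H)$, but not norm convergence of the multiplication operator, let alone $L^\infty$ convergence of the tensor. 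Consequently your appeal to ``operator-norm continuity of $\Ba\mapsto\Lambda_\Ba$'' does not apply as stated. The paper closes this gap by working entirely at the level of strong convergence: it uses the representation $\Ba_*=([\Ba_{ij}]_{i,j=0,1}^{-1})_{00}^{-1}$ from Corollary~\ref{cor:BasicPropertiesEffOpDiricZProb}(iii), combines the strong convergence of $[\omega_n\BGve(\cdot,\omega_n)_{ij}]_{i,j=0,1}$ with the uniform bound on the inverses coming from H6, and then invokes the auxiliary Lemma~\ref{Lem.strongconv} (strong convergence of inverses under a uniform bound) twice. You would need an argument of this type, or else a stronger hypothesis than H4, to make your continuity claim go through.
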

\begin{proof}
Our proof is broken into four steps.\\[6pt]
{\bf Step 1:  Preliminaries}\\[4pt]
First, by \eqref{eq.permitivtyob} and assumptions H1, H3 we know that Lemma \ref{Lem.comptnesor-vacuum} holds so together with assumptions H5, H6, we have by Proposition \ref{pro.DtNDirBVPResults} that the function $\omega \mapsto \Lambda_{\omega \BGve(\cdot,\omega)}$  from  $\bbC^+\cup [\omega_-,\omega_+]$  to 
$\mathcal{L}( H^{\frac{1}{2}}(\partial \Omega), H^{-\frac{1}{2}}(\partial \Omega))$ is well-defined. Second, those assumptions imply the representation formula \eqref{eq.factDtNmap} of Theorem \ref{thm.effopreprDtNmap} of $\Lambda_{\omega \BGve (\cdot,\omega)}$ holds on $\bbC^+\cup [\omega_-,\omega_+]$ with $\Ba=\omega \BGve(\cdot,\omega)$ and the effective tensor $\Ba_*=\big(\omega \BGve(\cdot,\omega)\big)_*$ is given by the Schur complement formula \eqref{eq.Schurcomplement1} and \eqref{eq.Schurcomplement2}.\\[6pt]
\noindent 
{\bf Step 2: Analyticity of $\omega\mapsto\Lambda_{\omega \BGve(\cdot,\omega)}$}\\[4pt]
Next, we will show that $\omega\mapsto\Lambda_{\omega \BGve(\cdot,\omega)}$ is analytic operator-valued function on $\bbC^+$, from which it follows that the scalar function  $h_{V_0}$ is analytic on $\bbC^+$. 
To this aim, we first prove that the operator $\BL_{\omega \BGve(\cdot,\omega)}$ corresponding to left multiplication by $\omega \BGve(\cdot,\omega)$ is an analytic operator-valued function from $\mathbb{C}^+$ into $\mathcal{L}(\mathcal{H})$. Combining   \cite[Theorem 3.12]{Kato:1995:PTO} (which states the equivalence between weak analyticity and  analyticity  for the operator norm topology) alongside the polarization identity for sesquilinear forms \cite[p.\ 25]{Teschl:2014:MMQ}, shows this is equivalent to proving the analyticity of the function $$\omega \mapsto (\omega \BGve(\cdot,\omega) \BF, \BF )_{\mathcal{H}} \mbox{ on } \bbC^+, \mbox{ for each } \BF\in \mathcal{H}.$$  As $\omega  \mapsto \BGve(\Bx,\omega)$ is analytic for a.e.\ $\Bx\in \Omega$ (by the passivity assumption H1), one proves this analyticity by applying the theorem of complex differentiation under the integral presented in \cite{Mattner:2001:CDI} (using the hypothesis H5 for the domination condition, i.e., assumption A3 in \cite[Theorem]{Mattner:2001:CDI}, required in its assumptions).  As $L_{\omega \BGve(\cdot,\omega)}$, which we also denote by $\omega \BGve(\cdot,\omega)$ (see Def.\ \eqref{DefKZProb} on this abuse of notation), is analytic then this implies that the operator-valued functions $\omega \mapsto \omega \BGve(\cdot,\omega)_{ij}=\Gamma_i\omega \BGve(\cdot,\omega)\Gamma_j\in \mathcal{L}(\operatorname{range}(\BGG_j), \operatorname{range}(\BGG_i))$ are analytic on $\bbC^+$. As $\omega \mapsto \omega\BGve(\cdot,\omega)_{11}\in \mathcal{L}(\mathcal{E})$ is analytic on $\bbC^+$ and since $\omega\BGve(\cdot,\omega)_{11}$ is an invertible operator as an element of $\mathcal{L}(\mathcal{E})$ for each $\omega \in \mathbb{C}^+$ (from Lemma \ref{lem.uniformcoerciveimpliessolvabilityofZproblem} using Lemma \ref{Lem.comptnesor-vacuum}), then the map from $\omega$ to the inverse operator, i.e., $\omega \mapsto [\omega \BGve(\cdot,\omega)_{11}]^{-1} \in \mathcal{L}(\mathcal{E})$, is analytic on $\mathbb{C}^+$ (see, for instance, \cite[Chap.\ 7, Sec.\ 1, pp.\ 365--366]{Kato:1995:PTO}). This proves the Schur complement function $\omega \mapsto \big(\omega \BGve(\cdot,\omega)\big)_*\in \mathcal{L}(\mathcal{U})$ [see \eqref{eq.Schurcomplement1} and \eqref{eq.Schurcomplement2}] is an analytic operator-valued function  on $\bbC^+$. Thus,  as  the lift operator $\Pi$ and its adjoint $\Pi^\dagger$ are bounded and frequency-independent, it implies with \eqref{eq.factDtNmap} that  $\omega\mapsto\Lambda_{\omega \BGve(\cdot,\omega)}$ is analytic on $\bbC^+$.\\[6pt]
{\bf Step 3:  $h_{V_0}$ is an Herglotz function}\\[4pt]
To prove that the function $h_{V_0}$  has a positive semidefinite imaginary part on $\bbC^+$, we connect the DtN map to the effective tensor of the corresponding Dirichlet $Z$-problem using Theorem \eqref{thm.effopreprDtNmap}. 
Let $\BE_0=\Pi V_0$. Then, since $\BE_0\in \mathcal{U}$, for a fixed  $\omega \in \bbC^+$, there exists a solution $(\BJ_0,\BE,\BJ)\in \mathcal{U}\times\mathcal{E}\times\mathcal{J}$ to the Dirichlet $Z$-problem $(\mathcal{H},\mathcal{U},\mathcal{E},\mathcal{J},\bf a)$, where $\bf a = \omega \BGve(\cdot,\omega)$, at $\BE_0$ which implies  that
    \begin{eqnarray*}
      \langle \Lambda_{\omega \BGve(\cdot,\omega)}V_0,\overline{V_0} \rangle_{H^{-\frac{1}{2}}(\partial \Omega), H^{\frac{1}{2}}(\partial \Omega)}&=& ( [\omega \BGve(\cdot,\omega)]_*\Pi V_0,\Pi V_0 )_{\mathcal{H}} \\ &=&( [\omega \BGve(\cdot,\omega)]_*\BE_0, \BE_0 )_{\mathcal{H}}\\
        &=&( [\omega \BGve(\cdot,\omega)](\BE_0+\BE), \BE_0+\BE )_{\mathcal{H}}.
    \end{eqnarray*}
  (in the above equalities, we use that $\BJ_0=[\BGve(\cdot,\omega)]_*\BE_0$, $\BJ_0+\BJ=\omega \BGve(\cdot,\omega) (\BE_0+\BE)$ and the fact that $\mathcal{U},\mathcal{E},\mathcal{J}$ are orthogonal to each other).  
Thus, it follows that
\begin{eqnarray*}
        \operatorname{Im} \langle \Lambda_{\omega \BGve (\cdot,\omega)}V_0,\overline{V_0} \rangle_{H^{-\frac{1}{2}}(\partial \Omega), H^{\frac{1}{2}}(\partial \Omega)}&=&( \mathfrak{I}[\omega \BGve (\cdot,\omega)](\BE_0+\BE),\BE_0+\BE )_{\mathcal{H}}\geq 0,
            \end{eqnarray*}
where the last inequality comes from  the passive assumption H1 on the cloaking device which implies that for a.e.\ $\Bx\in \Omega$, $\omega\mapsto\omega \BGve(\Bx,\omega)$ is a matrix-valued Herglotz-function and thus $\mathfrak{I}[\omega \BGve(\Bx,\omega)]\geq 0$ for $\omega\in \bbC^+$. This proves that  $h_{V_0}$ is a scalar-valued Herglotz function.\\[6pt]
{\bf Step 4:  Continuity of $h_{V_0}$ on $[\omega_-, \omega_+]$}\\[4pt]
 Finally, we show the continuity of the map $$\omega  \mapsto \Lambda_{\omega \BGve (\cdot,\omega)}V_0=\Pi^\dagger (\omega \BGve(\cdot,\omega))_* \Pi V_0$$ in the $H^{-1/2}$-norm on $[\omega_-, \omega_+]$. This continuity implies, in particular, the continuity of $h_{V_0}$  on $[\omega_-, \omega_+]$. As $\Pi$ and its adjoint $\Pi^\dagger$ are bounded operators which do not depend on the frequency $\Go$, it is sufficient to prove the continuity of the map $\Go\mapsto (\omega \BGve(\cdot,\omega))_* \BE_0$  for the  $\mathcal{H}$-norm for any $\BE_0\in \mathcal{U} $ on $[\omega_-, \omega_+]$. 
To this aim, one uses the formula \eqref{ZProbEffOpAltFormulaViaInv} in Corollary \ref{cor:BasicPropertiesEffOpDiricZProb}
[which holds due to Lemma \ref{Lem.comptnesor-vacuum} and the coercivity assumption H6 on the tensor $\omega \BGve(\cdot,\omega)$], namely, for any fixed    $\BE_0\in \mathcal{U}$:
\begin{equation}\label{eq.continuity}
\big(\Go\,\BGve(\cdot,\omega)\big)_* \BE_0= \big[ [\Go\,\BGve(\cdot,\omega)_{ij}]_{i,j=0,1}^{-1}\big]^{-1}_{00}\BE_0, \quad \forall \Go \in \bbC^+\cup[\omega_-,\omega_+],
\end{equation}
[where $\Go\,\BGve(\cdot,\omega)$ corresponds to the operator   $\BL_{\Go\,\BGve(\cdot,\omega)}$ in $\mathcal{L}(\mathcal{H})$].
Let $(\Go_n)_{n\in \mathbb{N}}$ be a sequence of $\bbC^{+}\cup [\omega_-, \omega_+]$ which tends to $\omega \in [\omega_-, \omega_+]$. 
First, the sequence of elements $[\Go_n\BGve(\cdot,\omega_n)_{ij}]_{i,j=0,1}^{-1}\in \mathcal{L}(\mathcal{U}\overset{\perp}{\oplus} \mathcal{E})$ is bounded since by the coercivity assumption H6 there exists $\delta>0$ such that 
\begin{equation}
    \label{eq.uniformboundinv}
\forall \omega_n \in B(\omega, \delta)\cap (\bbC^+ \cup [\omega_-, \omega_+]), \, \|[\Go_n\,\BGve(\cdot,\omega_n)_{ij}]_{i,j=0,1}^{-1}\|_{\mathcal{L}(\mathcal{U}\overset{\perp}{\oplus} \mathcal{E})}\leq c_2(\omega)^{-1},
\end{equation}
where $c_{2}(\omega)$ is the coercivity constant appearing in H6. 
Furthermore, for each $\tilde{\BE}\in \mathcal{U}\overset{\perp}{\oplus} \mathcal{E}$,
\begin{equation}\label{eq.strongconv}
\big\|\big(\big[ \omega_n\BGve(\cdot, \omega_n)_{ij}\big]_{i,j=0,1}-  \big[ \Go\,\BGve(\cdot,\omega)_{ij}\big]_{i,j=0,1}\big) \tilde{\BE}  \big\|_{\mathcal{H}}= \big\| \big[\omega_n\BGve(\cdot,\omega_n)-  \Go\,\BGve(\cdot,\omega) \big]\tilde{\BE}\big\|_{\mathcal{H}}\to 0
\end{equation}
 as $n\to + \infty$,
where the limit is obtained via the Lebesgue's dominated convergence theorem (the application of this theorem is justified by the assumptions H4 and H5). Thus, combining \eqref{eq.uniformboundinv} and \eqref{eq.strongconv} together with the Lemma \ref{Lem.strongconv} yields the following strong convergence: For each $\tilde{\BE}\in \mathcal{U}\overset{\perp}{\oplus} \mathcal{E}$,
\begin{equation*}
\| \big( [\Go_n\,\BGve(\cdot,\omega_n)_{ij}]_{i,j=0,1}^{-1}  -[\Go\,\BGve(\cdot,\omega)_{ij}]_{i,j=0,1}^{-1} \big)\, \tilde{\BE} \|_{\mathcal{H}}\to 0 \mbox{ as } n \to +\infty.
\end{equation*}
Thus, it follows that
for any $\BE_0\in \mathcal{U}$:
\begin{equation}\label{eq.strongconv2}
\| \big( \big[[\Go_n\,\BGve(\cdot,\omega_n)_{ij}]_{i,j=0,1}^{-1} \big]_{00} -[\Go\,\BGve(\cdot,\omega)_{ij}]_{i,j=0,1}^{-1} ]_{00} \big)\, \BE_0\|_{\mathcal{H}}\to 0 \mbox{ as } n \to +\infty.
\end{equation}
Then, as $\big(\omega_n \BGve(\cdot,\omega_n)\big)_*$ is also given by the Schur complement formula \eqref{eq.Schurcomplement1} and \eqref{eq.Schurcomplement2}, this yields [using \eqref{eq.continuity} and  the assumptions H5 and H6 along with the fact that the orthogonal projections $\Gamma_j$, for $j=0,1,2$, all have norm $1$] that there exists $\delta>0$ such that for all $ \omega_n \in B(\omega, \delta)\cap (\bbC^+ \cup [\omega_-, \omega_+])$:
\begin{equation}\label{eq.uniformboundinv2}
\|\big[[\Go_n\,\BGve(\cdot,\omega_n)_{ij}]_{i,j=0,1}^{-1} \big]_{00}^{-1}\|_{\mathcal{L}(\mathcal{U})}=\|  \big(\omega_n  \BGve(\cdot,\omega_n)\big)_* \|_{\mathcal{L}(\mathcal{U})}\leq  c_1(\omega)+  c_1(\omega)^2 c_2(\omega)^{-1},
\end{equation}
where $c_1(\omega)$ is the constant from assumption H5.
Combining \eqref{eq.strongconv2}, \eqref{eq.uniformboundinv2} together with Lemma \ref{Lem.strongconv} yields that
for any $\BE_0\in \mathcal{U}$:
\begin{equation*}
\| \big( \big[[\Go_n\,\BGve(\cdot,\omega_n)_{ij}]_{i,j=0,1}^{-1} \big]_{00}^{-1} -[\Go\,\BGve(\cdot,\omega)_{ij}]_{i,j=0,1}^{-1} ]_{00}^{-1} \big)\, \BE_0\|_{\mathcal{H}}\to 0 \mbox{ as } n \to +\infty.
\end{equation*}
This implies, by virtue of \eqref{eq.continuity}, the continuity (for any $\BE_0\in \mathcal{U} $) of  $\Go\mapsto (\omega \BGve(\cdot,\omega))_* \BE_0$  for the  $\mathcal{H}$-norm on $[\omega_-, \omega_+]$. 
    \end{proof}

Now, we introduce an additional modeling assumption, that is not mandatory for all results. 
\begin{itemize}
    \item[H7] {\bf Reciprocity principle:} The cloak is made of a reciprocal medium, i.e., its permittivity satisfies: 
$$
\mbox{for a.e. } \Bx \in \Omega\setminus \CO, \ 
\forall \Go \in [\Go_-,\Go_+] \cup \bbC^+, \  \BGve(\Bx,\Go)^{\top}=\BGve(\Bx,\Go).$$ 
\end{itemize}
    
 \noindent   We are now ready to give our main result of this section.
    To this aim, we introduce the matrix-valued permittivity
    $\BGve_{\infty}(\cdot)$  defined by: 
    \begin{equation}\label{eq.defGVeinf}
    \mbox{ for  a.e. } \Bx \in \mathcal{O},\, \BGve_{\infty}(\Bx)= \BGve_{\mathrm{ob}}(\Bx) \ \mbox{ and } \
    \mbox{ for  a.e. } \Bx \in \Omega\setminus \mathcal{O}, \, \BGve_{\infty}(\Bx)= \varepsilon_0 \BI .
    \end{equation}

In the next theorem, we point out that assumption H2 is only needed in order to prove statement 2.
    \begin{Thm}\label{thm.differenceHerg}
Suppose the cloaking device satisfies assumptions \textnormal{H1--H6} and define,  for each fixed $V_0\in H^{1/2}(\partial \Omega)$, the scalar function  on $[\Go_-,\Go_+] \cup \bbC^+$ by
    \begin{align}
 h_{\operatorname{dif},V_0} :      \omega\mapsto\langle [\Lambda_{\omega\BGve(\cdot,\Go)}-\Lambda_{\omega\varepsilon_0}]V_0,\overline{V_0} \rangle_{H^{-\frac{1}{2}}(\partial \Omega), H^{\frac{1}{2}}(\partial \Omega)}.
    \end{align}
Then it has the following properties:
\begin{enumerate}
\item $h_{\operatorname{dif},V_0}$ is a Herglotz function which is continuous on $[\omega_-, \omega_+]$.

\item If the cloaking device is made of a reciprocal material, i.e.,  if assumption  $\mathrm{H}7$ holds then
\begin{equation}\label{eq.Hergsym}
    h_{\operatorname{dif},V_0}(-\overline{\omega})=-\overline{h_{\operatorname{dif},V_0}(\omega)}, \quad \forall \Go \in \bbC^+. 
\end{equation}
If the cloaking device contains some non-reciprocal material, i.e., if assumption $\mathrm{H}7$ is not satisfied, then \eqref{eq.Hergsym} holds whenever 
the potential $V_0\in H^{\frac{1}{2}}(\partial \Omega)$ is a real-valued function.
\item   \, $\forall \Go \in [\Go_-,\Go_+] \cup \bbC^+$, \begin{equation}\label{eq.homogDtnrelation}
\frac{h_{\operatorname{dif},V_0} (\Go) }{\Go}=\langle [\Lambda_{\BGve(\cdot,\Go)}-\Lambda_{\varepsilon_0}]V_0,\overline{V_0} \rangle_{H^{-\frac{1}{2}}(\partial \Omega), H^{\frac{1}{2}}(\partial \Omega)}= F_{V_0}(\Go).
\end{equation}
\item  $h_{\operatorname{dif},V_0}$ admits the following non-negative limit on the positive imaginary axis:
\begin{equation}\label{eq.limitinfherg}
F_{V_0}(\mathrm{i} y)=\frac{h_{\operatorname{dif},V_0} (\mathrm{i} y)}{\mathrm{i} y} \to F_{V_0,\infty}:=\langle[ \Lambda_{\BGve_{\infty}(\cdot)}-\Lambda_{\varepsilon_0}]V_0,\overline{V_0} \rangle_{H^{-\frac{1}{2}}(\partial \Omega), H^{\frac{1}{2}}(\partial \Omega)} \geq 0
\end{equation}
 as $y\to +\infty$, where the matrix-valued permittivity function $\BGve_{\infty}(\cdot)$ is defined in \eqref{eq.defGVeinf}.
 \end{enumerate}
\end{Thm}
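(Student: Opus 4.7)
\medskip
\textit{Strategy and Statement~3.} My plan is to handle the four claims in the order $3\to 1\to 4\to 2$, systematically exploiting the effective operator representation $\Lambda_{\Ba}=\Pi^{\dagger}\Ba_{*}\Pi$ from Theorem~\ref{thm.effopreprDtNmap} together with the scalar homogeneity $\Lambda_{c\Ba}=c\,\Lambda_{\Ba}$ of the DtN map (valid for any nonzero scalar $c$, since scaling the coefficient by $c$ leaves the unique solution of \eqref{def.DirBVPForCoercivityTensors} unchanged). Applied with $c=\omega$ to both $\BGve(\cdot,\omega)$ and $\varepsilon_0\BI$, this factorization yields statement~3 directly, and also clarifies why passing to $h_{\operatorname{dif},V_0}$ on $\bbC^{+}$ is the right analytic lift of the physical quantity $F_{V_0}$ from the real interval $[\omega_-,\omega_+]$.

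\medskip
\textit{Statement~1 (the main obstacle).} Proposition~\ref{Pro:DtNMapQuadFormIsHerglotz} already delivers the analyticity of $h_{V_0}$ on $\bbC^{+}$ and its continuity on $[\omega_-,\omega_+]$, and $\omega\mapsto \omega\,G^{\operatorname{vac}}_{V_0}$ is entire with $G^{\operatorname{vac}}_{V_0}\geq 0$ real by \eqref{eq.defGvac}, so $h_{\operatorname{dif},V_0}$ inherits both the analyticity and the continuity. The hard part will be the positivity of $\operatorname{Im}h_{\operatorname{dif},V_0}$ on $\bbC^{+}$: a difference of Herglotz functions is not Herglotz in general, so the two terms cannot just be subtracted. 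To bypass this I would recast, with $\BE_{0}=\Pi V_{0}\in\mathcal{U}$, by combining Theorem~\ref{thm.effopreprDtNmap} with Corollary~\ref{cor:BasicPropertiesEffOpDiricZProb}(ii) (constants are their own effective operators),
\[
h_{\operatorname{dif},V_0}(\omega)=\big((\omega\BGve(\cdot,\omega))_{*}\BE_{0},\BE_{0}\big)_{\mathcal{H}}-\omega\,\varepsilon_{0}\,\|\BE_{0}\|_{\mathcal{H}}^{2}.
\]
Lemma~\ref{Lem.comptnesor-vacuum} yields $\mathfrak{I}[\omega\BGve(\cdot,\omega)]\geq \operatorname{Im}(\omega)\,\varepsilon_{0}\,\BI$ pointwise, so the left-multiplication operator $\omega\BGve(\cdot,\omega)$ is coercive with $\gamma=0$ and constant $c=\operatorname{Im}(\omega)\varepsilon_{0}>0$. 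Corollary~\ref{cor:EffOpIneqsForCoerciveOps} then gives the key operator chain
\[
\mathfrak{I}\bigl[(\omega\BGve(\cdot,\omega))_{*}\bigr]\;\geq\;\bigl[\mathfrak{I}(\omega\BGve(\cdot,\omega))\bigr]_{*}\;\geq\;\operatorname{Im}(\omega)\,\varepsilon_{0}\,I_{\mathcal{U}},
\]
and testing this inequality against $\BE_{0}$ produces exactly $\operatorname{Im}h_{\operatorname{dif},V_0}(\omega)\geq 0$.

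\medskip
\textit{Statement~4.} Along $\omega=\mathrm{i}y$, H3 together with \eqref{eq.defGVeinf} gives pointwise convergence $\BGve(\Bx,\mathrm{i}y)\to\BGve_{\infty}(\Bx)$ for a.e.\ $\Bx\in\Omega$ as $y\to+\infty$. I would then replicate verbatim the continuity machinery of Step~4 in the proof of Proposition~\ref{Pro:DtNMapQuadFormIsHerglotz}: use the Schur complement representation \eqref{eq.continuity}, the domination supplied by H5 in a neighborhood of $\infty$, the dominated convergence theorem, and Lemma~\ref{Lem.strongconv} on strong convergence of invertible operators, to conclude $\Lambda_{\BGve(\cdot,\mathrm{i}y)}V_{0}\to \Lambda_{\BGve_{\infty}}V_{0}$ in $H^{-1/2}(\partial\Omega)$, equivalently $F_{V_0}(\mathrm{i}y)\to F_{V_0,\infty}$. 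For non-negativity, since $\BGve_{\infty}$ is real symmetric with $\BGve_{\infty}\geq \varepsilon_{0}\BI$ a.e.\ on $\Omega$ (strict on $\mathcal{O}$ by \eqref{eq.permitivtyob}, equal on $\Omega\setminus\mathcal{O}$), the Dirichlet variational principle \eqref{eq.Dirichletpriciple} immediately yields $\langle\Lambda_{\BGve_{\infty}}V_{0},\overline{V_{0}}\rangle\geq \langle\Lambda_{\varepsilon_{0}}V_{0},\overline{V_{0}}\rangle$, so $F_{V_0,\infty}\geq 0$.

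\medskip
\textit{Statement~2.} Scalar homogeneity and H2 give
\[
h_{\operatorname{dif},V_0}(-\overline{\omega})=-\overline{\omega}\,\langle[\Lambda_{\overline{\BGve(\cdot,\omega)}}-\Lambda_{\varepsilon_{0}}]V_{0},\overline{V_{0}}\rangle.
\]
The key tool will be the conjugation identity $\Lambda_{\overline{\Ba}}\phi=\overline{\Lambda_{\Ba}\overline{\phi}}$, obtained from complex-conjugating the Dirichlet BVP. If $V_{0}$ is real then $\overline{V_{0}}=V_{0}$ and bilinearity of the duality pairing immediately produces $\langle\Lambda_{\overline{\BGve}}V_{0},\overline{V_{0}}\rangle=\overline{\langle\Lambda_{\BGve}V_{0},\overline{V_{0}}\rangle}$, from which \eqref{eq.Hergsym} follows by substitution. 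For general complex $V_{0}$ under H7, reciprocity $\BGve^{\top}=\BGve$ means $\overline{\BGve}=\BGve^{\dagger}$ as left-multiplication operators, so Corollary~\ref{cor:BasicPropertiesEffOpDiricZProb}(iv) yields $(\overline{\BGve})_{*}=(\BGve_{*})^{\dagger}$, and hence
\[
\langle\Lambda_{\overline{\BGve}}V_{0},\overline{V_{0}}\rangle=\big((\BGve_{*})^{\dagger}\BE_{0},\BE_{0}\big)_{\mathcal{H}}=\overline{(\BGve_{*}\BE_{0},\BE_{0})_{\mathcal{H}}}=\overline{\langle\Lambda_{\BGve}V_{0},\overline{V_{0}}\rangle}.
\]
The same trivially holds for $\Lambda_{\varepsilon_{0}}$, and substitution again gives \eqref{eq.Hergsym}.
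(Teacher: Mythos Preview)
Your proposal is correct and follows essentially the same route as the paper. The order differs (the paper goes $1\to 2\to 3\to 4$) and a few auxiliary tools are swapped, but the core ideas coincide: for assertion~1 both arguments hinge on Corollary~\ref{cor:EffOpIneqsForCoerciveOps} applied to $\omega\BGve(\cdot,\omega)$ with the coercivity constant $\operatorname{Im}(\omega)\varepsilon_0$ supplied by Lemma~\ref{Lem.comptnesor-vacuum}; for assertion~4 both reuse the strong-convergence machinery of Step~4 in Proposition~\ref{Pro:DtNMapQuadFormIsHerglotz}.

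One genuine stylistic difference is worth noting for assertion~2. The paper handles the reciprocal case H7 by Green's formula: it writes $h_{V_0}(-\overline\omega)$ as a volume integral, uses $\BGve=\BGve^{\top}$ to transpose inside the integrand, and then reassembles the DtN quadratic form. Your route through Corollary~\ref{cor:BasicPropertiesEffOpDiricZProb}(iv), observing that $\overline{\BGve}=\BGve^{\dagger}$ under H7 and hence $(\overline{\BGve})_*=(\BGve_*)^{\dagger}$, is arguably cleaner and stays entirely within the effective-operator framework, avoiding any explicit integration by parts. For the non-reciprocal case with real $V_0$, the paper instead identifies the solution of the conjugated problem as $\overline{w}$; your conjugation identity $\Lambda_{\overline{\Ba}}\phi=\overline{\Lambda_{\Ba}\overline{\phi}}$ encapsulates the same observation. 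For assertion~4 the paper cites Corollary~\ref{cor:MonotonocityOfEffOps} for $F_{V_0,\infty}\geq 0$ rather than the Dirichlet principle \eqref{eq.Dirichletpriciple}; since the former is derived from the latter, the difference is cosmetic.
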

    \begin{proof}
Our proof is broken into four steps.\\[6pt]
{\bf{Step 1}: Proof of assertion 1}\\[4pt]
First, we point out that $h_{\operatorname{dif},V_0}$ is analytic 
on $\bbC^+$ and continuous on $[\omega_-,\omega_+]$ as it is the difference of two functions, namely, $h_{V_0}$ [defined by  \eqref{eq.DefHerglotzcloakingdevice}] and
$$h_{\operatorname{vac}, V_0}: \Go \mapsto \langle \Lambda_{\omega\varepsilon_0} V_0,\overline{V_0} \rangle_{H^{-\frac{1}{2}}(\partial \Omega), H^{\frac{1}{2}}(\partial(\Omega))} \ \mbox{ defined on } \ [\Go_-,\Go_+] \cup \bbC^+,$$ which share this regularity by Proposition \ref{Pro:DtNMapQuadFormIsHerglotz}. Note that in Proposition \ref{Pro:DtNMapQuadFormIsHerglotz} we prove this regularity for  the function $h_{V_0}$ associated to the permittivity tensor $\BGve(\cdot,\omega)$ of the cloaking device  which contains the obstacle. However, it is straightforward to check that  such result  holds also for the function $h_{\operatorname{vac}, V_0}$ (when $\Omega$ is filled by vacuum) via the same proof  using that assumptions H1-H6 hold  if one replaces $\BGve(\cdot,\omega)$ by the permittivity tensor of the vacuum $\varepsilon_0 \BI$.  

The main point of this step is to show now that the difference of the two Herglotz functions $h_{V_0}$ and $h_{\operatorname{vac}, V_0}$ remains a Herglotz function (which is not obvious). The key tools for the proof of this are based on the variational principles associated to the abstract theory of composites (see Section \ref{Sec:VarPrincBndsEffOps}).  
Let $\omega\in \mathbb{C}^+$. Then we have
    \begin{align}\label{eq.diff}
       \operatorname{Im}h_{\operatorname{dif},V_0} (\Go)=\operatorname{Im}\langle \Lambda_{\omega\BGve(\cdot,\Go)}V_0,\overline{V_0} \rangle_{H^{-\frac{1}{2}}(\partial \Omega), H^{\frac{1}{2}}(\partial \Omega)}-\operatorname{Im}\langle \Lambda_{\omega\varepsilon_0}V_0,\overline{V_0} \rangle_{H^{-\frac{1}{2}}(\partial \Omega), H^{\frac{1}{2}}(\partial \Omega)}.
    \end{align}
Defining  $\BE_0=\Pi V_0\in \mathcal{U}$ and using the  representations \eqref{eq.factDtNmap}  of the DtN operators, namely $\Lambda_{\omega\BGve(\cdot,\Go)}= \Pi^\dagger [\omega\BGve(\cdot,\Go)]_* \Pi$ and  $\Lambda_{\omega\BGve_0(\omega,\cdot)}= \Pi^\dagger [\omega \BGve_0(\omega,\cdot)]_* \Pi$, yields the equalities
    \begin{eqnarray*}
\operatorname{Im}\langle \Lambda_{\omega\BGve(\cdot,\Go)}V_0,\overline{V_0} \rangle_{H^{-\frac{1}{2}}(\partial \Omega), H^{\frac{1}{2}}(\partial \Omega)}&=&( \mathfrak{I}\{[\omega \BGve(\cdot,\Go)]_*\}\BE_0,\BE_0 )_{\mathcal{H}} , \\ \operatorname{Im}\langle \Lambda_{\omega\BGve_0}V_0,\overline{V_0} \rangle_{H^{-\frac{1}{2}}(\partial \Omega), H^{\frac{1}{2}}(\partial \Omega)}&=&( \mathfrak{I}\{[\omega \BGve_0]_*\}\BE_0,\BE_0 )_{\mathcal{H}}. 
\end{eqnarray*}
Applying Corollary \ref{cor:EffOpIneqsForCoerciveOps} followed by Lemma \ref{Lem.comptnesor-vacuum} together with Corollary \ref{cor:MonotonocityOfEffOps} and then using Corollary \ref{cor:BasicPropertiesEffOpDiricZProb}(ii) yields the series of inequalities:
 \begin{align*}
        \operatorname{Im}\langle \Lambda_{\omega\BGve(\cdot,\Go)}V_0,\overline{V_0} \rangle_{H^{-\frac{1}{2}}(\partial \Omega), H^{\frac{1}{2}}(\partial \Omega)}&=( \mathfrak{I}\{[\omega \BGve(\cdot,\Go)]_*\}\BE_0,\BE_0 )_{\mathcal{H}}\\
        &\geq ( \{\mathfrak{I}[\omega \BGve(\cdot,\Go)]\}_*\BE_0,\BE_0 )_{\mathcal{H}}\\
        &\geq  ( [(\operatorname{Im}\omega) \BGve_0]_*\BE_0,\BE_0 )_{\mathcal{H}}\\
        &=( \mathfrak{I}\{[\omega \BGve_0]_*\}\BE_0,\BE_0 )_{\mathcal{H}}\\
        &=\operatorname{Im}\langle \Lambda_{\omega\BGve_0}V_0,\overline{V_0} \rangle_{H^{-\frac{1}{2}}(\partial \Omega), H^{\frac{1}{2}}(\partial \Omega)},
    \end{align*}
and thus with \eqref{eq.diff}, $ \operatorname{Im}h_{\operatorname{dif},V_0} (\Go)
\geq 0$. Therefore,  $h_{\operatorname{dif},V_0}$ is a Herglotz function.\\[6pt]
{\bf{Step 2}: Proof of assertion 2}\\[4pt]
Let $\omega\in \bbC^+$. Then one has $-\overline{\omega}\in \bbC^+$ and
\begin{equation}\label{eq.difHergsym}
 h_{\operatorname{dif},V_0}(-\overline{\omega})= h_{V_0}(-\overline{\omega})-{h_{\operatorname{vac}, V_0}}(-\overline{\omega}).
 \end{equation}
First, using  the  representations \eqref{eq.factDtNmap} of the DtN map and the fact that $\BE_0=\Pi V_0$, $(-\overline{\omega} \varepsilon_0)_*=-\overline{\omega} \varepsilon_0$,  $(\omega \varepsilon_0)_*=\omega\varepsilon_0$ [by Corollary \ref{cor:BasicPropertiesEffOpDiricZProb}(ii)] on the above expression yields 
\begin{eqnarray}\label{eq.sym-vaccum}
h_{\operatorname{vac}, V_0}(-\overline{\omega})&=&\big((-\overline{\omega} \varepsilon_0)_* \BE_0, \BE_0\big)_{\mathcal{H}}=-\big(\overline{\omega} \varepsilon_0\BE_0, \BE_0\big)_{\mathcal{H}}=-\overline{\big( \omega \varepsilon_0\BE_0, \BE_0\big)}_{\mathcal{H}} \nonumber\\
&=&-\overline{ \big((\omega \, \varepsilon_0)_*\BE_0, \BE_0\big)_{\mathcal{H}}}=-\overline{h_{\operatorname{vac}, V_0}(\omega)}.
\end{eqnarray}
So far, we did not use the assumption H7. It remains to show that $ h_{V_0}(-\overline{\omega})=-\overline{h_{V_0}(\omega)}$ and this will require the reciprocity principle assumption H7 except if $V_0$ is a real-valued function.
 By virtue of the reality principle H2,
 one has 
 \begin{eqnarray*}
 h_{V_0}(-\overline{\omega})&=&\langle \Lambda_{-\overline{\omega}\BGve(\cdot,-\overline{\omega})} V_0, \overline{V_0} \rangle_{H^{-\frac{1}{2}}(\partial \Omega), H^{\frac{1}{2}}(\partial \Omega)}=\langle \Lambda_{-\overline{\omega\,\BGve(\cdot, \omega)}} V_0, \overline{V_0 }\rangle_{H^{-\frac{1}{2}}(\partial \Omega), H^{\frac{1}{2}}(\partial \Omega)}\\ &=&-\langle \Lambda_{\overline{\omega\,\BGve(\cdot, \omega)}} V_0, \overline{V_0} \rangle_{H^{-\frac{1}{2}}(\partial \Omega), H^{\frac{1}{2}}(\partial \Omega)},
 \end{eqnarray*}
where the last inequality comes from the fact that $\Lambda_{-\overline{\omega} \,\overline{\BGve(\cdot,\omega)}}=- \Lambda_{\overline{\omega} \,\overline{\BGve(\cdot,\omega)}}$.
We denote by $w$ (resp.\ $u$) the unique solution  (see Proposition \ref{pro.DtNDirBVPResults}) in $H^1(\Omega)$ of the  Dirichlet problem \eqref{def.DirBVPForCoercivityTensors} problem with $\Ba=\omega \BGve(\cdot, \omega)$ [resp. $\Ba=\overline{\omega\,\BGve(\cdot, \omega)}$]  and $f=V_0$ (resp. $f=V_0$). Using the Green's formula \eqref{eq:GreensFormula}, one has
\begin{equation*}
h_{V_0}(-\overline{\omega})=-\int_{\Omega}\overline{\omega \,\BGve(\cdot, \omega)} \nabla u \cdot \overline{\nabla w}  \,\mathrm{d} \Bx \, .
\end{equation*}
{\bf Case 1 :} If  the reciprocity assumption H7 is satisfied, it implies with \eqref{eq.permitivtyob} that $\BGve(\Bx,\Go)=\BGve(\Bx,\Go)^{\top}$ for a.e.\ $\Bx\in \Omega$. Thus, it leads to
\begin{eqnarray}\label{eq.symmetrieHV0}
 h_{V_0}(-\overline{\omega})&=&-\int_{\Omega}  \overline{\omega \,\BGve(\cdot, \omega)}\, \overline{\nabla w} \cdot \nabla u  \,\mathrm{d} \Bx=-\overline{\int_{\Omega}  \omega \,\BGve(\cdot, \omega)\, \nabla w \cdot \overline{\nabla u}  \,\mathrm{d} \Bx} \\
 &=&- \overline{\langle \Lambda_{\omega\,\BGve(\cdot, \omega)} V_0, \overline{V_0} \rangle}_{H^{-\frac{1}{2}}(\partial \Omega), H^{\frac{1}{2}}(\partial \Omega)}=-\overline{h_{V_0}(\omega)}.\nonumber
\end{eqnarray}
{\bf Case 2 :} If  the reciprocity assumption H7 does not hold but $V_0$ is a real-valued function, then as $V_0=\overline{V_0}$ by uniqueness of $H^1$-solutions of the problem \eqref{def.DirBVPForCoercivityTensors}, one gets that $u=\overline{w}$ so that the first equality in \eqref{eq.symmetrieHV0} still holds without using that $\BGve(\Bx,\Go)=\BGve(\Bx,\Go)^{\top}$ for a.e.\ $\Bx\in \Omega$. From this, we conclude that $h_{V_0}(-\overline{\omega})=-\overline{h_{V_0}(\omega)}$. 

Combining \eqref{eq.difHergsym} and \eqref{eq.sym-vaccum} with the conclusion in these two cases yields the desired equality \eqref{eq.Hergsym}. \\[6pt]
{\bf{Step 3}: Proof of assertion 3}\\[4pt] 
Let $\Go\in [\Go_-,\Go_+] \cup \bbC^+$.  We point out that $0\notin [\Go_-,\Go_+ ]$ by assumption. Thus, using Remark \ref{rem:CoercivityAssumpForVarEpsilon} and Proposition \ref{pro.DtNDirBVPResults}, one can define the DtN map $\Lambda_{\BGve(\cdot, \omega)}$. Furthermore, using Corollary \ref{cor:BasicPropertiesEffOpDiricZProb} both (i) and (ii), one gets that $[\omega \BGve(\cdot, \omega)]_*=\omega \, [\BGve(\cdot, \omega)]_* $ and $(\omega \varepsilon_0)_*=\omega \,\varepsilon_0$. Thus, by virtue of the representation \eqref{eq.factDtNmap} of DtN operator, it yields  to the relations  $\Lambda_{\omega \BGve(\cdot, \Go)}=\omega \Lambda_{\BGve(\cdot, \Go)}$ and $\Lambda_{\omega \varepsilon_0}=\omega \Lambda_{\varepsilon_0}$ which implies that 
\begin{equation}\label{eq.omegahomogen}
\forall \Go\in [\Go_-,\Go_+]\cup \bbC^+, \quad \frac{h_{\operatorname{dif},V_0} (\Go) }{\Go}=\langle [\Lambda_{\BGve(\cdot,\Go)}-\Lambda_{\varepsilon_0}]V_0,\overline{V_0} \rangle_{H^{-\frac{1}{2}}(\partial \Omega), H^{\frac{1}{2}}(\partial \Omega)}=F_{V_0}(\Go).
\end{equation}
{\bf{Step 4}: Proof of assertion 4}\\[4pt] 
Let $\Go=\mathrm{i}\, y$ with $y>0$. First, one has the identity \eqref{eq.omegahomogen} for  $\Go=\mathrm{i}\, y$. Second, as \eqref{eq.passive-cloak} and \eqref{eq.permitivtyob} are assumed to hold on $\bbC^+$ then together with assumption \textnormal{H3} and the definition of $\BGve_{\infty}(\cdot)$ in \eqref{eq.defGVeinf} we conclude that the following limit holds along the positive imaginary axis:
\begin{equation}\label{eq.pointbehav}
 \mbox{ for  a.e. } \Bx \in \Omega,  \  \BGve(\Bx, \mathrm{i}y)\to \BGve_{\infty}(\Bx) \ \text{ as }y\rightarrow \infty.
 \end{equation}
 Now, one can use a similar argument as in step 4 of Proposition \ref{Pro:DtNMapQuadFormIsHerglotz} except with $\varepsilon(\cdot,\omega)$ instead of $\omega \varepsilon(\cdot,\omega)$. More precisely, replacing assumption \textnormal{H4} with \eqref{eq.pointbehav}, the limit $\Go_n\to \Go$ by $\Go_n=\mathrm{i} y_n\to \infty$ and using in the neighborhood of $\infty$ assumption \textnormal{H5}, and  the coercivity of $\varepsilon(\cdot,\omega)$ with coercivity constant $\Gve_0$ [which is easily derived  from the representation \eqref{eq.Herglotzeps} on the positive imaginary axis], one shows that
$$
F_{V_0}(\mathrm{i} y)=\frac{h_{\operatorname{dif},V_0} (\mathrm{i} y)}{\mathrm{i} y} \to \langle[ \Lambda_{\BGve_{\infty}(\cdot)}-\Lambda_{\varepsilon_0}]V_0,\overline{V_0} \rangle_{H^{-\frac{1}{2}}(\partial \Omega), H^{\frac{1}{2}}(\partial \Omega)}=:F_{V_0,\infty}, \mbox{ as } \ y\to +\infty.
$$
It remains to prove that the above limit is non-negative. 
Introducing with $\BE_0=\Pi V_0$ defined  and using the representation of DtN operator  \eqref{eq.factDtNmap} yields:
$$
 \langle[ \Lambda_{\BGve_{\infty}(\cdot)}-\Lambda_{\varepsilon_0}]V_0,\overline{V_0} \rangle_{H^{-\frac{1}{2}}(\partial \Omega), H^{\frac{1}{2}}(\partial \Omega)}=\big([\BGve_{\infty}(\cdot)]_* \BE_0, \BE_0 \big)_{\mathcal{H}}-  \big((\varepsilon_0)_* \BE_0, \BE_0 \big)_{\mathcal{H}} \geq 0.
$$
The last inequality is a consequence of the monotonicity result of Corollary \ref{cor:MonotonocityOfEffOps} since by \eqref{eq.permitivtyob}, for a.e.\ $\Bx \in \Omega$,  $\BGve_{\infty}(\Bx)=\BGve_{\infty}^\dagger(\Bx)$ and $\BGve_{\infty}(\Bx) \geq \varepsilon_0\BI$ and thus in the sense of left multiplicative operators, one has $\BGve_{\infty}(\cdot)\geq \varepsilon_0\BI$.
\end{proof}

\begin{Cor}\label{Cor.boundleadingcoefficient}
If $V_0\in H^{1/2}(\partial\Omega)$ is affine, i.e., $V_0=-{\bf e}_0\cdot \Bx|_{\partial\Omega}$ for a constant vector ${\bf e}_0\in \bbC^d$, then
\begin{equation}\label{eq.boundinf}
F_{V_0, \infty}=\langle[ \Lambda_{\BGve_{\infty}(\cdot)}-\Lambda_{\varepsilon_0}]V_0,\overline{V_0} \rangle_{H^{-\frac{1}{2}}(\partial \Omega), H^{\frac{1}{2}}(\partial \Omega)}\geq  \Big( \big[\big\langle \BGve^{-1}_{\infty}(\cdot)\big\rangle^{-1}- \varepsilon_0\big] {\bf e}_0,{\bf e}_0 \Big)_{\mathcal{H}},
\end{equation}
where
\begin{eqnarray}
 \Big( \big[\big\langle \BGve^{-1}_{\infty}(\cdot)\big\rangle^{-1}- \varepsilon_0\big] {\bf e}_0,{\bf e}_0 \Big)_{\mathcal{H}} &=&    \Big(|\Omega| \Big[  \int_{\Omega} \BGve_{\infty}(\Bx)^{-1} \mathrm{d} \Bx \Big]^{-1}- \varepsilon_0\Big)  |\Omega| \, \, \|{\bf e}_0\|_{\bbC^d}^2 \label{eq.eqboundtensoreq}\\
 &\geq &  |\mathcal{O}| \  \Big(1- \frac{\varepsilon_0} {\varepsilon} \Big) \ \    \frac{  \varepsilon_0 \, |\Omega| \, \, \|{\bf e}_0\|_{\bbC^d}^2}{|\mathcal{O}|\,  \frac{\varepsilon_0}{\varepsilon}+ |\Omega \setminus \mathcal{O}|  }. \label{eq.eqboundtensorineq}
\end{eqnarray}
\end{Cor}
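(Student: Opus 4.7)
The plan is to reduce the quantity $F_{V_0,\infty}$ to a quadratic form on $\mathbb{C}^d$ built from the effective operator $(\BGve_\infty)^D$ associated with affine boundary data, and then to apply the harmonic-mean (Thomson-type) lower bound from Theorem \ref{Thm:SigmaDBndsOfMilton}. The explicit lower bound \eqref{eq.eqboundtensorineq} will then come out of a direct piecewise computation of $\int_\Omega \BGve_\infty^{-1}$ using the pointwise inequality $\BGve_{ob} \geq \varepsilon \BI$.

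First I would check that everything is well-defined: by \eqref{eq.permitivtyob} together with \eqref{eq.defGVeinf}, the tensor $\BGve_\infty$ is real, symmetric, and satisfies $\BGve_\infty \geq \varepsilon_0 \BI$ almost everywhere on $\Omega$, so Prop.\ \ref{pro.DtNDirBVPResults} applies and both $\Lambda_{\BGve_\infty}$ and $\Lambda_{\varepsilon_0}$ exist. Since $V_0=-\mathbf{e}_0\cdot\mathbf{x}|_{\partial\Omega}$ is affine, Theorem \ref{thm.EffOpForDtNMapWithAffBCs} gives $\langle\Lambda_{\BGve_\infty}V_0,\overline{V_0}\rangle = ((\BGve_\infty)^D\mathbf{e}_0,\mathbf{e}_0)_{\mathcal H}$ and analogously for the vacuum case, where by Corollary \ref{cor:BasicPropertiesEffOpDiricZProb}(ii) one has $(\varepsilon_0\BI)_*=\varepsilon_0\BI$ and therefore $(\varepsilon_0\BI)^D\mathbf{e}_0=\varepsilon_0\mathbf{e}_0$. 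Subtracting and invoking \eqref{eq.limitinfherg} of Theorem \ref{thm.differenceHerg}, one obtains
\begin{equation*}
F_{V_0,\infty}=\big(\big[(\BGve_\infty)^D-\varepsilon_0\BI\big]\mathbf{e}_0,\mathbf{e}_0\big)_{\mathcal H}.
\end{equation*}

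Next, since $\BGve_\infty$ is uniformly positive definite, Theorem \ref{Thm:SigmaDBndsOfMilton} is applicable and yields the operator inequality $(\BGve_\infty)^D\geq \langle \BGve_\infty^{-1}\rangle^{-1}$ on $\langle\mathcal{U}\rangle$. Substituting this into the previous identity produces \eqref{eq.boundinf}. The equality \eqref{eq.eqboundtensoreq} is then just an unpacking of the definitions \eqref{eq.average op}: one has $\langle \BGve_\infty^{-1}\rangle=|\Omega|^{-1}\int_\Omega \BGve_\infty^{-1}\md\mathbf{x}$, so $\langle \BGve_\infty^{-1}\rangle^{-1}=|\Omega|\big[\int_\Omega \BGve_\infty^{-1}\big]^{-1}$, and the outer factor $|\Omega|\|\mathbf{e}_0\|^2_{\mathbb{C}^d}$ is generated by the $\mathcal{H}$-inner product of constant fields, $(\mathbf{A}\mathbf{e}_0,\mathbf{e}_0)_{\mathcal H}=|\Omega|(\mathbf{A}\mathbf{e}_0)\cdot\overline{\mathbf{e}_0}$.

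For the final inequality \eqref{eq.eqboundtensorineq}, I would split the integral of $\BGve_\infty^{-1}$ using \eqref{eq.defGVeinf}. The assumption $\BGve_{ob}\geq \varepsilon\BI$ in \eqref{eq.permitivtyob} gives $\BGve_{ob}^{-1}\leq \varepsilon^{-1}\BI$ on $\mathcal{O}$, while $\BGve_\infty^{-1}=\varepsilon_0^{-1}\BI$ on $\Omega\setminus\mathcal{O}$, so
\begin{equation*}
\int_\Omega \BGve_\infty^{-1}\md\mathbf{x}\ \leq\ \Big(\frac{|\mathcal{O}|}{\varepsilon}+\frac{|\Omega\setminus\mathcal{O}|}{\varepsilon_0}\Big)\BI.
\end{equation*}
Inverting this scalar-multiple-of-identity bound reverses the inequality and gives a lower bound for $\big[\int_\Omega \BGve_\infty^{-1}\big]^{-1}$, which after subtracting $\varepsilon_0$ and using $|\Omega|=|\mathcal{O}|+|\Omega\setminus\mathcal{O}|$ to simplify the numerator yields the factor $|\mathcal{O}|(1-\varepsilon_0/\varepsilon)$ in \eqref{eq.eqboundtensorineq}. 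The main obstacle in the proof is not conceptual but bookkeeping: one must track the two factors of $|\Omega|$ (one originating from the normalization in $\langle\cdot\rangle$, the other from integrating a constant-field inner product) and ensure the matrix inverses handle the anisotropy correctly, using that inversion is order-reversing on uniformly positive definite operators.
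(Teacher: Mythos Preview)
Your proposal is correct and follows essentially the same approach as the paper: both reduce $F_{V_0,\infty}$ to a quadratic form via the effective-operator representation for affine boundary data, apply the harmonic-mean lower bound from Theorem~\ref{Thm:SigmaDBndsOfMilton}, and then bound $\int_\Omega \BGve_\infty^{-1}$ piecewise using $\BGve_{\mathrm{ob}}^{-1}\leq \varepsilon^{-1}\BI$ together with the order-reversing property of inversion. The only cosmetic difference is that the paper writes the vacuum term via $\Lambda_{\varepsilon_0}=\Pi^\dagger\varepsilon_0\Pi$ and \eqref{LiftOpOnAffBCs} directly, whereas you invoke Theorem~\ref{thm.EffOpForDtNMapWithAffBCs} for both terms; the content is the same.
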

\begin{proof}
The  representation
\eqref{eq.factDtNmap}  of the DtN operator  $\Lambda_{\BGve_0}= \Pi^\dagger ( \varepsilon_0)_* \Pi=\Pi^\dagger \varepsilon_0 \Pi$ (where $( \varepsilon_0)_* =\varepsilon_0$ by Corollary \ref{cor:BasicPropertiesEffOpDiricZProb}), and the relation 
\eqref{LiftOpOnAffBCs} imply that
$$
\langle[\Lambda_{\BGve_{\infty}(\cdot)}-\Lambda_{\varepsilon_0}]V_0, \overline{V_0} \rangle_{H^{-\frac{1}{2}}(\partial \Omega), H^{\frac{1}{2}}(\partial \Omega)}= \langle \Lambda_{\BGve_{\infty}(\cdot)} (-{\bf e}_0\cdot \Bx), \overline{(-{\bf e}_0\cdot \Bx)}\rangle_{H^{-\frac{1}{2}}(\partial \Omega), H^{\frac{1}{2}}(\partial \Omega)} -( \varepsilon_0 \, {\bf e}_0, {\bf e}_0)_{\mathcal{H}}.
$$
We apply now the lower bound of inequality \eqref{eq.boundefftnesor}   of Theorem \ref{Thm:SigmaDBndsOfMilton} [since $\Ba=\BGve_{\infty}(\cdot)$ is a coercive self-adjoint left multiplicative operator] on the above expression which yields the inequality \eqref{eq.boundinf}.
The equality of \eqref{eq.eqboundtensoreq} is just derived from the definition \eqref{eq.average op} which implies that for constant fields ${\bf e}_0\in \langle\mathcal{U}\rangle$:
\begin{equation}\label{eq.equalityaverage}
\big\langle \BGve^{-1}_{\infty}(\cdot)\big\rangle^{-1} {\bf e}_0=|\Omega| \Big[  \int_{\Omega} \BGve_{\infty}(\Bx)^{-1} \mathrm{d} \Bx \Big]^{-1}{\bf e}_0.
\end{equation}
Then, using the definition \eqref{eq.defGVeinf} together with \eqref{eq.permitivtyob} yields that $$\BGve_{\infty}(\Bx)^{-1} \leq \varepsilon^{-1} \, \BI \ \mbox{ for a.e. }\Bx\in \mathcal{O} \mbox{ and } \BGve_{\infty}(\Bx)^{-1} =\varepsilon_0^{-1} \BI \ \mbox{ for a.e. }\Bx\in \Omega \setminus  \mathcal{O}  $$
implying
\begin{equation}\label{eq.inequalityaverage}
  \int_{\Omega} \BGve_{\infty}(\Bx)^{-1} \mathrm{d} \Bx \leq |\mathcal{O}| \varepsilon^{-1}\BI+   |\Omega \setminus \mathcal{O}|  \,\varepsilon_0^{-1}\BI.
\end{equation}
Finally, combining the equality \eqref{eq.eqboundtensoreq}, \eqref{eq.equalityaverage} and \eqref{eq.inequalityaverage}
and the fact that if  $\mathbb{A}$ and $\mathbb{B}$ are positive invertible operators with $\mathbb{A}\leq \mathbb{B}$ then $\mathbb{B}^{-1}\leq \mathbb{A}^{-1}$ gives
\begin{eqnarray}\label{eq.inequalityaverage2}
\Big( \big[\big\langle \BGve^{-1}_{\infty}(\cdot)\big\rangle^{-1}- \varepsilon_0\big] {\bf e}_0,{\bf e}_0 \Big)_{\mathcal{H}}&\geq& \Big(\frac{|\Omega|}{|\mathcal{O}| \varepsilon^{-1}+   |\Omega \setminus \mathcal{O}|  \,\varepsilon_0^{-1}}- \varepsilon_0\Big) |\Omega| \, \|{\bf e}_0\|_{\bbC^d}^2. 
\end{eqnarray}
Finally, using that $|\mathcal{O}|+ |\Omega\setminus\mathcal{O}|=|\Omega|$, one can rewrite the right hand side of \eqref{eq.inequalityaverage2} as the right hand side of \eqref{eq.eqboundtensorineq} which concludes the proof.
 
\end{proof}

\subsection{A Connection between Stieltjes and Herglotz function}\label{sec-Herg-Stielt}
In this section, we recall a strategy followed in \cite{Cassier:2017:BHF} to construct a Herglotz function from a Stieltjes function that will inherit more properties than a ``standard" Herglotz function. Such properties will be important to develop optimal bounds as in \cite{Cassier:2017:BHF}.

We  introduce  the complex square root by
\begin{equation}\label{eq.racine}
\sqrt{z}=|z|^{\frac{1}{2}}\,e^{\ii\arg z/2} \ \mbox{ if } \ \arg{z} \in (0,2\pi)
\end{equation}
and extend it on the branch cut $\bbR^{+}$ by its limit from the upper-half plane, in other words, the square root of non-negative real number $x$ is given by $\sqrt{x}=|x|^{\frac{1}{2}}$.
 The following proposition gives sufficient conditions to construct a Stieltjes function. It is inspired by  Lemma 8 and Theorem 9  of \cite{Cassier:2017:BHF}. We recall its proof for the reader; see also Corollary 10 in \cite{Cassier:2017:BHF}.

\begin{Pro}\label{cor.HergStielt}
If $S$ a Stieltjes function such that $S(x)\to S_{\infty}$ as $x\to+\infty$ on $\bbR^{+,*}$, then $H$ defined by
\begin{equation}\label{eq.defv}
H(z):=z\, S(-z), \quad  \forall z \in \mathbb{C}\setminus \mathbb{R}^+
\end{equation}
is a Herglotz function which is analytic on $\mathbb{C}\setminus \mathbb{R}^+$ and non-positive (and even negative if $S_{\infty}>0$) on $\mathbb{R}^{-,*}:=(-\infty,0)$. Moreover, $H$ satisfies the following Schwarz reflection principle:
\begin{equation}\label{eq.SymHerg}
\overline{H(\overline{z})}= H(z), \quad \forall z \in \bbC \setminus \bbR^{+},
\end{equation}
and in the associated representation \eqref{eq.defhergl} of $H$ given by Theorem \ref{thm.Herg}, the positive regular Borel measure $\mathrm{m}$ has support included in $\mathbb{R}^+$ and  the coefficient $\alpha=S_{\infty}$.
\end{Pro}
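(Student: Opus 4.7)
The plan is to proceed via the integral representation of Stieltjes functions from Theorem \ref{thm.Stielt} and to convert it into the Herglotz representation from Theorem \ref{thm.Herg}. Since $S$ is Stieltjes, we can write
\begin{equation*}
S(w)=\alpha+\int_{\mathbb{R}^{+}}\frac{\mathrm{d}\mm(\xi)}{\xi+w},\quad w\in\mathbb{C}\setminus\mathbb{R}^{-},
\end{equation*}
with $\alpha\in\mathbb{R}^{+}$, $\mm$ a positive Borel measure on $\mathbb{R}^{+}$ with $\int\mathrm{d}\mm/(1+\xi)<\infty$, and $\alpha=\lim_{x\to+\infty}S(x)=S_{\infty}$. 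Substituting $w=-z$ for $z\in\mathbb{C}\setminus\mathbb{R}^{+}$ (so $-z\in\mathbb{C}\setminus\mathbb{R}^{-}$) and multiplying by $z$, I first expect to write $H(z)=zS(-z)$ explicitly as an integral, then rewrite $z/(\xi-z)=\xi/(\xi-z)-1$ to factor out an integrand already resembling the kernel $1/(\xi-z)$ of \eqref{eq.defhergl}.

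The key computation is then to introduce the new measure $\mathrm{d}\widetilde{\mm}(\xi):=\xi\,\mathrm{d}\mm(\xi)$ on $\mathbb{R}^{+}$ and add and subtract $\xi/(1+\xi^{2})$ inside the integral so as to land exactly on the Nevanlinna form:
\begin{equation*}
H(z)=\alpha z+\beta+\int_{\mathbb{R}}\left(\frac{1}{\xi-z}-\frac{\xi}{1+\xi^{2}}\right)\mathrm{d}\widetilde{\mm}(\xi),
\end{equation*}
with $\beta=-\int_{\mathbb{R}^{+}}\mathrm{d}\mm(\xi)/(1+\xi^{2})\in\mathbb{R}$, $\widetilde{\mm}$ supported in $\mathbb{R}^{+}$, and $\int\mathrm{d}\widetilde{\mm}(\xi)/(1+\xi^{2})=\int\xi\,\mathrm{d}\mm(\xi)/(1+\xi^{2})<\infty$ (which I will check using the elementary bound $\xi/(1+\xi^{2})\leq 2/(1+\xi)$ on $\mathbb{R}^{+}$ combined with the finiteness of $\int\mathrm{d}\mm/(1+\xi)$). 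Once this representation is established, Theorem \ref{thm.Herg} immediately identifies $H$ as a Herglotz function, and Corollary \ref{cor.Herg} identifies the coefficient in front of $z$ as $\alpha=S_{\infty}$ and gives that the support of the measure in the representation of $H$ lies in $\mathbb{R}^{+}$ (since the formulas \eqref{eq.coeffdomherg}--\eqref{eq.mesHerg} vanish on $\mathbb{R}^{-}$, where $H$ extends analytically).

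The remaining properties are then fairly direct. Analyticity of $H$ on $\mathbb{C}\setminus\mathbb{R}^{+}$ follows by composing analyticity of $S$ on $\mathbb{C}\setminus\mathbb{R}^{-}$ with the map $z\mapsto -z$ and multiplying by the entire function $z$. For the Schwarz reflection principle \eqref{eq.SymHerg}, I combine the definition of $H$ with \eqref{eq.StieltSchwprinciple}: for $z\in\mathbb{C}\setminus\mathbb{R}^{+}$,
\begin{equation*}
\overline{H(\overline{z})}=\overline{\overline{z}\,S(-\overline{z})}=z\,\overline{S(\overline{-z})}=z\,S(-z)=H(z).
\end{equation*}
For the sign on $\mathbb{R}^{-,*}$, I note that for $z=-x$ with $x>0$ we have $H(-x)=-x\,S(x)$, and $S(x)\geq\alpha=S_{\infty}\geq 0$ by the integral representation (since the integrand is non-negative for $x>0$), giving $H(-x)\leq 0$, with strict inequality whenever $S_{\infty}>0$.

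The main obstacle I expect is the bookkeeping of the measure conversion and verifying the integrability condition for $\widetilde{\mm}$: this is where one must be careful to correctly absorb the constant term produced by $z/(\xi-z)=\xi/(\xi-z)-1$ into the real parameter $\beta$ and to justify the dominated-convergence--type estimate $\xi/(1+\xi^{2})\leq 2/(1+\xi)$. Everything else (analyticity, sign, symmetry) reduces to short, essentially formal, consequences of the Stieltjes representation once the Herglotz form is in hand.
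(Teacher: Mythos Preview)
Your argument is correct and proceeds by a genuinely different (and somewhat more explicit) route than the paper. The paper verifies that $H$ is Herglotz by checking $\operatorname{Im}H(z)\geq 0$ directly from the Stieltjes representation, and then appeals to the reconstruction formulas \eqref{eq.coeffdomherg}--\eqref{eq.mesHerg} of Corollary~\ref{cor.Herg} (combined with analyticity and real-valuedness of $H$ on $\mathbb{R}^{-,*}$) to identify $\alpha=S_\infty$ and to show that the measure $\mm$ is supported in $\mathbb{R}^{+}$. You instead perform the algebraic identity $z/(\xi-z)=\xi/(\xi-z)-1$, regroup with $\xi/(1+\xi^{2})$, and thereby exhibit the Nevanlinna representation of $H$ explicitly with linear coefficient $S_\infty$ and measure $\mathrm{d}\widetilde{\mm}(\xi)=\xi\,\mathrm{d}\mm(\xi)$ supported in $\mathbb{R}^{+}$; by the ``if'' direction of Theorem~\ref{thm.Herg} this proves $H$ is Herglotz, and by uniqueness (Corollary~\ref{cor.Herg}) the data $(\alpha,\mm)$ of $H$ are read off at once. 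The Schwarz reflection and the sign on $\mathbb{R}^{-,*}$ are handled identically in both proofs. Your approach has the advantage of producing the Herglotz measure of $H$ explicitly in terms of the Stieltjes measure of $S$, at the cost of the small integrability check $\int \xi/(1+\xi^{2})\,\mathrm{d}\mm<\infty$; the paper's approach trades this bookkeeping for an appeal to the abstract reconstruction formulas. Note, incidentally, that once you have exhibited the representation with $\widetilde{\mm}$, the support claim follows immediately from uniqueness and you do not actually need the separate appeal to \eqref{eq.coeffdomherg}--\eqref{eq.mesHerg} that you mention.
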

\begin{proof}
We give the main ideas here and refer to  the proof of Corollary 10 of \cite{Cassier:2017:BHF} for more details.
The fact that $H$ is well-defined, analytic  on $\mathbb{C}\setminus \mathbb{R}^+$ and non-positive on $\mathbb{R}^{-,*}$ is an immediate consequence of the property  of the Stieltjes  function $S$ via the Definition \ref{Def.Stielt}. The fact that $H(z)=z\, S(-z)$  is negative on $\mathbb{R}^{-,*}$ if $S_{\infty}>0$ is an immediate consequence of the representation \eqref{eq.representationStieltjes}  which implies that $S$ is positive on $\mathbb{R}^{+,*}$ if its coefficient $\alpha_{S}=S_{\infty}>0$. Moreover, using the Stieltjes  representation \eqref{eq.representationStieltjes} of $S$, one  checks easily that $\operatorname{Im}H(z)\geq 0$ for $z\in \mathbb{C}^+$. Thus, $H$ is Herglotz function.

The Stieltjes function $S$ satisfies the Schwarz reflection principle \eqref{eq.StieltSchwprinciple}, namely $\overline{S(\overline{z})}=S(z)$ for all $z\in \mathbb{C}\setminus \bbR^-$. Thus, one has 
\begin{equation*}
\forall z\in \bbC \setminus \bbR^+, \quad \overline{H(\overline{z})}=z \, \overline{S(\overline{-z})}=z\, S(-z)=H(z).
\end{equation*}
From the representation \eqref{eq.representationStieltjes} of $S$, one proves also with the dominated convergence Theorem that   $\alpha_{S}=S_{\infty}=\lim_{y\to \infty} S(-\mathrm{i} \, y) $ and thus   \eqref{eq.coeffdomherg} and \eqref{eq.defv} gives $\alpha=S_{\infty}$. Finally, using \eqref{eq.coeffdomherg} and  \eqref{eq.mesHerg} and the fact  $H$ is  analytic  on $\mathbb{C}\setminus \mathbb{R}^+$ and real-valued on $\mathbb{R}^{-,*}$ implies (via the dominated convergence Theorem)  that $\mathrm{m}([a,b])=0$ for any $a<b<0$ and thus $\mathrm{m}$ has support included in $\mathbb{R}^+$.
\end{proof}

\subsection{Construction of a Stieltjes function related to the DtN map}\label{sec-Herg-phys}
To derive analytical bounds on the DtN map as a function of the frequency for the passive cloaking device, we  follow the approach developed in \cite{Cassier:2017:BHF}. 
In that perspective, one needs first to construct a Stieltjes function $S_{V_0}$
using the function $F_{V_0}$  which measures the quality of the cloaking effect of our device and then we will obtain a Herglotz function $H_{V_0}$ with useful properties for deriving bounds in the next section. 
\begin{Lem}\label{Const.Stielt}
Suppose that the cloaking device satisfies assumptions \textnormal{H1--H6} and $\mathrm{H}7$. Then, for any $V_0\in H^{1/2}(\partial \Omega)$,  the function $S_{V_0}$ defined $\forall z \in  (\bbC \setminus \bbR^-)\cup [-\omega_+^2, -\omega_-^2] $ by
\begin{equation}\label{eq.defSvo}
S_{V_0}(z):=F_{V_0}(\sqrt{-z})=\langle[\Lambda_{\BGve(\cdot,\sqrt{-z} )} -\Lambda_{\BGve_0}]V_0,\overline{V_0} \rangle_{H^{-\frac{1}{2}}(\partial \Omega), H^{\frac{1}{2}}(\partial \Omega)}
\end{equation}
is a Stieltjes function which satisfies $\displaystyle \lim_{x\to +\infty} S_{V_0}(x)=F_{V_0,\infty}\geq 0$ [with $F_{V_0,\infty}$ defined in \eqref{eq.limitinfherg}]. Moreover, if $ F_{V_0,\infty}>0$ then $S_{V_0}$ is positive on  $\bbR^{+,*}$. Furthermore, $S_{V_0}$ is continuous on $ (\bbC \setminus \bbR^-) \cup [-\omega_+^2, -\omega_-^2] $.
Finally, if the reciprocity assumption $\mathrm{H}7$ is not satisfied, all these properties remain valid if $V_0$ is a real-valued function in $H^{1/2}(\partial \Omega)$.
\end{Lem}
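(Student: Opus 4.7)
The plan is to derive an explicit Stieltjes integral representation of $S_{V_0}$ from the Nevanlinna representation of $h_{\operatorname{dif},V_0}$, via the change of variable $z = -\omega^2$, and then appeal to Theorem \ref{thm.Stielt}.

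First, I will extract the Herglotz data of $h_{\operatorname{dif},V_0}$. Theorem \ref{thm.differenceHerg} gives that $h_{\operatorname{dif},V_0}(\omega) = \omega F_{V_0}(\omega)$ is a Herglotz function on $\bbC^+$, continuous on $[\omega_-,\omega_+]$, with $h_{\operatorname{dif},V_0}(\mathrm{i}y)/(\mathrm{i}y) \to F_{V_0,\infty} \geq 0$, and—under the reciprocity hypothesis H7, or when $V_0$ is real-valued—satisfies the symmetry $h_{\operatorname{dif},V_0}(-\overline{\omega}) = -\overline{h_{\operatorname{dif},V_0}(\omega)}$. I apply Theorem \ref{thm.Herg} to write
\begin{gather*}
h_{\operatorname{dif},V_0}(\omega) = \alpha\,\omega + \beta + \int_\bbR\left(\frac{1}{\xi-\omega} - \frac{\xi}{1+\xi^2}\right) \mathrm{d}\mu(\xi).
\end{gather*}
Corollary \ref{cor.Herg} combined with the positive-imaginary-axis limit from Theorem \ref{thm.differenceHerg}(4) then identifies $\alpha = F_{V_0,\infty}$. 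The symmetry, together with the uniqueness of the Nevanlinna triple $(\alpha,\beta,\mu)$ in Corollary \ref{cor.Herg}, forces $\beta = 0$ and the measure $\mu$ to be invariant under $\xi \mapsto -\xi$.

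Second, I will exploit the evenness of $\mu$ to reduce the integral. Pairing the contributions from $\pm\xi$ via the identity $\frac{1}{\xi-\omega} + \frac{1}{-\xi-\omega} = \frac{2\omega}{\xi^2-\omega^2}$, and handling any atom at $\xi = 0$ separately, yields
\begin{gather*}
h_{\operatorname{dif},V_0}(\omega) = F_{V_0,\infty}\,\omega - \frac{\mu(\{0\})}{\omega} + \int_{(0,\infty)} \frac{2\omega}{\xi^2-\omega^2}\,\mathrm{d}\nu(\xi), \quad \nu := \mu|_{(0,\infty)}.
\end{gather*}
Dividing by $\omega$, substituting $z = -\omega^2$, and pushing $\nu$ forward under $\eta = \xi^2$ into a positive Borel measure $\sigma$ on $[0,\infty)$ that absorbs the atom $\mu(\{0\})$ at $\eta = 0$, I arrive at
\begin{gather*}
S_{V_0}(z) = F_{V_0,\infty} + \int_{[0,\infty)} \frac{\mathrm{d}\sigma(\eta)}{\eta+z},
\end{gather*}
with the integrability condition $\int \mathrm{d}\sigma(\eta)/(1+\eta) < \infty$ inherited from $\int \mathrm{d}\mu(\xi)/(1+\xi^2) < \infty$. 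Theorem \ref{thm.Stielt} now identifies $S_{V_0}$ as a Stieltjes function with leading coefficient $F_{V_0,\infty}$, from which $S_{V_0}(x) \to F_{V_0,\infty}$ as $x \to +\infty$ follows directly.

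The remaining properties are then immediate. Since the integrand is positive on $\bbR^{+,*}$, one has $S_{V_0}(x) \geq F_{V_0,\infty}$ there, yielding strict positivity when $F_{V_0,\infty} > 0$. Continuity on $\bbC \setminus \bbR^-$ is automatic from analyticity, and continuity on $[-\omega_+^2,-\omega_-^2]$ follows from the continuity of $F_{V_0}$ on $[\omega_-,\omega_+]$ (Theorem \ref{thm.differenceHerg}) combined with the continuity of $z \mapsto \sqrt{-z}$ onto $[\omega_-,\omega_+]$. The case where H7 fails but $V_0$ is real proceeds identically, since Theorem \ref{thm.differenceHerg} supplies the needed symmetry of $h_{\operatorname{dif},V_0}$ in that setting as well. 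The chief obstacle is the symmetry-based reduction of the Nevanlinna representation: pinning down $\beta = 0$ and the evenness of $\mu$ from $h(-\overline{\omega}) = -\overline{h(\omega)}$ relies on the uniqueness of the Nevanlinna triple, after which the derivation of the Stieltjes form is a routine algebraic manipulation followed by a change of variable.
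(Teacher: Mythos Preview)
Your proof is correct and follows essentially the same route as the paper: both start from the Nevanlinna representation of $h_{\operatorname{dif},V_0}$, use the symmetry \eqref{eq.Hergsym} to force $\beta=0$ and evenness of the measure, and then pass to the variable $z=-\omega^2$. The only cosmetic difference is in the final step: the paper stops at the integral formula $F_{V_0}(\omega)=F_{V_0,\infty}-\int_{\bbR}\frac{\mathrm{d}\mm_{V_0}(\xi)}{\omega^2-\xi^2}$ and verifies Definition~\ref{Def.Stielt} directly by reading off the sign of $\operatorname{Im}S_{V_0}$ on $\bbC^+$ and the non-negativity on $\bbR^{+,*}$, whereas you push the measure forward under $\xi\mapsto\xi^2$ to obtain the Stieltjes representation explicitly and invoke Theorem~\ref{thm.Stielt}. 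Both executions are equivalent and equally short.
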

\begin{proof}
This proof is based on the relation 
\eqref{eq.homogDtnrelation} between  the function $F_{V_0}$ 
and the Herglotz function $h_{\operatorname{dif}, V_0}$ introduced in Theorem \ref{thm.differenceHerg}.\\[4pt]
Let $\Go\in \bbC^+\cup [\Go_-, \Go_+] $ then by the identity \eqref{eq.homogDtnrelation}, one has 
$F_{V_0}(\Go)=h_{\operatorname{dif}, V_0}(\Go)/\Go$. Thus,  by the assertion 1 of Theorem \eqref{thm.differenceHerg}, $F_{V_0}$ is analytic on $\bbC^+$ and continuous on $[\Go_-,\Go_+]$.\\[4pt]
\noindent To prove that $S_{V_0}$ is a Stieltjes function, we prove an integral representation for $F_{V_0}$.
This part of the proof is similar to the one done in Theorem 4.5 of \cite{Cassier:2017:MMD} to obtained the representation of the  permittivity as a function of the frequency in a passive electromagnetic linear system. Using the fact that $h_{\operatorname{dif}, V_0}(\Go)$ is Herglotz function satisfying the ``symmetry relation"  \eqref{eq.Hergsym} and  the limit behavior \eqref{eq.limitinfherg}, we get by Theorem \ref{thm.Herg} and  Corollary \ref{cor.Herg} that it admits the following representation:
\begin{equation}\label{eq.hergdtn}
h_{\operatorname{dif}, V_0}(\Go)=F_{V_0, \infty}\, \Go+ \displaystyle \int_{\bbR} \left(  \frac{1}{\xi-\Go}- \frac{\xi}{1+\xi^2}\right)\md \mm_{V_0}( \xi), \quad \forall \Go \in \bbC^+,
\end{equation}
where we use that the ``symmetry relation" $h_{\operatorname{dif},V_0}(-\overline{\omega})=-\overline{h_{\operatorname{dif}, V_0}(\omega)}$ implies that $h_{\operatorname{dif}, V_0}$ is purely imaginary on $\mathrm{i} \bbR^{+,*}$ which leads to $\operatorname{Re}h_{\operatorname{dif}, V_0}(\mathrm{i})=0$. Moreover, we point out that this ``symmetry relation"   implies that $\operatorname{Im}h_{\operatorname{dif}, V_0}(-\overline{\omega})=\operatorname{Im}h_{\operatorname{dif}, V_0}(\Go)$. Thus, with \eqref{eq.coeffdomherg} and \eqref{eq.mesHerg}, one shows that  the positive regular Borel measure associated to $h_{\operatorname{dif}, V_0}$ by Theorem \ref{thm.Herg} is ``even" in the sense that $\mathrm{m}_{V_0}(B)=\mathrm{m}_{V_0}( -B)$ for any Borel set $B$. 
Hence \eqref{eq.hergdtn} gives that  
\begin{equation}\label{eq.stieldtn1}
F_{V_0}(\Go)=\frac{h_{\operatorname{dif}, V_0}(\Go)}{\Go}=F_{V_0, \infty}+\frac{1}{\Go}  \int_{\bbR} \left(  \frac{1}{\xi-\Go}- \frac{\xi}{1+\xi^2}\right)\md \mm_{V_0}( \xi)  \quad \forall \Go \in \bbC^+.
\end{equation}
As the ``symmetry relation" \eqref{eq.Hergsym} implies that $F_{V_0}(\Go)=\overline{F_{V_0}(-\overline{\Go})}$ on $\bbC^+$, one can use that $F_{V_0}(\Go)=1/2[F_{V_0}(\Go)+\overline{F_{V_0}(-\overline{\Go})}]$ alongside with formula \eqref{eq.stieldtn1} at $\omega$ and $-\overline{\omega}\in \bbC^+$ to get
\begin{equation}\label{eq.stieldtn2}
F_{V_0}(\Go)=F_{V_0, \infty}- \int_{\bbR}  \frac{\md \mm_{V_0}( \xi)}{\Go^2-\xi^2},  \quad \forall \Go \in \bbC^+ .
\end{equation}
Thus, one gets from the above formula that  for all $\Go\in \bbC^+ \cap \{ u \in \bbC\mid \pm \operatorname{Re}(u)>0\}$:
\begin{equation}\label{eq.Stieltprop}
 \pm \operatorname{Im}F_{V_0}(\Go)\geq 0 \
\mbox{ and } \ \forall y>0, \ F_{V_0}(\mathrm{i} y)=  F_{V_0, \infty}+\int_{\xi \in \mathbb{R}}\frac{\md \mm_{V_0}( \xi) }{y^2+\xi^2}\geq F_{V_0,\infty}\geq 0
\end{equation}
since  $\mm_{V_0}$ is a positive measure, and $F_{V_0,\infty}\geq 0$ by \eqref{eq.defGVeinf}.\\[4pt]
\noindent By virtue of the definition of the complex square root \eqref{eq.racine} with a branch cut on $\mathbb{R}^+$, one gets that $z \mapsto \omega=\sqrt{-z}$ is analytic on $\bbC\setminus (-\infty,0]$ and maps:  $\bbC\setminus (-\infty,0]$  to $\bbC^+$ and $\bbC^{\pm}$ to $\bbC^{+} \cap \{ \mp \operatorname{Re}(z)\geq 0\}$. Thus, one deduces by composition that $S_{V_0}$ is well-defined and analytic on $\bbC\setminus (-\infty,0]$. Furthermore,  by \eqref{eq.Stieltprop},  $\operatorname{Im}F_{V_0}(\omega)\leq 0$ for $\Go\in \bbC^{+} \cap \{  \operatorname{Re}(u)\leq 0\}$, thus  by composition, $\operatorname{Im}S_{V_0}(z)\leq 0$ for $z\in \bbC^+$. In addition, if $y>0$, then by \eqref{eq.Stieltprop},  $S_{V_0}(y)=F_{V_0}(\mathrm{i} \, y^{1/2})\geq 0$.
Thus, by Definition \ref{Def.Stielt},  $S_{V_0}$ is Stieltjes function. Moreover, if $ F_{V_0,\infty}>0$ then (by \eqref{eq.Stieltprop}) $S_{V_0}>0$ on  $\bbR^{+,*}$ . \\[4pt]
Furthermore,  by  \eqref{eq.limitinfherg}, one has on $\bbR^{+,*}$: $S_{V_0}(y)=F_{V_0}(\mathrm{i} \, y^{1/2})\to F_{V_0,\infty}\geq 0$ when $y\to +\infty$. Then, using the definition of the complex square root  $z \mapsto \omega=\sqrt{-z}$  maps continuously $(\bbC\setminus \bbR^-) \cup [-\omega_+^{2}, -\omega_-^2]$ to  $\bbC^+ \cup [\Go_-, \Go_+]$, one gets using assertion 1 and 3 of Theorem \ref{thm.differenceHerg} that by composition $S_{V_0}$ is continuous on  $ (\bbC \setminus \bbR^-) \cup [-\omega_+^2, -\omega_-^2] $.\\[4pt]
Finally, using Theorem \ref{thm.differenceHerg},  if H7 does not hold, our proof is unchanged if $V_0\in H^{1/2}(\partial \Omega)$ is a real-valued function since the ``symmetry relation" \eqref{eq.Hergsym} holds in this case as well.
\end{proof}

Combining the preceding Lemma \ref{Const.Stielt} and Corollary \ref{cor.HergStielt}  yields directly the following theorem. This result establishes the existence of a Herglotz function related to DtN map. This function incorporates on one hand all the properties of our physical passive system. On the other hand, it is  directly related to the function $F_{V_0}$, introduced in \eqref{eq.defFV}, whose modulus measures quantitatively the quality of the cloaking.
Moreover, the properties of these functions allows us to apply the bounds developed in \cite{Cassier:2017:BHF} in the general context of  passive linear systems.
\begin{Thm}\label{Thm.Hergphys}
Suppose that the cloaking device satisfies assumptions \textnormal{H1--H6} and 
$\mathrm{H}7$. Then, for any $V_0\in H^{1/2}(\partial \Omega)$, the function $H_{V_0}$ defined for all $z \in (\mathbb{C}\setminus \mathbb{R}^+ )\cup [\omega_-^2, \omega_+^2]$ by
\begin{gather}\label{eq.defHVo}
H_{V_0}(z)= z S_{V_0}(-z) = z F_{V_0}(\sqrt{z})= z\langle[\Lambda_{\BGve(\cdot,\sqrt{z} )} -\Lambda_{\BGve_0}]V_0,\overline{V_0} \rangle_{H^{-\frac{1}{2}}(\partial \Omega), H^{\frac{1}{2}}(\partial \Omega)} 
\end{gather}
is a Herglotz function which is analytic on $\mathbb{C}\setminus \mathbb{R}^+$, continuous on $[\omega_-^2, \omega_+^2]$, and non-positive (and even negative if the non-negative coefficient $F_{V_0,\infty}$ defined in \eqref{eq.limitinfherg} is positive) on $\mathbb{R}^{-,*}$. Moreover,  
$H_{V_0}$ satisfies the following Schwarz reflection principle:
\begin{equation}\label{eq.SymHerg2}
\overline{H_{V_0}(\overline{z})}= H_{V_0}(z), \quad \forall z \in \bbC \setminus \bbR^{+},
\end{equation}
and in the associated representation \eqref{eq.defhergl} of  $H_{V_0}$ given by Theorem \ref{thm.Herg}, the positive regular Borel measure $\mathrm{m}$ has support included in $\mathbb{R}^+$ and the coefficient $\alpha=F_{V_0,\infty}$. 
Furthermore, if the reciprocity assumption $\mathrm{H}7$ is not satisfied, all these properties remain valid if $V_0$ is a real-valued function in $H^{1/2}(\partial \Omega)$.
\end{Thm}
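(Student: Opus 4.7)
The plan is straightforward: the theorem is essentially a repackaging of Lemma \ref{Const.Stielt} through the general mechanism of Proposition \ref{cor.HergStielt} (the Stieltjes-to-Herglotz construction), so the main work has already been done in those earlier results and this proof will be largely an assembly.

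First, I would apply Lemma \ref{Const.Stielt} to the fixed $V_0\in H^{1/2}(\partial\Omega)$ (or to a real-valued $V_0$ if H7 fails), which under the hypotheses H1--H6 (plus H7 or the real-valuedness) yields that the function $S_{V_0}(z)=F_{V_0}(\sqrt{-z})$ defined on $(\mathbb{C}\setminus\mathbb{R}^-)\cup[-\omega_+^2,-\omega_-^2]$ is a Stieltjes function, continuous on this set, with $\lim_{x\to+\infty}S_{V_0}(x)=F_{V_0,\infty}\geq 0$. I would also retain from that lemma the implication that if $F_{V_0,\infty}>0$ then $S_{V_0}>0$ on $\mathbb{R}^{+,*}$.

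Next, I would invoke Proposition \ref{cor.HergStielt} with $S=S_{V_0}$ and $S_\infty=F_{V_0,\infty}$: this immediately delivers that $H_{V_0}(z)=zS_{V_0}(-z)$ is a Herglotz function, analytic on $\mathbb{C}\setminus\mathbb{R}^+$, satisfies the Schwarz reflection principle \eqref{eq.SymHerg2}, is non-positive on $\mathbb{R}^{-,*}$ (and negative there when $F_{V_0,\infty}>0$, using the positivity of $S_{V_0}$ on $\mathbb{R}^{+,*}$ combined with the factor $z<0$), and in its Nevanlinna representation \eqref{eq.defhergl} has measure supported in $\mathbb{R}^+$ with leading coefficient $\alpha=S_\infty=F_{V_0,\infty}$. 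The two formulas for $H_{V_0}$ in \eqref{eq.defHVo} are identified by noting $S_{V_0}(-z)=F_{V_0}(\sqrt{-(-z)})=F_{V_0}(\sqrt{z})$, which together with the definition \eqref{eq.defFV} of $F_{V_0}$ in terms of the DtN maps yields the third expression.

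It only remains to extend $H_{V_0}$ continuously to $[\omega_-^2,\omega_+^2]$: since $z\mapsto -z$ maps $[\omega_-^2,\omega_+^2]$ bijectively and bicontinuously to $[-\omega_+^2,-\omega_-^2]$, and since $S_{V_0}$ was shown in Lemma \ref{Const.Stielt} to be continuous on $(\mathbb{C}\setminus\mathbb{R}^-)\cup[-\omega_+^2,-\omega_-^2]$, the product $z\mapsto zS_{V_0}(-z)$ is continuous on $(\mathbb{C}\setminus\mathbb{R}^+)\cup[\omega_-^2,\omega_+^2]$. The H7-free variant follows in exactly the same way, because Lemma \ref{Const.Stielt} already isolates the real-valued $V_0$ case. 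I do not expect any genuine obstacle here, since the analytic heavy lifting (Herglotz character of $h_{\operatorname{dif},V_0}$, the symmetry forcing the measure to be even, the change of variables $\omega=\sqrt{-z}$, and the continuity up to the cut) is entirely absorbed in Lemma \ref{Const.Stielt} and the abstract Stieltjes-Herglotz correspondence of Proposition \ref{cor.HergStielt}.
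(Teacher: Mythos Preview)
Your proposal is correct and follows essentially the same approach as the paper, which simply states that the theorem follows directly by combining Lemma \ref{Const.Stielt} with Proposition \ref{cor.HergStielt}. You have in fact spelled out the assembly more explicitly than the paper does, including the continuity extension to $[\omega_-^2,\omega_+^2]$ via the change of variables $z\mapsto -z$.
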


\section{Quantitative bounds on passive cloaking over a frequency interval}\label{sec:QuantiativeBndsPassiveCloakingFreqInterval}

\subsection{Perfect cloaking cannot occur}\label{sec:PerfectCloakingCantOccur}
As in \cite{Cassier:2017:BHF} for the far-field cloaking problem,  we show in this section that perfect cloaking, in the sense of Definition \ref{Def-perfectcloaking}, cannot occur on any positive finite frequency interval. 
\begin{Thm}\label{Pro.perfectcloaking}
Suppose that the cloaking device satisfies assumptions \textnormal{H1--H6} then the obstacle
 is not perfectly cloaked on the frequency interval.
\end{Thm}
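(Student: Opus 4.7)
The plan is to argue by contradiction. Assume perfect cloaking on $[\omega_-,\omega_+]$ holds; I would then select a \emph{single} convenient boundary potential rather than try to exploit all of $H^{1/2}(\partial\Omega)$. Fix any nonzero vector $\mathbf{e}_0\in\bbR^d$ and set $V_0 = -\mathbf{e}_0\cdot\Bx|_{\partial\Omega}$. The definition of perfect cloaking (Definition \ref{Def-perfectcloaking}) then forces $F_{V_0}(\omega)=0$ for every $\omega\in[\omega_-,\omega_+]$, and the goal is to propagate this vanishing all the way to the positive imaginary axis in order to kill the high-frequency limit $F_{V_0,\infty}$, which Corollary \ref{Cor.boundleadingcoefficient} guarantees to be strictly positive in this affine setting.

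Next, I would invoke the analyticity of $F_{V_0}$ on $\bbC^+$ together with its continuity up to $[\omega_-,\omega_+]$. Both follow from Proposition \ref{Pro:DtNMapQuadFormIsHerglotz}, applied to the two scalar Herglotz functions $h_{V_0}$ and $h_{\operatorname{vac},V_0}$, combined with the identity $F_{V_0}(\omega)=h_{\operatorname{dif},V_0}(\omega)/\omega$ from assertion 3 of Theorem \ref{thm.differenceHerg}; the division by $\omega$ is harmless since $0\notin[\omega_-,\omega_+]\cup\bbC^+$. Because $F_{V_0}$ vanishes, and in particular is real-valued, on the open interval $(\omega_-,\omega_+)$, the Schwarz reflection principle produces an analytic extension of $F_{V_0}$ to the connected open set $\bbC^+\cup(\omega_-,\omega_+)\cup\bbC^-$. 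This extension vanishes on $(\omega_-,\omega_+)$, whose points accumulate in the domain, so the identity theorem forces the extension --- and hence $F_{V_0}$ itself --- to vanish throughout $\bbC^+$.

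To close the argument, I would pass to the limit $y\to+\infty$ along $\omega=\mathrm{i}y$. Assertion 4 of Theorem \ref{thm.differenceHerg} gives $F_{V_0}(\mathrm{i}y)\to F_{V_0,\infty}$, so the previous step yields $F_{V_0,\infty}=0$. This contradicts Corollary \ref{Cor.boundleadingcoefficient}: since $\mathbf{e}_0\neq 0$ and $\varepsilon>\varepsilon_0$, that corollary supplies the strict lower bound
\begin{equation*}
F_{V_0,\infty} \;\geq\; |\mathcal{O}|\Bigl(1-\tfrac{\varepsilon_0}{\varepsilon}\Bigr)\,\frac{\varepsilon_0\,|\Omega|\,\|\mathbf{e}_0\|_{\bbC^d}^2}{|\mathcal{O}|\tfrac{\varepsilon_0}{\varepsilon}+|\Omega\setminus\mathcal{O}|} \;>\;0,
\end{equation*}
which is incompatible with $F_{V_0,\infty}=0$ and completes the contradiction.

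The main subtlety, in my view, is not the final comparison of bounds but the analytic continuation step: it requires that $F_{V_0}$, built from a DtN operator whose dependence on $\omega$ is governed by the full passive permittivity, be genuinely analytic on $\bbC^+$ and genuinely continuous up to the real boundary interval. These two regularity properties are exactly what the Herglotz-function machinery developed in Section \ref{sec:AnalyticPropertiesDtNMap} was designed to provide; once they are available, Schwarz reflection together with the volume-and-contrast bound of Corollary \ref{Cor.boundleadingcoefficient} finishes the proof almost immediately.
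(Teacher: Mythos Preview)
Your proof is correct and follows essentially the same strategy as the paper: choose a nonzero affine boundary potential, use analytic continuation via Schwarz reflection and the identity theorem to propagate the vanishing of $F_{V_0}$ from $[\omega_-,\omega_+]$ into the upper half-plane, and then contradict the strict positivity of $F_{V_0,\infty}$ from Corollary~\ref{Cor.boundleadingcoefficient}. The only cosmetic difference is that the paper carries out the reflection/continuation on the auxiliary function $H_{V_0}(z)=zF_{V_0}(\sqrt{z})$ of Theorem~\ref{Thm.Hergphys} (which is already analytic on $\bbC\setminus\bbR^+$ and satisfies a built-in reflection symmetry), whereas you reflect $F_{V_0}$ directly across $(\omega_-,\omega_+)$; your route is slightly more direct for this theorem alone, while the paper's choice dovetails with the $H_{V_0}$-based sum-rule bounds developed afterwards.
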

\begin{proof}
Assume by contradiction that perfect cloaking occurs and let $V_0$ be any affine boundary condition in $H^{1/2}(\partial \Omega)$ defined by $V_0(\Bx)=- {\bf e}_0\cdot \Bx$ on $\partial \Omega$ with a constant vector ${\bf e}_0 \in \bbC^d\setminus \{0\}$. We point out if H7 not satisfied (i.e., if the cloak contains a non-reciprocal material), we further assume  that $V_0$ is real-valued and hence ${\bf e}_0\in \bbR^d$. Then perfect cloaking implies that $F_{V_0}(\Go)=0$ for all $\Go$ on the bandwith $[\Go_-, \Go_+]$. Thus, by Theorem \ref{Thm.Hergphys}, $H_{V_0}(z)=0$ for all $z\in [\Go_-^2, \Go_+^2]$ and using the Schwarz reflection  principle [see \eqref{eq.SymHerg2}], $H_{V_0}$ defined by \eqref{eq.defHVo} is analytic on $(\mathbb{C}\setminus \mathbb{R}^+ )\cup (\omega_-^2, \omega_+^2)$ and vanishes on $(\omega_-^2, \omega_+^2)$ implying it is identically equal to zero on $(\mathbb{C}\setminus \mathbb{R}^+ )\cup (\omega_-^2, \omega_+^2)$. Thus, as $H_{V_0}$ is the zero Herglotz function, its leading coefficient $\alpha$ is zero. 
However, the leading coefficient of $H_{V_0}$ is  $F_{V_0,\infty}$ (see Theorem \ref{Thm.Hergphys}) which is  positive by Corollary \ref{Cor.boundleadingcoefficient}. This is a contradiction and hence perfect cloaking does not occur.
\end{proof}
As mentioned, this result prevents perfect cloaking, but it does not prevent approximate cloaking since it is based 
on an analytic continuation argument and analytic continuation is (exponentially) unstable. 
Roughly speaking, it means that if one takes two disjoint open sets  in the domain of analyticity of an analytic function, this function could be arbitrary small in one of these open sets  and enormous in the other \cite{Vasquez:2009:AEC, Grabovsky:2021:OEE}. Therefore, one really needs quantitative bounds to impose fundamental limits on  passive cloaking over the positive frequency interval $[\Go_-, \Go_+]$. This is the aim of the following subsections. 
\subsection{The sum rules approach and general bounds}\label{sec:SumRulesApprGenBounds}
In this subsection, we recall  general bounds on Herglotz functions derived in \cite{Cassier:2017:BHF} for passive linear systems based on the following sum rules theorem proved in \cite{Bernland:2011:SRC}. 
\begin{Thm}\label{Thm.sumrules}
Assume $h:\mathbb{C}^+\rightarrow \mathbb{C}$ be a Herglotz function which admits the following asymptotic expansions along the positive imaginary axis (i.e., $z\in \mathrm{i}\mathbb{R}^{+,*}$):
\begin{eqnarray*}
&&h(z)=a_{-1} \,z^{-1}+o(z^{-1}) \ \mbox{ as } |z| \to  0, \\[10pt] 
\mbox{ and } && h(z)=  b_{-1}\, z^{-1}+o(z^{-1})\ \mbox{ as }  |z|  \to +\infty,
\end{eqnarray*}
with $a_{-1}$ and $b_{-1} \in \bbR$. Then the following identity holds 
\begin{equation}\label{eq.sumrules}
\lim_{\eta \to 0^{+}}\lim_{y\to 0^{+}}\frac{1}{\pi} \int_{\eta<|x|<\eta^{-1}} \Imag h (x+\ii y) \, \md x=a_{-1}-b_{-1}.
\end{equation}
\end{Thm}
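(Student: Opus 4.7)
The plan is to invoke the Nevanlinna representation (Theorem \ref{thm.Herg}) and then compute everything in terms of the associated spectral measure. Since $h$ is Herglotz, write
\begin{equation*}
h(z) = \alpha z + \beta + \int_{\mathbb{R}}\left(\frac{1}{\xi-z} - \frac{\xi}{1+\xi^2}\right)\mathrm{d}\mm(\xi), \quad z \in \mathbb{C}^+,
\end{equation*}
with $\alpha \geq 0$, $\beta \in \mathbb{R}$, and $\mm$ a positive Borel measure satisfying $\int(1+\xi^2)^{-1}\mathrm{d}\mm(\xi) < \infty$.

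First I would extract $\alpha$ and $\mm(\mathbb{R})$ from the asymptotic at infinity. Setting $z = \mathrm{i}y$ and computing $\mathrm{i}y\,h(\mathrm{i}y)$, separating real and imaginary parts gives
\begin{equation*}
\operatorname{Re}[\mathrm{i}y\, h(\mathrm{i}y)] = -\alpha y^2 - \int_{\mathbb{R}}\frac{y^2}{\xi^2+y^2}\,\mathrm{d}\mm(\xi).
\end{equation*}
Since the integral is monotone nondecreasing in $y$ with limit $\mm(\mathbb{R})\in[0,\infty]$, the hypothesis $\mathrm{i}y\,h(\mathrm{i}y) \to b_{-1}$ forces $\alpha = 0$ and $\mm(\mathbb{R}) = -b_{-1} < \infty$; in particular $\mm$ is a \emph{finite} measure. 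The low-frequency asymptotic combined with the standard expansion \eqref{eq.asymherglotzfunction} gives $a_{-1} = -\mm(\{0\})$. So the target quantity satisfies $a_{-1} - b_{-1} = \mm(\mathbb{R}\setminus\{0\})$.

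Next I would compute the left-hand side of \eqref{eq.sumrules}. Because $\alpha = 0$, differentiating the representation under the integral yields the Poisson-kernel formula
\begin{equation*}
\operatorname{Im} h(x+\mathrm{i}y) = \int_{\mathbb{R}}\frac{y}{(\xi-x)^2+y^2}\,\mathrm{d}\mm(\xi), \quad y > 0.
\end{equation*}
Fubini's theorem (integrand nonnegative) gives
\begin{equation*}
\frac{1}{\pi}\int_{\eta<|x|<\eta^{-1}}\operatorname{Im} h(x+\mathrm{i}y)\,\mathrm{d}x = \int_{\mathbb{R}} P_{\eta,y}(\xi)\,\mathrm{d}\mm(\xi),
\end{equation*}
where $P_{\eta,y}(\xi) := \frac{1}{\pi}\int_{\eta<|x|<\eta^{-1}}\frac{y}{(\xi-x)^2+y^2}\,\mathrm{d}x$ can be written explicitly as a sum of four $\arctan$ terms. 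Elementary analysis of these terms shows that $0 \leq P_{\eta,y} \leq 1$ uniformly and, pointwise a.e.\ in $\xi$, $P_{\eta,y}(\xi) \to \mathbf{1}_{\{\eta<|\xi|<\eta^{-1}\}}(\xi)$ as $y \to 0^+$. Since $\mm$ is finite, dominated convergence yields $\lim_{y\to 0^+}\int P_{\eta,y}\,\mathrm{d}\mm = \mm\bigl(\{\xi : \eta<|\xi|<\eta^{-1}\}\bigr)$. Finally, letting $\eta \to 0^+$, monotone convergence gives $\mm(\mathbb{R}\setminus\{0\}) = a_{-1} - b_{-1}$, completing the proof.

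There is no single hard step here; the only subtlety is to justify that the asymptotic at infinity genuinely forces \emph{finiteness} of $\mm$, which is precisely what allows the dominated convergence argument to go through. The separate treatment of the endpoint $\xi=0$ (excluded by the $\eta$-cut) is exactly what explains why $a_{-1}$ and $b_{-1}$ enter via their difference rather than sum.
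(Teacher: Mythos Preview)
The paper does not give its own proof of this theorem: it is quoted as the sum rules theorem ``proved in \cite{Bernland:2011:SRC}'' and simply cited. So there is nothing to compare against within the paper itself.

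Your argument is correct and is essentially the standard proof. One small point of care: when you write ``pointwise a.e.\ in $\xi$'' for the convergence $P_{\eta,y}(\xi)\to\mathbf{1}_{\{\eta<|\xi|<\eta^{-1}\}}(\xi)$, this is a.e.\ with respect to Lebesgue measure, but dominated convergence is being applied against $\mm$. If $\mm$ has atoms at $\pm\eta$ or $\pm\eta^{-1}$, the pointwise limit there is $1/2$ rather than the indicator value. This does not hurt you: by DCT the inner limit equals
\[
\mm(\{\eta<|\xi|<\eta^{-1}\}) + \tfrac{1}{2}\,\mm(\{|\xi|\in\{\eta,\eta^{-1}\}\}),
\]
which is sandwiched between $\mm(\{\eta<|\xi|<\eta^{-1}\})$ and $\mm(\{\eta\le|\xi|\le\eta^{-1}\})$, and both bounds tend to $\mm(\mathbb{R}\setminus\{0\})$ as $\eta\to 0^+$. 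So the conclusion stands, but it is worth saying this explicitly rather than hiding it behind ``a.e.''
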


We observe that the above sum-rule theorem cannot be applied directly to the Herglotz function $H_{V_0}$ (defined in Theorem~\ref{Thm.Hergphys}) associated with our physical system. Indeed, along the positive imaginary axis $\mathrm{i}\mathbb{R}^{+,*}$, this function does not decay to $0$ but instead blows up linearly at high frequencies. Consequently, we will apply the theorem to other Herglotz functions obtained by composing $H_{V_0}$ [see~\eqref{eq.defHmuVO}] with  suitable  Herglotz functions $h_{\mu}$ associated to Borel probability measures $\mu$ [see \eqref{eq.Hergprob}].

The bounds derived in \cite{Cassier:2017:BHF} generalize the ones developed in \cite{Gustafsson:2010:SRP,Bernland:2011:SRC} since they apply to a class of Herglotz functions associated to Borel probability measures [see \eqref{eq.Hergprob}], whereas in  \cite{Gustafsson:2010:SRP,Bernland:2011:SRC}, the authors  consider only the case of an Herglotz function associated with a uniform probability measure on a fintie interval to derive their bounds. Moreover, they are optimal in the sense of Theorem \ref{thm.mesopt} as they maximize the left-hand side of the inequality \eqref{eq.sumruleineq} derived from the sum rules \eqref{eq.sumrules}  over the finite interval $[x_-,x_+]$. Furthermore, we will see in Section \ref{sec-losslesscase} that if the cloak is lossless then the bounds are sharp in the sense that one can exhibit an Herglotz function $H_{V_0}$ for which the inequality becomes an equality. 

To derive these bounds, we being by introducing the Herglotz function $h_{\mu}$ defined by:
\begin{equation}\label{eq.Hergprob}
h_{\mu}(z)=\int_{\mathbb{R}} \frac{\md  \mu( \xi) }{\xi-z} , \ \forall z\in \bbC^{+},
\end{equation}
for $\mu\in\CM$\footnote{In \cite{Cassier:2017:BHF}, the authors consider probability Borel measures $\mu$ supported in a compact interval $[-\Delta, \Delta]$. However, all the proofs would still follow by replacing $[-\Delta, \Delta]$ by $\bbR$ even if one does not make any assumption on the support of the measures $\mu$. For this reason, we present here these results in this slightly more general setting.}. Here $\CM$ stands for the set of finite regular positive Borel measure $\mu$ whose total mass is normalized to $1$. In other words: $ \mu(\bbR)=1$, for all $\mu\in\CM$ and thus, $\CM$ is the set of Borel probability measures on $\mathbb{R}$.

Next, let $\mu \in \CM$ and $V_0\in H^{1/2}(\partial \Omega)$,
with $V_0$ being taken to be real-valued if the reciprocity assumption H7 does not hold for the cloak.
Let $H_{V_0}$ be the Herglotz function, given in Theorem \ref{Thm.Hergphys} under the assumptions that the cloaking device satisfies \textnormal{H1--H6}. In addition, we assume that $V_0$ is chosen such that the non-negative coefficient $F_{V_0, \infty}$ [defined in \eqref{eq.limitinfherg}] is positive [for instance, we know from Corollary  \ref{Cor.boundleadingcoefficient} that these conditions are satisfied if $V_0$ is affine and nonzero, i.e., $V_0=-\mathbf{ e}_0 \cdot \Bx$ on $\partial \Omega$ for a constant vector $\mathbf{ e}_0\in \bbC^d\setminus\{0\}$ (or resp. $\mathbf{e}_0\in \bbR^d\setminus\{0\}$ when H7 does not hold)]. 
In particular, as $F_{V_0, \infty}>0$ and since 
it equals the coefficient $\alpha$ in the representation \eqref{eq.defhergl} of $H_{V_0}$, it means that $H_{V_0}$ is a non-constant Herglotz function and hence $H_{V_0}(\bbC^+)\subset \bbC^+$ by the open mapping theorem for analytic functions.

As $H_{V_0}$ is not constant, one can define $H_{\mu, V_0}$ as the following composition of two Herglotz functions
\begin{equation}\label{eq.defHmuVO}
H_{\mu, V_0}=h_{\mu} \circ H_{V_0} \ \mbox{ on } \bbC^+,
\end{equation}
and thus $H_{\mu, V_0}$ is itself a Herglotz function.
The approach, developed in \cite{Cassier:2017:BHF}, consists of deriving bounds on the Herglotz function
$
H_{\mu, V_0}
$
for any $\mu \in \CM$ using sum rules theorem, i.e., Theorem \ref{Thm.sumrules}. These bounds contain the physical properties and constrains of the passive linear system (here the cloak) which are encoded in $H_{V_0}$. 

We will apply Theorem \ref{Thm.sumrules} to the Herglotz function  $H_{\mu, V_0}$. To do so, we will need the asymptotic behavior of this function on $\mathrm{i}\mathbb{R}^{+,*}$ in both the low and high frequency regimes.
The low-frequency behavior on  $\mathrm{i}\mathbb{R}^{+,*}$ follows from  \eqref{eq.asymherglotzfunction} which states that
\begin{equation}\label{eq.LF}
H_{\mu, V_0}(z)=-\mu_{V_0}(\{0\}) z^{-1}+o(z^{-1} ) \ \mbox{ when } \ |z|\to 0 \mbox{ on } \mathrm{i}\bbR^{+,*},
\end{equation}
where $\mu_{V_0}$ is the positive regular Borel measure associated to the Herglotz function $H_{\mu, V_0}$ from its representation in Theorem \ref{thm.Herg} [thus satisfying $\mu_{V_0}(\{0\})\geq 0$]. 
Its high-frequency behavior on  $\mathrm{i}\mathbb{R}^{+,*}$ is given by 
\begin{equation}\label{eq.HF}
H_{\mu, V_0}(z)=-\frac{1}{F_{V_0,\infty}} z^{-1}+o(z^{-1}) \ \mbox{ when } \ |z|\to \infty \mbox{ on } \mathrm{i}\bbR^{+,*}.
\end{equation}
For the proof of this latter asymptotic expansion, see \cite[Lemma 13]{Cassier:2017:BHF}. 

It now follows from the asymptotics \eqref{eq.LF} and \eqref{eq.HF} together with Theorem  \ref{Thm.sumrules} and Corollary \ref{cor.Herg} that, for any $[x_-,x_+]\subset \bbR$ (with $x_-<x_+$), we have the inequalities
\begin{align*}
\hspace{-3em}\lim_{y\to 0^{+}}\frac{1}{\pi} \int_{x_-}^{x_+} \operatorname{Im} H_{\mu, V_0} (x+\ii y) \, \md x  &\leq \lim_{\eta \to 0^{+}}\lim_{y\to 0^{+}}\frac{1}{\pi} \int_{\eta<|x|<\eta^{-1}} \operatorname{Im} H_{\mu, V_0} (x+\ii y) \, \md x\\
&\leq -\mu_{V_0}(\{0\})+  \frac{1}{F_{V_0,\infty}}\leq \frac{1}{F_{V_0,\infty}}
\end{align*}
and, in particular, we get the useful bounds:
\begin{gather}\label{eq.sumruleineq}
    \lim_{y\to 0^{+}}\frac{1}{\pi} \int_{x_-}^{x_+} \operatorname{Im} H_{\mu, V_0} (x+\ii y) \, \md x\leq \frac{1}{F_{V_0,\infty}}, \ \forall \mu\in \mathcal{M}.
\end{gather}
We point out that the physical properties of the cloaking device (the volume of the obstacle $|\mathcal{O}|$, the volume of the cloak $|\Omega\setminus \mathcal{O}|$, the relative permittivity  of the obstacle with respect to the vacuum, etc.) are encoded in the general bounds \eqref{eq.sumruleineq} via the coefficient  $F_{V_0,\infty}$.  

 The goal of the next subsections \ref{sec-losslesscase} and \ref{sec-lossycase} will be to derive from the general bounds  \eqref{eq.sumruleineq} more explicit inequalities in terms of the physical parameters of the cloaking device by choosing the measure $\mu\in \CM$ of the Herglotz function $h_{\mu}$.
 This will allow to get rid of the limit in \eqref{eq.sumruleineq}.
 In Section  \ref{sec-losslesscase}, Dirac measures $\delta_{\xi}$ play a key role among probability Borel measures in obtaining sharp bounds when the cloak is assumed to be lossless in the frequency interval $[\Go_-, \Go_+]$. To motivate the use of Dirac measures, we conclude this section with the following theorem which states that if one wants to maximize the left hand side of the sum rules type inequality \eqref{eq.sumruleineq} on the set of Borel probability  measures $\CM$, it is sufficient to use Dirac measures: $\mu=\delta_{\xi}$ for points $\xi \in \mathbb{R}$. Furthermore, if the Borel probability  measures $\mu$ are supported in $[-\Delta, \Delta]$ then
one only needs to consider Dirac measures: $\mu=\delta_{\xi}$ for points $\xi \in [-\Delta, \Delta]$. It has been stated in \cite[Theorem 14]{Cassier:2017:BHF} for Borel probability  measures $\CM$
which are compactly supported in $[-\Delta, \Delta]$ which corresponds to  the relation \eqref{eq.maxsumrule2}. But, the proof of relation \eqref{eq.maxsumrule1} remains the same  as the one in \cite[Theorem 14]{Cassier:2017:BHF} by replacing $[-\Delta, \Delta]$ by $\bbR$. Therefore, it is stated without proof.

\begin{Thm}\label{thm.mesopt}
Assume that the cloaking device satisfies assumptions \textnormal{H1--H6} and that $V_0\in H^{1/2}(\partial \Omega)$ is chosen such that $F_{V_0,\infty}>0$. In addition, if H7 is not satisfied, i.e., if the cloak contains a non-reciprocal material, assume that $V_0$ is real-valued. Let $[x_-,x_+] \subset \bbR$ (with $x_-< x_+$), then one has
\begin{equation}\label{eq.maxsumrule1}
\sup_{\mu \in\CM}  \frac{1}{\pi} \lim_{y\to 0^{+}}  \int_{x_-}^{x_+} \Imag H_{\mu, V_0}(x+\ii y) \, \md x=\sup_{\xi \in \mathbb{R}} \frac{1}{\pi} \lim_{y \to 0^{+}}  \int_{x_-}^{x_+} \Imag H_{ \delta_{\xi}, V_0}(x+\ii y) \, \md x
\end{equation}
and if the measures belongs to the subset $\CM_{\Delta}$ of measures of $\CM$ that are supported in $[-\Delta,\Delta]$:
\begin{equation}\label{eq.maxsumrule2} \sup_{\mu \in\CM_{\Delta}}  \frac{1}{\pi} \lim_{y\to 0^{+}}  \int_{x_-}^{x_+} \Imag H_{\mu, V_0}(x+\ii y) \, \md x=\sup_{\xi \in [-\Delta, \Delta]} \frac{1}{\pi} \lim_{y \to 0^{+}}  \int_{x_-}^{x_+} \Imag H_{ \delta_{\xi}, V_0}(x+\ii y) \, \md x.
\end{equation}
\end{Thm}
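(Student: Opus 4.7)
The plan is to prove the two directions of \eqref{eq.maxsumrule1} separately. The $\geq$ direction is immediate by restricting the left-hand supremum to Dirac measures $\mu=\delta_\xi$, $\xi\in\bbR$. The $\leq$ direction rests on the observation that $\mu\mapsto H_{\mu,V_0}$ is affine in $\mu$ and that Dirac masses are the extreme points of the convex set $\CM$ of Borel probability measures, so one expects the supremum of such an affine functional on $\CM$ to be attained in the limit along Dirac masses. The same argument, with $\bbR$ replaced by $[-\Delta,\Delta]$ throughout, gives \eqref{eq.maxsumrule2}, since $\mu\in\CM_\Delta$ forces the integration variable to lie in $[-\Delta,\Delta]$.

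To implement this, I would first combine \eqref{eq.Hergprob} with \eqref{eq.defHmuVO} to obtain, for each $z\in\bbC^+$,
\[
H_{\mu,V_0}(z)=\int_{\bbR}\frac{\md \mu(\xi)}{\xi-H_{V_0}(z)}=\int_{\bbR}H_{\delta_\xi,V_0}(z)\,\md \mu(\xi).
\]
Because $F_{V_0,\infty}>0$ forces $H_{V_0}$ to be non-constant, Theorem~\ref{Thm.Hergphys} combined with the open mapping theorem gives $H_{V_0}(\bbC^+)\subset\bbC^+$, so each $\Imag H_{\delta_\xi,V_0}(z)\geq 0$ on $\bbC^+$. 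Tonelli's theorem then yields, for every $y>0$,
\[
\frac{1}{\pi}\int_{x_-}^{x_+}\Imag H_{\mu,V_0}(x+\mathrm{i}y)\,\md x=\int_{\bbR}\left[\frac{1}{\pi}\int_{x_-}^{x_+}\Imag H_{\delta_\xi,V_0}(x+\mathrm{i}y)\,\md x\right]\md \mu(\xi),
\]
which exhibits the quantity as an affine function of $\mu$ at each finite $y$.

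To conclude, I would pass to the limit $y\to 0^+$ inside the $\mu$-integral. The cleanest route is to identify the Nevanlinna measure $\mathrm{m}_\mu$ of $H_{\mu,V_0}$ with $\mathrm{m}_\mu(A)=\int_{\bbR}\mathrm{m}_\xi(A)\,\md \mu(\xi)$, where $\mathrm{m}_\xi$ is the Nevanlinna measure of $H_{\delta_\xi,V_0}$. This follows by inserting the Nevanlinna representation \eqref{eq.defhergl} of each $H_{\delta_\xi,V_0}$ into the previous display, swapping the integrations via Fubini, and invoking the uniqueness asserted in Theorem~\ref{thm.Herg}. Applying Stieltjes inversion \eqref{eq.mesHerg} to the interval $[x_-,x_+]$ on both sides then produces
\[
\lim_{y\to 0^+}\frac{1}{\pi}\int_{x_-}^{x_+}\Imag H_{\mu,V_0}(x+\mathrm{i}y)\,\md x=\int_{\bbR}\lim_{y\to 0^+}\frac{1}{\pi}\int_{x_-}^{x_+}\Imag H_{\delta_\xi,V_0}(x+\mathrm{i}y)\,\md x\,\md \mu(\xi).
\]
Since $\mu$ is a probability measure, the right-hand side is bounded by $\sup_{\xi\in\bbR}\lim_{y\to 0^+}\frac{1}{\pi}\int_{x_-}^{x_+}\Imag H_{\delta_\xi,V_0}(x+\mathrm{i}y)\,\md x$; taking the supremum over $\mu\in\CM$ closes the $\leq$ direction.

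The main obstacle is the identification of $\mathrm{m}_\mu$, equivalently the Fubini step of the previous paragraph: the Nevanlinna kernel $\frac{1}{t-z}-\frac{t}{1+t^2}$ is signed, so one must verify the bound $\int_{\bbR}\int_{\bbR}\bigl|\frac{1}{t-z}-\frac{t}{1+t^2}\bigr|\,\md \mathrm{m}_\xi(t)\,\md \mu(\xi)<\infty$ for each $z\in\bbC^+$ before interchanging the integrations. This will be controlled by exploiting the explicit formula $H_{\delta_\xi,V_0}(z)=1/(\xi-H_{V_0}(z))$: since $H_{V_0}(\mathrm{i}y)\sim F_{V_0,\infty}\,\mathrm{i}y$ at infinity, the Nevanlinna coefficient $\alpha_\xi$ vanishes, and the remaining data $\beta_\xi$ and $\int(1+t^2)^{-1}\,\md \mathrm{m}_\xi(t)$ depend continuously on $\xi$ on compact sets, which suffices to justify Fubini. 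Once this bookkeeping is carried out, the bounded-support statement \eqref{eq.maxsumrule2} requires no further argument beyond noting that $\mu\in\CM_\Delta$ confines $\xi$ to $[-\Delta,\Delta]$.
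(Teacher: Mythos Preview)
The paper does not actually prove this theorem: it states that the result is Theorem~14 of \cite{Cassier:2017:BHF} (for $\CM_\Delta$) and that the same proof works with $[-\Delta,\Delta]$ replaced by $\bbR$, then writes ``it is stated without proof.'' So there is no in-paper argument to compare against; the relevant benchmark is the approach of \cite{Cassier:2017:BHF}, which indeed proceeds by exploiting the affine dependence $\mu\mapsto H_{\mu,V_0}$ exactly as you do.

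Your outline is correct in structure and in its identification of the one nontrivial point (interchanging the limit $y\to 0^+$ with the $\mu$-integral via the Nevanlinna measures). Two small tightenings are worth making. First, your Fubini justification says $\beta_\xi$ and $\int(1+t^2)^{-1}\,\md\mathrm m_\xi(t)$ are ``continuous on compact sets''; for \eqref{eq.maxsumrule1} with $\mu\in\CM$ arbitrary you need a \emph{global} bound in $\xi$, not just local continuity. This is available: since $\alpha_\xi=0$ one has $\int(1+t^2)^{-1}\,\md\mathrm m_\xi(t)=\Imag H_{\delta_\xi,V_0}(\ii)=\Imag H_{V_0}(\ii)/|\xi-H_{V_0}(\ii)|^2$, which is bounded on $\bbR$ (even decaying), and similarly $\beta_\xi=\Real H_{\delta_\xi,V_0}(\ii)$ is bounded. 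That, together with $|(t-z)^{-1}-t(1+t^2)^{-1}|\leq C_z(1+t^2)^{-1}$, legitimizes Fubini and hence the identification $\mathrm m_\mu(A)=\int_{\bbR}\mathrm m_\xi(A)\,\md\mu(\xi)$. Second, you should remark that $\xi\mapsto \mathrm m_\xi(A)$ is Borel measurable (it is a pointwise limit of continuous functions of $\xi$ via \eqref{eq.mesHerg} on intervals, then extend by a monotone class argument), so that the mixture measure is well defined. With these two remarks added, your proof is complete and matches the intended argument.
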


\subsection{The lossless case}\label{sec-losslesscase}
In this section, one assumes that the cloak is lossless on the frequency interval $[\Go_-, \Go_+]$ (i.e., a transparency window of the cloak; see \cite[Sec.\ 80 and 84]{Landau:1984:ECM}, \cite{Welters:2014:SLL, Cassier:2017:MMD, Cassier:2017:BHF} for more details) which implies (assuming \textnormal{H1} and \textnormal{H4}) that permittivity of the cloak satisfies the following hypothesis:
\begin{itemize}
\item[H8] {{\bf Lossless cloak} in $[\Go_-, \Go_+]$:} the cloak is assumed lossless on $[\Go_-, \Go_+]$, i.e.,
\begin{equation}\label{eq.lossless}
\forall \Go \in [\Go_-, \Go_+] \ \mbox{and for a.e. }  \Bx\in \Omega\setminus \mathcal{O}, \ \mathfrak{I}[\BGve(\Bx, \Go)]=0.
\end{equation}
\end{itemize}
The following Lemma shows that the functions $F_{V_0}$
and the Herlgotz function $H_{V_0}$ inherits a similar property from the permittivity if one assumes that the cloak is lossless on   $[\Go_-, \Go_+]$. 
\begin{Lem}\label{lem.DiffQuadFormDtNMapsLosslessCase}
Assume that the cloaking device satisfies assumptions 
$\textnormal{H1--H6}$ and $\textnormal{H8}$ (i.e., the cloak is lossless on the frequency interval $[\Go_-, \Go_+]$). Let  $V_0\in H^{1/2}(\partial \Omega)$ and, if $\textnormal{H7}$ is not satisfied (i.e., if the cloak contains a non-reciprocal material), let $V_0$ be a real-valued function. Then 
\begin{equation}\label{eq.transparencywindowF}
\operatorname{Im} F_{V_0}(\Go)=0 , \quad \forall \Go \in [\Go_-,\Go_+] \quad  \mbox{ and }  \quad \operatorname{Im} H_{V_0}(x)=0 , \quad \forall x \in [\Go_-^2,\Go_+^2],
\end{equation}
where $H_{V_0}$ is the Herglotz function defined by \eqref{eq.defHVo} in Theorem \ref{Thm.Hergphys}.
\end{Lem}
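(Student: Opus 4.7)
The plan is to prove both identities by working directly with the quadratic forms of the DtN maps, using Green's formula (which recasts $F_{V_0}(\omega)$ as an integral involving $\boldsymbol{\varepsilon}(\cdot,\omega)$) together with the lossless hypothesis H8 (augmented by \eqref{eq.permitivtyob} for the obstacle) to conclude that $\boldsymbol{\varepsilon}(\mathbf{x},\omega)$ is in fact a real-valued tensor on all of $\Omega$ for $\omega\in[\omega_-,\omega_+]$. Once $F_{V_0}(\omega)\in\mathbb{R}$ is established, the statement for $H_{V_0}$ will follow instantly since $H_{V_0}(x)=x\,F_{V_0}(\sqrt{x})$ by \eqref{eq.defHVo}, and for $x\in[\omega_-^2,\omega_+^2]$ we have $\sqrt{x}\in[\omega_-,\omega_+]$ and $x\in\mathbb{R}$.

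First I would fix $\omega\in[\omega_-,\omega_+]$ and denote by $u\in H^1(\Omega)$ the unique solution of the Dirichlet problem \eqref{def.DirBVPForCoercivityTensors} with $\Ba=\boldsymbol{\varepsilon}(\cdot,\omega)$ and boundary datum $V_0$, whose existence is guaranteed by Proposition \ref{pro.DtNDirBVPResults}. Applying Green's formula \eqref{eq:GreensFormula} to $u$ and $\Bv=\boldsymbol{\varepsilon}(\cdot,\omega)\nabla u\in H_{\mathrm{div}}(\Omega)$, and using $\nabla\cdot\Bv=0$ together with $\Lambda_{\boldsymbol{\varepsilon}(\cdot,\omega)}V_0=\Bv\cdot\Bn$, yields
\begin{gather*}
\langle \Lambda_{\boldsymbol{\varepsilon}(\cdot,\omega)}V_0,\overline{V_0}\rangle_{H^{-\frac{1}{2}}(\partial\Omega),H^{\frac{1}{2}}(\partial\Omega)}=\int_{\Omega}\boldsymbol{\varepsilon}(\mathbf{x},\omega)\nabla u\cdot\overline{\nabla u}\,\mathrm{d}\mathbf{x}.
\end{gather*}
The analogous identity for $\varepsilon_0\BI$ gives $\langle \Lambda_{\varepsilon_0}V_0,\overline{V_0}\rangle=\varepsilon_0\|\nabla v\|_{\mathcal{H}}^2\in\mathbb{R}$, where $v\in H^1(\Omega)$ is the corresponding harmonic extension [cf.\ \eqref{eq.defGvac}]. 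Hence it suffices to show that the first integral is real.

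I would split into two cases according to whether H7 is assumed. If H7 holds, then by \eqref{eq.permitivtyob} and H8 the tensor $\boldsymbol{\varepsilon}(\mathbf{x},\omega)$ is a real \emph{symmetric} matrix for a.e.\ $\mathbf{x}\in\Omega$. For any $\mathbf{w}\in\mathbb{C}^d$, using $\boldsymbol{\varepsilon}^\top=\boldsymbol{\varepsilon}$ together with the reality of its entries, one checks that
\begin{gather*}
\overline{\boldsymbol{\varepsilon}(\mathbf{x},\omega)\mathbf{w}\cdot\overline{\mathbf{w}}}=\boldsymbol{\varepsilon}(\mathbf{x},\omega)\overline{\mathbf{w}}\cdot\mathbf{w}=\boldsymbol{\varepsilon}(\mathbf{x},\omega)^{\top}\overline{\mathbf{w}}\cdot\mathbf{w}=\boldsymbol{\varepsilon}(\mathbf{x},\omega)\mathbf{w}\cdot\overline{\mathbf{w}},
\end{gather*}
which makes the integrand real-valued pointwise. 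If instead H7 fails and $V_0$ is assumed real-valued, the coefficients of the PDE $\nabla\cdot\boldsymbol{\varepsilon}(\cdot,\omega)\nabla u=0$ are real and the boundary datum is real, so splitting $u$ into real and imaginary parts shows that $\operatorname{Im} u$ solves the same equation with zero boundary data; uniqueness of the solution in $H^1(\Omega)$ then forces $\operatorname{Im} u=0$, so $\nabla u$ is real and the integrand $\boldsymbol{\varepsilon}(\mathbf{x},\omega)\nabla u\cdot\overline{\nabla u}=\boldsymbol{\varepsilon}(\mathbf{x},\omega)\nabla u\cdot\nabla u$ is again real a.e.

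In both cases the integral is real, hence $F_{V_0}(\omega)\in\mathbb{R}$, i.e.\ $\operatorname{Im} F_{V_0}(\omega)=0$ on $[\omega_-,\omega_+]$, which is the first part of \eqref{eq.transparencywindowF}. The second part is then immediate from \eqref{eq.defHVo}, since for $x\in[\omega_-^2,\omega_+^2]$ we have $\sqrt{x}\in[\omega_-,\omega_+]$ and therefore $H_{V_0}(x)=xF_{V_0}(\sqrt{x})\in\mathbb{R}$. The only mildly delicate step is the uniqueness argument in the non-reciprocal case; everything else is a direct consequence of Green's formula combined with the reality of $\boldsymbol{\varepsilon}(\cdot,\omega)$ on $\Omega$.
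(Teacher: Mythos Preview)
Your overall strategy via Green's formula is sound — indeed the paper explicitly remarks that ``one could also prove this via Green's formula'' before giving its own argument through the $Z$-problem framework. The reduction to showing that $\int_\Omega\boldsymbol{\varepsilon}(\mathbf{x},\omega)\nabla u\cdot\overline{\nabla u}\,\mathrm{d}\mathbf{x}\in\mathbb{R}$ is exactly right, and the deduction of the statement for $H_{V_0}$ from that for $F_{V_0}$ is correct.

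There is, however, a genuine slip in your Case~2. When H7 fails, you assert that ``the coefficients of the PDE $\nabla\cdot\boldsymbol{\varepsilon}(\cdot,\omega)\nabla u=0$ are real.'' This is not what H8 says. In the paper's convention (see the definition \eqref{def:OperatorRealImagnaryParts} and its footnote), $\mathfrak{I}[\boldsymbol{\varepsilon}]=0$ means $\boldsymbol{\varepsilon}=\boldsymbol{\varepsilon}^\dagger$, i.e.\ the tensor is \emph{Hermitian}, not that its entries are real. Without reciprocity H7 you cannot combine Hermitian with symmetric to conclude reality; a matrix like $\bigl(\begin{smallmatrix}1 & i\\ -i & 1\end{smallmatrix}\bigr)$ is Hermitian but not real. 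Consequently your splitting of $u$ into real and imaginary parts, and the subsequent uniqueness argument, do not go through as written.

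The good news is that your case split is entirely unnecessary. The Hermitian property alone already gives
\[
\operatorname{Im}\int_\Omega\boldsymbol{\varepsilon}(\mathbf{x},\omega)\nabla u\cdot\overline{\nabla u}\,\mathrm{d}\mathbf{x}
=\int_\Omega\mathfrak{I}[\boldsymbol{\varepsilon}(\mathbf{x},\omega)]\nabla u\cdot\overline{\nabla u}\,\mathrm{d}\mathbf{x}=0,
\]
since for a Hermitian matrix the quadratic form $\boldsymbol{\varepsilon}\mathbf{w}\cdot\overline{\mathbf{w}}$ is always real. This is exactly how the paper's proof concludes (with $\BE_0+\BE$ playing the role of $\nabla u$). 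The hypothesis ``$V_0$ real if H7 fails'' in the lemma is not used to establish $\operatorname{Im}F_{V_0}=0$; it is carried over only so that $H_{V_0}$ is well-defined through Theorem~\ref{Thm.Hergphys}, which requires it.
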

\begin{proof}
This property is again a consequence of the theory of composites via the connection between the DtN map and the effective tensor with the factorization \eqref{eq.factDtNmap} (one could also prove this via Green's formula, but we choose this approach to emphasize the connection to the theory of composites). 
Let $\BE_0=\Pi V_0$. Then, since $\BE_0\in \mathcal{U}$, for a fixed  $\omega \in [\Go_-,\Go_+] $, there exists a solution $(\BJ_0,\BE,\BJ)\in \mathcal{U}\times\mathcal{E}\times\mathcal{J}$ to the Dirichlet $Z$-problem $(\mathcal{H},\mathcal{U},\mathcal{E},\mathcal{J},\bf a)$, where $\bf a = \BGve(\cdot,\omega)$, at $\BE_0$ which implies that
    \begin{eqnarray*}
F_{V_0}(\Go)&=&     \langle [\Lambda_{ \BGve(\cdot,\omega)}- \Lambda_{ \varepsilon_0}] V_0,\overline{V_0} \rangle_{H^{-\frac{1}{2}}(\partial \Omega), H^{\frac{1}{2}}(\partial \Omega)}\\
 &=& \big( \big[ [ \BGve(\cdot,\omega)]_*- [\varepsilon_0]_* \big] \Pi V_0,\Pi V_0 \big)_{\mathcal{H}} \\ &=&\big(  [\BGve(\cdot,\omega) ]_* \BE_0, \BE_0\big)_{\mathcal{H}} -(\varepsilon_0\BE_0, \BE_0)_{\mathcal{H}}  \\
        &=&( [\BGve(\cdot,\omega)](\BE_0+\BE), \BE_0+\BE )_{\mathcal{H}} - \varepsilon_0 \|\BE_0\|^2_{\mathcal{H}},
    \end{eqnarray*}
where in the above equalities, we use that $\BJ_0=[\BGve(\cdot,\omega)]_*\BE_0$, $\BJ_0+\BJ= \BGve(\cdot,\omega) (\BE_0+\BE)$, $[\varepsilon_0]_*=\varepsilon_0$ [by Corollary \ref{cor:BasicPropertiesEffOpDiricZProb}.(ii)], and the fact that $\mathcal{U},\mathcal{E},\mathcal{J}$ are orthogonal subspaces in $\mathcal{H}$. Since \eqref{eq.permitivtyob} together with H8 ensures that \eqref{eq.lossless} remains valid when $\Omega \setminus \CO$ is replaced by $\Omega$, it follows that for all $\Go \in [\Go_-, \Go_+]$:
$$
\operatorname{Im}F_{V_0}(\Go)=\operatorname{Im}( \BGve(\cdot,\omega)(\BE_0+\BE), \BE_0+\BE )_{\mathcal{H}}- \operatorname{Im}\varepsilon_0 \|\BE_0\|^2_{\mathcal{H}} =( \mathfrak{I}[\BGve(\cdot,\omega)](\BE_0+\BE), \BE_0+\BE )_{\mathcal{H}}=0
$$
and, thus by \eqref{eq.defHVo}, we also have $\operatorname{Im} H_{V_0}(x)=x \operatorname{Im}  F_{V_0}(\sqrt{x})=0$ for all $x \in [\Go_-^2,\Go_+^2]$.
\end{proof}

The following theorem provides precise bounds when the cloak is lossless in the frequency interval $[\Go_-, \Go_+]$, i.e., when  H8 is satisfied.  Its proof has been  established in \cite[Proposition 15]{Cassier:2017:BHF} (where $H_{V_0}$ and $F_{V_0}$ play the role of the functions $v$ and $f$ in \cite{Cassier:2017:BHF}).
This proof relies  on the fact that $[\Go_-, \Go_+]$ is a transparency window for $F_{V_0}$, i.e., $F_{V_0}$ is real-valued on $[\Go_-, \Go_+]$ by \eqref{eq.transparencywindowF}. One uses this information to compute the limit in left hand-side of the inequality \eqref{eq.sumruleineq} when $\mu$  are well-chosen Dirac measures $\mu=\delta_{\xi}\in \CM$, for $\xi \in \mathbb{R}$, to get the bound \eqref{eq.ineqHV0}. Then, one deduces immediately from  \eqref{eq.ineqHV0} that the inequality \eqref{eq.boundtranspFV0} holds using the definition \eqref{eq.defHVo} of the Herglotz function $H_{V_0}$. We point out that bounds similar to \eqref{eq.boundtranspFV0} were derived without a sum rule approach for the permittivity of a passive linear electromagnetic media in \cite{Milton:1997:FFR} on a frequency interval where the material is assumed lossless based on interpolation  techniques related to Stieltjes function. In \cite{Cassier:2017:BHF}, the authors also give a second proof of \eqref{eq.boundtranspFV0}  based on  Kramers-Kronig relations.
\begin{Thm}\label{Th.boundtransp}
Let the cloaking device satisfies assumptions $\textnormal{H1--H6}$ and  $V_0\in H^{1/2}(\partial \Omega)$ being chosen such that $F_{V_0,\infty}>0$. Moreover, assume that $V_0$ is a real-valued function if $\textnormal{H7}$ is not satisfied (i.e., if the cloak contains a non-reciprocal material).
If $\textnormal{H8}$ holds, in other words the cloak is lossless in the positive frequency interval $[\Go_-, \Go_+]$ then the function $H_{V_0}$ satisfies
\begin{equation}\label{eq.ineqHV0}
F_{V_0, \infty} \, (x-x_0) \leq H_{V_0}(x)-H_{V_0}(x_0), \  \forall x, x_0 \in [\Go_-^2, \Go_+^2] \mbox{ such that} \ x_0\leq x,
\end{equation}
which yields the following bound on $F_{V_0}$:
\begin{equation}\label{eq.boundtranspFV0}
\Go_0^2 [F_{V_0}(\Go_0)-F_{V_0, \infty}] \leq \Go^2  [F_{V_0}(\Go)-F_{V_0 ,\infty}], \  \forall \Go, \Go_0 \in [\Go_-,\Go_+] \mbox{ such that} \ \Go_0\leq \Go.
\end{equation}
\end{Thm}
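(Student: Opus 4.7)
The plan is to combine the Herglotz representation of $H_{V_0}$ from Theorem \ref{Thm.Hergphys} with the lossless hypothesis H8 to show that $H_{V_0}$ extends analytically across $[\omega_-^2,\omega_+^2]$ and that the function $x \mapsto H_{V_0}(x) - F_{V_0,\infty}\, x$ is monotone non-decreasing there. From this, \eqref{eq.ineqHV0} follows immediately, and \eqref{eq.boundtranspFV0} is obtained by the change of variables $x = \omega^2$ using the identity $H_{V_0}(\omega^2) = \omega^2 F_{V_0}(\omega)$.

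First, by Theorem \ref{Thm.Hergphys}, $H_{V_0}$ admits the Nevanlinna representation
\begin{equation*}
H_{V_0}(z) = F_{V_0,\infty}\, z + \beta + \int_{\mathbb{R}^+}\left(\frac{1}{\xi-z} - \frac{\xi}{1+\xi^2}\right)dm(\xi),
\end{equation*}
for some $\beta\in\mathbb{R}$ and positive regular Borel measure $m$ supported in $\mathbb{R}^+$ (since the leading coefficient of $H_{V_0}$ is exactly $F_{V_0,\infty}$). Next, Lemma \ref{lem.DiffQuadFormDtNMapsLosslessCase} tells us that H8 forces $\operatorname{Im} H_{V_0}(x) = 0$ for all $x\in[\omega_-^2,\omega_+^2]$. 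Applying the Stieltjes--Perron inversion formulas \eqref{eq.coeffdomherg}--\eqref{eq.mesHerg} from Corollary \ref{cor.Herg}, I conclude that $m$ assigns zero mass both to every subinterval of $[\omega_-^2,\omega_+^2]$ and to every point in this interval; hence the measure $m$ has no support on $[\omega_-^2,\omega_+^2]$.

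Consequently, the integrand $(\xi-z)^{-1} - \xi(1+\xi^2)^{-1}$ is real analytic in $z$ near each $x\in(\omega_-^2,\omega_+^2)$ uniformly in $\xi \in \operatorname{supp}(m)$, so that dominated convergence justifies differentiating under the integral sign, yielding for every $x$ in the interior of $[\omega_-^2,\omega_+^2]$
\begin{equation*}
H_{V_0}'(x) = F_{V_0,\infty} + \int_{\mathbb{R}^+}\frac{dm(\xi)}{(\xi-x)^2} \geq F_{V_0,\infty}.
\end{equation*}
Integrating this pointwise lower bound from $x_0$ to $x$ for $\omega_-^2\leq x_0\leq x\leq \omega_+^2$ (using continuity of $H_{V_0}$ on $[\omega_-^2,\omega_+^2]$ to handle the endpoints) yields \eqref{eq.ineqHV0}. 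Finally, to derive \eqref{eq.boundtranspFV0}, I substitute $x=\omega^2$ and $x_0=\omega_0^2$ with $\omega_-\leq \omega_0\leq \omega\leq\omega_+$ into \eqref{eq.ineqHV0}, using $H_{V_0}(\omega^2) = \omega^2 F_{V_0}(\omega)$ from \eqref{eq.defHVo}, which after rearrangement gives exactly \eqref{eq.boundtranspFV0}.

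The principal obstacle is the Stieltjes--Perron step: one must carefully verify that vanishing boundary values of $\operatorname{Im} H_{V_0}$ on the closed interval $[\omega_-^2,\omega_+^2]$ imply vanishing of the representing measure on the whole closed interval (including the endpoints and potential atomic parts). Once this localization of the measure is established, the remainder is a straightforward differentiation-under-the-integral computation and an elementary change of variables, so the rest of the proof is essentially mechanical.
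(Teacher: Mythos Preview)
Your proof is correct, but it follows a genuinely different route from the paper's primary argument. The paper derives \eqref{eq.ineqHV0} by invoking \cite[Proposition 15]{Cassier:2017:BHF}, whose proof is based on the sum-rules inequality \eqref{eq.sumruleineq} applied with Dirac measures $\mu=\delta_\xi$: in the lossless case one can compute the limit on the left-hand side of \eqref{eq.sumruleineq} explicitly and extract the monotonicity statement from there. Your approach instead bypasses the sum-rules machinery entirely and works directly with the Nevanlinna representation of $H_{V_0}$: once the transparency hypothesis H8 kills the representing measure on $[\omega_-^2,\omega_+^2]$ via Stieltjes--Perron inversion, the bound $H_{V_0}'(x)\geq F_{V_0,\infty}$ is immediate from the integral formula for the derivative. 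This is essentially the Kramers--Kronig argument that the paper itself flags as an alternative route just after the statement of the theorem.

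The trade-off is that your argument is shorter and more elementary in the lossless case, but it does not extend to the lossy setting of Section~\ref{sec-lossycase}, where the measure $m$ may well be supported on the interval of interest; there the sum-rules framework (with the uniform measure rather than Dirac masses) is what survives. Regarding the ``principal obstacle'' you identify: it is genuine but mild. The continuity of $H_{V_0}$ on $(\mathbb{C}\setminus\mathbb{R}^+)\cup[\omega_-^2,\omega_+^2]$ stated in Theorem~\ref{Thm.Hergphys} gives a uniform bound on $\operatorname{Im} H_{V_0}(x+\mathrm{i}y)$ over the compact strip $\{x+\mathrm{i}y: x\in[\omega_-^2,\omega_+^2],\ 0\leq y\leq 1\}$, so dominated convergence applies in \eqref{eq.mesHerg} and yields $m([\omega_-^2,\omega_+^2])=0$ (the atom formula in \eqref{eq.coeffdomherg} handles the endpoints). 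After that, your differentiation and integration steps go through as written.
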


\begin{Rem}Note the bounds \eqref{eq.ineqHV0} and \eqref{eq.boundtranspFV0} are sharp in the following sense. For any fixed real number $\Go_0\in [\Go_-,\Go_+]$, suppose $F_{V_0}$ is given by the Drude dispersion law
$$F_{V_0}(\Go)=F_{V_{0,\infty}}-\frac{\Go_0^2}{\Go^2} [F_{V_{0,\infty}}- F_{V_0}(\Go_0)] \quad \mbox{ where } F_{V_0}(\Go_0)\in [0,F_{V_{0,\infty}} ].$$
Then, on one hand, $S_{V_0}$ defined by \eqref{eq.defSvo} is a Stieltjes function that satisfies all the properties of Lemma \ref{Const.Stielt} and thus $H_{V_0}$ defined by \eqref{eq.defHVo}, is given here by \begin{equation*} 
H_{V_0}(z)= F_{V_{0,\infty}} (z-x_0)+ H_{V_0}(x_0), \quad \mbox{ where for } \  x_0=\Go_0^2, \ \  H_{V_0}(x_0)=\Go_0^2F_{V{0}}(\Go_0). 
\end{equation*}
This is an affine Herglotz function which shares all the properties of Theorem \ref{Thm.Hergphys}. On the other hand, $H_{V_0}$ and  $F_{V_0}$ satisfy
\eqref{eq.transparencywindowF} and for these functions, the inequalities \eqref{eq.ineqHV0} (for all $x\in [x_-,x_+]=[\Go_-^2, \Go_+^2]$ and $x_0=\Go_0^2$ fixed) and \eqref{eq.boundtranspFV0}  (for all $\Go\in [\Go_-,\Go_+]$ and $\Go_0$ fixed)
become equalities.
\end{Rem}

\noindent The following corollary is an immediate consequence of the inequality \eqref{eq.boundtranspFV0} of Theorem \ref{Th.boundtransp} if one assumes that the obstacle is approximately cloaked at frequency $\Go_0$ in the sense of \eqref{eq.aprroximatecloaking}. 
\begin{Cor}\label{Cor.lossless}
Assume  the hypothesis of Theorem \ref{Th.boundtransp} for the cloaking device and  for the fixed input  signal $V_0$. Moreover, assume there exists a frequency $\Go_0\in [\Go_-, \Go_+]$ such that the cloak achieves approximate  cloaking at $\Go_0$, i.e., there exists $\eta>0$ such that 
\begin{equation}\label{eq.aprroximatecloaking}
|F_{\tilde{V}_0}(\Go_0)|\leq \eta \, G^{\operatorname{vac}}_{\tilde{V}_0} \ \mbox{ with } \ G^{\operatorname{vac}}_{\tilde{V}_0}=\langle \Lambda_{\varepsilon_0} \, \tilde{V}_0,\overline{\tilde{V}_0} \rangle_{H^{-\frac{1}{2}}(\partial \Omega), H^{\frac{1}{2}}(\partial \Omega)} ,  \quad \, \forall  \tilde{V}_0\in H^{1/2}(\partial \Omega)
\end{equation}
then
\begin{eqnarray}
 \displaystyle F_{V_0}(\Go) & \leq &  \big(-F_{V_0,\infty}+\eta \, G^{\operatorname{vac}}_{V_0} \big) \ \frac{\Go^2_0-\Go^2}{\Go^2} + \eta \,  G^{\operatorname{vac}}_{V_0}  \ \mbox{if}  \ \Go_-\leq  \Go\leq \Go_0, \label{eq.boundtransp1}\\[10pt]
 \displaystyle  F_{V_0}(\Go) & \geq & \big(F_{V_0,\infty}+\eta  \, G^{\operatorname{vac}}_{V_0} \big) \ \frac{\Go^2-\Go^2_0}{\Go^2}- \eta \,  G^{\operatorname{vac}}_{V_0} \ \  \mbox{if} \   \Go_0 \leq \Go \leq \Go_+ \label{eq.boundtransp2}.
\end{eqnarray}
In particular, if there exists a frequency $\Go_0\in [\Go_-, \Go_+]$ such that the cloak achieves  perfect cloaking at $\Go_0$, i.e., $F_{\tilde{V}_0}(\Go_0)=0$ for all $\tilde{V}_0\in H^{1/2}(\partial \Omega)$, then the inequalities \eqref{eq.boundtransp1} and \eqref{eq.boundtransp2}   hold with $\eta=0$. \\[4pt]
Furthermore, if $V_0\in H^{1/2}(\partial \Omega)$  is affine, i.e., $V_0=-{\bf e}_0\cdot \Bx|_{\partial\Omega}$ for  ${\bf e}_0\in \bbC^d\setminus \{0 \}$ when $\textnormal{H7}$ is satisfied (otherwise assume ${\bf e}_0\in \bbR^d$, i.e., if $\textnormal{H7}$ does not hold), then (by Corollary  \ref{Cor.boundleadingcoefficient}) we have $F_{V_0,\infty}>0$ and one can replace $F_{V_0, \infty}$ [in the inequalities \eqref{eq.boundtransp1}  and \eqref{eq.boundtransp2}] by its lower bound
\begin{equation}\label{eq.lowerboundtransp3}
|\mathcal{O}| \  \Big(1- \frac{\varepsilon_0} {\varepsilon} \Big) \ \    \frac{\varepsilon_0 \, |\Omega| \, \, \|{\bf e}_0\|_{\bbC^d}^2}{|\mathcal{O}|\,  \frac{\varepsilon_0}{\varepsilon}  + |\Omega \setminus \mathcal{O}|  }. 
\end{equation} 
\end{Cor}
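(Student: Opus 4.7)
The plan is to deduce this corollary directly from Theorem \ref{Th.boundtransp}, using the approximate cloaking assumption \eqref{eq.aprroximatecloaking} applied to $\tilde V_0=V_0$ to control $F_{V_0}(\Go_0)$, and then rearranging algebraically. The key inequality \eqref{eq.boundtranspFV0} reads
\begin{equation*}
\Go_0^2\bigl[F_{V_0}(\Go_0)-F_{V_0,\infty}\bigr]\le \Go^2\bigl[F_{V_0}(\Go)-F_{V_0,\infty}\bigr],\quad \Go_0\le \Go\text{ in }[\Go_-,\Go_+],
\end{equation*}
and it will be used in two ways depending on the position of $\omega$ relative to $\omega_0$.

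First, for $\Go_0\le \Go\le \Go_+$ I would apply \eqref{eq.boundtranspFV0} directly with the ordered pair $(\Go_0,\Go)$ to obtain $F_{V_0}(\Go)\ge F_{V_0,\infty}+\tfrac{\Go_0^2}{\Go^2}[F_{V_0}(\Go_0)-F_{V_0,\infty}]$. Using $F_{V_0}(\Go_0)\ge -\eta\,G^{\operatorname{vac}}_{V_0}$ (a consequence of \eqref{eq.aprroximatecloaking} applied to $\tilde V_0=V_0$) gives $F_{V_0}(\Go)\ge F_{V_0,\infty}\tfrac{\Go^2-\Go_0^2}{\Go^2}-\eta\,G^{\operatorname{vac}}_{V_0}\tfrac{\Go_0^2}{\Go^2}$, and a straightforward regrouping of the $\eta\,G^{\operatorname{vac}}_{V_0}$ terms yields exactly \eqref{eq.boundtransp2}. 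Symmetrically, for $\Go_-\le \Go\le \Go_0$ I would apply \eqref{eq.boundtranspFV0} with the ordered pair $(\Go,\Go_0)$ to get $F_{V_0}(\Go)\le F_{V_0,\infty}+\tfrac{\Go_0^2}{\Go^2}[F_{V_0}(\Go_0)-F_{V_0,\infty}]$, and then use $F_{V_0}(\Go_0)\le \eta\,G^{\operatorname{vac}}_{V_0}$ followed by the same sort of algebraic regrouping to recover \eqref{eq.boundtransp1}.

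The perfect cloaking case at $\omega_0$ (meaning $F_{\tilde V_0}(\Go_0)=0$ for every admissible $\tilde V_0$) corresponds to the limit $\eta=0$ of the approximate cloaking condition, so it is covered by the same two inequalities with $\eta=0$. Finally, the affine case follows by specializing $V_0=-{\bf e}_0\cdot \Bx|_{\partial\Omega}$ and inserting into \eqref{eq.boundtransp1}--\eqref{eq.boundtransp2} the lower bound on $F_{V_0,\infty}$ furnished by Corollary \ref{Cor.boundleadingcoefficient}, namely the quantity in \eqref{eq.lowerboundtransp3}; the monotonicity of the two right-hand sides in $F_{V_0,\infty}$ (which appears as $\pm F_{V_0,\infty}\tfrac{\Go^2_0-\Go^2}{\Go^2}$ with the sign matching the direction of the inequality) makes this replacement legitimate and preserves the direction of each bound.

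The proof is therefore essentially algebraic once Theorem \ref{Th.boundtransp} is in hand, so I do not anticipate a real obstacle; the only subtlety is being careful with the sign of $\Go_0^2-\Go^2$ on each side of $\Go_0$ and with the direction of monotonicity of the bounds in $F_{V_0,\infty}$ when substituting \eqref{eq.lowerboundtransp3}, since for $\omega<\omega_0$ the coefficient of $F_{V_0,\infty}$ in \eqref{eq.boundtransp1} is negative (through the minus sign in front) so decreasing $F_{V_0,\infty}$ to its lower bound still yields a valid upper bound, while for $\omega>\omega_0$ the coefficient in \eqref{eq.boundtransp2} is positive and the same substitution still yields a valid lower bound.
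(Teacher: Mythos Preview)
Your proof is correct and follows essentially the same approach as the paper, which simply states that the corollary is an immediate consequence of inequality \eqref{eq.boundtranspFV0} of Theorem \ref{Th.boundtransp} under the approximate cloaking assumption. You have merely made explicit the algebraic rearrangement and the monotonicity check for the affine substitution that the paper leaves to the reader.
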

Assuming perfect cloaking at $\Go_0$ [i.e., when $F_{\tilde{V}_0}(\Go_0)=0$, for all $\tilde{V}_0\in H^{1/2}(\partial \Omega)$]  and choosing  an affine boundary
input condition $V_0$, leads  to the bounds \eqref{eq.boundtransp1} and
\eqref{eq.boundtransp2}  with $\eta=0$  and  \eqref{eq.lowerboundtransp3} instead of  $F_{V_{0}, \infty}.$  Then these two precise inequalities gives fundamental limits to the cloaking effect on the frequency interval $[\Go_-, \Go_+]$ since in this case one has 

\begin{eqnarray}
 \displaystyle F_{V_0}(\Go)    & \leq  -\displaystyle  |\mathcal{O}| \  \Big(1- \frac{\varepsilon_0} {\varepsilon} \Big) \ \     \frac{\varepsilon_0 \, |\Omega| \, \, \|{\bf e}_0\|_{\bbC^d}^2}{|\mathcal{O}|\,  \frac{\varepsilon_0}{\varepsilon}  + |\Omega \setminus \mathcal{O}|  } \ \frac{\Go^2_0-\Go^2}{\Go^2} <0 & \mbox{if}  \ \Go\in [\omega_-, \omega_0), \label{eq.boundtransp1bis}\\[10pt]
  \displaystyle F_{V_0}(\Go)    & \geq  \displaystyle  |\mathcal{O}| \  \Big(1- \frac{\varepsilon_0} {\varepsilon} \Big) \ \    \frac{\varepsilon_0 \, |\Omega| \, \, \|{\bf e}_0\|_{\bbC^d}^2}{|\mathcal{O}|\,  \frac{\varepsilon_0}{\varepsilon}  + |\Omega \setminus \mathcal{O}|  } \ \frac{\Go^2-\Go_0^2}{\Go^2} >0 & \mbox{if}  \ \Go\in (\omega_0, \omega_+]\label{eq.boundtransp2bis}.
\end{eqnarray}
and hence perfect cloaking cannot occur for any other frequency in $[\Go_-,\Go_+]$ than $\Go_0$. But they give more precise information, in the sense that the lower bound \eqref{eq.boundtransp1bis} and the upper bound \eqref{eq.boundtransp2bis}  on $F_{V_0}(\Go)$ depend on physical parameters of the passive system and the input signal: the bandwidth (via the terms $ \pm |\Go^2-\Go_0^2|$),  and also [from \eqref{eq.lowerboundtransp3}] the volume of the obstacle $|\CO|$, the volume of the cloak $|\Omega \setminus \mathcal{O}|$, a lower bound on  the relative permittivity  of the obstacle via the term $\varepsilon/\varepsilon_0$, and (up to a $1/2$ factor) the energy of the electric field generated by the input signal in $\Omega$ (when $\Omega$ is filled with vacuum), i.e., $ \varepsilon _0\, |\Omega| \|{\bf e}_0\|_{\bbC^d}^2$. We point out that in the quasistatic approximation in the near-field setting, $ \varepsilon _0 \,|\Omega| \|{\bf e}_0\|_{\bbC^d}^2$ is a physical quantity comparable to the energy of the incident field in electromagnetic scattering theory. In comparison, the bounds derived in \cite{Cassier:2017:BHF} in the context of far-field cloaking does not depend on  the volume of the cloak via the term $|\Omega \setminus \mathcal{O}|$ since this information  is  not measurable in the far-field regime.

Now if the cloak does not perfectly cloak the object at $\omega_0$, but only approximate cloaking \eqref{eq.aprroximatecloaking} holds at $\omega_0$ with a sufficiently small $\eta > 0$, then our bounds still impose some constraints that prevent cloaking from occurring on $[\omega_-, \omega_+]$. Let us consider this next.

 For an affine potential $V_0$ given by $V_0 = -\mathbf{e}_0 \cdot \mathbf{x}\big|_{\partial\Omega}$ with $\mathbf{e}_0 \in \mathbb{C}^d \setminus \{0\}$, the unique solution $u\in H^1(\Omega)$ of the boundary-value problem \eqref{def.DirBVPForCoercivityTensors} for $\Ba=\varepsilon_0 \,\mathbf{I}$ is  affine as well:  $u(\mathbf{x}) =- \mathbf{e}_0 \cdot \mathbf{x}$ for all $\mathbf{x} \in \Omega$. Thus, one has by \eqref{eq.defGvac}:
 $$
G^{\operatorname{vac}}_{V_0}=  \varepsilon_0  \|\nabla u\|^2_{\mathcal{H}} = \varepsilon_0 \,  |\Omega| \,  \|{\bf e}_0\|_{\bbC^d}^2>0.
$$
Thus, as $V_0$ is affine,  using \eqref{eq.boundtransp1} with $F_{V_0,\infty}$ replaced by the expression \eqref{eq.lowerboundtransp3} gives
\begin{equation}\label{eq.approximatecloacking1}
F_{V_0}(\omega) \leq \varepsilon_0 \,  |\Omega| \,  \|{\bf e}_0\|_{\bbC^d}^2 \Big[ (-\eta_{\operatorname{lim}} +\eta)   \frac{\Go^2_0-\Go^2}{\Go^2} +\eta \Big], \quad \forall \Go \in [\Go_-,\Go_0],
\end{equation}
where
$$
 \eta_{\operatorname{lim}}=\Big(1- \frac{\varepsilon_0} {\varepsilon} \Big) \ \    \frac{ |\mathcal{O}| }{|\mathcal{O}|\, \displaystyle \frac{\varepsilon_0} {\varepsilon}+ |\Omega \setminus \mathcal{O}| }>0.
 $$
Hence if $\eta\in (0, \eta_{\lim})$, one gets that 
\begin{equation}\label{ineq.approximate1}
F_{V_0}(\omega)<0 \quad \mbox{ if } \ \Go \in \Big[\omega_-, \sqrt{1-\frac{\eta}{\eta_{\lim}}} \,\omega_0\Big).
\end{equation}
Thus if $\omega_-< (1-\eta/\eta_{\lim})^{1/2}\, \omega_0$ then the  quantitative bounds  \eqref{eq.approximatecloacking1}  shows that perfect cloaking can not occur at any frequency in the interval $[\omega_-,(1-\eta/\eta_{\lim})^{1/2} \omega_0) $. Similarly, using the bound \eqref{eq.boundtransp2}, one obtains that 
\begin{equation}\label{eq.approximatecloacking2}
F_{V_0}(\omega) \geq \varepsilon_0 \,  |\Omega| \,  \|{\bf e}_0\|_{\bbC^d}^2 \Big[ (\eta_{\operatorname{lim}} +\eta)   \frac{\Go^2-\Go_0^2}{\Go^2} -\eta \Big], \quad \forall \Go \in [\Go_0,\Go_+].
\end{equation}
Hence, one gets that 
\begin{equation}\label{ineq.approximate2}
F_{V_0}(\omega)>0 \quad \mbox{ if } \ \Go \in \Big(\sqrt{1+\frac{\eta}{\eta_{\lim}}} \,\omega_0, \omega_+\Big].
\end{equation}
Therefore, if $\omega_+>(1+\eta/\eta_{\lim})^{1/2}\, \omega_0$ then the  quantitative bound \eqref{eq.approximatecloacking2}  shows that perfect cloaking can not occur at any frequency in the interval $((1+\eta/\eta_{\lim})^{1/2} \omega_0, \omega_+] $.
Hence, the bounds \eqref{eq.approximatecloacking1} and \eqref{eq.approximatecloacking2}
prove that perfect cloaking and even approximate cloaking cannot occur if the bandwidth is sufficiently large. Again, the physical parameters: the volumes of the obstacle $|\mathcal{O}|$ and the cloak $|\Omega\setminus \mathcal{O}|$ and a lower bound $\varepsilon/\varepsilon_0$ on the relative permittivity of the obstacle  appear in these bounds via the explicit constant $\eta_{\lim}$. 

\subsection{The lossy case}\label{sec-lossycase}

We follow here a method introduced  in \cite{Bernland:2011:SRC},  in the context of passive linear electromagnetic systems, to derive bounds via sum rules (Theorem \ref{Thm.sumrules}) by choosing the uniform probability measure on the interval $[-\Delta, \Delta]$ for $\Delta>0$, namely,
\begin{equation}\label{eq.uniformmeasure}
\md \mu(\xi)= \frac{1_{[-\Delta,\Delta]}(\xi)}{2 \Delta}  \md \xi
\end{equation}
for the Herglotz function $h_{\mu}$ defined in \eqref{eq.Hergprob}. Here $1_{[-\Delta,\Delta]}(\xi)$ is the indicator function that takes the value $1$ on the interval $[-\Delta, \Delta]$ and is zero outside that interval. This method was applied the first time  in the context of cloaking  \cite{Cassier:2017:BHF} for the far-field quasistatic problem to derive inequalities  on the polarizability tensor of the passive cloaking device which prevent approximate cloaking on a frequency interval. In this present work, we use it in an original framework related to the DtN operator to derive fundamental limits  for the near-field  quasistatic cloaking  problem.
We recall here the main arguments of this  method. For additional details, we refer to \cite[Sec.\ II.F]{Cassier:2017:BHF}. 

Using the expression  \eqref{eq.uniformmeasure} of $\mu$ in  the definition \eqref{eq.defHmuVO} of the Herglotz function $H_{\mu,V_0}$ yields the following formula: 
\begin{equation}\label{eq.Hergunif}
H_{\mu, V_0}(z)=\frac{1}{2\GD} \int_{-\GD}^{\GD}\frac{1}{\xi-H_{V_0}(z)}\md \xi=\frac{1}{2\GD}\log \left( \frac{H_{V_0}(z)-\GD}{H_{V_0}(z)+\GD}\right), \quad \forall z \in \bbC^{+},
\end{equation}
where the definition of the complex logarithm function $\log$, we use  the same branch cut $\bbR^{+}$ as for the square root function \eqref{eq.racine}.
Next, one has
\begin{equation*}
 \frac{H_{V_0}(z)-\GD}{H_{V_0}(z)+\GD}=\frac{|H_{V_0}(z)|^2-\GD^2+2\ii \GD \Imag[H_{V_0}(z)]}{|H_{V_0}(z)+\GD|^2}, \quad \forall z \in \bbC^{+}. 
\end{equation*}
From these formulas and since $\Imag[H_{V_0}(z)]\in \bbC^+$ for $z\in \bbC^+$ (because $H_{V_0}$ is a non-constant Herglotz function), one deduces that
\begin{equation}\label{eq.boundimag}
 \Imag  H_{\mu,V_0}(z)=\frac{1}{2\GD}\operatorname{arg}\left( \frac{H_{V_0}(z)-\GD}{H_{V_0}(z)+\GD}\right) \geq \frac{\pi}{4 \GD} H(\GD-|H_{V_0}(z)|), \quad  \forall z \in \bbC^{+},
\end{equation}
where $H=\mathbf{1}_{\bbR^{+}}$ is the Heaviside function (i.e., the indicator function of $\mathbb{R}^+$).
We use now the inequality \eqref{eq.boundimag} to get a lower bound on the left hand side of \eqref{eq.sumruleineq}.\\[4pt]
One introduces the set $I_{\Delta,H_{V_0}}=\{ x\in [x_-,x_+] \mid |H_{V_0}(x)|=\Delta\}$ with $[x_-,x_+] =[\Go_-^2, \Go_+^2]$. The complement of this set  in $[x_-,x_+]$  corresponds to the set of points $x$ in $[x_-,x_+]$  where the function $t\mapsto H(\Delta-|H_{V_0}(t)|)$ is continuous in a neighborhood of $x$ in $[x_-,x_+]\cup \bbC^+$. Thus, assuming that the Lebesgue measure of $I_{\Delta,H_{V_0}}$ in $\mathbb{R}$, which we denote by $|I_{\Delta,H_{V_0}}|$, is zero and using the continuity  of $H_{V_0}$ on $[x_-,x_+]$ [see Theorem  \eqref{Thm.Hergphys}], one obtains  via the Lebesgue's dominated convergence theorem and the inequality \eqref{eq.boundimag} that 
$$
\lim_{y\to 0^{+}}  \int_{x_-}^{x_+}  H(\GD-|H_{V_0}(x+\ii y )|)\, \md x  \!=\!  \int_{x_-}^{x_+}  H(\GD-|H_{V_0}(x)|)\, \md x
\leq \frac{4 \Delta}{\pi} \lim_{y\to 0^{+}} \int_{x_-}^{x_+} \operatorname{Im} H_{\mu, V_0} (x+\ii y) \, \md x$$
[where the existence of the limit of the right hand side is guaranteed by formula \eqref{eq.mesHerg} in Corollary \ref{cor.Herg}].
Hence, under the assumption $|I_{\Delta,H_{V_0}}|=0$ and by virtue of the general bound \eqref{eq.sumruleineq},  we get the following   less stringent
but more transparent inequality than \eqref{eq.sumruleineq}:
\begin{equation}\label{eq.bound1losscase}
 |\{ x\in [x_-,x_+] \mid |H_{V_0}(x)|\leq\Delta\}|=\int_{x_-}^{x_+}H(\GD-|H_{V_0}(x)|)\, \md x
\leq\frac{4\GD}{F_{V_0,\infty}} .
 \end{equation}
In fact, the inequality \eqref{eq.bound1losscase} holds without the assumption $|I_{\Delta,H_{V_0}}|=0$. More precisely, $\{\Delta\in \bbR^{+,*} \mid |I_{\Delta,H_{V_0}}|>0\}$ is at most a countable set (since  $|H_{V_0}|$, as a continuous function on $[x_-,x_+]$,  is measurable on this closed interval). Thus, if $|I_{\Delta,H_{V_0}}|>0$ for $\Delta>0$, one can  find a sequence of positive real numbers $(\Delta_n)$  satisfying $\Delta_n>\Delta$ and $|I_{\Delta_n,H_{V_0}}|=0$ for all $n\in \mathbb{N}$ which converge to $\Delta$ as $n\to +\infty$. Hence, it follows with \eqref{eq.bound1losscase} that
 \begin{equation}\label{eq.bound1losscasebis}
 \int_{x_-}^{x_+}H(\GD-|H_{V_0}(x)|)\, \md x \leq  \int_{x_-}^{x_+}H(\GD_n-|H_{V_0}(x)|)\, \md x
\leq\frac{4\GD_n}{F_{V_0,\infty}} \to \frac{4\GD}{F_{V_0,\infty}} \mbox{ as } n\to+\infty.
  \end{equation}
  The quantity on the left of \eqref{eq.bound1losscasebis} represents the total length of the set, between $x_-$ and $x_+$, where the function $|H_{V_0}|$ is less  or equal than $\GD$. 
  Clearly, the bound implies that this total length has to tend  to zero as $\GD\to 0$. In particular, if we take
\begin{equation}\label{Delta-max} 
\GD=\max_{x\in[x_-,x_+]}|H_{V_0}(x)|,
\end{equation}
 (where $0<\Delta<\infty$ since $H_{V_0}$ is continuous and does not vanish on $[x_-,x_+]=[\Go_-^2,\Go_+^2]$ by Proposition  \ref{Pro.perfectcloaking}) then  the left hand side of \eqref{eq.bound1losscasebis}  is $x_+ - x_-$ and thus one obtains
\begin{equation}\label{eq.BoundHV0lossy}
\frac{1}{4}(x_+-x_-) F_{V_0,\infty}\leq \max_{x\in[x_-,x_+]}|H_{V_0}(x)|. 
\end{equation}
Finally, from  the inequality \eqref{eq.BoundHV0lossy} and the definition \eqref{eq.defHVo} of $H_{V_0}$, one gets immediately the following theorem. Also, the next corollary follows immediately from this theorem and Corollary \ref{Cor.boundleadingcoefficient}. One should note that inequality \eqref{eq.bound2losscase} below is a direct analogy of the bound in \cite[p.\ 24, (3.23)]{Cassier:2017:BHF} on the polarizability tensor (which is a consequence of \cite[Proposition 16]{Cassier:2017:BHF}).

\begin{Thm}\label{Th.lossy}
Let the cloaking device satisfy assumptions $\textnormal{H1--H6}$ and  $V_0\in H^{1/2}(\partial \Omega)$ being chosen such that $F_{V_0,\infty}>0$. Moreover, assume that $V_0$ is a real-valued function if $\textnormal{H7}$ is not satisfied (i.e., if the cloak contains a non-reciprocal material). Then the function $F_{V_0}$ satisfies the following inequality:
\begin{equation}\label{eq.bound2losscase}
\frac{1}{4}(\Go_+^2-\Go_-^2) F_{V_0,\infty} \leq \max_{\Go\in[\Go_-,\Go_+]}|\omega^2 F_{V_0}(\Go)|.
\end{equation}
\end{Thm}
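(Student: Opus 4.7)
The plan is to exploit the Herglotz function $H_{V_0}(z) = z F_{V_0}(\sqrt{z})$ from Theorem \ref{Thm.Hergphys} and to extract an integral estimate from the sum rules (Theorem \ref{Thm.sumrules}) by composing $H_{V_0}$ with a judiciously chosen Herglotz probe. Since $F_{V_0,\infty} > 0$ by hypothesis, $H_{V_0}$ is a non-constant Herglotz function, so $H_{V_0}(\mathbb{C}^+) \subseteq \mathbb{C}^+$ and the composition $H_{\mu, V_0} = h_\mu \circ H_{V_0}$ defined in \eqref{eq.defHmuVO} is again Herglotz for any $\mu \in \mathcal{M}$. The low- and high-frequency asymptotics \eqref{eq.LF}--\eqref{eq.HF} of $H_{\mu,V_0}$ yield the general sum-rule bound \eqref{eq.sumruleineq} for every $\mu$.

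First, I would specialize $\mu$ to the uniform probability measure on $[-\Delta, \Delta]$ for a parameter $\Delta > 0$. Computing the integral in \eqref{eq.Hergprob} gives the closed-form expression $H_{\mu,V_0}(z) = \frac{1}{2\Delta}\log\!\bigl(\frac{H_{V_0}(z)-\Delta}{H_{V_0}(z)+\Delta}\bigr)$ (cf.\ \eqref{eq.Hergunif}). Because $\operatorname{Im} H_{V_0}(z) > 0$ for $z\in\mathbb{C}^+$, the argument of the logarithm lies in $\mathbb{C}^+$, so taking imaginary parts and using elementary geometry in the complex plane yields the pointwise lower bound $\operatorname{Im} H_{\mu,V_0}(z) \geq \frac{\pi}{4\Delta}\, H(\Delta - |H_{V_0}(z)|)$, where $H$ is the Heaviside function (this is inequality \eqref{eq.boundimag}).

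Next, I would combine this lower bound with the sum-rule inequality \eqref{eq.sumruleineq} applied over $[x_-,x_+] = [\omega_-^2, \omega_+^2]$. By Theorem \ref{Thm.Hergphys}, $H_{V_0}$ extends continuously to $[\omega_-^2,\omega_+^2]$, so by dominated convergence (taking $y \to 0^+$) I obtain
\begin{equation*}
\bigl|\{x \in [\omega_-^2,\omega_+^2] : |H_{V_0}(x)| \leq \Delta\}\bigr| \leq \frac{4\Delta}{F_{V_0,\infty}},
\end{equation*}
at least for those $\Delta$ for which the level set $\{x : |H_{V_0}(x)| = \Delta\}$ has Lebesgue measure zero. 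A mild technical point, which I expect to be the only real subtlety, is handling the exceptional values of $\Delta$; since $|H_{V_0}|$ is continuous on a compact interval, at most countably many level values have positive measure, so by approximating any $\Delta$ from above by admissible $\Delta_n$ and passing to the limit the bound extends to all $\Delta > 0$.

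Finally, since $H_{V_0}$ is continuous and nonvanishing on $[\omega_-^2,\omega_+^2]$ (it cannot vanish identically as $F_{V_0,\infty} > 0$, and in fact Theorem \ref{Pro.perfectcloaking} prevents zeros arising from perfect cloaking, though here continuity plus compactness suffices to get $0 < \Delta_\star < \infty$), I would choose the optimal value $\Delta_\star = \max_{x \in [\omega_-^2,\omega_+^2]} |H_{V_0}(x)|$, which makes the left-hand side of the measure bound equal to $\omega_+^2 - \omega_-^2$. This yields $\frac{1}{4}(\omega_+^2 - \omega_-^2)\, F_{V_0,\infty} \leq \Delta_\star$, and translating $\Delta_\star$ back via $H_{V_0}(x) = x F_{V_0}(\sqrt{x})$ with the change of variables $x = \omega^2$ gives precisely \eqref{eq.bound2losscase}.
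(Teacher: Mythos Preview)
Your proposal is correct and follows essentially the same approach as the paper: specialize $\mu$ to the uniform measure on $[-\Delta,\Delta]$, use the logarithmic closed form \eqref{eq.Hergunif} to obtain the Heaviside lower bound \eqref{eq.boundimag}, pass to the limit $y\to 0^+$ by dominated convergence and continuity of $H_{V_0}$ on $[\omega_-^2,\omega_+^2]$, dispose of the countably many exceptional $\Delta$ by monotone approximation from above, and then set $\Delta=\max_{x}|H_{V_0}(x)|$ so that the level set has full measure $\omega_+^2-\omega_-^2$. One small clarification: continuity and compactness alone only give $\Delta_\star<\infty$; the positivity $\Delta_\star>0$ genuinely requires that $H_{V_0}$ not vanish identically on $[\omega_-^2,\omega_+^2]$, which, as you note, follows from $F_{V_0,\infty}>0$ via the analytic continuation argument underlying Theorem~\ref{Pro.perfectcloaking}.
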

\begin{Cor}\label{Cor.boundlossy}
\hspace{-0.91678pt}Assume the cloaking device satisfies assumptions $\textnormal{H1--H6}$ and that $V_0\in H^{1/2}(\partial \Omega)$  is affine, i.e., $V_0=-{\bf e}_0\cdot \Bx|_{\partial\Omega}$ for a fixed ${\bf e}_0\in \bbC^d \setminus \{0 \}$ whenever $\textnormal{H7}$ is satisfied (otherwise assume ${\bf e}_0\in \bbR^d$, i.e., if $\textnormal{H7}$ does not hold),
then
\begin{equation}\label{eq.boundlossy}
\frac{1}{4}(\Go_+^2-\Go_-^2)  |\mathcal{O}| \Big(1- \frac{\varepsilon_0} {\varepsilon} \Big) \ \       \frac{1}{|\mathcal{O}|\,  \frac{\varepsilon_0}{\varepsilon}  + |\Omega \setminus \mathcal{O}|  }  \  \varepsilon_0 \ |\Omega| \, \, \|{\bf e}_0\|_{\bbC^d}^2 \leq \max_{\Go\in[\Go_-,\Go_+]}|\omega^2 F_{V_0}(\Go)|.
\end{equation}
\end{Cor}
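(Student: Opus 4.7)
The plan is to derive Corollary \ref{Cor.boundlossy} as a direct consequence of Theorem \ref{Th.lossy} combined with the quantitative lower bound on $F_{V_0,\infty}$ provided by Corollary \ref{Cor.boundleadingcoefficient}, with no new analysis required.

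First, I would verify that the hypotheses of Theorem \ref{Th.lossy} hold in the present setting. The cloaking device satisfies \textnormal{H1--H6} by assumption. The potential $V_0 = -\mathbf{e}_0 \cdot \mathbf{x}|_{\partial\Omega}$ with $\mathbf{e}_0 \neq 0$ is affine, and in the case where \textnormal{H7} fails we are further assuming $\mathbf{e}_0 \in \mathbb{R}^d$, so that $V_0$ is real-valued as required. It then remains to check the positivity condition $F_{V_0,\infty} > 0$. This is exactly the content of Corollary \ref{Cor.boundleadingcoefficient} in the affine case, which furthermore supplies the explicit lower estimate
\begin{equation*}
F_{V_0,\infty} \;\geq\; |\mathcal{O}|\,\Big(1-\frac{\varepsilon_0}{\varepsilon}\Big)\,\frac{1}{|\mathcal{O}|\,\frac{\varepsilon_0}{\varepsilon} + |\Omega\setminus\mathcal{O}|}\,\varepsilon_0\,|\Omega|\,\|\mathbf{e}_0\|_{\mathbb{C}^d}^2 \;>\;0,
\end{equation*}
where the strict positivity follows from $\varepsilon > \varepsilon_0$ (by \eqref{eq.permitivtyob}), $|\mathcal{O}|>0$, and $\mathbf{e}_0\neq 0$.

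With this verification in hand, Theorem \ref{Th.lossy} applies and yields
\begin{equation*}
\frac{1}{4}(\omega_+^2 - \omega_-^2)\,F_{V_0,\infty} \;\leq\; \max_{\omega\in[\omega_-,\omega_+]} |\omega^2 F_{V_0}(\omega)|.
\end{equation*}
Substituting the explicit lower bound for $F_{V_0,\infty}$ into the left-hand side (noting the factor $\frac{1}{4}(\omega_+^2-\omega_-^2) > 0$ preserves the inequality) produces exactly \eqref{eq.boundlossy}, completing the argument.

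There is no genuine obstacle here: all the heavy lifting (the Herglotz/Stieltjes analytic structure, the sum-rule argument via the uniform probability measure leading to Theorem \ref{Th.lossy}, and the variational lower bound on the effective operator giving Corollary \ref{Cor.boundleadingcoefficient}) has already been done. The corollary is essentially a bookkeeping statement making explicit the dependence of the right-hand side lower bound on the physical parameters $|\mathcal{O}|, |\Omega\setminus\mathcal{O}|, \varepsilon/\varepsilon_0, |\Omega|, \|\mathbf{e}_0\|_{\mathbb{C}^d}$, and the bandwidth parameter $\omega_+^2-\omega_-^2$.
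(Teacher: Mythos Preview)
Your proposal is correct and follows exactly the approach the paper indicates: the paper states that the corollary ``follows immediately from this theorem [Theorem \ref{Th.lossy}] and Corollary \ref{Cor.boundleadingcoefficient},'' and you have spelled out precisely that combination, verifying the hypotheses of Theorem \ref{Th.lossy} (including $F_{V_0,\infty}>0$ via the affine lower bound) and then substituting the explicit lower bound for $F_{V_0,\infty}$ into \eqref{eq.bound2losscase}.
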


The bound \eqref{eq.boundlossy} provides a fundamental limit to the cloaking effect by giving a lower bound on the maximum of the function  $\Go\mapsto|\omega^2 F_{V_0}(\Go)|$ on the frequency interval. Again, as with the bounds  \eqref{eq.boundtransp1} and
\eqref{eq.boundtransp2}, this bound 
depends on physical parameters of the passive  system and the input signal: the bandwidth, the volume of the obstacle $|\CO|$, the volume of the cloak $|\Omega \setminus \mathcal{O}|$, a lower bound $\varepsilon/\varepsilon_0$ on the the relative permittivity  of the obstacle, and the energy $ \varepsilon_0 \ |\Omega| \, \, \|{\bf e}_0\|_{\bbC^d}^2 $
of the electric  field generated by the input signal in $\Omega$ when $\Omega$ is filled with vacuum.
They are less precise than the bounds of Corollary \ref{Cor.lossless} since one controls the maximum by a positive quantity. Indeed, these bounds do not tell if $F_{V_0}$ can vanish at some particular frequency on the frequency interval.
However, we want to note that it can be applied by replacing $[\Go_-,\Go_+]$ by any subinterval $[\tilde{\Go}_-,\tilde{\Go}_+]\subseteq [\Go_-,\Go_+]$.
Thus, it provides a positive lower bound on the maximum  of $\Go\mapsto|\omega^2 F_{V_0}(\Go)|$ on any subinterval of $[\Go_-,\Go_+]$. Finally, we point out that choosing other value for $\GD>0$ than the maximum   \eqref{Delta-max} can provide   fundamental limits to cloaking via computing (numerically or experimentally) the left-hand side of the bound \eqref{eq.bound1losscasebis}.

Let us consider now the case of perfect cloaking at a single frequency $\omega_0$, i.e., $F_{V_0}(\omega_0)=0$. Let us assume that $F_{V_0}(\cdot)\in \mathcal{C}^1([\omega_-,\omega_+])$.  Then using the above results we can derive a bound on the derivative $F_{V_0}^{\prime}(\omega_0)$, where $()^{\prime}=\frac{d}{d\omega}$. Indeed, if $[a,b]\subseteq [\omega_-,\omega_+]$ then by the Taylor-Lagrange inequality 
\begin{gather*}
    |b^2F_{V_0}(b)-a^2F_{V_0}(a)|\leq \max_{\omega\in [a,b]}\left\vert\frac{d}{d\omega}[\omega^2F_{V_0}(\omega)]\right\vert(b-a).
\end{gather*}
By virtue of \eqref{eq.bound2losscase}, this implies that
\begin{gather*}
    \frac{1}{4}(\Go_+^2-\Go_-^2) F_{V_0,\infty} \leq \max_{\Go\in[\Go_-,\Go_+]}|\omega^2 F_{V_0}(\Go)|=\max_{\Go\in[\Go_-,\Go_+]}|\omega^2 F_{V_0}(\Go)-\omega_0^2 F_{V_0}(\Go_0)| \\
   \leq \max_{\Go\in[\Go_-,\Go_+]}\left\vert\frac{d}{d\omega}[\omega^2F_{V_0}(\omega)]\right\vert(\omega_+-\omega_-)
\end{gather*}
implying
\begin{gather}
    \frac{1}{4}(\Go_++\Go_-) F_{V_0,\infty}\leq \max_{\Go\in[\Go_-,\Go_+]}\Big\vert\frac{d}{d\omega}[\omega^2F_{V_0}(\omega)]\Big\vert.\label{IneqLossyCloakAtSingleFreqOnMaxDeriv}
\end{gather}
In particular, if we shrink the interval $[\omega_-,\omega_+]$ to a single point $\{\omega_0\}$ then by continuity
\begin{gather*}
    \frac{1}{2}\Go_0 F_{V_0,\infty}\leq \left\vert\frac{d}{d\omega}[\omega^2F_{V_0}(\omega)]|_{\omega=\omega_0}\right\vert=\omega_0^2\left\vert F_{V_0}^{\prime}(\omega_0)\right\vert,
\end{gather*}
which proves the inequality
\begin{gather}
    F_{V_0,\infty}\leq 2\omega_0\left\vert F_{V_0}^{\prime}(\omega_0)\right\vert.\label{IneqLossyCloakAtSingleFreqOnDeriv}
\end{gather}

Now if we assume that $V_0\in H^{1/2}(\partial \Omega)$ is affine $V_0=-{\bf e}_0\cdot \Bx|_{\partial\Omega}$ as in Corollary \ref{Cor.boundlossy}, then it follows from this corollary and the proof just given that
\begin{eqnarray}
  \displaystyle  \frac{1}{4}(\Go_++\Go_-)  |\mathcal{O}| \Big(1- \frac{\varepsilon_0} {\varepsilon} \Big) \ \  \frac{\varepsilon_0 \, |\Omega| \, \, \|{\bf e}_0\|_{\bbC^d}^2}{|\mathcal{O}|\,  \frac{\varepsilon_0}{\varepsilon}  + |\Omega \setminus \mathcal{O}|  } & \displaystyle\leq &\max_{\Go\in[\Go_-,\Go_+]}\Big\vert\frac{d}{d\omega}[\omega^2F_{V_0}(\omega)]\Big\vert, \label{IneqLossyCloakAtSingleFreqOnMaxDerivAffineBC}\\[6pt]
     \displaystyle  |\mathcal{O}| \Big(1- \frac{\varepsilon_0} {\varepsilon} \Big) \ \    \frac{\varepsilon_0 \, |\Omega| \, \, \|{\bf e}_0\|_{\bbC^d}^2}{|\mathcal{O}|\,  \frac{\varepsilon_0}{\varepsilon}  + |\Omega \setminus \mathcal{O}|  } &\leq & 2\omega_0\left\vert F_{V_0}^{\prime}(\omega_0)\right\vert .\label{IneqLossyCloakAtSingleFreqOnDerivAffineBC}
\end{eqnarray}
Thus we have proven the following corollary of Theorem \ref{Th.lossy} and Corollary \ref{Cor.boundlossy}. 
\begin{Cor}\label{cor-bound-dissipative}
    If the assumptions in Theorem \ref{Th.lossy} are satisfied, $F_{V_0}(\cdot)\in \mathcal{C}^1([\omega_-,\omega_+])$ and $\omega_0\in[\omega_-,\omega_+]$ with $F_{V_0}(\omega_0)=0$ then inequalities \eqref{IneqLossyCloakAtSingleFreqOnMaxDeriv} and \eqref{IneqLossyCloakAtSingleFreqOnDeriv} hold. Moreover, if $V_0$ is affine satisfying the hypotheses of Corollary \ref{Cor.boundlossy} then inequalities \eqref{IneqLossyCloakAtSingleFreqOnMaxDerivAffineBC} and \eqref{IneqLossyCloakAtSingleFreqOnDerivAffineBC} also hold.
\end{Cor}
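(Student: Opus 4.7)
The plan is to combine the a priori bound of Theorem~\ref{Th.lossy} on $\max_{\omega \in [\omega_-,\omega_+]}|\omega^2 F_{V_0}(\omega)|$ with a Taylor--Lagrange (mean-value) estimate applied to the $\mathcal{C}^1$ function $g(\omega):=\omega^2 F_{V_0}(\omega)$, using the hypothesis $F_{V_0}(\omega_0)=0$, which forces $g(\omega_0)=0$.

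First, I would fix any subinterval $[a,b]\subseteq[\omega_-,\omega_+]$ and apply the mean-value inequality to $g$ to obtain
\begin{equation*}
|g(b)-g(a)|\leq \max_{\omega\in[a,b]}|g'(\omega)|\,(b-a).
\end{equation*}
Choosing $[a,b]=[\omega_-,\omega_+]$ and exploiting that $g(\omega_0)=0$, I would write
\begin{equation*}
\max_{\omega\in[\omega_-,\omega_+]}|\omega^2 F_{V_0}(\omega)|=\max_{\omega\in[\omega_-,\omega_+]}|g(\omega)-g(\omega_0)|\leq \max_{\omega\in[\omega_-,\omega_+]}|g'(\omega)|\,(\omega_+-\omega_-),
\end{equation*}
where the equality uses that $\omega_0\in[\omega_-,\omega_+]$ and $g(\omega_0)=0$. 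Combining this with the bound \eqref{eq.bound2losscase} of Theorem~\ref{Th.lossy} and dividing by $\omega_+-\omega_-$ (so that $\omega_+^2-\omega_-^2=(\omega_+-\omega_-)(\omega_++\omega_-)$) yields \eqref{IneqLossyCloakAtSingleFreqOnMaxDeriv}.

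To obtain \eqref{IneqLossyCloakAtSingleFreqOnDeriv}, I would take a shrinking sequence of intervals $[\omega_-^{(n)},\omega_+^{(n)}]\subseteq[\omega_-,\omega_+]$ collapsing to $\{\omega_0\}$. Since $F_{V_0}$ and therefore $g$ is $\mathcal{C}^1$, the same argument (which is valid on any subinterval because the assumptions of Theorem~\ref{Th.lossy} apply also to any subinterval of $[\omega_-,\omega_+]$) gives
\begin{equation*}
\tfrac14(\omega_+^{(n)}+\omega_-^{(n)})F_{V_0,\infty}\leq \max_{\omega\in[\omega_-^{(n)},\omega_+^{(n)}]}|g'(\omega)|.
\end{equation*}
Passing to the limit with the continuity of $g'$ and $g'(\omega_0)=2\omega_0 F_{V_0}(\omega_0)+\omega_0^2 F_{V_0}'(\omega_0)=\omega_0^2 F_{V_0}'(\omega_0)$ (using $F_{V_0}(\omega_0)=0$) yields $\tfrac12\omega_0 F_{V_0,\infty}\leq \omega_0^2|F_{V_0}'(\omega_0)|$, i.e., \eqref{IneqLossyCloakAtSingleFreqOnDeriv}.

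Finally, the affine-boundary refinements \eqref{IneqLossyCloakAtSingleFreqOnMaxDerivAffineBC} and \eqref{IneqLossyCloakAtSingleFreqOnDerivAffineBC} follow by replacing $F_{V_0,\infty}$ with the explicit lower bound from Corollary~\ref{Cor.boundleadingcoefficient} (equivalently, the one appearing in Corollary~\ref{Cor.boundlossy}). There is no serious obstacle here: the entire argument is essentially bookkeeping once Theorem~\ref{Th.lossy} is in hand. The only mild subtlety is ensuring that Theorem~\ref{Th.lossy} applies on arbitrary subintervals $[\omega_-^{(n)},\omega_+^{(n)}]$, which is immediate since the hypotheses \textnormal{H1--H6} (and the positivity $F_{V_0,\infty}>0$, which does not depend on the frequency interval) are preserved under restriction.
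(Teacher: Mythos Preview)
Your proof is correct and follows essentially the same approach as the paper: define $g(\omega)=\omega^2 F_{V_0}(\omega)$, combine the bound \eqref{eq.bound2losscase} with the Taylor--Lagrange inequality using $g(\omega_0)=0$, divide by $\omega_+-\omega_-$ to obtain \eqref{IneqLossyCloakAtSingleFreqOnMaxDeriv}, shrink the interval to $\{\omega_0\}$ for \eqref{IneqLossyCloakAtSingleFreqOnDeriv}, and then substitute the explicit lower bound on $F_{V_0,\infty}$ for the affine case.
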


\begin{Rem}
Corollary \ref{cor-bound-dissipative} shows that if $F_{V_0} \in \mathcal{C}^1([\omega_-, \omega_+])$ for a given affine boundary condition $V_0 \in H^{1/2}(\Omega)$, then the set of frequencies at which perfect cloaking can occur within the interval $[\omega_-, \omega_+]$ is finite. Indeed, this set is contained in  
$
S_{V_0} := \{ \omega \in [\omega_-, \omega_+] \mid F_{V_0}(\omega) = 0 \}.
$ 
From inequality~\eqref{IneqLossyCloakAtSingleFreqOnDerivAffineBC}, at any frequency $\omega_0 \in S_{V_0}$, one has $F_{V_0}(\omega_0) = 0$ and $F'_{V_0}(\omega_0) \neq 0$. Consequently, by the compactness of $[\omega_-, \omega_+]$ and the regularity of 
$F_{V_0}$, we deduce that 
$S_{V_0}$ is a finite set.\end{Rem}

\begin{Rem}\label{rem.EpsilMoreRegularity}
Compared with the other results, Corollary~\ref{cor-bound-dissipative} requires the additional assumption that 
$F_{V_0}(\cdot)\in \mathcal{C}^{1}([\omega_-,\omega_+])$.
A sufficient condition for this regularity is to replace Hypothesis~H4 with the following assumption: the map
$\omega \mapsto \omega\,\varepsilon(\mathbf{x},\omega)$
admits an analytic extension from $\mathbb{C}^+$ to a neighborhood of the real interval $[\omega_-,\omega_+]$.  
In other words, we assume that the function $\omega \mapsto \omega\,\varepsilon(\mathbf{x},\omega)$ is analytic in an open set containing the frequency range $[\omega_-,\omega_+]$. For instance, this is the case if the permittivity  $\varepsilon(\mathbf{x},\omega)$ is given for a.e.\ $\Bx\in\Omega\setminus \mathcal{O}$ by a generalized dissipative Drude-Lorentz model (see, e.g., \cite{Cassier:2025:LBS,Cassier-Joy:2025operator1}). Under this analyticity assumption, one can use the same reasoning as in Step 2 of Proposition \ref{Pro:DtNMapQuadFormIsHerglotz}, but now applied for $\omega \in [\omega_-,\omega_+]$ (instead of $\omega \in \mathbb{C}^+$), to conclude that $F_{V_0}$ is analytic (and a fortiori $\mathcal{C}^1$) in a neighborhood of  $[\omega_-,\omega_+]$.
\end{Rem}

\subsection{Generalization of the lossless case to include dispersive obstacles}\label{sec:GenOurCloakingBnds}
We show here  that if the cloak is lossless in the  frequency interval $[\Go_-,\Go_+]$ (i.e., if H8 holds) then one can extend  the bounds  of Corollary \ref{Cor.lossless}  to the case  of a (possibly anisotropic)  passive dispersive lossless and reciprocal material whose   real-valued  permittivity tensor  $\BGve_{\mathrm{ob}}(\Bx, \Go)$  is ``larger" than the vacuum on  $[\Go_-,\Go_+]$. More precisely, in this section we assume that the permittivity $\BGve_{\mathrm{ob}}$ of the obstacle satisfies:
\begin{itemize}
\item the assumptions H1, H2, and H4 (with $\Omega\setminus \mathcal{O}$  replaced by  $\mathcal{O}$);
\item  $\forall \Go \in [\Go_-,\Go_+]$, $\BGve_{\mathrm{ob}}(\cdot,\Go) \in M_d(L^{\infty}(\mathcal{O}))$ and [instead of \eqref{eq.permitivtyob}] it satisfies for a.e.\ $\Bx\in \mathcal{O}$ and $\forall \Go \in [\Go_-,\Go_+]$:  
\begin{equation}\label{eq.Displossless}
 \BGve_{\mathrm{ob}}(\Bx, \Go)=\BGve_{\mathrm{ob}}(\Bx, \Go)^{\top}, \ \mathfrak{I}[\BGve_{\mathrm{ob}}(\Bx,\Go)]=0, \mbox{ and }  \BGve_{\mathrm{ob}}(\Bx,\Go)\geq\Gve \BI \mbox{ with } \Gve\in \mathbb{R}, \, \Gve>\Gve_0.
\end{equation} 
\end{itemize}
Finally, we also assume that H8 holds, from which it now follows that 
\begin{equation}\label{eq.Dispob}
 \mbox{for a.e.\ } \Bx \in \Omega \mbox{ and }\forall \Go \in [\Go_-,\Go_+],   \, \mathfrak{I}[\BGve(\Bx,\omega)]=0. \
\end{equation}

\begin{Lem}\label{Lem.monotinicity}  Assume the  permittivity of the obstacle satisfies  $\textnormal{H1, H2, H4}$ (with $\mathcal{O}$ instead of $\Omega\setminus \mathcal{O}$), and  \eqref{eq.Displossless} then for a.e. $\Bx\in \mathcal{O}$, one has
\begin{equation}\label{eq.monotonicity1}
 \BGve_{\mathrm{ob}}(\Bx,\Go_0) \leq \BGve_{\mathrm{ob}}(\Bx,\Go),  \quad  \forall \omega_0, \omega \in  [\Go_-, \Go_+] \mid \omega_0 \leq \omega.
\end{equation}
\end{Lem}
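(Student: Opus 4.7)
The plan is to reduce the matrix inequality to a scalar monotonicity statement and then exploit the Nevanlinna representation of the resulting Herglotz function. Fix a.e.\ $\Bx\in\mathcal{O}$ and $\bU\in\mathbb{R}^d$, and set
\[
h(\omega)\;:=\;\omega\,\BGve_{\mathrm{ob}}(\Bx,\omega)\bU\cdot\bU,\qquad \omega\in\mathbb{C}^+.
\]
By H1 and Lemma~\ref{lem:matHergViaQuadForms}, $h$ is a scalar Herglotz function; by H4 it extends continuously to $[\omega_-,\omega_+]$; by \eqref{eq.Dispob} it is real-valued on that interval; and since $\bU$ is real, H2 gives the symmetry $h(-\overline{\omega})=-\overline{h(\omega)}$ on $\mathbb{C}^+$. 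Since the symmetric real matrices $\BGve_{\mathrm{ob}}(\Bx,\omega_0)$ and $\BGve_{\mathrm{ob}}(\Bx,\omega)$ are already Hermitian, it suffices to prove $h(\omega_0)/\omega_0\le h(\omega)/\omega$ for every real $\bU$ and every $\omega_-\le\omega_0\le\omega\le\omega_+$, since the inequality extends to complex vectors by writing $\bU=\bU_1+\mathrm{i}\bU_2$ with $\bU_j\in\mathbb{R}^d$ and using that real symmetric matrices give $\BGve_{\mathrm{ob}}\bU\cdot\overline{\bU}=\BGve_{\mathrm{ob}}\bU_1\cdot\bU_1+\BGve_{\mathrm{ob}}\bU_2\cdot\bU_2$ on $[\omega_-,\omega_+]$.

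Next, I invoke Theorem~\ref{thm.Herg} to write
\[
h(\omega)=\alpha\,\omega+\beta+\int_{\mathbb{R}}\!\left(\frac{1}{\xi-\omega}-\frac{\xi}{1+\xi^2}\right)\mathrm{d}\mathrm{m}(\xi),\qquad \omega\in\mathbb{C}^+,
\]
with $\alpha\ge 0$ and $\mathrm{m}$ a positive Borel measure. Exactly as in Step~3 of the proof of Theorem~\ref{thm.differenceHerg}, the symmetry $h(-\overline{\omega})=-\overline{h(\omega)}$ together with the inversion formulas \eqref{eq.coeffdomherg} and \eqref{eq.mesHerg} forces $\beta=0$ and the measure $\mathrm{m}$ to be even. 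Moreover, since $h$ is continuous and real-valued on $[\omega_-,\omega_+]$, the same formulas yield $\mathrm{m}(\{a\})=0$ for every $a\in(\omega_-,\omega_+)$ and $\mathrm{m}((a,b))=0$ for every $[a,b]\subset(\omega_-,\omega_+)$; hence $\mathrm{m}$ has no mass on $(\omega_-,\omega_+)$.

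The core of the argument is now an explicit calculation of the sign of the derivative of $\omega\mapsto h(\omega)/\omega$ on $(\omega_-,\omega_+)$. Since $\mathrm{m}$ has no support there, we may differentiate under the integral sign, obtaining
\[
\omega h'(\omega)-h(\omega)=\int_{\mathbb{R}}\frac{2\omega-\xi}{(\xi-\omega)^2}\,\mathrm{d}\mathrm{m}(\xi)+\int_{\mathbb{R}}\frac{\xi}{1+\xi^2}\,\mathrm{d}\mathrm{m}(\xi).
\]
The second integral vanishes because its integrand is odd and $\mathrm{m}$ is even. For the first integral, I symmetrize by the change $\xi\mapsto-\xi$ and use the algebraic identity
\[
\frac{2\omega-\xi}{(\xi-\omega)^2}+\frac{2\omega+\xi}{(\xi+\omega)^2}=\frac{4\omega^3}{(\xi^2-\omega^2)^2},
\]
which yields
\[
\omega h'(\omega)-h(\omega)=2\omega^3\int_{\mathbb{R}}\frac{\mathrm{d}\mathrm{m}(\xi)}{(\xi^2-\omega^2)^2}\;\ge\;0
\]
for every $\omega\in(\omega_-,\omega_+)$ (using $\omega>0$ and $\mathrm{m}\ge 0$). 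Dividing by $\omega^2$ gives $(h/\omega)'(\omega)\ge 0$, and by continuity $\omega\mapsto\BGve_{\mathrm{ob}}(\Bx,\omega)\bU\cdot\bU=h(\omega)/\omega$ is non-decreasing on $[\omega_-,\omega_+]$. Varying $\bU$ and extending to complex vectors as explained above yields the matrix inequality \eqref{eq.monotonicity1}.

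The main obstacle is the symmetrized identity in the third paragraph: without the evenness of $\mathrm{m}$ (and hence the vanishing of the $\xi/(1+\xi^2)$ term and the cancellation that produces $4\omega^3/(\xi^2-\omega^2)^2$) the sign of $\omega h'(\omega)-h(\omega)$ is not manifest. This is precisely why the reality principle H2, rather than mere passivity, is essential here.
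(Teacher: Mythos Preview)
Your proof is correct and follows essentially the same strategy as the paper: both use the Nevanlinna representation of the Herglotz function $\omega\mapsto\omega\,\BGve_{\mathrm{ob}}(\Bx,\omega)\bU\cdot\overline{\bU}$, exploit H2 to obtain the evenness of the measure, use H4 together with \eqref{eq.Dispob} to exclude mass from $(\omega_-,\omega_+)$, and then differentiate to get the non-negative expression $2\omega\int_{\mathbb{R}}(\xi^2-\omega^2)^{-2}\,\mathrm{d}\mathrm{m}(\xi)$ for the $\omega$-derivative of the quadratic form. The only organizational difference is that the paper symmetrizes first (arriving at the compact representation \eqref{eq.permittivity2} analogous to \eqref{eq.stieldtn2}) and then differentiates, whereas you differentiate the raw Nevanlinna representation and symmetrize afterwards; your restriction to real $\bU$ is a clean way to secure the symmetry $h(-\overline\omega)=-\overline{h(\omega)}$ from H2 alone.

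One small technical point: the line where you split
\[
\omega h'(\omega)-h(\omega)=\int_{\mathbb{R}}\frac{2\omega-\xi}{(\xi-\omega)^2}\,\mathrm{d}\mathrm{m}(\xi)+\int_{\mathbb{R}}\frac{\xi}{1+\xi^2}\,\mathrm{d}\mathrm{m}(\xi)
\]
is slightly loose, since each summand behaves like $\pm 1/\xi$ at infinity and need not be separately $\mathrm{m}$-integrable under the sole hypothesis $\int(1+\xi^2)^{-1}\,\mathrm{d}\mathrm{m}<\infty$. The fix is immediate: keep the combined integrand $\tfrac{2\omega-\xi}{(\xi-\omega)^2}+\tfrac{\xi}{1+\xi^2}$, which is $O(\xi^{-2})$ and hence $\mathrm{m}$-integrable, and symmetrize it directly using $\int f\,\mathrm{d}\mathrm{m}=\int\tfrac12(f(\xi)+f(-\xi))\,\mathrm{d}\mathrm{m}(\xi)$; your algebraic identity then produces $2\omega^3/(\xi^2-\omega^2)^2$ in one step.
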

\begin{proof}
Let $\bU\in \mathbb{C}^d$ be fixed. Using the  assumption H1 and H2, one shows in the same way as in the proof of Lemma \ref{Const.Stielt}  a formula of the type \eqref{eq.stieldtn2} [by replacing the Herglotz function $h_{\operatorname{dif}, V_0}$ by the Herglotz functions $\omega \mapsto \omega \BGve(\Bx,\Go)  \bU \cdot \overline{\bU}$], namely, for a.e.\ $x\in \mathcal{O}$, one has
\begin{equation}\label{eq.permittivity2}
\BGve_{\mathrm{ob}}(\Bx,\Go) \bU \cdot \overline{\bU}= a_{\Bx,\Bu}- \int_{\bbR}  \frac{\md \mu_{\Bx,\bU}( \xi)}{\Go^2-\xi^2},  \quad \forall \Go \in \bbC^+,
\end{equation}
where  $a_{ \Bx,\Bu}$ and $\mu_{\Bx,\bU}$ are the positive coefficient and (even) measure  of the Herglotz function $\omega \mapsto \omega \BGve_{\mathrm{ob}}(\Bx,\Go)  \bU \cdot \overline{\bU}$ in the representation formula \eqref{eq.defhergl}. Moreover, using H4 and \eqref{eq.Dispob}, one gets from the formula \eqref{eq.mesHerg} that the support of the measure $\mu_{\Bx,\bU}$ is contained in $Y=\bbR\setminus [ (-\omega_+, -\omega_-)\cup  (\omega_-, \omega_+) ]$. Thus, the assumption H1 to H4 and the Schwartz reflection principle show that one  extends analytically  the formula \eqref{eq.permittivity2} to $\bbC\setminus Y $. Thus, in particular, one can differentiate  this formula under the integral (see, e.g., \cite{Mattner:2001:CDI})  on $(\Go_-, \Go_-)$ to obtain 
\begin{equation}\label{eq.permittivity3}
\frac{\mathrm{d} }{\mathrm{d} \Go}[ \BGve_{\mathrm{ob}}(\Bx,\cdot)\bU \cdot \overline{\bU}](\Go)=   2\Go  \int_{\bbR}  \frac{\md \mu_{\Bx, \BU}( \xi)}{(\Go^2-\xi^2)^2}\geq 0,  \quad \forall \Go \in (\Go_-,\Go_+) .
\end{equation}
Thus, \eqref{eq.permittivity3}  and the continuity of $\BGve_{\mathrm{ob}}(\Bx,\cdot)$ at $\Go_{\pm}$ (by hypothesis H4) clearly implies \eqref{eq.monotonicity1}. 
\end{proof}
We extend now the  bounds obtained  in Corollary \ref{Cor.lossless} to the case  of a dispersive obstacle. The key ingredients are the above monotonicity property, the Dirichlet variational principle \eqref{eq.Dirichletpriciple} (which holds when the cloak is lossless), and the analytical bounds of Corollary \ref{Cor.lossless} obtained for a non-dispersive obstacle.

To begin,  using the fact  $\BGve_{\mathrm{ob}}(\cdot,\Go) \in M_d(L^{\infty}(\mathcal{O}))$, \eqref{eq.Displossless},  and the assumptions H5 and  H6 on the cloaking device, one can still  define (via Proposition \ref{pro.DtNDirBVPResults}) on $[\Go_-,\Go_+]$,  the function $F_{V_0}$ defined for all $V_0\in  H^{\frac{1}{2}}(\partial \Omega)$ by
\begin{equation}\label{eq.FVObis}
F_{V_0}(\Go)=\langle [\Lambda_{\BGve(\cdot, \omega)} -\Lambda_{\BGve_0}]V_0,\overline{V_0} \rangle_{H^{-\frac{1}{2}}(\partial \Omega), H^{\frac{1}{2}}(\partial \Omega)},
\end{equation}
when  the obstacle is dispersive (corresponding to the right situation of Figure \ref{fig.comp}).

Let $\Go_0$ be a fixed reference frequency in $[\Go_-,\Go_+]$ where we will assume in the following that approximate cloaking holds. Then, one can define [as previously, see \eqref{eq.defFV}] a  function $F_{V_0, \mathrm{ref}}$ on $\bbC^+\cup [\Go_-,\Go_+]$ via the  assumptions H1, H3, H4, H5 and H6 on the cloaking device
\begin{equation}\label{eq.FVOref}
F_{V_0, \mathrm{ref}}(\Go)=\langle [\Lambda_{\BGve_{\mathrm{ref}}(\cdot, \omega)} -\Lambda_{\BGve_0}]V_0,\overline{V_0} \rangle_{H^{-\frac{1}{2}}(\partial \Omega), H^{\frac{1}{2}}(\partial \Omega)}
\end{equation}
for a reference permittivity  $\BGve_{\mathrm{ref}}(\cdot, \omega)$  defined   for all $\omega\in  \bbC^+\cup [\Go_-,\Go_+]$ by
$$
\BGve_{\mathrm{ref}}(\cdot, \omega)=\BGve(\cdot, \omega) \ \mbox{ on }   \ \Omega \setminus \mathcal{O}  \quad \mbox{ and } \quad \BGve_{\mathrm{ref}}(\cdot, \omega):=\BGve_{\mathrm{ob}}(\cdot,\Go_0)  \mbox { on  } \CO.$$
In other words, $\BGve_{\mathrm{ref}}$  coincides with $\BGve$ within the cloak $\Omega \setminus \mathcal{O}$
and  $\BGve_{\mathrm{ref}}$ has frequency independent  permittivity  $\BGve_{\mathrm{ob}}(\cdot,\Go_0)$ within the obstacle  $\CO$ (this corresponds to the left situation of Figure \ref{fig.comp}). Thus,  for this reference situation, we are back to the same assumptions as previously since the reference  permittivity is non-dispersive in the obstacle and [by \eqref{eq.Dispob} evaluated at $\Go=\Go_0$] it satisfies \eqref{eq.passive-cloak} with $\BGve_{\mathrm{ob}}(\Bx) =\BGve_{\mathrm{ob}}(\Bx,\Go_0)  $.  We denote by $F_{V_0,\mathrm{ref},\infty}$ its limit when $|\Go|\to\infty$ (along the ``positive" imaginary axis)  defined in Theorem  \ref{thm.differenceHerg} by \eqref{eq.limitinfherg}. Finally, we point out that by construction $F_{V_0,\mathrm{ref}}(\Go_0)=F_{V_0}(\Go_0)$.

 \begin{figure}[!ht]
\centering
 \includegraphics[width=0.95\textwidth]{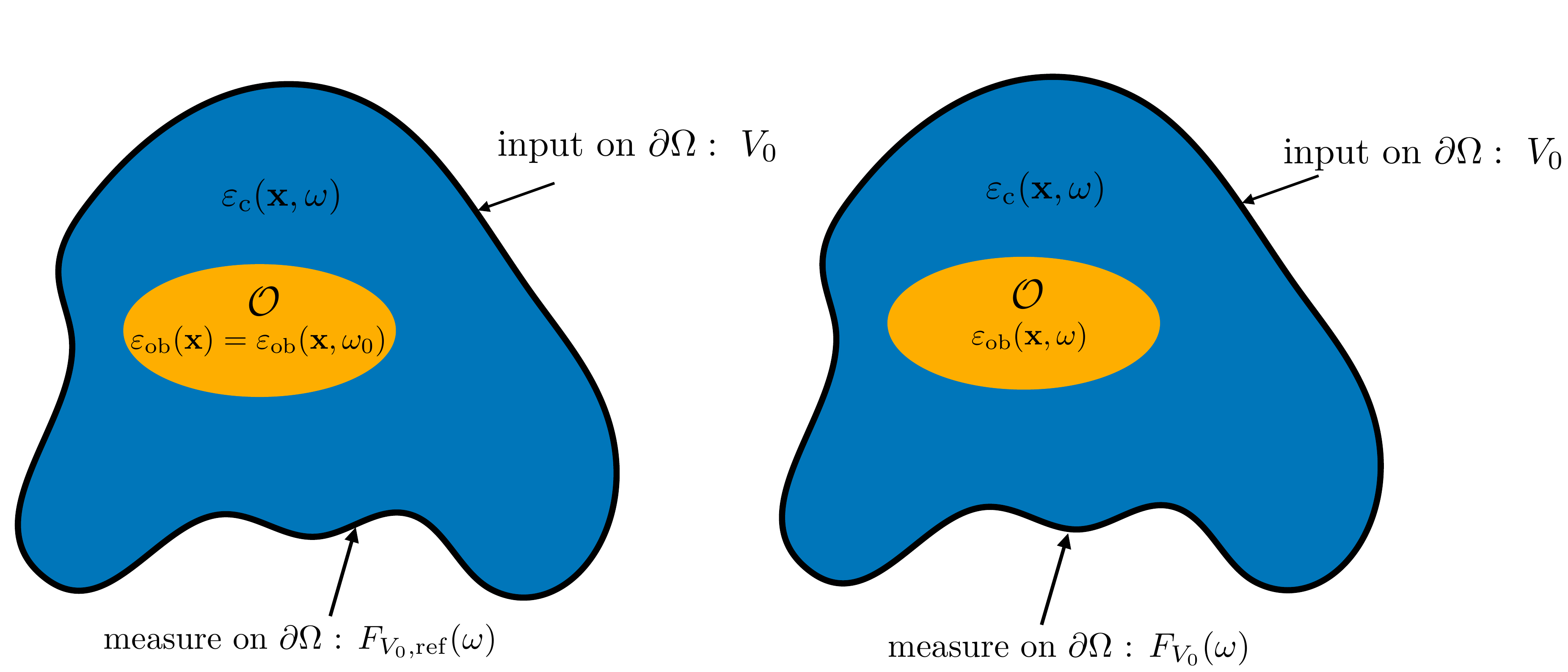}
 \caption{Comparison  of two situations. The difference between the two is that for the left figure, the obstacle is not dispersive on $[\Go_-, \Go_+]$ whereas it is dispersive in the right figure. In the left,  $\BGve_{\mathrm{ob}}(\Bx,\Go)$ is  chosen such that  $\BGve_{\mathrm{ob}}(\Bx) :=\BGve_{\mathrm{ob}}(\Bx,\Go_0)  $ on  $\CO$ and $[\Go_-, \Go_+]$, where $\BGve_{\mathrm{ob}}(\cdot,\Go_0)$ is the permittivity  of the dispersive obstacle on the right at a  reference frequency $\Go_0\in [\Go_-, \Go_+]$, where approximate cloaking occurs. Thus, the two physical setups coincide at $\Go_0$.}
 \label{fig.comp}
\end{figure}

\begin{Cor}\label{cor:ApproxCloakingWDisperObj}
Let  the obstacle satisfies  $\textnormal{H1, H2 and H4}$ (with $\mathcal{O}$ instead $\Omega\setminus \CO$)  and  the relation \eqref{eq.Displossless}. Assume that the cloaking device satisfy assumptions $\textnormal{H1--H6}$ and is lossless on the frequency range $[\Go_-,\Go_+]$ (i.e., that $\textnormal{H8}$ holds). Moreover, assume that $V_0$ is a real-valued function if $\textnormal{H7}$ is not satisfied (i.e., if the cloak contains a non-reciprocal material). Then, if  there exists a frequency $\Go_0\in [\Go_-, \Go_+]$ such that the cloak achieves approximate  cloaking at $\Go_0$, i.e., there exists $\eta> 0$ such that 
\begin{equation}\label{eq.aprroximatecloakingbis}
|F_{\tilde{V}_0}(\Go_0)|\leq \eta \,  G^{\operatorname{vac}}_{\tilde{V}_0},  \quad \, \forall  \tilde{V}_0\in H^{1/2}(\partial \Omega)
\end{equation}
then, for every $V_0\in H^{1/2}(\partial\Omega)$ such that $F_{V_0,\mathrm{ref},\infty}>0$,
\begin{eqnarray}
  F_{V_0}(\Go) \leq F_{V_0,\mathrm{ref},\infty}(\Go) & \leq &  (-F_{V_0,\mathrm{ref},\infty}+\eta\,  G^{\operatorname{vac}}_{V_0}) \, \frac{\Go^2_0-\Go^2}{\Go^2} + \eta \,  G^{\operatorname{vac}}_{V_0} \ \mbox{ if }  \ \Go\in [\Go_-,\Go_0], \label{eq.boundtransp3}\\[5pt]
F_{V_0}(\Go) \geq  F_{V_0,\mathrm{ref},\infty}(\Go) & \geq & ( F_{V_0,\mathrm{ref},\infty}+\eta \, G^{\operatorname{vac}}_{V_0}) \, \frac{\Go^2-\Go^2_0}{\Go^2}- \eta \, G^{\operatorname{vac}}_{V_0}\ \mbox{ if } \ \omega\in  [\Go_0,\Go_+]\label{eq.boundtransp4}.
\end{eqnarray}
Furthermore, if $V_0\in H^{1/2}(\partial \Omega)$  is affine, i.e., $V_0=-{\bf e}_0\cdot \Bx|_{\partial\Omega}$ for a fixed ${\bf e}_0\in \bbC^d\setminus\{0\}$ when $\textnormal{H7}$ is satisfied (otherwise assume ${\bf e}_0\in \bbR^d$, i.e., if $\textnormal{H7}$ does not hold), then one can replace $F_{V_0,\mathrm{ref}, \infty}$ [in inequalities \eqref{eq.boundtransp3}  and \eqref{eq.boundtransp4}] by 
\eqref{eq.lowerboundtransp3}.
\end{Cor}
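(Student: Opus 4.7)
The plan is to reduce the dispersive case to the non-dispersive reference case via a monotonicity comparison, and then apply Corollary \ref{Cor.lossless} to the reference setup. First I would apply Corollary \ref{Cor.lossless} to the reference problem. By construction $\BGve_{\mathrm{ref}}(\cdot,\Go_0) = \BGve(\cdot,\Go_0)$ on all of $\Omega$, so $F_{V_0,\mathrm{ref}}(\Go_0) = F_{V_0}(\Go_0)$ and the approximate cloaking hypothesis \eqref{eq.aprroximatecloakingbis} transfers to the reference system at $\Go_0$. The reference obstacle is non-dispersive with $\BGve_{\mathrm{ob}}(\cdot,\Go_0) \geq \Gve \BI$ (from \eqref{eq.Displossless} evaluated at $\Go_0$), so it satisfies \eqref{eq.permitivtyob}, and the cloak is unchanged so H1--H6 and H8 continue to hold for $\BGve_{\mathrm{ref}}$. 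Hence Corollary \ref{Cor.lossless} applied to $F_{V_0,\mathrm{ref}}$ furnishes the outer (right-hand) inequalities in \eqref{eq.boundtransp3} and \eqref{eq.boundtransp4}.

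Next I would establish the inner comparisons $F_{V_0} \leq F_{V_0,\mathrm{ref}}$ on $[\Go_-,\Go_0]$ and $F_{V_0} \geq F_{V_0,\mathrm{ref}}$ on $[\Go_0,\Go_+]$. By Lemma \ref{Lem.monotinicity}, for a.e.\ $\Bx \in \CO$ one has $\BGve_{\mathrm{ob}}(\Bx,\Go) \leq \BGve_{\mathrm{ob}}(\Bx,\Go_0)$ when $\Go \leq \Go_0$, with the reverse inequality when $\Go \geq \Go_0$. Since $\BGve(\cdot,\Go)$ and $\BGve_{\mathrm{ref}}(\cdot,\Go)$ coincide on $\Omega \setminus \CO$, these extend to pointwise tensor inequalities on all of $\Omega$. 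On $[\Go_-,\Go_+]$ both permittivities are Hermitian (by H8 and \eqref{eq.Displossless}) and uniformly positive definite, so the Dirichlet variational principle \eqref{eq.Dirichletpriciple} gives monotonicity of the DtN quadratic form in the tensor, yielding
\begin{equation*}
\langle \Lambda_{\BGve(\cdot,\Go)}V_0,\overline{V_0}\rangle_{H^{-\frac{1}{2}}(\partial\Omega),H^{\frac{1}{2}}(\partial\Omega)} \leq \langle \Lambda_{\BGve_{\mathrm{ref}}(\cdot,\Go)}V_0,\overline{V_0}\rangle_{H^{-\frac{1}{2}}(\partial\Omega),H^{\frac{1}{2}}(\partial\Omega)}
\end{equation*}
for $\Go \in [\Go_-,\Go_0]$ and the reverse inequality on $[\Go_0,\Go_+]$. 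Subtracting the common vacuum quadratic form $\langle \Lambda_{\Gve_0}V_0,\overline{V_0}\rangle$ from both sides then yields the inner (left-hand) inequalities in \eqref{eq.boundtransp3}--\eqref{eq.boundtransp4}.

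For the affine case $V_0 = -{\bf e}_0 \cdot \Bx|_{\partial\Omega}$, I would apply Corollary \ref{Cor.boundleadingcoefficient} to the reference system. Since the reference obstacle permittivity is constant in frequency and bounded below by $\Gve\BI$ with $\Gve > \Gve_0$, and the high-frequency limit of the cloak is still $\Gve_0\BI$ (H3), the lower bound \eqref{eq.lowerboundtransp3} applies verbatim to $F_{V_0,\mathrm{ref},\infty}$ and may be substituted into the outer bounds, completing the final assertion.

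The main obstacle will be the monotonicity step. On the real interval $[\Go_-,\Go_+]$ the assumptions yield only self-adjoint and coercive tensors (H6 combined with H8), whereas the Dirichlet minimization principle \eqref{eq.Dirichletpriciple} (via Theorem \ref{ThmClassicalDiriMinPrin} or equivalently Corollary \ref{cor:MonotonocityOfEffOps}) requires genuine uniform positive definiteness. One has to argue that, because the tensors are Hermitian, the coercivity direction $\gamma$ in \eqref{def.coercivitytensor} must necessarily coincide with $\pi/2$ (the only value compatible with a real quadratic form and a strictly positive imaginary part of $e^{\mathrm{i}\gamma}\BGve\BW\cdot\overline{\BW}$), so that $\BGve(\Bx,\Go) \geq c\BI$ uniformly, justifying the variational principle and hence the passage from the pointwise tensor inequality to the DtN quadratic-form inequality.
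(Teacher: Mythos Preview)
Your proposal is correct and follows essentially the same route as the paper: reduce to the non-dispersive reference problem via Lemma~\ref{Lem.monotinicity}, use the Dirichlet variational principle \eqref{eq.Dirichletpriciple} to pass from the pointwise tensor ordering to an ordering of the DtN quadratic forms, subtract the vacuum term, and then invoke Corollary~\ref{Cor.lossless} for $F_{V_0,\mathrm{ref}}$ (noting $F_{V_0,\mathrm{ref}}(\Go_0)=F_{V_0}(\Go_0)$). The paper presents the inner inequalities first and the outer ones second, but the ingredients are identical.

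One small imprecision in your final paragraph: it is not true that $\gamma$ must equal $\pi/2$. If $\BGve(\Bx,\Go)\BW\cdot\overline{\BW}=r\in\mathbb{R}$, then $\operatorname{Im}(e^{\mathrm{i}\gamma}\,\omega r)=\omega r\sin\gamma$, so any $\gamma$ with $\sin\gamma>0$ is compatible with $r>0$ (and any $\gamma$ with $\sin\gamma<0$ with $r<0$). What actually pins down the sign is that on the obstacle $\BGve_{\mathrm{ob}}(\Bx,\Go)\geq\varepsilon\BI>0$ by \eqref{eq.Displossless}, and H6 uses a single $\gamma(\Go)$ uniformly over $\Omega$; this forces $\sin\gamma(\Go)>0$, and hence $\BGve(\Bx,\Go)\geq c\BI$ on all of $\Omega$. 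With that correction your justification of uniform positive definiteness is complete, and the Dirichlet principle applies as you claim.
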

\begin{proof}
Use the monotonicity property on  the permittivity on the obstacle proved in Lemma \ref{Lem.monotinicity} and the fact that $\BGve(\cdot, \omega)$ and  $\BGve_{\mathrm{ref}}(\cdot, \omega)$ coincides on the cloak, one obtains 
\begin{eqnarray}
 \BGve_{\mathrm{ref}}(\Bx, \omega) \leq& \hspace{-1em}\BGve(\Bx,\Go),& \quad  \forall \omega_0, \omega \in  [\Go_-, \Go_+] \mid \omega_0 \leq \omega, \ \mbox{and  a.e } \Bx\in \Omega, \label{eq.monotonicity2}\\[4pt]
 \BGve(\Bx, \omega)  \leq& \BGve_{\mathrm{ref}}(\Bx,\Go),&  \quad  \forall \omega_0, \omega \in  [\Go_-, \Go_+] \mid \omega\leq \omega_0,  \ \mbox{and  a.e $\Bx\in \Omega$. } \label{eq.monotonicity3}
\end{eqnarray}
Furthermore, we let the reader check that by virtue of \eqref{eq.Displossless}, \eqref{eq.Dispob} and H6, one has 
for all $\omega \in  [\Go_-, \Go_+]$ that $\BGve(\cdot, \omega)$ and $\BGve_{\mathrm{ref}}(\cdot, \omega)$  are uniformly positive in the sense of the definition \eqref{def:UniformlyPosDefinite}.
Thus combining  the  Dirichlet minimization principle \eqref{eq.Dirichletpriciple} and the monotonicity properties  \eqref{eq.monotonicity2}  and \eqref{eq.monotonicity3} yields that for all $V_0\in  H^{\frac{1}{2}}(\partial \Omega)$:
\begin{eqnarray*}\label{eq.monotonicity4}
 \langle \Lambda_{\BGve_{\mathrm{ref}}(\cdot, \omega)}V_0, \overline{V_0}\rangle_{H^{-\frac{1}{2}}(\partial \Omega), H^{\frac{1}{2}}(\partial \Omega)}& \leq  \langle \Lambda_{\BGve(\cdot, \omega)}V_0, \overline{V_0}\rangle_{H^{-\frac{1}{2}}(\partial \Omega), H^{\frac{1}{2}}(\partial \Omega)}, & \forall \omega_0, \omega \in  [\Go_-, \Go_+] \mid \omega_0 \leq \omega,\\
\langle \Lambda_{\BGve(\cdot, \omega)}V_0, \overline{V_0}\rangle_{H^{-\frac{1}{2}}(\partial \Omega), H^{\frac{1}{2}}(\partial \Omega)}& \leq  \langle \Lambda_{\BGve_{\mathrm{ref}}(\cdot, \omega)}V_0, \overline{V_0}\rangle_{H^{-\frac{1}{2}}(\partial \Omega), H^{\frac{1}{2}}(\partial \Omega)},& \forall \omega_0, \omega \in  [\Go_-, \Go_+] \mid \omega\leq \omega_0.
\end{eqnarray*}
Hence, it immediately implies by the  definitions \eqref{eq.FVObis} and \eqref{eq.FVOref}  of $F_{V_0}$ and $F_{V_0, \mathrm{ref}}$ that
\begin{equation*}
F_{V_0}(\Go) \leq F_{V_0,\mathrm{ref},\infty}(\Go) \mbox{ if }  \ \Go_-\leq  \Go\leq \Go_0 \ \mbox{ and } \ F_{V_0}(\Go) \geq  F_{V_0,\mathrm{ref},\infty}(\Go)  \mbox{ if } \   \Go_0 \leq \Go \leq \Go_+.
\end{equation*}
Assuming that $F_{V_0,\mathrm{ref},\infty}>0$, one obtains the other inequality in  \eqref{eq.boundtransp3} and  \eqref{eq.boundtransp4}  by applying  the Corollary \ref{Cor.lossless} to the reference function $F_{V_0, \mathrm{ref}}$ with constant permittivity on the obstacle. 
\end{proof}
In conclusion, all the results  of Section \ref{sec-losslesscase}  about approximate cloaking or cloaking at $\Go_0$ still holds  (with the coefficient $F_{V_0,\mathrm{ref},\infty}$ instead of $F_{V_0,\infty}$).
In particular, if  cloaking holds at $\Go_0$, one can pass to the limit $\eta \searrow 0$ and take $\eta=0$ in the bound \eqref{eq.boundtransp3} and  \eqref{eq.boundtransp4}. When the input field $V_0$ is affine, it  yields the quantitative bounds \eqref{eq.boundtransp1bis} and \eqref{eq.boundtransp2bis} which involved physical quantities showing that cloaking can occurs at frequencies other than $\Go_0$ on the interval $[\omega_-,\omega_+]$. Furthermore, for affine input $V_0$, the results  on approximate cloaking still apply. Namely,  the bounds \eqref{eq.approximatecloacking1} and \eqref{eq.approximatecloacking2} and the inequalities  \eqref{ineq.approximate1} and \eqref{ineq.approximate2} holds also
in this setting. 
\appendix
\section{Appendix: Auxiliary Results}\label{sec:Appendix}

\subsection{Proof of Proposition \ref{pro.DtNDirBVPResults}}\label{sec-annexe}
\begin{proof}
For a given $V_0\in H^{\frac{1}{2}}(\partial \Omega)$, there exists a unique $u_{V_0} \in H^1(\Omega)$ that solves the Dirichlet problem for the Laplacian equation, i.e., $\Delta u_{V_0}=0$ in $\Omega$ and $u_{V_0}=V_0$ on $\partial \Omega$. This defines a bounded linear operator
$P_{\Omega} \in \mathcal{L}(H^{\frac{1}{2}}(\partial \Omega), H^1(\Omega))$ by $P_{\Omega}(V_0)=u_{V_0}$ (see, e.g., Theorem 3.14 on p.\ 46 of \cite{Monk:2003:FEM}). Then, as $\Ba\in M_d( L^{\infty}(\Omega))$, one defines two operators $\mathbb{A}\in \mathcal{L}(H^1_0(\Omega), H^{-1}(\Omega))$ and $\mathbb{B} \in \mathcal{L}(H^{\frac{1}{2}}(\partial \Omega), H^{-1}(\Omega))$  by:
\begin{equation*}
\langle \mathbb{A} v,\overline{w}   \rangle_{H^{-1}(\Omega),H^1_0(\Omega)}=\int_{\Omega} \Ba \nabla v\cdot \overline{\nabla w} \, \mathrm{d}\Bx \ \mbox{ and } \langle  \mathbb{B} V_0 ,\overline{w}   \rangle_{H^{-1}(\Omega),H^1_0(\Omega)}=-\int_{\Omega} \Ba \nabla \mathcal{P}_{\Omega}(V_0) \cdot \overline{\nabla w}\, \mathrm{d} \Bx
\end{equation*}
for all $v, \, w \in H^1_0(\Omega)$ and all $V_0\in H^{\frac{1}{2}}(\partial \Omega)$. By the Cauchy-Schwarz inequality, one shows easily that $\|\mathbb{A}\|\leq \|\Ba\|_{\infty}$ and that $\|B\|\leq \|\Ba\|_{\infty} \|P_{\Omega}\|$.
Moreover, one easily proves that $u=\mathcal{P}_{\Omega}(V_0) +\tilde{u}_{V_0}\in H^1(\Omega)$ solves Dirichlet problem \eqref{def.DirBVPForCoercivityTensors} if and only if $\tilde{u}_{V_0}\in H^1_0(\Omega)$ solves the the linear system:
$
\mathbb{A} \tilde{u}_{V_0}=\mathbb{B} V_0 .
$ The next step is to prove that the operator $\mathbb{A}$ is invertible (as a consequence of the Lax-Milgram lemma, we recall here the proof).

Using the uniformly coercive assumption  \eqref{def.coercivitytensor}  and the Poincar\'e inequality yields $\forall v\in H^1_0(\Omega)$:
\begin{eqnarray*}
&& \|\mathbb{A} v\|_{H^{-1}(\Omega)} \|v\|_{H^1(\Omega)} \geq |\langle \mathbb{A} v,\overline{v}   \rangle_{H^{-1}(\Omega),H^1_0(\Omega)}|  \\
&& \geq \left|\operatorname{Im}\left[e^{\mathrm{i} \gamma}\langle \mathbb{A} v,\overline{v}   \rangle_{H^{-1}(\Omega),H^1_0(\Omega)}\right]\right|=\left|\operatorname{Im}\left(e^{\mathrm{i} \gamma}\int_{\Omega} \Ba \nabla v \cdot \overline{\nabla v } \mathrm{d} \Bx\right)\right|=\int_{\Omega}\mathfrak{I}(e^{\mathrm{i} \gamma} \Ba) \nabla v \cdot \overline{\nabla v}  \,  \mathrm{d} \Bx \\
&& \geq c \|\nabla v\|_{\mathcal{H}}^2\geq C \|v\|_{H^1(\Omega)}^2 \quad \mbox{ with $C=c/2\, \min(1, C(\Omega))$,} 
\end{eqnarray*}
where $C(\Omega)$ is the Poincar\'e constant associated to $\Omega$. Thus,  $\|\mathbb{A}v \|_{H^{-1}(\Omega)} \geq C \,\|v\|_{H^1(\Omega)}, \, \forall v \in H^1_0(\Omega) $. Hence, $\mathbb{A}$ is injective and has closed range.

To prove the surjectivity of $\mathbb{A}$, one introduces the adjoint of $\mathbb{A}$: $\mathbb{A}^\dagger\in \mathcal{L}(H^1_0(\Omega), H^{-1}(\Omega))$ by identifying $H^1_0(\Omega)$ with its bidual (see \cite{McLean:2000:SES}, pp.\ 42-43 for more details) so that $\mathbb{A}^\dagger$ is defined by the following relation
$$
\langle \mathbb{A}^\dagger v, \overline{w}\rangle_{H^{-1}(\Omega),H^1_0(\Omega)} =\langle v,\overline{ \mathbb{A} w}\rangle_{H^1_0(\Omega),H^{-1}(\Omega)} =\overline{\langle \mathbb{A} w ,\overline{v}\rangle}_{H^{-1}(\Omega),H^1_0(\Omega)} , \quad  \forall \, v,\, w\in H^1_0(\Omega).
$$
 Using the same argument as above, one shows also that $\|\mathbb{A}^\dagger v \|_{H^{-1}(\Omega)}\geq C \|v\|_{H^1(\Omega)}, \, \forall v \in H^1_0(\Omega) $.  Thus $\mathbb{A}^{\dagger}$ is injective and therefore $\mathbb{A}$ is surjective. One concludes that $\mathbb{A}$ is invertible and $\mathbb{A}^{-1}\in \mathcal{L}(H^{-1}(\Omega), H^{1}_0(\Omega))$. 
Therefore, the solution $u\in H^1(\Omega)$ of the Dirichlet problem \eqref{def.DirBVPForCoercivityTensors} is unique and given by $u=u_{V_0}+\tilde{u}_{V_0}=(\mathbb{A}^{-1}B+P_{\Omega}) V_0$.

Next, one easily shows that the ``displacement operator" given by $\mathbb{D}:   u \mapsto  \Ba \nabla u$ belongs to $ \mathcal{L}(\mathcal{V}, H_{\mathrm{div}}(\Omega))$, where $\mathcal{V}$ is the  closed subspace  of $H^1(\Omega)$ defined by $\mathcal{V}:=\{u\in H^1(\Omega) \mid \,\nabla\cdot \Ba\, \nabla u=0 \}$.  Furthermore, it satisfies $\|\mathbb{D}\|\leq \|\Ba\|_{\infty}$.  Finally, by defining the normal trace operator  $\gamma_{\Bn}: \Bv \mapsto \Bv \cdot \Bn \in \mathcal{L}(H_{div}(\Omega),H^{-\frac{1}{2}}(\partial \Omega))$ (see, e.g., Theorem 2.5 on p.\ 27 of \cite{Girault:1986:FNS}),  one gets that $\Lambda_{a}=\gamma_{\Bn} \, \mathbb{D}  (A^{-1}B+P_{\Omega}) \in \mathcal{L}(H^{\frac{1}{2}}(\partial \Omega), H^{-\frac{1}{2}}(\partial \Omega))$ is well-defined by the relation \eqref{def.DtNOpDirBVPForCoercivityTensors}. 
\end{proof}

\subsection{On the strong convergence of inverse operators}\label{sec:StrConvInvOps}
We recall the following definition from \cite[Chap.\ VII, pp.\ 182--183]{Reed:1980:FA}.
\begin{Def}
Let $E, \, F$ be two Banach spaces, $(\mathbb{T}_n)_{n\in \mathbb{N}}$ be a sequence of operators in $\mathcal{L}(E,F)$, and $\mathbb{T}\in\mathcal{L}(E,F)$. One says that $(\mathbb{T}_n)_{n\in \mathbb{N}}$ converges strongly to $\mathbb{T}\in\mathcal{L}(E,F)$ if only if
$$
\forall u \in E, \quad \|\mathbb{T}_n u- \mathbb{T} u\|_{F}\to 0 \ \mbox{ as } \ n\to +\infty.$$
\end{Def}
\begin{Lem}\label{Lem.strongconv}
Let $E, \, F$ be two Banach spaces,  $\mathbb{T}$ be an invertible operator in $\mathcal{L}(E,F)$, and $(\mathbb{T}_n)_{n\in \mathbb{N}}$ be a sequence of  invertible operators in $\mathcal{L}(E,F)$  such that 
\begin{equation}\label{eq.uniformbound}
\displaystyle\sup_{n\in \mathbb{N}}\|\mathbb{T}_n^{-1}\|_{\mathcal{L}(F,E)}<+\infty.
\end{equation}
Then, if $(\mathbb{T}_n)_{n\in \mathbb{N}}$ converges strongly to $\mathbb{T}$,  one has $(\mathbb{T}_n^{-1})_{n\in \mathbb{N}}$ converges strongly to $\mathbb{T}^{-1}$.
\end{Lem}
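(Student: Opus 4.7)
The plan is to reduce the strong convergence of the inverse operators to the strong convergence of $(\mathbb{T}_n)_{n\in \mathbb{N}}$ to $\mathbb{T}$ via the elementary operator identity
\begin{equation*}
\mathbb{T}_n^{-1}-\mathbb{T}^{-1}=\mathbb{T}_n^{-1}(\mathbb{T}-\mathbb{T}_n)\mathbb{T}^{-1},\quad \forall n\in \mathbb{N},
\end{equation*}
which holds in $\mathcal{L}(F,E)$ since each $\mathbb{T}_n$ and $\mathbb{T}$ are invertible in $\mathcal{L}(E,F)$.

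Fix an arbitrary $v\in F$ and set $u=\mathbb{T}^{-1}v\in E$. Applying the identity above to $v$ yields
\begin{equation*}
\mathbb{T}_n^{-1}v-\mathbb{T}^{-1}v=\mathbb{T}_n^{-1}(\mathbb{T}-\mathbb{T}_n)u,
\end{equation*}
and hence, taking the $E$-norm on both sides,
\begin{equation*}
\|\mathbb{T}_n^{-1}v-\mathbb{T}^{-1}v\|_{E}\leq \|\mathbb{T}_n^{-1}\|_{\mathcal{L}(F,E)}\,\|(\mathbb{T}-\mathbb{T}_n)u\|_{F}\leq M\,\|(\mathbb{T}_n-\mathbb{T})u\|_{F},
\end{equation*}
where $M:=\sup_{n\in\mathbb{N}}\|\mathbb{T}_n^{-1}\|_{\mathcal{L}(F,E)}<+\infty$ by the uniform-boundedness hypothesis \eqref{eq.uniformbound}. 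The strong convergence $\mathbb{T}_n\to \mathbb{T}$ in $\mathcal{L}(E,F)$ applied to the fixed vector $u\in E$ then gives $\|(\mathbb{T}_n-\mathbb{T})u\|_{F}\to 0$ as $n\to+\infty$, so $\|\mathbb{T}_n^{-1}v-\mathbb{T}^{-1}v\|_{E}\to 0$. Since $v\in F$ was arbitrary, this proves the strong convergence of $(\mathbb{T}_n^{-1})_{n\in\mathbb{N}}$ to $\mathbb{T}^{-1}$.

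No step here is a real obstacle: the only substantive use of the hypotheses is the uniform operator-norm bound \eqref{eq.uniformbound}, which is essential because strong convergence of $(\mathbb{T}_n)$ alone does not furnish a uniform bound on $(\mathbb{T}_n^{-1})$. It is worth noting that if $E$ and $F$ were Banach spaces with additional structure one could invoke the Banach--Steinhaus theorem to obtain such a bound automatically from the pointwise boundedness of $(\mathbb{T}_n^{-1})$, but in the present generality the assumption \eqref{eq.uniformbound} must be imposed a priori, as is done in the statement of the lemma.
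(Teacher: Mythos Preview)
Your proof is correct and follows essentially the same approach as the paper: both use the resolvent-type identity $\mathbb{T}_n^{-1}-\mathbb{T}^{-1}=\mathbb{T}_n^{-1}(\mathbb{T}-\mathbb{T}_n)\mathbb{T}^{-1}$, apply it to a fixed $v\in F$, and combine the uniform bound \eqref{eq.uniformbound} with the strong convergence of $(\mathbb{T}_n)$ at the point $\mathbb{T}^{-1}v$. Your closing remark about Banach--Steinhaus is slightly off (the uniform boundedness principle holds in any Banach space; the real point is that pointwise boundedness of $(\mathbb{T}_n^{-1})$ is not available a priori), but this is tangential and does not affect the argument.
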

\begin{proof}
We first point out that since the operators $\mathbb{T}_n \in \mathcal{L}(E,F)$ are  invertible then by the inverse mapping theorem (see Theorem III.11 page 83 of \cite{Reed:1980:FA}) their inverses $\mathbb{T}_n^{-1}$ are bounded and thus belong to $\mathcal{L}(F,E)$. In addition, we assume here with \eqref{eq.uniformbound} that these inverses are uniformly bounded with respect to $n$. Then, one uses the following  identity: 
\begin{equation}\label{eq.indenditycont}
\mathbb{T}_{n}^{-1}v  -\mathbb{T}^{-1}v  =\mathbb{T}_{n}^{-1} \, \big( (\mathbb{T} -\mathbb{T}_{n}) \ \mathbb{T}^{-1}\, v\big), \quad \forall v\in F.
\end{equation}
Thus, it follows  with \eqref{eq.indenditycont} and the strong convergence of $(\mathbb{T}_n)_{n\in \mathbb{N}}$ to $\mathbb{T}$ that there exists $C>0$ such that
$$
\|\mathbb{T}_{n}^{-1}v  -\mathbb{T}^{-1}v\|_E \leq C\, \|(\mathbb{T} -\mathbb{T}_{n}) \ \mathbb{T}^{-1}\, v\|_E\to 0 \ \mbox{ as } n\to+\infty.
$$

\end{proof}

\subsection{Herglotz functions in Banach spaces}\label{sec:HergFuncBanachSpaces}
In this section, we generalize the discussion from Sec.\ \ref{subsec:ReviewHerglotzStieltjesFuncs} on scalar- and matrix-valued Herglotz functions to operator-valued Herglotz functions on Banach spaces.

First, recall the Hilbert space definition for such functions along with the well-known lemma (see, e.g., \cite{Gesztesy:2001:SAO}) that generalizes Def.\ \ref{def.matHerg} and Lemma \ref{lem:matHergViaQuadForms}, stated  for matrix-valued Herglotz functions, to the setting of operator-valued Herglotz functions.
\begin{Def}\label{def:OperatorValuedHerglotzFuncHilbertSpace}
Let $\mathcal{H}$ be a complex Hilbert space. An analytic function $h:\mathbb{C}^+\rightarrow \mathcal{L}(\mathcal{H})$ is a operator-valued Herglotz function if
\begin{gather*}
    \mathfrak{I}[h(z)]\geq 0, \ \forall z\in \mathbb{C}^+.
\end{gather*}
\end{Def}
\begin{Lem}\label{lem:WeakOperHerglotzImpliesStrongOpHerglotz}
    Let $\mathcal{H}$ be a complex Hilbert space with inner product $(\cdot, \cdot)$. An operator-valued function $h:\mathbb{C}^+\rightarrow \mathcal{L}(\mathcal{H})$ is a Herglotz function if and only if $z\mapsto (h(z)u, u)$ is a Herglotz function for every $u\in \mathcal{H}$.
\end{Lem}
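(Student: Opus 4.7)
\textbf{Plan for proof of Lemma \ref{lem:WeakOperHerglotzImpliesStrongOpHerglotz}.}
The ``only if'' direction is immediate: if $h$ is an operator-valued Herglotz function, then for each fixed $u \in \mathcal{H}$ the scalar function $z \mapsto (h(z)u,u)$ is the composition of the analytic function $h$ with the continuous sesquilinear form $(\cdot \, u,u)$, hence is analytic on $\mathbb{C}^+$; and its imaginary part equals $(\mathfrak{I}[h(z)]u,u) \geq 0$ by the hypothesis $\mathfrak{I}[h(z)] \geq 0$ together with the identity in \eqref{def:OperatorRealImagnaryParts} relating the operator imaginary part to its quadratic form.

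The nontrivial direction is ``if''. Assume $z \mapsto (h(z)u,u)$ is a Herglotz function for every $u \in \mathcal{H}$. First, I would upgrade this quadratic-form analyticity to sesquilinear-form analyticity by invoking the polarization identity (see \cite[p.\ 25]{Teschl:2014:MMQ}), which expresses $(h(z)u,v)$ as a finite linear combination of the four quadratic forms $(h(z)(u+\mathrm{i}^k v),u+\mathrm{i}^k v)$ for $k=0,1,2,3$. Since each quadratic form is analytic on $\mathbb{C}^+$ by hypothesis, it follows that $z \mapsto (h(z)u,v)$ is analytic on $\mathbb{C}^+$ for every $u,v \in \mathcal{H}$; that is, $h$ is weakly analytic in the Hilbert space sense.

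Second, I would invoke \cite[Theorem III.3.12]{Kato:1995:PTO} (already cited in Step 2 of the proof of Proposition \ref{Pro:DtNMapQuadFormIsHerglotz}), which states the equivalence between weak analyticity and operator-norm analyticity for an $\mathcal{L}(\mathcal{H})$-valued function on an open subset of $\mathbb{C}$. Applying this to $h$ with $\mathcal{H}^*$ identified with $\mathcal{H}$ via the Riesz representation theorem yields that $h: \mathbb{C}^+ \to \mathcal{L}(\mathcal{H})$ is analytic in the norm topology. (Implicit here is local boundedness of $\|h(z)\|$, which follows from the Banach--Steinhaus principle applied to the weakly analytic, hence weakly bounded, family.)

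Finally, for the positivity assertion, fix $z \in \mathbb{C}^+$ and any $u \in \mathcal{H}$: by hypothesis $\operatorname{Im}(h(z)u,u) \geq 0$, and by the identity $\operatorname{Im}(h(z)u,u) = (\mathfrak{I}[h(z)]u,u)$ from \eqref{def:OperatorRealImagnaryParts}, this gives $(\mathfrak{I}[h(z)]u,u) \geq 0$ for all $u \in \mathcal{H}$, i.e., $\mathfrak{I}[h(z)] \geq 0$. Combined with the analyticity established above, this shows $h$ is an operator-valued Herglotz function in the sense of Def.\ \ref{def:OperatorValuedHerglotzFuncHilbertSpace}, completing the proof. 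The only subtle step is the passage from weak to norm analyticity, which is handled by the Kato reference and the standard uniform boundedness argument; everything else is bookkeeping.
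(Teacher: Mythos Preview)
Your proposal is correct. The paper does not actually prove Lemma \ref{lem:WeakOperHerglotzImpliesStrongOpHerglotz} itself---it is stated as a well-known result with a reference to \cite{Gesztesy:2001:SAO}---but your argument (polarization identity to pass from quadratic to sesquilinear forms, then \cite[Theorem 3.12]{Kato:1995:PTO} for weak-to-norm analyticity, then the identity $\operatorname{Im}(h(z)u,u)=(\mathfrak{I}[h(z)]u,u)$ for positivity) is precisely the approach the paper uses to prove the more general Banach-space version, Lemma \ref{lem:WeakOperHerglotzBanachSpaceImpliesStrongOpHerglotz}.
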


 Now, we give a generalization, from Hilbert spaces to Banach spaces, of operator-valued Herglotz functions in Definition \ref{def:OperatorValuedHerglotzFuncHilbertSpace} and Lemma \ref{lem:WeakOperHerglotzImpliesStrongOpHerglotz}, which is inspired by the work of D.\ Alpay, O.\ Timoshenko, and D.\ Volok \cite{Alpay:2007:CFB, Alpay:2009:CFI} on a similar generalization for Carathéodory functions. Our motivation is due to the fact that the DtN map as a function of frequency is such an operator-valued Herglotz function  with the Banach space and its (conjugate) dual space
\begin{gather}\label{eq.dualspaces}
    \mathcal{B}=H^{\frac{1}{2}}(\partial \Omega), \ \mathcal{B}^{\dagger}=H^{-\frac{1}{2}}(\partial \Omega)_c=\{\ell_c:\ell \in H^{-\frac{1}{2}}(\partial \Omega)\},
\end{gather}
where we define the inverse conjugation maps by
\begin{gather*}
    H^{-\frac{1}{2}}(\partial \Omega)\ni \ell\mapsto \ell_c\in H^{-\frac{1}{2}}(\partial \Omega)_c,\ \ell_c(v)=\ell(\overline {v}), \ \forall \ell\in H^{-\frac{1}{2}}(\partial \Omega), v\in H^{\frac{1}{2}}(\partial \Omega),\\
  H^{-\frac{1}{2}}(\partial \Omega)_c\ni f\mapsto f_c\in H^{-\frac{1}{2}}(\partial \Omega),\ f_c(v)=f(\overline{v}), \ \forall f\in H^{-\frac{1}{2}}(\partial \Omega)_c, v\in H^{\frac{1}{2}}(\partial \Omega)
\end{gather*}
so that the relationship between the dual pairing of $\mathcal{B}$ and $\mathcal{B}^{\dagger}$ and the dual pairing of $H^{\frac{1}{2}}(\partial \Omega)$ and $H^{-\frac{1}{2}}(\partial \Omega)$ in this case becomes
\begin{gather*}
[f,v]_{\mathcal{B}}=f(v)=f_c(\overline{v})=\langle f_c,\overline{v} \rangle_{H^{-\frac{1}{2}}(\partial \Omega), H^{\frac{1}{2}}(\partial \Omega)}, \ \forall f\in \mathcal{B}^{\dagger}, v\in \mathcal{B},\\
[\ell_c,v]_{\mathcal{B}}=\ell_c(v)=\ell(\overline{v})=\langle \ell,\overline{v} \rangle_{H^{-\frac{1}{2}}(\partial \Omega), H^{\frac{1}{2}}(\partial \Omega)}, \ \forall \ell\in H^{-\frac{1}{2}}(\partial \Omega), v\in H^{\frac{1}{2}}(\partial \Omega).
\end{gather*}
\begin{Def}\label{def:PrelimOperatorValuedHerglotzFuncBanachSpace}
    Let $\mathcal{B}$ be a complex Banach space, $\mathcal{B}^\dagger$ the space of antilinear bounded functions (i.e., conjugate dual space), and denote the duality pairing of $\mathcal{B}$ and $\mathcal{B}^\dagger$ by 
    \begin{gather*}
        [f,b ]_{\mathcal{B}}:=f(b), \ \forall b\in \mathcal{B}, f\in \mathcal{B}^{\dagger}.
    \end{gather*}
    For an element $A\in \mathcal{L}(\mathcal{B},\mathcal{B}^\dagger)$, define its adjoint $A^{\times}\in \mathcal{L}(\mathcal{B}^{\dagger\dagger},\mathcal{B}^\dagger)$ by
    \begin{gather*}
        A^{\times}(g)(b)=g(Ab), \ \forall g\in \mathcal{B}^{\dagger\dagger}, b\in B;
    \end{gather*}
    let $C\in \mathcal{L}(\mathcal{B}, \mathcal{B}^{\dagger\dagger})$ denote the canonical embedding, i.e., 
    \begin{gather*}
        C(b)(g)=\overline{g(b)}, \ \forall b\in \mathcal{B}, g\in \mathcal{B}^{\dagger};
    \end{gather*}
    its conjugate-adjoint $A^\dagger\in  \mathcal{L}(\mathcal{B},\mathcal{B}^\dagger)$ by
    \begin{gather*}
        A^\dagger:=A^{\times} \circ C,
    \end{gather*}
    which satisfies
    \begin{gather*}
        [ A^\dagger b,c ]_{\mathcal{B}}=(A^\dagger b)(c)=(A^\times C(b))(c)=C(b)(Ac)=\overline{(Ac)(b)}=\overline{[ Ac,b ]_{\mathcal{B}}};
    \end{gather*}
    the real $\mathfrak{R}(A)$ and imaginary $\mathfrak{I}(A)$ parts of $A$ are defined by
    \begin{gather*}
        \mathfrak{R}(A):=\frac{1}{2}(A+A^\dagger), \ \mathfrak{I}(A):=\frac{1}{2\mathrm{i}}(A-A^\dagger);
    \end{gather*}
    the operator $A$ is said to be positive semidefinite, which we denote by $A\geq 0$, if
    \begin{gather*}
      [ A b,b ]_{\mathcal{B}}\geq 0, \ \forall b\in \mathcal{B}.
    \end{gather*}
\end{Def}

\begin{Def}\label{def:OperatorValuedHerglotzFuncBanachSpace}
    An analytic function $h:\mathbb{C}^+\rightarrow \mathcal{L}(\mathcal{B},\mathcal{B}^\dagger)$ is an operator-valued Herglotz function if 
    \begin{gather*}
    \mathfrak{I}[h(z)]\geq 0, \ \forall z\in \mathbb{C}^+.
    \end{gather*}
\end{Def}

\begin{Lem}\label{lem:WeakOperHerglotzBanachSpaceImpliesStrongOpHerglotz}
    Let $\mathcal{B}$ be a complex Banach space. An operator-valued function $h:\mathbb{C}^+\rightarrow \mathcal{L}(\mathcal{B}, \mathcal{B}^\dagger)$ is a Herglotz function if and only if $z\mapsto [h(z)u, u]_{\mathcal{B}}$ is a Herglotz function for every $u\in \mathcal{B}$.
\end{Lem}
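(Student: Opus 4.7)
The plan is to follow the Hilbert-space argument of Lemma \ref{lem:WeakOperHerglotzImpliesStrongOpHerglotz}, replacing the inner product by the duality pairing $[\cdot,\cdot]_{\mathcal{B}}$, with additional care devoted to the analyticity upgrade in the Banach setting. The cornerstone identity is
\[
\operatorname{Im}[h(z)u,u]_{\mathcal{B}} = \tfrac{1}{2\mathrm{i}}\bigl([h(z)u,u]_{\mathcal{B}} - \overline{[h(z)u,u]_{\mathcal{B}}}\bigr) = \tfrac{1}{2\mathrm{i}}\bigl([h(z)u,u]_{\mathcal{B}} - [h(z)^{\dagger}u,u]_{\mathcal{B}}\bigr) = [\mathfrak{I}[h(z)]u,u]_{\mathcal{B}},
\]
which uses only the defining property $\overline{[Tu,u]_{\mathcal{B}}}=[T^{\dagger}u,u]_{\mathcal{B}}$ of the conjugate-adjoint obtained directly from Def.\ \ref{def:PrelimOperatorValuedHerglotzFuncBanachSpace}. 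The forward implication is then immediate: if $h$ is norm analytic, then for every fixed $u\in\mathcal{B}$ the bounded linear functional $T\mapsto [Tu,u]_{\mathcal{B}}$ composed with $h$ yields scalar analyticity of $z\mapsto [h(z)u,u]_{\mathcal{B}}$ on $\mathbb{C}^{+}$, and the identity above converts $\mathfrak{I}[h(z)]\geq 0$ into $\operatorname{Im}[h(z)u,u]_{\mathcal{B}}\geq 0$.

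For the converse, reading the same identity in the other direction yields $[\mathfrak{I}[h(z)]u,u]_{\mathcal{B}}=\operatorname{Im}[h(z)u,u]_{\mathcal{B}}\geq 0$ for every $u\in\mathcal{B}$, hence $\mathfrak{I}[h(z)]\geq 0$. To obtain the analyticity of $h$ in the operator norm I would first apply the polarization identity to the sesquilinear form $(u,v)\mapsto [h(z)u,v]_{\mathcal{B}}$ (which is linear in $u$ and antilinear in $v$ by the conventions of Def.\ \ref{def:PrelimOperatorValuedHerglotzFuncBanachSpace}) to upgrade the scalar analyticity of $z\mapsto[h(z)w,w]_{\mathcal{B}}$ for $w\in\{u\pm v,\;u\pm\mathrm{i}v\}$ into scalar analyticity of $z\mapsto[h(z)u,v]_{\mathcal{B}}$ for every pair $u,v\in\mathcal{B}$. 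Equivalently, for each fixed $u$ the map $z\mapsto h(z)u\in\mathcal{B}^{\dagger}$ is weak-$*$ analytic, since the functionals $f\mapsto f(v)$ for $v\in\mathcal{B}$ form a point-separating subfamily of $(\mathcal{B}^{\dagger})^{*}$.

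The expected main obstacle is the promotion of this pointwise scalar analyticity to norm analyticity of $h:\mathbb{C}^{+}\to\mathcal{L}(\mathcal{B},\mathcal{B}^{\dagger})$. My plan is to first establish local operator-norm boundedness by a two-fold application of the uniform boundedness principle on any compact $K\subset\mathbb{C}^{+}$: for each fixed pair $u,v$ the scalar analytic (hence continuous) function $z\mapsto[h(z)u,v]_{\mathcal{B}}$ is bounded on $K$, so for each $u$ the family of antilinear functionals $\{v\mapsto[h(z)u,v]_{\mathcal{B}}\}_{z\in K}\subset\mathcal{B}^{*}$ is pointwise bounded, yielding $\sup_{z\in K}\|h(z)u\|_{\mathcal{B}^{\dagger}}<\infty$; a second application over $u$ then gives $\sup_{z\in K}\|h(z)\|_{\mathcal{L}(\mathcal{B},\mathcal{B}^{\dagger})}<\infty$. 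Once local norm boundedness is in hand, the standard Dunford--Grothendieck-type principle that a locally bounded operator-valued function whose pairing against a point-separating family of continuous functionals is scalarly analytic must itself be norm analytic closes the argument; equivalently, one may apply Morera's theorem together with Vitali's convergence theorem to the locally bounded family of difference quotients of $h$.
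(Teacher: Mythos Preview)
Your proposal is correct and follows essentially the same route as the paper: both directions hinge on the identity $\operatorname{Im}[h(z)u,u]_{\mathcal{B}}=[\mathfrak{I}[h(z)]u,u]_{\mathcal{B}}$, and for the converse both use polarization to pass from diagonal to off-diagonal scalar analyticity before upgrading to norm analyticity. The only difference is that the paper dispatches the analyticity upgrade in one line by citing Kato \cite[Theorem 3.12]{Kato:1995:PTO} (weak analyticity implies uniform analyticity), whereas you spell out the two-step uniform boundedness argument and invoke the Dunford--Grothendieck principle for locally bounded weak-$*$ analytic functions; your more explicit version has the minor advantage of being transparently correct even when $\mathcal{B}$ is non-reflexive (so that the evaluation functionals $f\mapsto f(v)$ do not exhaust $(\mathcal{B}^\dagger)^*$), a subtlety the paper's direct citation of Kato glosses over but which is harmless in the intended application since $\mathcal{B}=H^{1/2}(\partial\Omega)$ is a Hilbert space.
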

\begin{proof}
    ($\Rightarrow$): Obvious. ($\Leftarrow$): By hypothesis and the polarization identity of sesquilinear forms on complex vector spaces \cite[p.\ 25]{Teschl:2014:MMQ}, it follows that $z\mapsto [h(z)u, v]_{\mathcal{B}}$ is an analytic function on $\mathbb{C}^+$ for each $u,v\in \mathcal{B}$, hence $h:\mathbb{C}^+\rightarrow \mathcal{L}(\mathcal{B}, \mathcal{B}^\dagger)$ is analytic by \cite[Theorem 3.12, pp.\ 152-153]{Kato:1995:PTO}. Next, for any $z\in \mathbb{C}^+, u\in \mathcal{B},$ we have
    \begin{gather*}
        0\leq \operatorname{Im}[h(z)u, u]_{\mathcal{B}}=\frac{1}{2\mathrm{i}}([h(z)u, u]_{\mathcal{B}}-\overline{[h(z)u, u]_{\mathcal{B}}})=\frac{1}{2\mathrm{i}}\mathrm{i}([h(z)u, u]_{\mathcal{B}}-[h(z)^{\dagger}u, u]_{\mathcal{B}})\\
        =\left[\frac{1}{2\mathrm{i}}(h(z)-h(z)^{\dagger})u, u\right]_{\mathcal{B}}
        =[\mathfrak{I}[h(z)]u, u]_{\mathcal{B}}
    \end{gather*}
    which implies $\mathfrak{I}[h(z)]\geq 0$. This completes the proof of the lemma.
\end{proof}

Combining Lemma \ref{lem:WeakOperHerglotzBanachSpaceImpliesStrongOpHerglotz}, Proposition \ref{Pro:DtNMapQuadFormIsHerglotz} and \eqref{eq.dualspaces} immediately yields a proof of the following main result of this appendix. 
\begin{Pro}\label{pro.DtNMapIsHerglotzFunc}
Under the assumptions $\textnormal{H1}, \, \textnormal{H5},$ and $\textnormal{H6}$, the function
$ \Go \in \bbC^+\mapsto \Lambda_{\omega \varepsilon(\cdot, \Go) }\in \mathcal{L}(H^{1/2}(\partial \Omega), H^{-1/2}(\partial \Omega))$ is an operator-valued Herglotz function.
\end{Pro}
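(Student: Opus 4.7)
The strategy is to reduce the claim to the scalar-valued Herglotz property already established in Proposition~\ref{Pro:DtNMapQuadFormIsHerglotz}, via the weak-to-strong Herglotz criterion for operators between a Banach space and its conjugate dual (Lemma~\ref{lem:WeakOperHerglotzBanachSpaceImpliesStrongOpHerglotz}). First, I would fix the Banach space setup of \eqref{eq.dualspaces}, namely $\mathcal{B}=H^{\frac{1}{2}}(\partial \Omega)$ and $\mathcal{B}^{\dagger}=H^{-\frac{1}{2}}(\partial \Omega)_c$, and use the conjugation isomorphism $\ell\mapsto \ell_c$ to identify each bounded operator $\Lambda\in\mathcal{L}(H^{\frac{1}{2}}(\partial \Omega),H^{-\frac{1}{2}}(\partial \Omega))$ with the operator $\widetilde{\Lambda}\in\mathcal{L}(\mathcal{B},\mathcal{B}^{\dagger})$ defined by $\widetilde{\Lambda}(v)=(\Lambda v)_c$. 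Under this identification, the defining pairing satisfies
\begin{equation*}
[\widetilde{\Lambda}V_0,V_0]_{\mathcal{B}} =(\Lambda V_0)_c(V_0)=\langle \Lambda V_0,\overline{V_0}\rangle_{H^{-\frac{1}{2}}(\partial \Omega),H^{\frac{1}{2}}(\partial \Omega)},
\end{equation*}
for every $V_0\in \mathcal{B}$, so that quadratic forms in the Banach/duality sense coincide exactly with the quadratic forms used in Proposition~\ref{Pro:DtNMapQuadFormIsHerglotz}.

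Second, under H1, H5 and H6 together with Lemma~\ref{Lem.comptnesor-vacuum} (which upgrades H1 to a coercivity estimate on $\mathbb{C}^+$ for $\omega\varepsilon(\cdot,\omega)$), Proposition~\ref{pro.DtNDirBVPResults} makes the operator-valued map $\omega\mapsto \Lambda_{\omega\varepsilon(\cdot,\omega)}$ well-defined for each $\omega\in\mathbb{C}^+$. Next, I would invoke the part of Proposition~\ref{Pro:DtNMapQuadFormIsHerglotz} proved on $\mathbb{C}^+$: Step~2 of its proof shows that $\omega\mapsto \Lambda_{\omega\varepsilon(\cdot,\omega)}$ is analytic on $\mathbb{C}^+$ as a function into $\mathcal{L}(H^{\frac{1}{2}}(\partial \Omega),H^{-\frac{1}{2}}(\partial \Omega))$ (and hence, through the above identification, into $\mathcal{L}(\mathcal{B},\mathcal{B}^{\dagger})$), while Step~3 shows that for each fixed $V_0\in\mathcal{B}$ the scalar function $\omega\mapsto\langle \Lambda_{\omega\varepsilon(\cdot,\omega)}V_0,\overline{V_0}\rangle_{H^{-\frac{1}{2}}(\partial \Omega),H^{\frac{1}{2}}(\partial \Omega)}$ is a (scalar) Herglotz function on $\mathbb{C}^+$.

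Third, combining these two conclusions with the identity displayed above, the function $\omega\mapsto [\widetilde{\Lambda}_{\omega\varepsilon(\cdot,\omega)}V_0,V_0]_{\mathcal{B}}$ is a scalar Herglotz function for each $V_0\in\mathcal{B}$. Applying Lemma~\ref{lem:WeakOperHerglotzBanachSpaceImpliesStrongOpHerglotz} in the direction ``weak Herglotz $\Rightarrow$ operator Herglotz" then yields $\mathfrak{I}[\widetilde{\Lambda}_{\omega\varepsilon(\cdot,\omega)}]\geq 0$ for all $\omega\in\mathbb{C}^+$, which, combined with the analyticity from Step~2, gives exactly the conclusion of Definition~\ref{def:OperatorValuedHerglotzFuncBanachSpace}.

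There is essentially no hard obstacle here: the theorem is a bookkeeping consequence of the earlier results. The only point that requires care is making the identification $\Lambda \leftrightarrow \widetilde{\Lambda}$ between $\mathcal{L}(H^{\frac{1}{2}}(\partial \Omega),H^{-\frac{1}{2}}(\partial \Omega))$ and $\mathcal{L}(\mathcal{B},\mathcal{B}^{\dagger})$ explicit, so that the sesquilinear form $[\,\cdot\,,\,\cdot\,]_{\mathcal{B}}$ used in Definition~\ref{def:PrelimOperatorValuedHerglotzFuncBanachSpace} aligns with the Green's-formula pairing $\langle\,\cdot\,,\,\overline{\,\cdot\,}\,\rangle_{H^{-\frac{1}{2}}(\partial \Omega),H^{\frac{1}{2}}(\partial \Omega)}$ that appears in Proposition~\ref{Pro:DtNMapQuadFormIsHerglotz}; once this identification is checked, analyticity and positivity of imaginary part transfer without further work.
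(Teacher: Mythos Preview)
Your proposal is correct and follows essentially the same approach as the paper: the paper's proof is the one-line statement that combining Lemma~\ref{lem:WeakOperHerglotzBanachSpaceImpliesStrongOpHerglotz}, Proposition~\ref{Pro:DtNMapQuadFormIsHerglotz}, and the identification \eqref{eq.dualspaces} immediately yields the result, and you have spelled out exactly those three ingredients with the bookkeeping made explicit. Your care in checking that the sesquilinear pairing $[\,\cdot\,,\,\cdot\,]_{\mathcal{B}}$ matches the duality pairing $\langle\,\cdot\,,\overline{\,\cdot\,}\,\rangle_{H^{-\frac{1}{2}}(\partial\Omega),H^{\frac{1}{2}}(\partial\Omega)}$ is precisely the content of the identification \eqref{eq.dualspaces} that the paper invokes.
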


Note that in our setting for the DtN operator-valued Herglotz function,  $\mathcal{B}=H^{1/2}(\partial \Omega)$ is a reflexive Banach space (since it is a Hilbert space) and so $\mathcal{B}$ can be identified with its double dual $\mathcal{B}^{\dagger, \dagger}$ via $C$ which is an antilinear isometric isomorphism of Banach spaces. In fact, more can be said since $(H^{1/2}(\partial \Omega), L^2(\partial \Omega), H^{- 1/2}(\partial \Omega))$ forms  a Gelfand triple also referred to as a rigged Hilbert space; see \cite{Coj:2015:TCE,Skrepek2025:QUA} for the definition and properties of a Gelfand triple. For examples of its applications to PDEs, we refer to \cite[pp.~136–137]{Brezis:2011:FAS}, and for its use in the context of boundary Sobolev spaces, see \cite[p.~8]{Girault:1986:FNS}. We  note also  that one could go further by considering the more general framework of operator-valued Herglotz functions associated with quasi-Gelfand triples as defined in \cite{Skrepek2025:QUA}; however, a detailed investigation of this direction lies beyond the scope of our paper. 

Finally, its worth pointing out that there is also an operator theory approach to the DtN map, in which Gelfand triples appear, namely, boundary triples \cite{Behrndt:2014:EDO, Behrndt:2020:BVP} that has been investigated in connection to inverse problems for elliptic PDEs \cite{Brown:2008:BTM, Behrndt:2012:ANP}. It would be of interest to see if the results of our paper could be applied in that context and given the importance the operator-valued Herglotz functions play in that context (see \cite{Behrndt:2020:BVP}), we speculate that this appendix will be of use.

\section*{Acknowledgement}
The three authors are grateful to the Institute for Mathematics and its Applications in Minneapolis for hosting their visit during the program on Mathematics and Optics in Fall 2016, where this project was initiated. They are also grateful to the Banff International Research Station and the Centre International de Rencontres Mathématiques – Marseille Luminy for hosting their visits in 2019 and 2022, respectively, during the conferences “Herglotz–Nevanlinna Theory Applied to Passive, Causal and Active Systems” and “Herglotz–Nevanlinna Functions and their Applications to Dispersive Systems and Composite Materials,” where this work was further developed. GWM and AW are also respectively very thankful to Aix-Marseille University  and the Fresnel Institute for their one-month positions as Invited Professor and Invited Maître de conférences in 2022 and 2024 where significant progress was made on this work. GWM is grateful to the National Science Foundation for support through grants DMS-1814854 and DMS-2107926. AW is grateful to the Air Force Office of Scientific Research through grant FA9550-15-1-0086, to the National Science Foundation for support through grant DMS-2410678, and to the Simons Foundation for the support through grant MPS-TSM-00002799.

\bibliographystyle{abbrv}
\bibliography{CloakDtNMapEMQS}
\end{document}